\definecolor{pdfanchorcolor}{named}{black}
\definecolor{pdfmenucolor}{named}{red}
\definecolor{pdfruncolor}{named}{cyan}
\definecolor{pdfurlcolor}{named}{magenta}
\definecolor{pdffilecolor}{rgb}{0.7,0,0}
\definecolor{pdflinkcolor}{rgb}{1,0,0}
\definecolor{pdfcitecolor}{rgb}{0,0.5,0}
\newcommand{\eg}{e.g.\@\xspace}
\newcommand{\wrt}{w.\,r.\,t.\@\xspace}
\newcommand{\Wloss}{W.l.o.g.\@\xspace}
\newcommand{\wloss}{w.l.o.g.\@\xspace}
\providecommand{\dfn}{\mathrel{\mathop:}=}
\providecommand{\ddfn}{\mathrel{\mathop{{\mathop:}{\mathop:}}}=}
\newcommand{\arity}[1]{{\protect\ensuremath{\mathCommandFont{ar}(#1)}}}
\newcommand{\logicClFont}[1]{\mathcal{#1}}
\newcommand{\logicOpFont}[1]{\mathsf{#1}}
\newcommand{\problemFont}[1]{\mathrm{#1}}
\newcommand{\mathCommandFont}[1]{\mathrm{#1}}
\newcommand{\complClFont}[1]{\mathbf{#1}}
\newcommand{\td}{\mathrm{td}}
\newcommand{\PS}{\protect\ensuremath\logicClFont{PS}}
\newcommand{\CloneBF}{\protect\ensuremath{\logicOpFont{BF}}}
\newcommand{\CloneSE}{\protect\ensuremath{\logicOpFont{S_1}}}
\newcommand{\TL}{\protect\ensuremath{\logicOpFont{TL}}}
\newcommand{\X}{\protect\ensuremath\logicOpFont{X}}
\newcommand{\F}{\protect\ensuremath\logicOpFont{F}}
\newcommand{\G}{\protect\ensuremath\logicOpFont{G}}
\newcommand{\A}{\protect\ensuremath\logicOpFont{A}}
\newcommand{\E}{\protect\ensuremath\logicOpFont{E}}
\newcommand{\U}{\protect\ensuremath\logicOpFont{U}}
\newcommand{\RLS}{\protect\ensuremath\logicOpFont{R}}
\newcommand{\B}{\protect\ensuremath{\logicClFont{B}}\xspace}
\newcommand{\AU}{\A\U}
\newcommand{\AX}{\A\X}
\newcommand{\AF}{\A\F}
\newcommand{\AG}{\A\G}
\newcommand{\EX}{\E\X}
\newcommand{\EF}{\E\F}
\newcommand{\EG}{\E\G}
\newcommand{\EU}{\E\U}
\newcommand{\calB}{\protect\ensuremath{\mathcal{B}}}
\newcommand{\calM}{\protect\ensuremath{\mathcal{M}}}
\newcommand{\calQ}{\protect\ensuremath{\mathcal{Q}}}
\newcommand{\calT}{\protect\ensuremath{\mathcal{T}}}
\newcommand{\powerset}[1]{{\protect\ensuremath{\mathord{\mathfrak{P}(\nobreak#1\nobreak)}}}}
\newcommand{\SAT}{\protect\ensuremath\problemFont{SAT}\xspace}
\newcommand{\size}[1]{{\protect\ensuremath{\vert\nobreak#1\nobreak\vert}}}
\newcommand{\N}{\protect\ensuremath{\mathbb{N}}\xspace}
\newcommand{\bigO}[1]{\protect\ensuremath{{\mathcal{O}(#1)}}}
\newcommand{\bigOmega}[1]{\protect\ensuremath{\Omega(#1)}}
\newcommand{\bigTheta}[1]{\protect\ensuremath{{\Theta(#1)}}}
\newcommand{\leqlogm}{\protect\ensuremath{\leq^\mathCommandFont{log}_\mathCommandFont{m}}}
\renewcommand{\P}{\protect\ensuremath{\complClFont{P}}\xspace}
\newcommand{\EXP}{\protect\ensuremath{\complClFont{EXP}}\xspace}
\newcommand{\PSPACE}{\protect\ensuremath{\complClFont{PSPACE}}\xspace}
\newcommand{\NPSPACE}{\protect\ensuremath{\complClFont{NPSPACE}}\xspace}
\newcommand{\APSPACE}{\protect\ensuremath{\complClFont{APSPACE}}\xspace}
\newcommand{\NP}{\protect\ensuremath{\complClFont{NP}}\xspace}
\newcommand{\sfK}{\protect\ensuremath{\mathsf{K}}}
\newcommand{\sfS}{\protect\ensuremath{\mathsf{S}}}
\newcommand{\imp}{\protect\ensuremath{\rightarrow}}
\newcommand{\nimp}{\protect\ensuremath{\mathord{\nrightarrow}}}
\newcommand{\sfD}{\protect\ensuremath{\mathsf{D}}}
\newcommand{\KD}{\protect\ensuremath{\mathsf{KD}}}
\newcommand{\negg}{{\sim}}
\newcommand{\calC}{\protect\ensuremath{\mathcal{C}}}
\newcommand{\calF}{\protect\ensuremath{\mathcal{F}}}
\newcommand{\calK}{\protect\ensuremath{\mathcal{K}}}
\newcommand{\calX}{\protect\ensuremath{\mathcal{X}}}
\newcommand{\calG}{\protect\ensuremath{\mathcal{G}}}
\newcommand{\SF}[1]{\mathrm{SF}(#1)}
\newcommand{\TQBF}{\mathrm{TQBF}}
\newcommand{\ie}{i.e.\@\xspace}
\newcommand{\sland}{\land}
\newcommand{\slor}{\lor}
\DeclareMathOperator*{\bigand}{\bigwedge}
\DeclareMathOperator*{\bigor}{\bigvee}
\DeclarePairedDelimiter{\ceil}{\lceil}{\rceil}
\newcommand{\paths}{\Pi}
\newcommand{\citeref}[2]{\cite[#2]{#1}}
\newtheoremstyle{theorem}
{\bigskipamount}{\medskipamount}{}{}{\bfseries}{.}{0.5em}{}
\newtheoremstyle{example}
{\bigskipamount}{\medskipamount}{}{}{\bfseries}{:}{\newline }{}
\newtheoremstyle{remark2}
{\bigskipamount}{\medskipamount}{}{}{\itshape}{:}{0.5em}{}
\theoremstyle{plain}
\newtheorem{theorem}{Theorem}
\newtheorem*{theorem*}{Theorem}
\newtheorem{lemma}[theorem]{Lemma}
\newtheorem{corollary}[theorem]{Corollary}
\newtheorem{proposition}[theorem]{Proposition}
\theoremstyle{definition}
\newtheorem{definition}[theorem]{Definition}
\definecolor{np}{gray}{0.98}
\definecolor{pspace}{gray}{0.95}
\definecolor{exp}{gray}{0.9}
\definecolor{sublin}{gray}{0.98}
\definecolor{poly}{gray}{0.95}
\newcommand{\tikzcircle}[2][red,fill=red]{\tikz[baseline=-0.5ex]\draw[#1,radius=#2]
(0,0) circle ;}
\newcommand{\normalbraces}[1]{{\normalfont(}#1{\normalfont)}}
\newcommand{\phiinit}{\ensuremath\varphi_\text{init}}
\newcommand{\phiconf}{\ensuremath\varphi_\text{conf}}
\newcommand{\phikeep}[2]{\ensuremath\varphi^{(#1,#2)}_\text{keep}}
\newcommand{\phinexti}[4]{\ensuremath\varphi^{(#1,#2,#3,#4)}_\text{next}}
\begin{document}

\title{Quirky Quantifiers: Optimal Models and\\ Complexity of Computation Tree Logic}

\author{\large Martin Lück\\\large Leibniz Universität Hannover, Germany\\ \large \texttt{lueck@thi.uni-hannover.de}}
\date{\vspace{-5ex}}

\maketitle

\begin{abstract}\textbf{Abstract.}
The satisfiability problem of the branching time logic CTL is studied in terms of computational complexity. Tight upper and lower bounds are provided for each temporal operator fragment. In parallel, the minimal model size is studied with a suitable notion of minimality. Thirdly, flat CTL is investigated, \ie, formulas with very low temporal operator nesting depth. A sharp dichotomy is shown in terms of complexity and minimal models: Temporal depth one has low expressive power, while temporal depth two is equivalent to full CTL.
\end{abstract}

\section{Introduction}

\noindent\textbf{Background.} In the last decades, temporal logics
have become established as a well-known framework for verification of dynamic, reactive systems.
The first to systematically introduce time into modern logic was
Arthur Prior, who used the framework of modal logic \cite{prior57}.
The resulting language was called \emph{tense logic}. Amir Pnueli
discovered the usefulness of such logics for formally describing the behavior of dynamic systems
with discrete time steps \cite{pnueli_temporal_1977}. His suggested
method of temporal reasoning on programs
evolved into a broad family of logics; especially the linear time
logic LTL, the branching time logic CTL, and their extensions have
remarkable importance in industrial-scale software verification. They have been
researched thoroughly in terms of their expressivity and computational complexity.
In particular, the tractable model checking problem of CTL allows the application
in practice, while its satisfiability problem
is EXP-complete and therefore highly intractable
\cite{emersonTemporal,fischer_propositional_1979,Pratt1980231}.

Many \emph{fragments} of temporal logic have been investigated in the hope to find efficient algorithms. This includes restricted temporal operator sets, bounded operator nesting depth, bounded numbers of variables, and restricted sets of logical connectives \cite{demri_complexity_2002,HalpernRestrict,TLPaper,schnoeb,SC85}.
The results are not too optimistic: For no fragment of CTL or LTL the
satisfiability problem becomes tractable, except for trivial combinations of Boolean
connectives \cite{TLPaper}.
Restricting the CTL or LTL operators or the number of propositions
does not decrease the computational complexity noteworthily; and also very low
temporal depth already carries the complexity of LTL beyond
that of propositional logic \cite{demri_complexity_2002,schnoeb,SC85}.

Conversely, this means that even "simple" and "flat"
temporal formulas have sufficient expressive power, a fact
that is reflected by their application in practice. Many important
properties of computations like safety, deadlock-freeness or
fairness are expressible in temporal depth two or three. Exceptions
are CTL model checking, which is inherently sequential only for unbounded temporal
depth \cite{ctl_sequential}; furthermore modal satisfiability (as a sublogic of CTL) drops
down to NP for bounded depth, but is otherwise PSPACE-complete even
for only one proposition \cite{HalpernRestrict}.

The minimal model size of a formula---or of a class of
formulas---can serve as an indicator for its expressive power. Minimal
models are also useful to consider for algorithms that search a space of potential models, as then the
size, or other measures, of minimal models can deliver an upper bound for
the required time or memory of the algorithm. For the
fragments of CTL investigated  here, the results range over exponentially deep models, large but shallow tree-like models down to polynomial models.

The complexity of a logical satisfiability problem heavily depends on the provided set of Boolean connectives. Any finite set $C$ of Boolean connectives, which may contain functions like $\oplus$, $\rightarrow$, etc. instead of the standard connectives $\land$, $\lor$, $\neg$, forms the base of a so-called \emph{clone} $[C]$, roughly speaking the set of all Boolean functions expressible via connectives of $C$. There is a countable infinite number of distinct clones, and they form a lattice with respect to inclusion. Today it is commonly known as \emph{Post's lattice} \cite{Post41}. For a complete illustration and a list of all bases, see \eg Böhler et al.\ \cite{bohler2003playing}.

\bigskip

\noindent\textbf{Contribution.} This paper continues the systematic
study of fragments of temporal logic. We consider sublogics of CTL obtained by
limiting temporal operators, their nesting depth, or both. For each
resulting fragment, upper and lower bounds are established in terms of
computational complexity.
The notion of minimal models is introduced and upper and lower bounds are achieved,
again as a mostly complete classification of all fragments.

There are upper bounds in complexity that are corollaries from
small minimal models (like the NP cases), but several hardness results as well
yield formulas that require large models.
For this reason, it may be not surprising that the results in both dimensions
closely correlate; specifically a temporal depth of two seems to
be the ``magical threshold'' for the hardness of CTL, a behavior
that can also be observed for LTL \cite{demri_complexity_2002}.

All established upper bounds in terms of computational complexity and minimal models are clone-independent, \ie, they hold for arbitrary sets of provided Boolean connectives. The lower bounds, on the other hand, are shown for all clones that contain the \emph{negated implication} $x \nimp y$ (\ie, $x \land \neg y$). This is a consequent continuation of the work of Lewis \cite{lewis79}, who showed that propositional satisfiability over $\nimp$ is already NP-complete, whereas it is in P for all sets of Boolean connectives that are unable to express $\nimp$. In the setting of temporal logic, similarly the tractable Boolean fragments were investigated by Meier et al.\ \cite{TLPaper}.

\smallskip

The article is organized as follows. Preliminary definitions of complexity theory and temporal logic are given in Section~\ref{sec:prelim}. The main part, Section~\ref{sec:ctl-fragments}, classifies all fragments of CTL regarding the allowed temporal operators. The PSPACE-complete fragments of CTL are investigated in Subsection~\ref{sec:af} ($\AF$), \ref{sec:ag} ($\AG$), \ref{sec:ax} ($\AX$) and \ref{sec:afax} ($\AF,\AX$). The remaining fragments of CTL are all EXP-complete and are addressed in Subsection~\ref{sec:hard}. The respective subsections contain model-theoretical upper and lower bounds as well.

In contrast to the above results, Section~\ref{sec:flat-ctl} focuses on \emph{flat CTL}, \ie, all of the above fragments with temporal depth at most one. It is shown that these fragments are all NP-complete due to a polynomial model property. Finally, several meta-results with respect to Boolean clones are given in Section~\ref{sec:clones}, stating how to transfer upper and lower bounds (in the computational or in the model-theoretical sense) to different sets of Boolean connectives.

\section{Preliminaries}\label{sec:prelim}

Common mathematical symbols are used with the following meaning. $\N$ is the set of natural numbers including zero, that is, $\N \dfn \{0, 1, \ldots\}$. We however write $[n]$ for the set $\{1, \ldots, n\}$. The logarithm $\log x$ is defined to the base 2, and is  usually rounded up when mapping to the natural numbers. If nothing else is stated, consequently $\log x$ is a shorthand for $\ceil{\log_2 x}$. For base $e$, we instead write $\ln x$.

\subsection*{Complexity theory}

Using the standard concept of resource-bounded Turing machines, we refer to common complexity classes as follows. A computational problem is included in
\begin{itemize}
    \item $\P$ ($\NP$) if it is decided by a (non-)deterministic Turing machine in polynomial time,
    \item $\PSPACE$ ($\NPSPACE$) if it is decided by a (non-)deterministic Turing machine in polynomial space,
    \item $\APSPACE$ if it is decided by an alternating Turing machine in polynomial space,
    \item $\EXP$ if it is decided by a deterministic Turing machine in time $2^{p(n)}$ for a polynomial $p$.
\end{itemize}

Alternating Turing machines are a generalization of non-deterministic machines. They are introduced by Chandra, Kozen, and Stockmeyer \cite{alternation}, who also proved that $\APSPACE = \EXP$.

\medskip

To compare the computational complexity of decision problems, we use the notion of \emph{reductions}. Let $A, B$ be computational problems. If there is a function $r$ computable by a Turing machine in logarithmic space such that $x \in A \Leftrightarrow r(x) \in B$, then we call $r$ a \emph{logspace reduction} from $A$ to $B$. A Turing machine works in logarithmic space if on any input $w$ its tapes are restricted to size $\bigO{\log \size{w}}$, except a read-only input tape and a special output tape where the head cannot move to the left.

We say that $A$ is \emph{logspace-reducible} to $B$, written $A \leqlogm B$, if there exists a logspace reduction from $A$ to $B$.
Problems $A$ and $B$ such that $A \leqlogm B$ and $B \leqlogm A$ are called \emph{logspace-equivalent}.
We say that a problem $A$ is $\leqlogm$-\emph{hard} for a class $\mathcal{C}$ if $B \in \mathcal{C}$ implies $B \leqlogm A$, and $\leqlogm$-\emph{complete} for $\mathcal{C}$ if $A \in \mathcal{C}$ and $A$ is $\leqlogm$-hard for $\mathcal{C}$.
For the sake of brevity, we write simply $\leq$ instead of $\leqlogm$ and just say that a problem is \emph{hard} or \emph{complete}, respectively.

\subsection*{Boolean functions}

We call a \emph{Boolean function} any function of the form $f \colon \{0,1\}^n \to \{0,1\}$, where $\arity{f} \dfn n \in \N$ is the \emph{arity of $f$}. It can be zero; there are exactly two such constant Boolean functions, \emph{truth} $\top$ and \emph{falsity} $\bot$.

A Boolean function $f$ is \emph{monotone in its $i$-th argument}, where $1 \leq i \leq \arity{f}$, if $a_i \leq a_i'$ implies $f(a_1,\ldots,a_i,\ldots,a_n) \leq f(a_1,\ldots,a_i',\ldots,a_n)$.
For example, $0 \leq 1$, but if $f$ is the Boolean implication $\imp$, then it holds $f(0,0) \not\leq f(1,0)$, so $\imp$ is not monotone in its first argument.
A function that is monotone in all arguments is \emph{monotone}.

\smallskip

A full classification of all Boolean functions was accomplished by Post \cite{Post41} with the concept of \emph{clones}. A clone $C$ is a set of Boolean functions that is closed under composition and projection to arguments. The smallest clone containing a set of Boolean functions $B$ is written $[B]$, and $B$ is then called a \emph{base} (of $[B]$). Post proved that every clone has a finite base, and for this reason we use only \emph{finite} sets  as bases.

In this work we focus on the clones $\CloneBF \dfn [\Set{\land, \neg}]$ and $\CloneSE \dfn [\Set{\nimp}]$. $\CloneBF$ is the largest clone in Post's lattice, as all Boolean functions can be built from $\{\land,\neg\}$; $\CloneBF$ is also called \emph{expressively complete}. $\CloneSE$ is the clone of all so-called \emph{1-separating functions}. We prove, analogously to Lewis's result in propositional logic, that these clones induce equal lower bounds regarding the computational complexity of the corresponding (temporal) satisfiability problem.

\subsection*{Computation Tree Logic and its syntactical fragments}

Computation Tree Logic (CTL) extends classical modal logic; as atoms we use a countable infinite set of \emph{atomic propositional statements} $\PS \dfn \{p_1, p_2, \ldots\}$, denoted by Latin letters.

Given a base $C$, $\calB(C)$ denotes the corresponding fragment of CTL using only the Boolean connectives in $C$.
With this notation we follow Allen Emerson, Halpern and Schnoebelen \cite{Emerson86,schnoeb} with "$\calB$" for \emph{branching time}, but generalize the notation to accommodate different bases.
The set of \emph{CTL formulas over $C$} is generated by the following grammar:
\begin{align*}
\varphi &\ddfn p \mid f(\underbrace{\varphi,\ldots,\varphi}_{\mathclap{\arity{f}\text{ many}}}) \mid \A\psi \mid \E\psi\\
\psi &\ddfn \X \varphi \mid \F \varphi \mid \G \varphi \mid [\varphi \U \varphi] \mid [\varphi \RLS \varphi]\text{,}
\end{align*}
where $p \in \PS$, $f \in C$. The symbols $\A$ and $\E$ are called \emph{path quantifiers}, and in CTL formulas, they are always followed by $\X, \F, \G, \U$ or $\RLS$, which are called \emph{temporal operators}. The \emph{set of CTL operators} is
\[
\TL \dfn \Set{ QO | Q \in \Set{\A,\E}, O \in \Set{ \X, \F, \G, \U, \RLS}}\text{.}
\]

Note that the binary operators $\U$ and $\RLS$ are used in infix notation: we write \eg $\A[\varphi\U\psi]$ instead of $\A\U(\varphi,\psi)$.
The \emph{duals} of temporal operators resp.\ path quantifiers are $\overline{\A} \dfn \E$, $\overline{\E} \dfn
\A$, $\overline{\F} \dfn \G$, $\overline{\G} \dfn \F$, $\overline{\U} \dfn \RLS$, $\overline{\RLS} \dfn \U$ and $\overline{\X} \dfn \X$.

\smallskip

If $T \subseteq \TL$, then $\calB(C, T)$ is the set of all CTL formulas over $C$, restricted to the CTL operators in $T$ and their duals. We always assume $C$ and $T$ disjoint.

An important property of formulas is their \emph{temporal depth}, which is the maximal nesting depth of temporal operators. It is inductively defined as
\begin{alignat*}{2}
  &\td(p) \; & &\dfn 0 \text{ for } p \in \PS\text{,}\\
  &\td(f(\varphi_1,\ldots,\varphi_{\arity{f}})) \; & &\dfn\max\{0, \td(\varphi_1), \ldots,\td(\varphi_{\arity{f}})\}\text{ for }f \in C\text{,}\\
  &\td(Q\varphi) \; & & \dfn  \td(\varphi)\text{ for }Q \in \{\A, \E\}\text{,}\\
  &\td(O \varphi) \; & &\dfn \td(\varphi) + 1\text{ for }O \in \{\X,\F,\G\}\text{, and}\\
  &\td([\varphi_1 O \varphi_2]) \;&&\dfn \max\{\td(\varphi_1), \td(\varphi_2)\} + 1\text{ for }O \in \{\U,\RLS\}\text{.}
\end{alignat*}

The fragment of $\calB(C, T)$ that contains only formulas of temporal depth at most $i$ is written $\calB_i(C, T)$. We will often omit $T$ if $T = \TL$, and similarly $C$
if $C = \{\land, \lor, \neg\}$. If the meaning is clear, then we omit the curly brackets of the sets $C$ and $T$.

For common Boolean operators, like $\varphi \land \psi$, we use the infix notation. Moreover, we will use abbreviations like $\varphi\rightarrow\psi$ and $\varphi \leftrightarrow \psi$.
Unary operators ($\neg,\X,\F,\G$) take precedence before binary operators,
$\land$ before $\lor$, and $\land,\lor$ before $\rightarrow$ and $\leftrightarrow$.

The set of subformulas of a given formula $\varphi \in \calB(C, T)$ is denoted $\SF{\varphi}$. It is inductively defined as
\begin{alignat*}{2}
  &\SF{p} \; & &\dfn \{\,p\,\} \text{ for } p \in \PS\text{,}\\
  &\SF{f(\varphi_1,\ldots,\varphi_{\arity{f}})} \; & &\dfn\{f(\varphi_1,\ldots,\varphi_{\arity{f}})\} \cup \bigcup_{\mathclap{1 \leq i \leq n}} \SF{\varphi_i}\text{ for }f \in C\text{,}\\
  &\SF{QO \varphi} \; & & \dfn \{QO\varphi\} \cup \SF{\varphi}\text{ for unary }QO \in T\text{,}\\
  &\SF{Q[\varphi_1 O \varphi_2]} \;&&\dfn \{Q[\varphi_1 O \varphi_2]\} \cup \SF{\varphi_1} \cup \SF{\varphi_2}\text{ for binary }QO \in T\text{.}
\end{alignat*}

\subsection*{Kripke structures}

A \emph{Kripke frame} is a directed graph $(W, R)$, where $W$ is
the set of \emph{worlds} or \emph{states}, and $R \subseteq W \times W$ is the
\emph{successor} relation. The reflexive, transitive closure of $R$ is denoted $R^*$. We say that $u$ is \emph{reachable} from $v$ if $vR^*u$.

A \emph{Kripke structure} is a tuple $\calK = (W, R, V)$ where $(W,R)$ is a Kripke frame, and $V \colon \PS \to \powerset{W}$ is its \emph{valuation function} that
maps to each atomic proposition a subset of worlds. Intuitively, the proposition $p$ "holds" in the worlds $w \in V(p)$. The set $\Set{p \in \PS | w \in V(p)}$ of propositions holding in a world $w$ is sometimes also called the \emph{labeling} of $w$ in $\calK$, and if a proposition $p$ is in this set then we say that $p$ is \emph{labeled in $w$}.
Finally, a \emph{rooted Kripke structure} is a tuple $\calM = (W, R, V, w)$ where $(W,R,V)$ is a Kripke structure and $w \in W$ is called the \emph{root} of $\calM$.

For the semantics of CTL, we consider infinite paths through the underlying Kripke frame of a structure. Given a Kripke frame $F = (W,R)$, a \emph{path} $\pi$ through $F$ is an infinite sequence $\pi = (w_0, w_1, w_2, \ldots)$ of worlds $w_i \in W$ such that $w_iRw_{i+1}$ for all $i \geq 0$. Paths through (rooted) Kripke structures are defined accordingly.

Define $\pi[i] \dfn w_i$ as the $i$-th world of $\pi$, where $\pi[0]$ is the \emph{origin} of $\pi$, and $\pi_{\geq k} \dfn (\pi[k], \pi[k+1], \ldots)$ for all $k \geq 0$ are the \emph{suffixes} of $\pi$. Conversely, any finite sequence $(\pi[0], \pi[1], \ldots, \pi[k])$ is a \emph{prefix} of $\pi$.
Moreover, if $0 \leq i_1 < i_2 < \ldots$, then $(\pi[i_1], \pi[i_2], \ldots)$ is a \emph{subpath} of $\pi$.

The set of all paths through $\calK$ with origin $w$ is written $\Pi^\calK(w)$, or just $\Pi(w)$ if $\calK$ is clear.
A Kripke frame resp.\ (rooted) structure is \emph{serial} if every $w \in W$ has at least one $R$-successor.
A rooted Kripke structure $(W,R,V,w)$ is \emph{$R$-generable} if every $w' \in W$ is reachable from $w$.

\medskip

The semantics of CTL on Kripke structures can now be defined inductively. Here, $\calK = (W,R,V)$ is a serial Kripke structure, $w \in W$, $\pi$ is a path through $\calK$, $f$ is an $n$-ary Boolean function and $\vec{b} = (b_1,\ldots,b_n) \in \{0,1\}^n$ is a Boolean vector:
\begin{alignat*}{2}
  &(\calK,w) \vDash p \quad \text{for }p \in \PS \quad & &\text{ iff } w
  \in V(p)\\
  &(\calK,w) \vDash f(\varphi_1, \ldots, \varphi_n) & &\text{
  iff }\exists \vec{b} : f(\vec{b}) = 1 \text{ and } \forall i \in
  [n] :
  b_i = 1 \Leftrightarrow (\calK,w) \vDash \varphi_i\\
  &(\calK,w) \vDash \A \psi & &\text{ iff }
  \forall \pi \in \Pi(w) \; \colon \; (\calK, \pi) \vDash \psi\\[1mm]
  &(\calK,\pi) \vDash p \quad \text{for }p \in \PS \quad &
  &\text{ iff } \pi[0] \in V(p)\\
  &(\calK,\pi) \vDash f(\varphi_1, \ldots, \varphi_n) & &\text{
  iff }\exists \vec{b} : f(\vec{b}) = 1 \text{ and } \forall i \in
  [n] :
  b_i = 1 \Leftrightarrow (\calK,\pi) \vDash \varphi_i\\
  &(\calK,\pi) \vDash \A \psi & &\text{ iff }
  (\calK,\pi[0]) \vDash \A \psi\\
  &(\calK,\pi) \vDash \X \psi & &\text{ iff }
  (\calK, \pi_{\geq 1}) \vDash \psi\\
  &(\calK, \pi) \vDash \psi \U \psi' & &\text{ iff }
  \exists i \geq 0 \; \colon  (\calK, \pi_{\geq i}) \vDash \psi'\text{ and } \forall j < i \; \colon (\calK,
  \pi_{\geq j}) \vDash \psi\end{alignat*}
The remaining operators are treated follows:
Interpret $\E\psi$ as $\neg \A \neg\psi$, $\G\psi$ as $\neg\F\neg \psi$, $\F\psi$ as $\top\U \psi$, and $\varphi\RLS\psi$ as $\neg[\neg\varphi \U \neg\psi]$.
If the Kripke structure $\calK$ is clear from the context, we simply write $w \vDash \varphi$
or $\pi \vDash \varphi$ instead of $(\calK,w) \vDash \varphi$ and
$(\calK,\pi) \vDash \varphi$.

If $\varphi$ and $\psi$ are CTL formula, then $\varphi$ \emph{implies} or \emph{entails} $\psi$, written $\varphi \vDash \psi$, if $\calM \vDash \varphi$ implies $\calM \vDash \Psi$ for all rooted serial Kripke structures $\calM$.
$\varphi$ and $\psi$ are \emph{equivalent}, written $\varphi \equiv \psi$, if $\varphi \vDash \psi$ and $\psi \vDash \varphi$.

If the necessary Boolean functions are available, many sets of CTL operators can be defined by smaller sets. For instance, formulas using the operator set $\{\AF, \AU, \EG, \E\RLS \}$ can be rewritten to use only $\{\AU \}$
when the connectives $\neg$ and $\lor$ are allowed. For this reason, we will denote all CTL fragments by stating a defining set of universally quantifying CTL operators, like $\{\AU\}$.

If $\Phi \subseteq \B$ is a CTL fragment, then $\SAT(\Phi)$ is the set of all \emph{satisfiable} formulas $\varphi \in \Phi$, \ie, for which there is a rooted serial Kripke structure $\calM$  such that $\calM \vDash \varphi$. Call any such structure a \emph{model} of $\varphi$.
Obviously every serial rooted Kripke structure contains a serial, $R$-generable rooted structure that satisfies the same set of CTL formulas.
 
\section{Complexity of CTL and its temporal operator fragments}\label{sec:ctl-fragments}

To measure the complexity of a fragment of CTL, we require a sensible notion of the \emph{length} of a formula. We define the length $\size{\varphi}$ of $\varphi$ as the number of symbols in $\varphi$, where any CTL operator, Boolean connective, proposition and parenthesis is counting as one symbol.

\smallskip

In the following, we introduce the idea of \emph{optimal model size} and \emph{optimal model extent}. For the different fragments of CTL, these measures range between constant and exponential, and also influence the computational complexity of the corresponding satisfiability problem.

\begin{definition}[Size and extent]\label{def:size-diam}
Let a Kripke frame $F = (W, R)$ be $R$-generable and non-empty.
The \emph{size} of $F$ is the number $\size{W}$ of worlds.
The \emph{extent} of $F$ is the greatest $n \in \N$ such that some $R$-path visits $n + 1$ distinct vertices.

The size and extent of a (rooted) Kripke structure is defined as the size and extent of the underlying frame.
\end{definition}

For instance, in a finite directed tree, the extent equals its depth. The difference to, say, the diameter of a graph\footnote{The maximal length of a shortest path between two vertices} is that transitive edges reduce the diameter, but not the extent. This distinction is important, as several CTL operators cannot differentiate between a structure and its transitive closure. For this reason, the diameter of models cannot be a meaningful measure in the classification of CTL fragments.

\begin{definition}[Optimal model size and extent]
Let $\Phi \subseteq \B$ be a set of satisfiable CTL formulas. Let $\sigma \colon \N \to \N$.

\begin{itemize}
    \item $\sigma$ is a \emph{model size upper bound} of $\Phi$ if every satisfiable $\varphi \in \Phi$ has a model of size at most $\bigO{\sigma(\size{\varphi})}$.
    \item $\sigma$ is a \emph{model size lower bound} of $\Phi$ if $\Phi$ contains an infinite family of satisfiable formulas $\varphi_1,\varphi_2,\ldots$ such that each $\varphi_i$ has only models of size at least $\bigOmega{\sigma(\size{\varphi_i})}$.
    \item $\sigma$ is an \emph{optimal model size} of $\Phi$ if it is both an upper and lower bound.
\end{itemize}

Similarly define \emph{model extent upper/lower bound} and \emph{optimal model extent} $\epsilon$.
\end{definition}
As no path can visit more distinct vertices than the frame contains, it follows that size forms an upper bound for extent.

\smallskip

An exponential model size upper bound for full CTL was proven by Allen Emerson and Halpern \citeref{halpern}{Thm. 4.1.}. Although they did not consider Boolean clones, the proof indeed works independently of the particular clone.

\begin{theorem}[Small model property of CTL
 \cite{emersonTemporal}]\label{thm:upper-size-all}
$\B(C,T)$ has optimal model size of at most $2^\bigO{n}$ for every base $C$ and $T \subseteq \TL$.
\end{theorem}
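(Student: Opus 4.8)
The statement is the classical exponential small model property of CTL, and the plan is to reprove it by a filtration (type-elimination) argument over the Fischer--Ladner closure, taking care to keep every step independent of the concrete base $C$ and operator set $T$.

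\textbf{Closure and types.} First I would fix a satisfiable $\varphi \in \B(C,T)$ of length $n$ together with some rooted serial model $\calM \vDash \varphi$, and form the \emph{closure} $D$ of $\varphi$: the set $\SF{\varphi}$ of subformulas augmented by the one-step unfoldings of each temporal eventuality (so that, for instance, $\A[\psi \U \chi]$ is accompanied by $\chi$, $\psi$ and $\AX\A[\psi\U\chi]$, and dually for $\RLS$). Since $\size{\SF{\varphi}} \le n$ and each subformula contributes only a constant number of closure elements, we have $\size{D} = \bigO{n}$. To each world $w$ I associate its \emph{type} $t(w) \dfn \Set{\xi \in D | (\calM, w) \vDash \xi}$; there are at most $2^{\size{D}} = 2^{\bigO{n}}$ distinct types.

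\textbf{The quotient structure.} Next I would collapse $\calM$ under type-equivalence, taking the worlds of the new structure $\calM'$ to be the realized types, with each type labelled by exactly the propositions it contains and an edge from $s$ to $s'$ whenever some $R$-edge of $\calM$ runs from a world of type $s$ to one of type $s'$; seriality is inherited. This $\calM'$ has size $2^{\bigO{n}}$, and one proves the usual truth lemma: for every $\xi \in D$ and every world $w$, $(\calM', t(w)) \vDash \xi$ iff $\xi \in t(w)$. The propositional cases are immediate from the definition of type, the Boolean cases follow truth-functionally (see below), and the $\X$-cases follow from the edge relation of $\calM'$.

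\textbf{The main obstacle: eventualities.} The genuinely delicate cases of the truth lemma are the until-operators $\AU, \EU$ and, dually, the release operators, since a raw quotient may introduce cycles that prevent an eventuality holding in $\calM$ from being realized along a finite path of $\calM'$. This is exactly the point handled by Emerson and Halpern \citeref{halpern}{Thm. 4.1.}: one does not take the naive quotient, but builds the structure of all locally consistent types and then iteratively deletes every type whose asserted eventualities cannot be discharged by a finite path inside the current structure. A fulfilling path that exists in $\calM$ projects to a path in the quotient, and between two repetitions of a type it may be short-circuited, so every satisfiable eventuality is witnessed by a path of length at most the number of types, i.e.\ $2^{\bigO{n}}$; this both certifies that the root type survives the deletion process and bounds the extent of the resulting model. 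I expect the bookkeeping that simultaneously discharges the universal eventualities, i.e.\ the $\A$-quantified until-formulas, along \emph{all} outgoing paths to be the most technical part.

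\textbf{Clone-independence.} Finally I would observe that none of the above uses any property of the connectives in $C$ beyond the truth-functional clause of the semantics: a type $t(w)$ records the truth value at $w$ of every closure formula, and for $f \in C$ the value of $f(\varphi_1,\ldots,\varphi_k)$ is determined by the values of the $\varphi_i$ exactly as in the definition of $\vDash$, so the Boolean step of the truth lemma holds verbatim for an arbitrary base. Restricting the operator set to $T$ only shrinks the fragment, so any satisfiable $\varphi \in \B(C,T)$ is in particular a satisfiable formula of full CTL and inherits a model of size $2^{\bigO{n}}$. Hence $2^{\bigO{n}}$ is a model size upper bound for every $C$ and $T$; matching exponential lower-bound families, which make the bound optimal, are supplied by the hardness constructions for the EXP-complete fragments in Section~\ref{sec:hard}.
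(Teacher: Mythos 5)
Your proposal is correct and follows essentially the same route as the paper, which does not reprove this theorem but cites the Emerson--Halpern small-model construction (Fischer--Ladner closure, types, and elimination of types with unfulfillable eventualities) and merely remarks, as you do, that the argument is clone-independent because the Boolean clause of the semantics is purely truth-functional. Your sketch is in fact more detailed than the paper's treatment; the only part you defer, the simultaneous discharge of $\A$-quantified eventualities along all paths, is exactly the technical core of the cited \citeref{halpern}{Thm.~4.1.}.
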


A deterministic exponential time algorithm for satisfiability of \emph{propositional dynamic logic (PDL)}, which subsumes CTL, was given by Pratt \cite{Pratt1980231}. Allen Emerson and Halpern presented a similar algorithm for CTL directly; it constructs a structure of exponential size to check the satisfiability of the formula \citeref{halpern}{Thm. 5.1.}. See also Allen Emerson \cite{emersonTemporal}.

\begin{theorem}[\cite{emersonTemporal,halpern,Pratt1980231}]\label{thm:all-c-in-exp}
$\SAT(\B) \in \EXP$.
\end{theorem}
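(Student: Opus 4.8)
The plan is to follow the deterministic tableau-elimination method of Pratt and Emerson--Halpern, which yields a genuine $\EXP$ (rather than merely nondeterministic exponential-time) procedure. Fix an input $\varphi$ with $\size{\varphi} = n$. First I would form the \emph{closure} $\mathrm{cl}(\varphi)$, consisting of all subformulas in $\SF{\varphi}$ together with their negations and the one-step unfoldings of the temporal operators (the Fischer--Ladner closure); this set has size $\bigO{n}$. A \emph{type} is a subset $s \subseteq \mathrm{cl}(\varphi)$ that is propositionally consistent with respect to the connectives in $C$ and respects the local fixpoint identities, e.g.\ $\A[\psi\U\chi] \in s$ exactly if either $\chi \in s$, or both $\psi \in s$ and $\AX\A[\psi\U\chi] \in s$, and analogously for the remaining temporal operators. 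There are at most $2^{\bigO{n}}$ types, and each can be tested for local consistency in time polynomial in $n$.

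Next I would build a directed graph $G$ whose vertices are the types and whose edges encode the admissible one-step transitions: an edge $s \to t$ is allowed when every $\alpha$ with $\AX\alpha \in s$ satisfies $\alpha \in t$. A type $s$ is \emph{locally realizable} in a vertex set $U$ if it has a successor in $U$ (seriality) and, for every demand $\EX\beta \in s$, some successor $t \in U$ with $\beta \in t$. The core of the argument is an iterative pruning: starting from all types, I repeatedly delete any type that (i) is not locally realizable in the current vertex set, or (ii) contains an \emph{eventuality} that cannot be discharged within the surviving graph. For an existential eventuality such as $\E[\psi\U\chi]$ this is a reachability property---there must be a finite path in the current graph from $s$ to a type containing $\chi$ along which $\psi$ holds---while for a universal eventuality such as $\A[\psi\U\chi]$ it is the \emph{absence} of a reachable subgraph on which $\chi$ is avoided forever while $\psi$ persists. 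Both conditions are computable by fixpoint and reachability analyses over $G$ in time polynomial in $\size{G} = 2^{\bigO{n}}$.

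For correctness I would show that $\varphi$ is satisfiable if and only if some surviving type contains $\varphi$. The forward direction is routine: given a serial rooted model $\calM \vDash \varphi$, mapping each world $w$ to $\Set{\alpha \in \mathrm{cl}(\varphi) | (\calM,w)\vDash\alpha}$ produces types that are never eliminated, since each world genuinely realizes its successors and discharges its eventualities. The backward direction is the substantial part: from the surviving graph one unwinds an actual Kripke structure, at each step choosing successors so as to make progress toward discharging every pending existential eventuality, and interleaving these obligations so that each universal eventuality is met along \emph{every} continuation in finitely many steps. The discharge conditions imposed during pruning are exactly what guarantees that this unwinding can simultaneously satisfy, at each created world, all formulas attributed to its type. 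Since each pruning pass removes at least one vertex and runs in time polynomial in $2^{\bigO{n}}$, and there are at most $2^{\bigO{n}}$ passes, the procedure runs in time $2^{\bigO{n}}$, whence $\SAT(\B) \in \EXP$.

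The main obstacle is the treatment of the universal eventualities $\AF$ and $\AU$: unlike the existential ones, they cannot be certified by plain reachability but require a greatest-fixpoint ("no bad cycle") argument ensuring that no path postpones the obligation indefinitely, and in the model-unwinding one must discharge several competing eventualities at once while preserving seriality and honoring all $\AX$-demands. I note finally that the small-model bound of Theorem~\ref{thm:upper-size-all} already yields a nondeterministic exponential-time algorithm---guess a model of size $2^{\bigO{n}}$ and model-check it---so the elimination procedure above is precisely what brings this down to deterministic $\EXP$. All steps are manifestly independent of the base $C$, which enters only through the local propositional-consistency test on each candidate type.
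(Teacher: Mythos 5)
The paper does not actually prove this theorem: it imports it from Pratt and from Emerson--Halpern, noting only that their deterministic tableau-elimination algorithm constructs an exponential-size structure. Your sketch is a reconstruction of precisely that cited algorithm (Fischer--Ladner closure, exponentially many types, iterated pruning by local realizability and eventuality discharge), so in overall architecture you are following the intended route, and the complexity accounting ($2^{\bigO{n}}$ vertices, polynomially many polynomial-time passes over a graph of size $2^{\bigO{n}}$) is correct.

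There is, however, one concrete error at exactly the point you yourself flag as the main obstacle: the discharge condition for universal eventualities. You propose to keep a type $s$ containing $\A[\psi\U\chi]$ only if there is \emph{no} reachable subgraph of $G$ on which $\chi$ is avoided forever, i.e.\ a greatest-fixpoint ``no bad cycle'' test over the full type graph. That test is too strong, because a model extracted from the surviving graph need not use all outgoing edges of a type --- only enough to cover seriality and the $\EX$-demands. Consider $\varphi = \AF p \land \neg p$: the type $s_0 \ni \{\AF p, \neg p, \AX\AF p\}$ has both a self-loop and an edge to a type containing $p$ in $G$; the self-loop is a reachable $p$-avoiding subgraph, so your condition deletes $s_0$ and the algorithm would declare the (satisfiable) formula unsatisfiable. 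The correct condition, as in Emerson--Halpern, is \emph{existential}: $s$ survives iff there is a finite dag (``fragment'') rooted at $s$ in which each interior node has a chosen successor set covering seriality and all its $\EX$-demands, all chosen successors lie in the dag, and every frontier node contains $\chi$ (with $\psi$ holding at interior nodes). Equivalently this is a \emph{least} fixpoint over choices of successor sets, not a greatest fixpoint over the ambient edge relation; with that replacement (and the corresponding fragment-stitching argument in the model-unwinding direction), your outline becomes the standard proof.
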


In this rest of this section, for every temporal operator
fragment of CTL, these upper bounds of the computational complexity are either improved, or proven tight.
We begin by showing the lower bound for the PSPACE-complete fragment $\B(\AF)$.

\subsection{The AF fragment}\label{sec:af}

For the hardness of $\SAT(\B(\AF))$, we consider a reduction from the PSPACE-complete problem of \emph{quantified Boolean formulas} (qbfs).
The grammar of qbfs is
\[
\varphi \ddfn \varphi \land \varphi \mid \neg \varphi \mid \forall p\, \varphi \mid \exists p\, \varphi \mid p\text{,}
\]
where $p \in \PS$. The semantics are defined via \emph{Boolean assignments}, which are functions $\theta \colon \Phi \to \{0,1\}$ for finite $\Phi \subseteq \PS$. In particular, for a Boolean assignment $\theta$ it holds $\theta \vDash \forall p \,\varphi$ if $\theta^p_b \vDash \varphi$ for all $b \in \{0,1\}$, where $\theta^p_b(p) \dfn b$ and $\theta^p_b(q) \dfn \theta(q)$ for $p \neq q$. $\exists p \,\varphi$ behaves like $\neg \forall p \neg \varphi$, and the other connectives are defined as in propositional logic. Say that a qbf $\varphi$ is \emph{closed} if has no free variables, and say that a qbf is \emph{true} if it is closed and satisfied by some Boolean assignment.

The corresponding computational problem is:
\[
\TQBF \dfn \Set{ Q_1 x_1 \ldots Q_n x_n \psi | \begin{array}{l}
\{ Q_1, \ldots, Q_n \} \subseteq\{\exists, \forall\}, \{x_1,\ldots,x_n\}\subseteq \PS, \psi \in \B_0\\
\text{and }Q_1 x_1 \ldots Q_n x_n \psi \text{ is a closed, true qbf}
\end{array}}
\]

\begin{theorem}[Meyer and Stockmeyer \cite{stockmeyer_word_1973}]
    $\TQBF$ is $\PSPACE$-complete.
\end{theorem}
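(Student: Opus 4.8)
The plan is to prove both directions: $\TQBF \in \PSPACE$ and $\PSPACE$-hardness of $\TQBF$.

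For membership I would give a recursive evaluation procedure. A closed qbf $Q_1 x_1 \cdots Q_n x_n\, \psi$ is decided by peeling off the outermost quantifier: to evaluate $\forall x_1\, \chi$ one recursively evaluates $\chi$ once with $x_1$ fixed to $0$ and once with $x_1$ fixed to $1$ and returns the conjunction of the two outcomes, and dually $\exists$ yields a disjunction. Once the prefix is exhausted, the remaining quantifier-free matrix $\psi$ is evaluated directly under the accumulated assignment $\theta \colon \{x_1,\ldots,x_n\} \to \{0,1\}$. Since the two recursive branches are explored sequentially, the working memory is dominated by the recursion depth $n$ together with the current partial assignment, for a total of $\bigO{n}$ plus the (polynomial) space to evaluate $\psi$. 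Hence $\TQBF \in \PSPACE$.

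For hardness I would reduce an arbitrary $A \in \PSPACE$ to $\TQBF$ in logspace. Fix a deterministic machine deciding $A$ in space $p(n)$; on input $w$ of length $n$ each configuration is describable by a block of $m = \bigO{p(n)}$ Boolean variables (tape contents, head position, state), and the computation halts within $N \dfn 2^{\bigO{p(n)}}$ steps. The core is a Savitch-style family of formulas $\psi_i(C, C')$ over two such blocks $C, C'$ asserting that $C'$ is reachable from $C$ in at most $2^i$ steps. The base case $\psi_0(C, C')$ states that $C = C'$ or that $C'$ follows from $C$ in one step; this is a quantifier-free formula of polynomial size obtained from the local transition relation. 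The naive inductive step $\psi_{i+1}(C, C') \dfn \exists C''\,(\psi_i(C, C'') \land \psi_i(C'', C'))$ duplicates $\psi_i$ and blows up exponentially in length.

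The decisive step, and the main obstacle, is to fold the two occurrences of $\psi_i$ into one by quantifying over a single pair of configuration blocks (where quantification over a block abbreviates quantification over its constituent variables):
\[
\psi_{i+1}(C, C') \dfn \exists C''\, \forall D\, \forall D'\, \Big[ \big( (D,D') = (C,C'') \lor (D,D') = (C'',C') \big) \imp \psi_i(D, D') \Big].
\]
Now $\psi_i$ appears only once, so $\size{\psi_{i+1}}$ exceeds $\size{\psi_i}$ by only a polynomial additive amount, and after $\lceil \log N\rceil = \bigO{p(n)}$ levels the formula still has polynomial size. The reduction then outputs $\exists C\, \psi_{\lceil \log N\rceil}(C_w, C)$ conjoined with a fixed quantifier-free predicate pinning $C_w$ to the initial configuration for $w$ and forcing $C$ to be accepting, where block equalities are expanded into conjunctions of biconditionals realizable with the available connectives. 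Finally I would check that this highly uniform formula is printable by a logspace transducer, which together with membership yields $\PSPACE$-completeness.
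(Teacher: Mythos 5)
The paper does not prove this statement at all: it is imported as a citation to Meyer and Stockmeyer, so there is no internal proof to compare against. Your reconstruction is the standard textbook argument and is correct in both directions: the depth-$n$ recursive evaluation for membership, and the Savitch-style reachability formula with the quantifier "folding" trick $\forall D\,\forall D'\,[\,(D,D')=(C,C'')\lor(D,D')=(C'',C')\imp\psi_i(D,D')\,]$ that keeps a single occurrence of $\psi_i$ per level and hence a polynomial total size after $\bigO{p(n)}$ levels. The one loose end relative to this paper's conventions is that $\TQBF$ is defined here only for \emph{closed prenex} formulas $Q_1x_1\cdots Q_nx_n\,\psi$ with $\psi\in\B_0$, whereas your inductive construction nests quantifiers inside implications; you should add the remark that all quantifiers occur in positive (consequent) positions, so after renaming bound blocks they can be pulled to the front without dualization, and this prenexing is itself logspace-computable. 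With that noted, the argument is complete.
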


A formula in the above form, with all quantifiers at the beginning, is called \emph{prenex form}, with \emph{prefix} $Q_1 x_1 \ldots Q_n x_n$ and \emph{matrix} $\psi \in \B_0$.

\smallskip

The following is an alternative definition of the truth of qbfs; it is helpful in the subsequent reduction to CTL.

\begin{definition}\label{def:witness-seq}
Let $\varphi = Q_1 x_1 \ldots Q_n x_n \psi$ be a closed qbf.
A \emph{proof tree} $T = (\Theta, E)$ for $\varphi$ is a tree of Boolean assignments that meets the following conditions:
\begin{enumerate}
    \item the everywhere undefined assignment $\theta_0 \in \Theta$ is the root of $T$,
    \item if $\theta : \{x_1, \ldots, x_{m-1}\} \to \{0,1\}$ is in $\Theta$, $m \leq n$, and $Q_m = \forall$ \emph{($\exists$)}, then for all (some) $b \in \{0,1\}$, $\theta^{x_{m+1}}_b \in \Theta$ and $(\theta,\theta^{x_{m+1}}_b)\in E$,
    \item if $\theta : \{x_1,\ldots,x_n\} \to \{0,1\}$ is in $\Theta$, then $\theta \vDash \psi$.
\end{enumerate}
\end{definition}
Intuitively, (1) describes the empty Boolean assignment, (2) simulates universal and existential branching with respect to the Boolean quantifiers, and (3) states that the matrix is true under the "leaf" Boolean assignments $\theta$. It is straightforward to show by induction:

\begin{proposition}
A closed qbf is true if and only if it has a proof tree.
\end{proposition}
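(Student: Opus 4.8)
The plan is to prove both directions by a single induction, but since the statement only speaks about closed qbfs and full proof trees, the first step is to strengthen it to a form amenable to induction. Concretely, for a prefix position $m \in \{0, 1, \ldots, n\}$ and a partial assignment $\theta \colon \{x_1, \ldots, x_m\} \to \{0,1\}$, I would consider the suffix qbf $\varphi_m \dfn Q_{m+1} x_{m+1} \cdots Q_n x_n \psi$, all of whose free variables are assigned by $\theta$, together with the obvious notion of a \emph{proof tree for $\varphi_m$ rooted at $\theta$}: a finite tree of assignments with root $\theta$ obeying the evident analogues of conditions (2) and (3) for the remaining quantifiers $Q_{m+1}, \ldots, Q_n$. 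The generalized claim is then that $\theta \vDash \varphi_m$ holds if and only if $\varphi_m$ has a proof tree rooted at $\theta$, and the proposition is recovered as the special case $m = 0$, $\theta = \theta_0$, in which $\varphi_0$ is the whole formula.

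I would run the induction on the number $n - m$ of remaining quantifiers. In the base case $m = n$ the suffix qbf is the matrix $\psi \in \B_0$, and a proof tree rooted at $\theta$ is the single node $\theta$ subject only to condition (3); hence both sides of the equivalence literally reduce to $\theta \vDash \psi$. For the inductive step I would split on $Q_{m+1}$. If $Q_{m+1} = \forall$, then $\theta \vDash \varphi_m$ holds iff $\theta^{x_{m+1}}_b \vDash \varphi_{m+1}$ for both $b \in \{0,1\}$; by the induction hypothesis this is equivalent to the existence of proof trees rooted at each of $\theta^{x_{m+1}}_0$ and $\theta^{x_{m+1}}_1$, which are exactly the two children demanded by condition (2) below a universal node, so I join them under $\theta$ in the forward direction and split a given tree into its two subtrees in the backward direction. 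If $Q_{m+1} = \exists$, then $\theta \vDash \varphi_m$ holds iff $\theta^{x_{m+1}}_b \vDash \varphi_{m+1}$ for \emph{some} $b$, which by the induction hypothesis is equivalent to a proof tree existing below at least one child, matching the "some $b$" clause of condition (2) for existential nodes.

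The only real obstacle is the bookkeeping that makes the induction legitimate; the logical content is routine. Two points deserve care. First, I must observe that any subtree of a proof tree hanging below a node $\theta'$ is again a proof tree rooted at $\theta'$, so that membership "$\theta' \in \Theta$" can be upgraded to "$\theta'$ is the root of a valid proof tree" without circularity; this is precisely what lets the backward direction invoke the induction hypothesis. Second, the existential case is asymmetric: both the semantics and condition (2) require only a single witnessing value, so I must build exactly one child in the forward direction and read off one child in the backward direction, while tolerating that a given proof tree may happen to carry a redundant second existential child. Finiteness of the trees (every branch has length exactly $n$) guarantees the induction is well-founded, so no separate termination argument is needed.
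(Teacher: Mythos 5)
Your proof is correct and is exactly the argument the paper has in mind: the paper dismisses this proposition with "it is straightforward to show by induction," and your strengthening to suffix qbfs with partial assignments, followed by induction on the number of remaining quantifiers, is the standard way to carry that induction out. No gaps; the two bookkeeping points you flag (subtrees are proof trees, and the asymmetry of the existential case) are precisely the right details to check.
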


It follows the hardness proof of $\SAT(\B_2(\AF))$ by reduction from TQBF.
The standard reduction from TQBF to modal satisfiability (see Ladner~\cite{ladner77}) would be to span a proof tree of exponential size, with the help of modal operators, directly in a Kripke structure.
This approach, however, does not work here as the operators $\AF$ and $\EG$ have ``mixed'' path and state quantifiers, \ie, whenever the path quantifier is universal, then the state on this path is quantified existentially, and vice versa. This leaves no sensible way to span a tree of exponential size.
Consequently, the proof tree has to be encoded on a single path in a complicated manner.

\begin{theorem}\label{thm:qbf-to-afeg}
$\SAT(\B_2(\AF))$ is $\PSPACE$-hard.
\end{theorem}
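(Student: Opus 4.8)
The plan is to give a logspace reduction from $\TQBF$ to $\SAT(\B_2(\AF))$. Given a closed qbf $\varphi = Q_1 x_1 \cdots Q_n x_n\, \psi$ with matrix $\psi \in \B_0$, I would construct a CTL formula $\Phi$ using only $\AF$ and its dual $\EG$, of temporal depth two, that is satisfiable exactly when $\varphi$ is true. By the proposition relating truth of qbfs to proof trees it suffices to make $\Phi$ satisfiable iff $\varphi$ admits a proof tree, so the model of $\Phi$ should encode such a tree. Because $\AF$ and $\EG$ carry mixed path/state quantifiers, a tree of exponential size cannot be spanned directly in the structure (as one would for plain modal logic); instead the intended model is essentially a single path performing a depth-first traversal of the proof tree, visiting each complete assignment and backtracking, so that the whole alternation $Q_1 \cdots Q_n$ is linearised onto that one path.

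Concretely, I would label each state of the path with propositions encoding: the index of the variable currently being decided (a block of $\lceil\log n\rceil$ bits acting as a DFS depth counter, whose evolution along the path is pinned down by the invariants below), the bit chosen for that variable, a flag marking "leaf" states that carry a complete assignment, and auxiliary "obligation" markers recording that a universal branch still owes its sibling. The value of every already-decided variable is kept in the labelling and forced to persist down to the next leaf, so that at a leaf state the entire assignment is present and the purely propositional matrix $\psi$ can be read off directly from that state.

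The formula $\Phi$ is then a conjunction of root conditions together with a single outer $\EG$ ranging along the traversal path, asserting the global invariants: the counter behaves like a DFS stack pointer, the quantifier type of each level agrees with the hard-wired $Q_i$, decided values persist, and at every leaf the implication $\text{leaf} \rightarrow \psi$ holds (this conjunct alone already forces every complete assignment on the path to satisfy $\psi$). Since these are propositional conditions on single states, this $\EG$ has temporal depth one. The one place where a second level of nesting is genuinely needed is the enforcement of universal branching: at a state deciding a universal variable I would assert, via $\AF$, that the traversal must eventually discharge the sibling obligation, i.e. that both the $0$- and the $1$-subtree are actually visited; because $\AF$ quantifies over all continuations, no model can satisfy it while omitting a universal branch. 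Conjoining this $\AF$ under the outer $\EG$ keeps the temporal depth at exactly two, even though the encoded alternation has depth $n$ --- the point is that a single $\F$ leaps across an entire subtree rather than descending level by level.

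For correctness I would argue both directions through proof trees: from a proof tree one reads off a DFS that validates $\Phi$ (existential levels follow the tree's chosen child, universal levels visit both), and conversely any model of $\Phi$ projects, by the invariants, onto a path encoding a legal DFS of some tree in which universal nodes have both children (forced by the $\AF$ obligations) and every leaf satisfies $\psi$ --- a proof tree for $\varphi$. The construction is plainly logspace. The main obstacle, and the delicate part of the write-up, is exactly this universal-branching enforcement together with the stack discipline of the DFS: with neither $\EF$ nor $\AG$ nor a next-operator available, one must force the existence and correct nesting of both universal subtrees using only the "eventually on all paths" strength of $\AF$ and the global invariants under $\EG$, which is precisely why the proof tree must be encoded on a single path in a complicated manner.
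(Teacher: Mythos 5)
Your high-level plan matches the paper's: reduce from $\TQBF$, linearise the proof tree onto a single path (because the mixed quantifiers of $\AF$/$\EG$ forbid spanning a tree), force universal branching with $\AF$, and check the matrix $\psi$ propositionally at states carrying a complete assignment. But the proposal has a genuine gap at exactly the point you defer to "the delicate part of the write-up". You assert that the invariants under the outer $\EG$ --- "the counter behaves like a DFS stack pointer", "decided values persist" --- are "propositional conditions on single states". They are not: they are step relations between a state and its successor (counter increment/decrement) or between a state and all states up to the next leaf (persistence). With no $\AX$, no $\AG$, no $\U$, and only one level of temporal nesting left under the outer $\EG$, there is no direct way to say "in the next state the counter is $c\pm 1$" or "$x_i$ keeps its value until the next leaf". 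This is the actual crux of the theorem, and the proposal does not supply a mechanism for it; the DFS-counter implementation in particular looks like a dead end, since enforcing the stack discipline of a $\lceil\log n\rceil$-bit counter is a next-step constraint par excellence.

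The paper's proof resolves this with two interlocking tricks that your sketch lacks. First, \emph{blocking propositions}: each level $i$ has a blocker $b_i$ with $\beta_i = (b_i \to \EG b_i)$, and every obligation is phrased as $\AF(\cdots\land\neg b_i)$; since a $b_i$-state can branch onto a path where $b_i$ holds forever, such an $\AF$ must be discharged \emph{before} the next $b_i$, which is what confines each obligation to its own segment and enforces the correct order and nesting of the subdivisions (your "sibling obligation" $\AF$, as stated, could be discharged after the DFS has already backtracked out of the parent subtree). Second, instead of transporting values state-by-state, the paper makes $x_i$ \emph{uniform on a whole interval} via $\gamma_i = (\AF t'_i \to x_i)\land(\AF f'_i\to\neg x_i)$: every state inside the "true" interval necessarily satisfies $\AF t'_i$, so $x_i$ holds throughout it and in all nested subintervals, and the complete assignment is available at the innermost segments for free. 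Without these (or equivalent) devices your reduction cannot be completed, so as it stands the proposal identifies the right obstacle but does not overcome it.
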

\begin{proof}
Let $\varphi = Q_1 x_1 \ldots Q_n x_n \psi$ be a closed qbf.
The reduction maps $\varphi$ to a formula $\varphi^* \in \B_2(\AF)$ that is satisfiable if and only if $\varphi$ is true.

The idea is to enforce a "flattened" proof tree as a long path inside the model. The given path is successively subdivided into segments: the first half should uniformly set $x_1$ true, while on the other half $\neg x_1$ holds. Each of the segments is then again divided to account for the possible truth values of $x_2$, and so on. Figure~\ref{fig:af-idea} illustrates this construction.

\begin{figure}\centering
\begin{tikzpicture}[framed,->,>=stealth',shorten >=0pt,auto,node distance=1.3cm,        thick,         world/.style={circle,fill=black!10,draw,minimum size=0.3cm,inner sep=0pt},         world2/.style={circle,dotted,draw,minimum size=0.3cm,inner sep=0pt},         tree/.style={circle,fill=black,draw,minimum size=0.2cm,inner sep=0pt},         dummy/.style={inner sep=0pt,node distance=.2cm}]

\node[tree] (t0) at (0,0) {};
\node[tree] (t1) at (-1.2cm,1cm) {};
\node[tree] (t2) at (1.2cm,1cm) {};
\node[tree] (t3) at (-1.6cm,2cm) {};
\node[tree] (t4) at (-0.8cm,2cm) {};
\node[tree] (t5) at (0.8cm,2cm) {};
\node[tree] (t6) at (1.6cm,2cm) {};

\node[dummy] (x1) [left = 1mm of t1] {$x_1$};
\node[dummy] (nx1) [right = 1mm of t2] {$\neg x_1$};
\node[dummy] (x1x2) [left = 0.5mm of t3] {$x_2$};
\node[dummy] (nx1x2) [left = 0.5mm of t5] {$x_2$};
\node[dummy] (x1nx2) [right = 0.5mm of t4] {$\neg x_2$};
\node[dummy] (nx1nx2) [right = 0.5mm of t6] {$\neg x_2$};

\path[-]
(t0) edge (t1)
(t0) edge (t2)
(t1) edge (t3)
(t1) edge (t4)
(t2) edge (t5)
(t2) edge (t6)
;

\node[world] (w1) at (5cm,1cm) {};
\node[world] (w2) [right = 3mm of w1] {};
\node[world] (w3) [right = 3mm of w2] {};
\node[world] (w4) [right = 3mm of w3] {};
\node[world] (w5) [right = 3mm of w4] {};
\node[world] (w6) [right = 3mm of w5] {};
\node[world] (w7) [right = 3mm of w6] {};
\node[world] (w8) [right = 3mm of w7] {};

\node[dummy] (l1) [below = 0.5mm of w1] {$s_1$};
\node[dummy] (l2) [below = 0.5mm of w8] {$b_1$};
\node[dummy] (l3) [above = 0.5mm of w1] {$s_2$};
\node[dummy] (l4) [above = 0.5mm of w5] {$s_2$};
\node[dummy] (l5) [above = 0.5mm of w4] {$b_2$};
\node[dummy] (l6) [above = 0.5mm of w8] {$b_2$};

\path[->]
(w1) edge (w2)
(w2) edge (w3)
(w3) edge (w4)
(w4) edge (w5)
(w5) edge (w6)
(w6) edge (w7)
(w7) edge (w8)
(w8) edge [loop right] (w8)
;

\draw[-] [
    thick,
    decoration={
        brace,
        mirror,
        raise=0.6cm
    },
    decorate
] (w1.west) -- (w4.east)
node [pos=0.5,anchor=north,yshift=-0.65cm] {$x_1$};

\draw[-] [
    thick,
    decoration={
        brace,
        mirror,
        raise=0.6cm
    },
    decorate
] (w5.west) -- (w8.east)
node [pos=0.5,anchor=north,yshift=-0.65cm] {$\neg x_1$};

\draw[-] [
    thick,
    decoration={
        brace,
        raise=0.6cm
    },
    decorate
] (w1.west) -- (w2.east)
node [pos=0.5,anchor=north,yshift=1.1cm] {$x_2$};

\draw[-] [
    thick,
    decoration={
        brace,
        raise=0.6cm
    },
    decorate
] (w3.west) -- (w4.east)
node [pos=0.5,anchor=north,yshift=1.1cm] {$\neg x_2$};

\draw[-] [
    thick,
    decoration={
        brace,
        raise=0.6cm
    },
    decorate
] (w5.west) -- (w6.east)
node [pos=0.5,anchor=north,yshift=1.1cm] {$x_2$};

\draw[-] [
    thick,
    decoration={
        brace,
        raise=0.6cm
    },
    decorate
] (w7.west) -- (w8.east)
node [pos=0.5,anchor=north,yshift=1.1cm] {$\neg x_2$};

\node[dummy] (left) at (0,-1cm) {Proof tree of qbf $\varphi$};

\node[dummy] (right) at (7.25cm,-1cm) {Kripke structure of $\varphi^*$};

\node[dummy] (arrow) at (3.5cm, 1cm) {\huge$\Rightarrow$};

\end{tikzpicture}\caption{\label{fig:af-idea}Sketch of the reduction from $\TQBF$ to $\SAT(\B(\AF))$}
\end{figure}

The implementation uses several auxiliary propositions. The variables $t_i$ and $t'_i$ span an interval on the path where $x_i$ is true. Conversely, $f_i$ and $f'_i$ span an interval where $x_i$ is false. The actual truth resp.\ falsity of $x_i$ in these segments is enforced by the formula $\gamma_i$.
The formulas $\alpha_i^\forall$ and $\alpha_i^\exists$ are responsible for the mentioned subdivision of a path: in one case, both the "true" and "false"
subsegments are forced to appear in this order. In the second case, one can be chosen.

Intuitively, the propositions $s_i$ and $b_i$ have the following meaning. Every occurrence of $s_i$ on the path \emph{starts} the subdivision into either one or two subsegments with respect to the truth of $x_i$. Any occurrence of $b_i$ \emph{blocks} all imposed $\AF$s containing $\neg b_i$, such as in the $\alpha$-subformulas. This is due to $b_i \rightarrow \EG b_i$ holding everywhere on the path, and the fact that the $\AF$s in $\alpha_i^{Q_i}$ are of the form $\AF(\ldots \sland \neg b_i)$.

As a result, the formula $\beta_i$ ensures that there is no "overlapping" of segments: the $\AF$-subformulas of $\alpha^{Q_i}_i$ are fulfilled on the path exactly in the order as they appear in the formula. Furthermore, the subdivisions for $x_{i+1}$ between $t_i$ and $t'_i$ resp.\ $f_i$ and $f'_i$ are contained inside these segments.

The proposition $e$ simply enforces the initial $b_1$ to appear on the path.
The complete formula $\varphi^*$ is defined as

\begin{align*}
\varphi^* &\dfn s_1 \sland\AF e \sland \EG \bigg[\psi\, \sland \,(e \imp b_1)  \, \sland \,\bigand_{i=1}^n
\big(\alpha_i  \, \sland \,\beta_i \,\sland
\,\gamma_i \big) \bigg]\text{,}
\end{align*}
where $\alpha_i \dfn \alpha^{Q_i}_i$,
\begin{alignat*}{4}
&\alpha^\forall_i \dfn \;&&
\big(s_i \rightarrow \AF (t_i \sland \neg b_i)\big) \;\sland \; && \qquad\qquad  \alpha^\exists_i \dfn \;  &&\big(s_i \rightarrow \AF((t_i \lor f_i)\sland \neg b_i)\big) \;\sland \;\\
& && \big(t_i \rightarrow (s_{i+1} \sland \AF(t'_i  \sland \neg b_i))\big)\;\sland \;
&& && \big(t_i \rightarrow (s_{i+1} \sland \AF (t'_i \sland \neg b_i))\big)\;\sland \;
\\
& &&\big(t'_i \rightarrow (b_{i+1} \sland \AF (f_i \sland \neg b_i))\big)\;\sland \;
&& && \big(t'_i \rightarrow b_{i+1}\big)\;\sland \;\\
& && \big(f_i \rightarrow (s_{i+1} \sland \AF (f'_i \sland \neg b_i))\big)\;\sland \;
&& &&\big(f_i \rightarrow (s_{i+1} \sland \AF (f'_i \sland \neg b_i))\big)\;\sland \;\\
& &&\big(f'_i \rightarrow b_{i+1}\big)
&& && \big(f'_i \rightarrow b_{i+1}\big)
\end{alignat*}
and
\begin{align*}
\beta_i \dfn &\;(b_i \rightarrow \EG b_i) \\
\gamma_i \dfn &\;\big(\AF t'_i \rightarrow x_i\big) \sland
\big(\AF f'_i \rightarrow \neg x_i\big)\text{.}
\end{align*}

It is easy to show that $\alpha_i$, $\beta_i$ and $\gamma_i$ are all logspace-constructible.
The following lemmas prove the correctness of the reduction.\end{proof}
First we prove that there are in fact the required intervals with $x_i$ being true resp. false between occurrences of $s_i$ and $b_i$.
Let $\calM$ be a model of $\varphi^*$ and $\pi$ a path through it.
Say that $x_i \in \PS$ is \emph{uniformly true} (resp.\ \emph{uniformly false}) on a sequence $\rho = (\pi[j], \ldots, \pi[k])$ of worlds if $\pi[o] \vDash x_i$ (resp.\ $\pi[o]\vDash \neg x_i$) for all $o \in \{j,\ldots, k\}$.

For sequences $\rho = (\pi[j],\ldots,\pi[k])$ and $\rho' = (\pi[j'],\ldots,\pi[k'])$, $\rho$ \emph{contains} $\rho'$ if $j \leq j' \leq k' \leq k$.
For $j \leq k$, call a subsequence $\rho = (\pi[j],\ldots, \pi[k])$ an \emph{$m$-segment} of $\pi$ if $\pi[j]\vDash s_m$, $\pi[k]\vDash b_m$, and $x_i$ is uniformly true or uniformly false on $\rho$ for all $i \in [m-1]$.

\begin{lemma}\label{lem:subdivisions}
Let $\calM$ be a model of $\varphi^*$ and $\pi$ a path through it that satisfies the outermost $\EG$ operator.
Let $\rho$ be an $m$-segment on $\pi$.

If $Q_m = \exists$, then $\rho$ contains an $(m+1)$-segment.
If $Q_m = \forall$, then $\rho$ contains an $(m+1)$-segment where $x_m$ is uniformly true and another $(m+1)$-segment where $x_m$ is uniformly false.
\end{lemma}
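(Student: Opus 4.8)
The plan is to use that $\pi$ witnesses the outermost $\EG$, so the entire bracketed conjunction of $\varphi^*$—in particular $\alpha_m$, $\beta_m$ and $\gamma_m$—holds at \emph{every} world $\pi[o]$ of $\pi$. Before assembling the segments I would isolate two reusable facts. The first (\emph{blocking}) makes precise the informal claim that ``$b_m$ blocks the pending $\AF$s'': if $\pi[o]\vDash\AF(\chi\sland\neg b_m)$ and $\pi[o']\vDash b_m$ for some $o'\geq o$, then $\chi$ holds at some $\pi[o'']$ with $o\leq o''<o'$. I would prove it via $\beta_m$: since $\pi[o']\vDash b_m\imp\EG b_m$, there is a path $\tau$ out of $\pi[o']$ on which $b_m$ holds forever; splicing the $\pi$-prefix $\pi[o],\dots,\pi[o']$ onto $\tau$ gives a path out of $\pi[o]$ on which $\neg b_m$, and hence $\chi\sland\neg b_m$, can only occur strictly before $o'$. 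As $\AF$ forces $\chi\sland\neg b_m$ on that path, the witness lies at some $\pi[o'']$ with $o''<o'$.

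The second fact (\emph{propagation}) is that if $w\vDash\AF\theta$ and $w\not\vDash\theta$, then every $R$-successor of $w$ satisfies $\AF\theta$: prepending $w$ to any path out of a successor produces a path out of $w$, which must meet $\theta$, and not already at $w$. By induction along $\pi$ this yields: if $\pi[a]\vDash\AF\theta$ and $\theta$ fails on $\pi[a],\dots,\pi[c-1]$, then $\pi[l]\vDash\AF\theta$ for all $l\in[a,c]$. This is the device that will let me apply $\gamma_m$ uniformly across a whole subsegment rather than only at its endpoints.

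With these in hand, fix the $m$-segment $\rho=(\pi[j],\dots,\pi[k])$ and let $k_m$ be the least index $\geq j$ with $\pi[k_m]\vDash b_m$, so $j\leq k_m\leq k$ and $\neg b_m$ holds throughout $[j,k_m)$. For $Q_m=\forall$: from $\pi[j]\vDash s_m$ and the first conjunct of $\alpha^\forall_m$ we get $\pi[j]\vDash\AF(t_m\sland\neg b_m)$, so blocking (with blocker $k_m$) locates the first $t_m$ at some $j_1\in[j,k_m)$; the second conjunct then gives $\pi[j_1]\vDash s_{m+1}\sland\AF(t'_m\sland\neg b_m)$, and blocking locates the first $t'_m$ at $j_2\in[j_1,k_m)$, where the third conjunct yields $\pi[j_2]\vDash b_{m+1}$. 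Thus $(\pi[j_1],\dots,\pi[j_2])$ has the correct endpoints and, being contained in $\rho$, inherits uniform truth/falsity of each $x_i$ with $i<m$; and since $t'_m$ first appears at $j_2$, propagation pushes $\AF t'_m$ across $[j_1,j_2]$, so the conjunct $\AF t'_m\imp x_m$ of $\gamma_m$ makes $x_m$ uniformly true. Running the same chain through the fourth and fifth conjuncts, starting from the $\AF(f_m\sland\neg b_m)$ that the third conjunct also supplies at $j_2$, produces indices $j_3\leq j_4$ in $[j_2,k_m)$ bounding an $(m+1)$-segment on which $\AF f'_m$, hence $\neg x_m$ via $\gamma_m$, holds uniformly. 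The $\exists$ case is shorter: $\pi[j]\vDash s_m$ and the first conjunct of $\alpha^\exists_m$ force $\AF((t_m\slor f_m)\sland\neg b_m)$, so blocking yields a first index where $t_m\slor f_m$ holds, and according to which disjunct is present there the $t$- or the $f$-branch produces a single $(m+1)$-segment with $x_m$ uniformly true or uniformly false.

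I expect the main obstacle to be exactly the blocking fact together with the propagation step that secures uniformity. This is where the ``mixed'' quantifier $\AF$ has to be tamed: blocking (via the spliced $\EG b_m$ path) is what forbids the imposed $\AF$-targets from being satisfied only beyond the segment boundary, while choosing the \emph{first} occurrences of $t_m,t'_m,f_m,f'_m$ below $k_m$ is precisely what guarantees $\theta$ fails before the endpoint so that propagation applies. Once these are established, checking the endpoint labels and inheriting the uniformity of $x_i$ for $i<m$ from $\rho$ is routine.
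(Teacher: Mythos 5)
Your proof is correct and follows essentially the same route as the paper's: your blocking fact is exactly the paper's splicing argument (run there with the blocker at $\pi[k]$ rather than at the first $b_m$-world) used to show $o_1,\ldots,o_4\le k$, and your forward propagation of $\AF t'_m$ is the contrapositive of the paper's $\EG\neg t'_m$ contradiction that establishes uniformity of $x_m$ via $\gamma_m$. Packaging these two steps as reusable sub-lemmas is a presentational difference only.
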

\begin{proof}
Let $\rho = (\pi[j], \ldots, \pi[k])$ be an $m$-segment.
Then $\pi[j] \vDash s_m$ and $\pi[k] \vDash b_m$.
For the rest of the proof, suppose $Q_m = \forall$ (the case $Q_m = \exists$
is handled similarly).

Let $o_1 \geq j$ be the smallest number such that $\pi[o_1]\vDash t_m \land \neg b_m$, and similarly $o_2 \geq o_1$ the smallest such that $\pi[o_2]\vDash t'_m \land \neg b_m$; $o_3 \geq o_2$ such that $\pi[o_3] \vDash f_m\land \neg b_m$; and $o_4 \geq o_3$ such that $\pi[o_4]\vDash f'_m\land \neg b_m$.
These worlds occur on $\pi$ in this order due to $\pi[j]\vDash \alpha_m^\forall$.

Next, we prove $o_1,\ldots,o_4 \leq k$. Assume for the sake of contradiction that, \eg, $o_3 \leq k < o_4$. ($o_2 \leq k < o_3$ etc.\ lead to a contradiction analogously.)
For all $o_3 \leq o < o_4$, it holds $\pi[o] \vDash (\neg f'_m \lor b_m)$ by definition of $o_4$.
Furthermore, $\pi[k] \vDash \EG b_m$ by $\beta_m$. Consequently, $\pi[o_3]\vDash \EG b_m$,
contradicting $\pi[o_3] \vDash \AF(f'_m \land \neg b_m)$.

These subsegments of $\rho$ have correct "delimiters" due to $\alpha^\forall_m$: the first one, as $\pi[o_1] \vDash s_{m+1}$ and $\pi[o_2] \vDash b_{m+1}$, and the second one, as $\pi[o_3]\vDash s_{m+1}$ and $\pi[o_4]\vDash b_{m+1}$.
In order to prove that $(\pi[o_1],\ldots,\pi[o_2])$ and $(\pi[o_3],\ldots,\pi[o_4])$ are the desired $(m+1)$-segments, by $\gamma_m$ it suffices to show that $\pi[o] \vDash \AF t'_m$ for all $o_1 \leq o \leq o_2$. (Showing $\pi[o] \vDash \AF f'_m$ for all $o_3 \leq o \leq o_4$ again works similarly.)

For the sake of contradiction, suppose there exists $o \in \{ o_1, \ldots, o_2\}$ such that $\pi[o] \vDash \EG \neg t'_m$. Clearly $o \neq o_2$, as $\pi[o_2] \vDash t'_m$. But then all worlds between $\pi[o_1]$ and $\pi[o]$ satisfy $(\neg t_m' \lor b_m)$, so $\pi[o_1]\vDash \EG (\neg t_m' \lor b_m)$, contradiction.
\end{proof}

In what follows, we say that a world $w$ \emph{agrees} with some assignment $\theta \colon \{x_1, \ldots, x_m\} \to \{0,1\}$, in symbols $w \vdash \theta$, if $\theta(x_i) = 1 \Leftrightarrow w \vDash x_i$ for all $i \in \{1,\ldots,m\}$.
Similarly, given an $m$-segment $\rho = (\pi[j], \ldots, \pi[k])$, we say that $\rho$ agrees with $\theta$, in symbols $\rho \vdash \theta$, if the worlds $\pi[j], \ldots, \pi[k]$ all agree with $\theta$.

\begin{lemma}\label{lem:proof-tree-in-af}
If $\varphi^*$ is satisfiable, then $\varphi = Q_1 x_1 \ldots Q_n x_n \psi$ has a proof tree.
\end{lemma}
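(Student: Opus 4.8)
The plan is to read a proof tree directly off a nested family of segments on a single fixed path, using Lemma~\ref{lem:subdivisions} as the engine that supplies the branching. Suppose $\varphi^*$ is satisfiable and let $\calM$ be a model with root $w_0$. Since $w_0 \vDash \varphi^*$, in particular $w_0 \vDash \EG[\ldots]$, so I fix a path $\pi$ with $\pi[0] = w_0$ along which the entire body of the outermost $\EG$ holds at every world; this is precisely the path to which Lemma~\ref{lem:subdivisions} applies. Thus $\psi$, $e \imp b_1$, and all $\alpha_i, \beta_i, \gamma_i$ hold at every $\pi[o]$.

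First I would produce the root segment. From the first conjunct of $\varphi^*$ we get $\pi[0] \vDash s_1$, and $w_0 \vDash \AF e$ forces $\pi[k_0] \vDash e$ for some $k_0 \geq 0$; since $e \imp b_1$ holds along $\pi$, also $\pi[k_0] \vDash b_1$. Hence $(\pi[0], \ldots, \pi[k_0])$ is a $1$-segment, and it vacuously agrees with the empty assignment $\theta_0$.

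The heart of the argument is an induction on $m$ from $1$ to $n$ that constructs the tree $T = (\Theta, E)$ top-down, maintaining the invariant that every node $\theta \colon \{x_1,\ldots,x_{m-1}\} \to \{0,1\}$ already placed in $\Theta$ is witnessed by an $m$-segment $\rho$ with $\rho \vdash \theta$. Applying Lemma~\ref{lem:subdivisions} to $\rho$: if $Q_m = \exists$ it yields one $(m+1)$-segment $\rho'$ contained in $\rho$, on which $x_m$ is uniformly some value $b$, so $\rho' \vdash \theta^{x_m}_b$ and I add the single child $\theta^{x_m}_b$; if $Q_m = \forall$ it yields two $(m+1)$-segments, uniformly true resp.\ false in $x_m$, witnessing $\theta^{x_m}_1$ and $\theta^{x_m}_0$, which I add as the two children. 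Because any $(m+1)$-segment contained in $\rho$ still agrees with $\theta$ on $x_1,\ldots,x_{m-1}$, the invariant is preserved, and conditions (1) and (2) of Definition~\ref{def:witness-seq} hold by construction.

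Finally, the leaf condition (3): after step $m = n$ every leaf is an $(n+1)$-segment $\rho$ agreeing with a total assignment $\theta \colon \{x_1,\ldots,x_n\} \to \{0,1\}$. Any world of $\rho$ satisfies $\psi$ (a global conjunct on $\pi$) and agrees with $\theta$; since $\psi \in \B_0$ has $\td(\psi) = 0$, its truth at a world depends only on the labels of $x_1,\ldots,x_n$ there, whence $\theta \vDash \psi$. This yields a proof tree for $\varphi$. The only delicate points are the base and leaf steps, together with the bookkeeping that the "agrees" relation survives the passage to subsegments; the genuine combinatorial content---that the subsegments exist with the correct delimiters and uniform truth values---has already been discharged in Lemma~\ref{lem:subdivisions}.
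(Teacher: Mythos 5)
Your proof is correct and follows essentially the same route as the paper: fix the path witnessing the outermost $\EG$, obtain the root $1$-segment from $s_1$, $\AF e$ and $e \imp b_1$, use Lemma~\ref{lem:subdivisions} to supply the $(m+1)$-segments witnessing the children, and verify the leaf condition from $\G\psi$ together with $\td(\psi)=0$. The only cosmetic difference is that you build the tree top-down by recursion, whereas the paper defines the full graph of all agreeing segments at once and then crops it to a proof tree.
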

\begin{proof}
Let $\calM$ be a model of $\varphi^*$, and let $\pi$ be a path through $\calM$ that witnesses the outermost $\EG$ operator in $\varphi^*$.
The following graph $T = (\Theta, E)$ contains a proof tree for $\varphi$. $\Theta$ is the set of all assignments $\theta \colon \{x_1, \ldots, x_{m-1}\} \to \{0,1\}$ for which there is an agreeing $m$-segment $\rho$ on $\pi$, formally
\[
\Theta \dfn \Set{\theta \colon \{x_1, \ldots, x_{m-1}\} \to \{0,1\} | \begin{array}{l}
1\leq m \leq n+1,\; \exists \; m\text{-segment } \rho\\
\text{on }\pi \text{ such that }\rho \vdash \theta\end{array}}\text{.}
\]

The edges are
\[
E \dfn \Set{ (\theta, \theta') \in \Theta^2 | \begin{array}{l}\text{if }\theta \colon \{x_1,\ldots,x_{m-1}\} \to \{0,1\},\\
\text{then }\exists\, b \in \{0,1\} \text{ such that }\theta' = \theta^{x_m}_b\\
\exists\, m\text{-segment} \;\rho,\; \exists (m+1)\text{-segment} \;\rho'\;\\
\text{such that }\rho\text{ contains }\rho', \rho \vdash \theta \text{ and }\rho' \vdash\theta'\end{array} }\text{.}
\]

Following Definition~\ref{def:witness-seq}, we show that $T$ indeed contains a proof tree of $\varphi$.\footnote{$(\Theta,E)$ may have "wrong", successor-free vertices, so it may not be a proof tree. Nevertheless, we can just crop all worlds unreachable from the root to obtain one.} The empty assignment is in $\Theta$, since there is an $1$-segment (with arbitrary assignment) between the root of $\calM$ (which satisfies $s_1$) and the first point of $\pi$ that satisfies $b_1$.

    If $\theta \in \Theta$ for $\theta \colon \{x_1,\ldots,x_{m-1}\} \to \{0,1\}$, then by definition of $\Theta$ there is an $m$-segment $\rho$ with $\rho \vdash \theta$. Assume $Q_m = \exists$. By Lemma~\ref{lem:subdivisions}, $\rho$ contains an $(m+1)$-segment $\rho'$. Then the assignment $\theta' \dfn \theta^{x_m}_0$ (if $x_m$ is uniformly false on $\rho'$) or $\theta' \dfn \theta^{x_m}_1$ (if $x_m$ is uniformly true on $\rho'$) is in $\Theta$, and consequently $(\theta, \theta') \in E$.
The case $Q_m = \forall$ works analogously.

It remains to show that all leaf assignments $\theta : \{x_1, \ldots, x_n\} \to \{0,1\}$ in $\Theta$ actually satisfy the matrix $\psi$. First observe that for each such $\theta$, $\rho\vdash\theta$ for some $(n+1)$-segment $\rho$ of $\pi$. As $\G\psi$ holds on $\pi$, at least one world $\pi[j]$ agrees with $\theta$ on $\{x_1,\ldots,x_n\}$ and still satisfies $\psi$.
\end{proof}

The above lemma shows the first direction of the correctness. For the other direction---$\varphi^*$ actually being satisfiable if $\varphi$ is true---we construct a model of $\varphi^*$ by an inductive approach. For $m = 0,1,\ldots,n$, we define a Kripke structure $\calK_m = (W_m, R_m, V_m)$ such that $\calK_m$ contains an $R_m$-path $\pi_m$ and for all $w \dfn \pi_m[j]$, $j\geq 0$:
\begin{enumerate}
    \item $w \vDash (e \imp b_1) \land \bigwedge_{i=1}^m (\alpha_i \land \gamma_i) \land \bigwedge_{i=1}^{m+1}\beta_i$,
    \item $w$ satisfies exactly one of $\Set{ s_i, b_i | i \in [n]}$,
    \item $w$ satisfies none of $\Set{t_i, t'_i, f_i, f'_i, s_{i+1}, b_{i+1} | i > m}$,
    \item some $\theta \colon \{x_1,\ldots,x_{m}\} \to \{0,1\}$ in $T$ agrees with $w$,    \item $w \vDash (s_1 \land \AF e)$ if $w = \pi_m[0]$.
\end{enumerate}
Since $T$ is a proof tree of $\varphi$, (4) implies $(\calK_n,\pi_n) \vDash \G \psi$. By additionally (1) and (5), $(\calK_n,\pi_n[0])$ is the desired model of $\varphi^*$. The properties (2)--(3) are not directly required, but simplify the inductive step.

\begin{lemma}
Let $T$ be a proof tree. For all $m \in \{0,1,\ldots,n\}$, there is a Kripke structure $\calK_m = (W_m,R_m,V_m)$ satisfying the properties (1)--(5).
\end{lemma}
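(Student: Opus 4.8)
The plan is to prove the lemma by induction on $m$, constructing the structure $\calK_m$ together with its distinguished path $\pi_m$ (the eventual witness of the outermost $\EG$) in one go. For the base case $m=0$ I would take $\pi_0$ to be a two‑world path $w_0 \to w_1$ with a self‑loop at $w_1$, where $w_0 \vDash s_1$ and $w_1 \vDash b_1 \land e$, all other $s_i,b_i$ and all $t_i,t'_i,f_i,f'_i$ being false and the $x_i$ fixed arbitrarily, with $e$ false at $w_0$. Then (2) and (3) are immediate; (4) holds because the empty root assignment of $T$ agrees with every world vacuously; (5) holds since the unique path from $w_0$ reaches $e$ at $w_1$, giving $\AF e$; and (1) reduces to $(e \to b_1) \land \beta_1$, true because $e \to b_1$ holds at both worlds and $\EG b_1$ at $w_1$ is witnessed by the self‑loop.

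For the inductive step I would refine every innermost segment of $\pi_m$, that is, every $(m+1)$-segment $\rho$, which by (4) agrees with a depth‑$m$ node $\theta$ of $T$, according to $Q_{m+1}$ and the children of $\theta$ guaranteed by the proof‑tree conditions of Definition~\ref{def:witness-seq}. If $Q_{m+1} = \forall$, both children $\theta^{x_{m+1}}_0, \theta^{x_{m+1}}_1$ lie in $T$, and I replace $\rho$ by a fragment that passes a $t_{m+1}$-world ($=s_{m+2}$) opening an $(m+2)$-segment with $x_{m+1}$ true agreeing with $\theta^{x_{m+1}}_1$, closes it at a $t'_{m+1}$-world ($=b_{m+2}$), then opens a second $(m+2)$-segment at an $f_{m+1}$-world ($=s_{m+2}$) with $x_{m+1}$ false agreeing with $\theta^{x_{m+1}}_0$, closed at an $f'_{m+1}$-world, before the terminating $b_{m+1}$-world; the markers are labelled exactly as forced by $\alpha^\forall_{m+1}$, with $x_1,\ldots,x_m$ held fixed to $\theta$ throughout. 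If $Q_{m+1} = \exists$, $\theta$ has a unique child $\theta^{x_{m+1}}_b$ in $T$ and I insert only the matching single $(m+2)$-segment, following $\alpha^\exists_{m+1}$. Concatenating these fragments over all innermost segments yields $\pi_{m+1}$. Properties (2)–(4) can then be read off the labelling directly, (3) now permitting the freshly introduced level‑$(m+1)$ and level‑$(m+2)$ markers, and (5) is inherited since the root is untouched and $e$ continues to sit at the terminal $b_1$-world reached on every path.

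The heart of the argument, and the step I expect to be the main obstacle, is verifying (1): that the $\AF$-conjuncts of $\alpha_{m+1}$ hold on \emph{every} path (as do the still‑required $\AF$s of $\alpha_1,\ldots,\alpha_m$ and the implications of $\gamma_{m+1}$), while simultaneously satisfying $\beta_{m+2}$ together with the inherited $\beta_1,\ldots,\beta_{m+1}$, which demand $\EG b_j$ at every $b_j$-world. These requirements pull in opposite directions: $\AF(\cdots \land \neg b_j)$ asks that all paths leave $b_j$ at a labelled world, whereas $\EG b_j$ asks for a path that stays in $b_j$ forever. My plan to reconcile them is to attach the $\EG b_j$-witness only at the \emph{end} of a $j$-segment, where, exactly as in the proof of Lemma~\ref{lem:subdivisions}, every obligation $\AF(\cdots \land \neg b_j)$ triggered inside that segment has already been discharged, and to realise this witness not by a bare $b_j$-sink but by attaching, as an additional successor of that world, a copy of the path tail from it onward in which $b_j$ is forced true everywhere and all other labels are mirrored.

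On such a copy $\EG b_j$ holds trivially, while every pending obligation $\AF(\chi \land \neg b_{j'})$ with $j' \neq j$ is met at the mirror of the world that meets it on the main path, and crucially no obligation of the form $\AF(\cdots \land \neg b_j)$ is still pending at a $j$-segment boundary, so forcing $b_j$ on the copy breaks nothing. I expect the delicate part to be the exact bookkeeping of which obligations are pending at each boundary world and the verification that none of them mentions the very blocker $b_j$ being forced; this is precisely where the placement of the $s_i,b_i$ markers and the $\neg b_i$ guards inside the $\alpha$-formulas earns its keep.
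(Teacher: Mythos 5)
Your base case and the overall skeleton of the inductive step (refine each innermost segment according to $Q_{m+1}$ and the children of the agreeing proof-tree node, placing the markers as dictated by $\alpha^{Q_{m+1}}_{m+1}$) match the paper's construction. The gap is in how you provide the \emph{existential} path witnesses. You correctly identify that $\beta_j$ demands $\EG b_j$ at every $b_j$-world and you design your tail-copies for exactly that purpose, but you treat $\gamma_{m+1}$ as if it were one more universal $\AF$-obligation to be checked on every path. It is the opposite: $\gamma_i = (\AF t'_i \to x_i) \land (\AF f'_i \to \neg x_i)$ holds at a world $w$ with $x_i$ true only if there is a path from $w$ on which $f'_i$ \emph{never} occurs (and dually with $t'_i$ when $x_i$ is false). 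Your only auxiliary structures are copies of the path tail "in which $b_j$ is forced true everywhere and all other labels are mirrored"; since the mirrored labels still contain every later occurrence of $t'_i$ and $f'_i$, these copies provide no such escape. Concretely, at the $t_{m+1}$-world of a $\forall$-segment, $x_{m+1}$ is true, yet every path in your structure --- the main path and every copy branching off it --- eventually reaches an $f'_{m+1}$-world, so $\AF f'_{m+1}$ holds there and $\gamma_{m+1}$ fails. The same failure occurs for $\gamma_i$ with $i\le m$ at mid-segment worlds, because later $i$-segments reintroduce $t'_i$ and $f'_i$ on the tail.

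The paper's construction resolves exactly this by attaching to each gadget two dedicated self-loop sink worlds $u^j_g$ and $u^j_\ell$: they are labeled with essentially every proposition \emph{except} $t'_m$, $f'_m$ and the low-level blockers $b_i$ ($i \le m$), so a single step into them simultaneously (i) witnesses $\EG \neg f'_m$ from the $t$-side and $\EG \neg t'_m$ from the $f$-side (and, via $b_{m+1}$ and $e$, the $\EG b_{m+1}$ and $\AF e$ requirements), and (ii) instantly discharges every pending eventuality $\AF(\cdots \land \neg b_i)$ of the lower levels, so the branch cannot break any universal obligation. Your copies could in principle replace the sinks for the $\beta$-formulas, but without an analogous device that \emph{suppresses} the $t'_i/f'_i$ markers on some branch, the $\gamma$-formulas --- which are precisely what forces $x_i$ to be uniform on segments and hence what Lemma~\ref{lem:subdivisions} relies on --- cannot be satisfied, so the construction as described does not yield a model of $\varphi^*$.
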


\begin{proof}Let $\calK_0$ be as in the following picture. It is easy to verify (1)--(5) for $m = 0$, in particular $T$ contains the empty assignment which agrees with all worlds of $\calK_0$.

\begin{center}
\begin{tikzpicture}[->,>=stealth',shorten >=1pt,auto,node distance=1.3cm,        thick,         world/.style={circle,fill=black!10,draw,minimum size=0.4cm,inner sep=0pt},         world2/.style={circle,dotted,draw,minimum size=0.5cm,inner sep=0pt},         dummy/.style={inner sep=0pt,node distance=.55cm}]

\node[world] (s0) at (0,0) {};
\node[world] (s1) [right of=s0] {};
\node[dummy] (l0) [below of=s0] {$s_1$};
\node[dummy] (l1) [below of=s1] {$b_1,e$};

\path[->]
(s0) edge node{} (s1)
(s1) edge [loop right] (s1)
;
\end{tikzpicture}
\end{center}

We proceed with the inductive step: assume that $\calK_{m-1} = (W_{m-1},R_{m-1},V_{m-1})$ and an $R_{m-1}$-path $\pi$ exist as above.
First note that, by condition (4) of the induction hypothesis, for all $j$ the tree $T$ contains an assignment $\theta \colon \{x_1, \ldots, x_{m-1}\} \to \{0,1\}$ that agrees with $\pi[j]$. Call that assignment $\theta_{\pi[j]}$ here.

\begin{figure}\centering
\begin{tikzpicture}[framed,->,>=stealth',shorten >=1pt,auto,node distance=1.3cm,        thick,         world/.style={circle,fill=black!10,draw,minimum size=0.4cm,inner sep=0pt},         world2/.style={circle,dotted,draw,minimum size=0.4cm,inner sep=0pt},         dummy/.style={inner sep=0pt,node distance=.6cm}]

\node[world] (s0) at (0,0) {};
\node[world2] (pre) [left of=s0] {};
\node[dummy] (l0) [below of=s0] {$\pi[j]$};
\node[world2] (post) [right of=s0] {};

\node[dummy] (before) at (-2.8cm,0) [text width=2cm] {Before:};

\path[->]
(pre) edge [dotted] (s0)
(s0) edge [dotted] (post)
;

\node[world] (ts0) at (0,-3.5cm) {};
\node[world2] (tpre) [left of=ts0] {};
\node[world] (tu1) [right of=ts0] {};
\node[world] (tu2) [right of=tu1] {};
\node[dummy] (tl0) [below of=ts0] {$\pi[j]$};
\node[dummy] (tul1) [below of=tu1] {$u^j_p$};
\node[dummy] (tul2) [below of=tu2] {$u^j_{p'}$};
\node[world2] (tpost) [right of=tu2] {};
\node[world] (tul) [above=0.6cm of tu2] {};
\node[world] (tug) [above=0.6cm of ts0] {};
\node[dummy] (tll) [above=0.07cm of tul] {$u^j_\ell$};
\node[dummy] (tgl) [above=0.07cm of tug] {$u^j_g$};

\node[dummy] (before) at (-2.8cm,-3.5cm) [text width=3cm] {Case $Q_m = \exists$:\\($p \in\{t,f\}$)};

\path[->]
(tpre) edge [dotted] (ts0)
(ts0) edge node{} (tu1)
(ts0) edge (tug)
(tug) edge [loop right] (tug)
(tu1) edge node{} (tu2)
(tu2) edge [dotted] (tpost)
(tu2) edge (tul)
(tul) edge [loop right] (tul)
;

\node[world] (tfs0) at (0,-7cm) {};
\node[world2] (tfpre) [left of=tfs0] {};
\node[world] (tfug) [above=0.6cm of tfs0] {};
\node[world] (tfu1) [right of=tfs0] {};
\node[world] (tfu2) [right of=tfu1] {};
\node[world] (tfu3) [right of=tfu2] {};
\node[world] (tfu4) [right of=tfu3] {};
\node[dummy] (tfl0) [below of=tfs0] {$\pi[j]$};
\node[dummy] (tful1) [below of=tfu1] {$u^j_t$};
\node[dummy] (tful2) [below of=tfu2] {$u^j_{t'}$};
\node[dummy] (tful3) [below of=tfu3] {$u^j_f$};
\node[dummy] (tful4) [below of=tfu4] {$u^j_{f'}$};
\node[world2] (tfpost) [right of=tfu4] {};

\node[world] (tful) [above=0.6cm of tfu4] {};
\node[dummy] (tfll) [above=0.07cm of tful] {$u^j_\ell$};
\node[dummy] (tfgl) [above=0.07cm of tfug] {$u^j_g$};

\node[dummy] (before) at (-2.8cm,-7cm) [text width=3cm] {Case $Q_m = \forall$:};

\path[->]
(tfpre) edge [dotted] (tfs0)
(tfs0) edge node{} (tfu1)
(tfu1) edge node{} (tfu2)
(tfu2) edge node{} (tfu3)
(tfu3) edge node{} (tfu4)
(tfu4) edge [dotted] (tfpost)
(tfu4) edge (tful)
(tfu2) edge (tful)
(tful) edge [loop right] (tful)
(tfug) edge [loop right] (tfug)
(tfs0) edge (tfug)
;

\end{tikzpicture}\caption{\label{fig:af-construction}Subdivision step from $\calK_{m-1}$ to $\calK_m$}
\end{figure}
 
Define the new structure as follows. Modifications are performed immediately between worlds $\pi[j] \in V_{m-1}(s_m)$ and their successor $\pi[j+1]$. Namely, the edge between them is removed and the substructure depicted in Figure~\ref{fig:af-construction} is inserted. If $Q_m = \exists$ and $\theta \dfn \theta_{\pi[j]}$ has $\theta^{x_m}_1$ as a child in $T$, then insert worlds $u^j_t, u^j_{t'}$. Otherwise insert $u^j_f,u^j_{f'}$, and if $Q_m = \forall$, then insert all four worlds. The worlds $u^j_g, u^j_\ell$ are added  unconditionally. Formally, if $Q_m = \exists$, then
\begin{align*}
    W_m &\dfn W_{m-1} \cup \bigcup\Set{ U^j | \pi[j] \vDash s_m }\\
    R_m &\dfn R_{m-1} \setminus \Set{ (\pi[j], \pi[j+1]) | \pi[j] \vDash s_m } \cup
    \Set{ R^j | \pi[j] \vDash s_m }\text{.}\end{align*}
where $U^j \dfn \{ u^j_p, u^j_{p'}, u^j_g, u^j_\ell \}$ with $p = t$ if $\theta_{\pi[j]}$ has $(\theta_{\pi[j]})^{x_m}_1$ as a child in $T$, and with $p = f$ otherwise, and
\begin{align*}
R^j &\dfn \{ (\pi[j], u^j_p), (\pi[j],u^j_g), (u^j_g,u^j_g), (u^j_p, u^j_{p'}), (u^j_{p'}, u^j_\ell), (u^j_\ell, u^j_\ell), (u^j_{p'}, \pi[j+1])\}\text{.}
\end{align*}

If $Q_m = \forall$, then $U^j \dfn \{ u^j_t, u^j_{t'}, u^j_{f}, u^j_{f'}, u^j_\ell, u^j_g \}$ and
\begin{align*}
R^j &\dfn \left\{ \begin{array}{l}(\pi[j], u^j_t), (u^j_t, u^j_{t'}, (u^j_{t'}, u^j_f), (u^j_{t'}, u^j_{\ell}) (u^j_f, u^j_{f'}),\\
(u^j_{f'}, u^j_{\ell}) (u^j_\ell, u^j_\ell), (u^j_{f'}, \pi[j+1]) \end{array}\right\}\text{.}
\end{align*}

Let $U_z$ denote the set of \emph{all} inserted worlds $u^j_z$, \ie, $U_z \dfn \Set{ u^j_z | \pi[j] \vDash s_m }$ for $z \in \{t,t',f,f',g,\ell\}$.

\medskip

After the worlds and edges, it remains to define the valuation $V_m$:
\begin{alignat*}{4}
&V_m(t_m)&&\dfn U_t \cup U_g \quad \quad && V_m(s_{m+1})&&\dfn U_t \cup U_f\\
&V_m(t'_m) &&\dfn U_{t'} && V_m(b_{m+1}) &&\dfn U_{t'} \cup U_{f'} \cup U_\ell\\
&V_m(f_m) &&\dfn U_f \cup U_\ell && V_m(b_i) && \dfn V_{m-1}(b_i) \text{ for }i \leq m\\
&V_m(f'_m) &&\dfn U_{f'} &&  V_m(e) &&\dfn V_{m-1}(e)\cup U_g \cup U_\ell
\end{alignat*}
The assignments to $x_1, \ldots, x_{m-1}$ are expanded to $x_m$ as follows:
\[
V_m(x_m)\dfn U_t \cup U_{t'} \cup \Set{\pi[j] | \theta_{\pi[j]} \text{ has } (\theta_{\pi[j]})^{x_m}_1 \text{ as child in }T }\text{,}
\]
and for $i < m$, the value of $x_i$ is just "copied" to the inserted worlds:
\[
V_m(x_i)\dfn V_{m-1}(x_i) \cup \Set{ u^j_p | p \in \{t,t',f,f'\}, \pi[j] \in V_{m-1}(x_i)}\text{.}
\]
For all other propositions $p$, let $V_m(p) \dfn V_{m-1}(p) \cup U_g \cup U_\ell$.

\medskip

Define $\pi^*$ as the path through $\calK_m$ that is obtained from $\pi$ by replacing every edge $(\pi[j],\pi[j+1]) \in R_{m-1} \setminus R_m$ with the corresponding sequence of new $R_m$-edges, \eg, $(\pi[j],u^j_t,u^j_{t'},\pi[j+1])$.
It is straightforward to verify the properties (2)--(5) in $\pi^*$.
We proceed by showing property (1), \ie, that $\pi^*[j]$ satisfies $(e \imp b_1)$, $(\bigwedge_{i=1}^m\alpha_i \land \gamma_i)$, and $(\bigwedge_{i=1}^{m+1}\beta_i)$ for all $j\geq 0$. For the proof, we distinguish between \emph{old} worlds $w \in W_{m-1}$ and \emph{new} worlds $w \in U_t \cup U_{t'} \cup U_f \cup U_{f'}$. All worlds on $\pi^*$ are either old or new. Furthermore, it is easy to verify in $V_m$ that all new worlds satisfy $\left(\bigwedge_{i=1}^m\alpha_i \right)$, $\left(\bigwedge_{i=1}^{m+1}\beta_i\right)$ and $\neg e$.

They also satisfy $\gamma_i \equiv (x_i \imp \EG \neg f'_i) \land (\neg x_i \imp \EG \neg t'_i)$, which can be seen as follows. For $i = m$, Figure~\ref{fig:af-construction} shows the path from $u^j_t$ and $u^j_{t'}$ to $u^j_\ell$ satisfying $\G \neg f'_m$, and a similar path from $u^j_f$ and $u^j_{f'}$ satisfying $\G \neg t'_m$, or both if $Q_m = \forall$.
For $i < m$, recall that $u^j_t,u^j_{t'},u^j_f,u^j_{f'}$ agree with $\pi[j]$ on the value of $x_i$. If \eg $x_i = 0$, then by induction hypothesis, $(\calK_{m-1},\pi[j]) \vDash \EG \neg t'_i$ via some path $\pi' = (\pi[j],\pi[j+1], \ldots)$.
Clearly $\pi'$ can be extended to an $R_m$-path $(\pi[j], \ldots, \pi[j+1],\ldots)$ satisfying $\G \neg t'_i$.

\smallskip

With the inductive step on the new worlds being settled, assume for the rest of the proof that $w$ is old. By induction hypothesis (1), then $w \vDash (e \imp b_1)$. Furthermore, $w \vDash \alpha_m$: old worlds fulfill none of $t_m,t'_m,f_m,f'_m$ due to (3), and if $w \vDash s_m$, then $w \vDash \alpha_m$ by the construction shown in Figure~\ref{fig:af-construction}.

To see that still $(\calK_m,w)\vDash \alpha_i$ for $1 \leq i < m$, suppose $(\calK_m,w)\nvDash \alpha_i$ for the sake of contradiction. Since $w \in V_m(p) \Leftrightarrow w \in V_{m-1}(p)$ for all $p \in \PS \cap \SF{\alpha_i}$, this implies $(\calK_m,w)\nvDash \AF \xi$ and $(\calK_{m-1},w)\vDash \AF \xi$, for some $\AF \xi \in \SF{\alpha_i}$.
So let $(\calK_m,\pi')\vDash \G \neg \xi$ for a path $\pi' = (w, \ldots)$.
This path cannot visit $U_g$ or $U_\ell$, since $t_i,t'_i,f_i,f'_i,\neg b_i$ and consequently $\xi$ are true in every world of $U_g\cup U_\ell$. For this reason, it must already hold $(\calK_{m-1},\pi'')\vDash \G\neg \xi$ for some subpath $\pi'' = (w,\ldots)$ of $\pi'$ through $\calK_{m-1}$. But this contradicts $(\calK_{m-1},w)\vDash \AF \xi$.

Next, we consider $\beta_i$ for $i \in [m+1]$. Trivially $\beta_{m+1} = (b_{m+1} \imp \EG b_{m+1})$ holds in all old worlds, since $b_{m+1}$ occurs only in new worlds. For $i \leq m$, we apply the induction hypothesis: either $w \nvDash b_i$, or there is an $R_{m-1}$-path $\pi'=(w,\ldots)$ such that $(\calK_{m-1},\pi')\vDash \G b_i$. But by property (2), $\pi'$ never visits a world where $s_m$ holds. Consequently, it is still an $R_m$-path and witnesses $(\calK_m,w)\vDash \EG b_{i}$.

With respect to $(\bigwedge_{i=1}^{m-1}\gamma_i)$, the new worlds are "transparent" as follows.
Whenever $(\calK_{m-1},\pi')\vDash \G \neg p$ for $p \in \{t'_i,f'_i\}$, then an $R_m$-path $\pi''$ can be obtained from $\pi'$ by replacing deleted edges $(\pi[j],\pi[j+1])$ by the corresponding steps through the new worlds.
Then still $(\calK_m,\pi'') \vDash \G \neg p$, as $t'_i$ or $f'_i$ are false in all new worlds.

Finally, $w\vDash \gamma_m$ holds for all old worlds $w$ because there is always an $R_m$-path $\pi' = (w, \ldots)$ such that $(\calK_m,\pi')\vDash \G (\neg t'_m \land \neg f'_m)$. Given $w = \pi[j]$, we construct $\pi'$ as follows. If a minimal $k \geq j$ exists such that $\pi[k]\vDash s_m$, let $\pi'\dfn (\pi[j],\pi[j+1],\ldots,\pi[k],u^k_g,u^k_g,\ldots)$. Otherwise $\pi' \dfn \pi_j$ contains only old worlds and, by the induction hypothesis (3), is the desired path.
\end{proof}
The reduction maps any true qbf $\varphi$ with prefix $\forall x_1 \cdots \forall x_n$ to a CTL formula $\varphi^*$ of length $\bigO{n}$, and with any model of $\varphi^*$ having extent at least $2^n$.

\begin{corollary}\label{cor:model-size-af}
$\B_k(\AF)$ and $\B(\AF)$ have optimal model size and extent
$2^\bigTheta{n}$ for all $k \geq 2$.
\end{corollary}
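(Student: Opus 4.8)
The plan is to establish matching upper and lower bounds for both the model size and the model extent. The upper direction is essentially free: Theorem~\ref{thm:upper-size-all} already provides a model size upper bound of $2^\bigO{n}$ for $\B(C,T)$, and hence for the subfragments $\B_k(\AF)$ and $\B(\AF)$. Since no path can visit more distinct worlds than the structure contains, the extent of any model is bounded by its size, so $2^\bigO{n}$ is simultaneously a model extent upper bound. It therefore only remains to exhibit an infinite family of satisfiable formulas whose models are forced to be exponentially large.

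For the lower bound I would feed the reduction of Theorem~\ref{thm:qbf-to-afeg} a deliberately trivial but fully universal family of inputs. For each $n$, let $\varphi_n \dfn \forall x_1 \cdots \forall x_n\, \psi_n$, where $\psi_n \dfn \bigwedge_{i=1}^n (x_i \lor \neg x_i)$ is a propositional tautology; each $\varphi_n$ is a closed, true qbf. The reduction then yields formulas $\varphi_n^* \in \B_2(\AF) \subseteq \B_k(\AF) \subseteq \B(\AF)$ with $\size{\varphi_n^*} = \bigTheta{n}$, and since $\varphi_n$ is true, the construction of $\calK_n$ shows that each $\varphi_n^*$ is satisfiable. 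Thus $\varphi_1^*, \varphi_2^*, \ldots$ is an infinite family of satisfiable formulas lying in each of the fragments in question.

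It remains to verify that every model of $\varphi_n^*$ is exponentially large. Let $\calM \vDash \varphi_n^*$ and let $\pi$ be a path witnessing the outermost $\EG$; as in the proof of Lemma~\ref{lem:proof-tree-in-af}, $\pi$ begins with a $1$-segment. Because every quantifier is $\forall$, Lemma~\ref{lem:subdivisions} applies at every level, so an induction on $m$ shows that for each assignment $\theta \colon \{x_1, \ldots, x_m\} \to \{0,1\}$ there is an $(m+1)$-segment on $\pi$ agreeing with $\theta$; for $m = n$ this yields $2^n$ leaf segments $\rho_\theta$, one per complete assignment. The crucial point is that distinct leaf segments are pairwise disjoint: if $\theta \neq \theta'$ disagree on some $x_i$, no world can lie in both $\rho_\theta$ and $\rho_{\theta'}$, as it would have to satisfy $x_i$ and $\neg x_i$ simultaneously. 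Hence $\pi$ visits at least $2^n$ distinct worlds, so $\calM$ has extent at least $2^n - 1$ and size at least $2^n$.

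Combining both directions and writing $N \dfn \size{\varphi_n^*} = \bigTheta{n}$, every model of $\varphi_n^*$ has extent and size $2^{\bigOmega{N}}$, while the global upper bound is $2^{\bigO{N}}$; hence both the optimal model size and the optimal model extent of $\B_k(\AF)$ for $k \geq 2$ and of $\B(\AF)$ are $2^\bigTheta{n}$. The step requiring the most care is the disjointness argument: Lemma~\ref{lem:subdivisions} only asserts that the two child segments are \emph{contained} in their parent, so I must rule out that the $2^n$ nested leaf intervals collapse onto a small number of shared worlds. This is exactly what the \emph{uniformly true}/\emph{uniformly false} bookkeeping in the definition of a segment prevents, and it is what converts mere nesting depth into a genuine count of $2^n$ distinct states along a single path.
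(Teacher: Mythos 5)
Your proposal is correct and follows essentially the same route as the paper, which justifies the corollary with the single remark that the reduction of Theorem~\ref{thm:qbf-to-afeg} maps any true qbf with prefix $\forall x_1 \cdots \forall x_n$ to a formula of length $\bigO{n}$ whose models all have extent at least $2^n$, combined with the $2^{\bigO{n}}$ upper bound of Theorem~\ref{thm:upper-size-all}. Your explicit choice of a tautological matrix, the induction via Lemma~\ref{lem:subdivisions}, and the disjointness argument for the $2^n$ leaf segments are exactly the details the paper leaves implicit, and they are all sound.
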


It may seem surprising that $\AF$ can enforce a single exponentially long path,
whereas this is not possible with the LTL-operators $\F$ and $\G$.
The reason for this is twofold:
On the one hand, $\F$ operators enjoy a certain ``order invariance'':
With respect to a formula $\varphi$ and a model, the set of fulfilled subformulas of the form $\G\psi \in \SF{\varphi}$ can only grow along a given path. For this reason, every path has a finite prefix after which no new $\G$-formulas are imposed such that the order of fulfillment of $\F$ does not matter anymore.
On the other hand, all $\G$-formulas occurring on a path
affect that path due to the lack of branching and must not contradict. With $\EG$, paths may however ``branch off'' arbitrarily. Both properties are used by Sistla and Clarke to show the polynomial
model property of certain LTL fragments \cite{SC85}, while conversely the absence of both
properties is crucial for the proof presented here.

\subsection{The AG fragment}\label{sec:ag}

In terms of computational complexity, the $\AG$ fragment is well-understood: it is equivalent to the modal logic $\mathsf{S4D}$, \ie, on transitive, reflexive, serial frames.

\begin{proposition}\label{prop:agef-pspace-complete}
$\SAT(\B(\AG)) \in \PSPACE$.
\end{proposition}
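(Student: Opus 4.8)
The plan is to show that, modulo a logspace translation of the Boolean connectives, $\SAT(\B(\AG))$ coincides with the satisfiability problem of modal logic $\mathsf{S4D}$, which Ladner proved to lie in $\PSPACE$ \cite{ladner77}. The crucial observation is that, over serial Kripke structures, the CTL operators $\AG$ and $\EF$ are nothing but the modal $\Box$ and $\Diamond$ interpreted over the reflexive-transitive closure $R^*$. Indeed, unwinding the CTL semantics, $(\calK,w)\vDash\AG\varphi$ holds iff $\varphi$ is satisfied in every world on every path issuing from $w$; by seriality such paths exist and traverse exactly the $R^*$-successors of $w$, so $(\calK,w)\vDash\AG\varphi$ iff $u\vDash\varphi$ for all $u$ with $wR^*u$. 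Dually, $(\calK,w)\vDash\EF\varphi$ iff $u\vDash\varphi$ for some $u$ with $wR^*u$.

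First I would fix the translation $\varphi\mapsto\varphi^\flat$ that replaces every $\AG$ by $\Box$ and every $\EF$ by $\Diamond$ and leaves the Boolean structure untouched; this is clearly logspace-computable. It then remains to prove that $\varphi$ has a serial rooted model iff $\varphi^\flat$ has a reflexive-transitive (hence serial) pointed model. For the forward direction, given a serial rooted model $(\calK,w)$ with $\calK=(W,R,V)$, I would pass to the modal structure $(W,R^*,V)$: since $R^*$ is reflexive and transitive it is an $\mathsf{S4D}$ frame, and by the semantic correspondence above a straightforward subformula induction yields $(W,R^*,V),w\vDash\varphi^\flat$. For the backward direction, given a reflexive-transitive pointed model $(W,S,V),w$ of $\varphi^\flat$, I would read it as the CTL structure $(W,S,V,w)$; reflexivity of $S$ makes it serial, hence an admissible CTL model, and $S^*=S$ because $S$ is already reflexive and transitive, whence the same induction gives $(W,S,V),w\vDash\varphi$.

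Combining the two directions, $\varphi\in\SAT(\B(\AG))$ iff $\varphi^\flat$ is $\mathsf{S4D}$-satisfiable, and the latter is decidable in polynomial space: Ladner's tableau procedure places $\mathsf{S4}$-satisfiability in $\PSPACE$, and since reflexive frames are serial the seriality axiom $\mathsf{D}$ is vacuous over $\mathsf{S4}$, so $\mathsf{S4D}$-satisfiability equals $\mathsf{S4}$-satisfiability. Finally, to make the bound clone-independent as claimed for the upper bounds, note that for an arbitrary finite base $C$ each connective $f\in C$ can be rewritten over $\{\land,\lor,\neg\}$ by a fixed formula, inflating the input by only a constant factor; hence $\SAT(\B(C,\AG))$ logspace-reduces to the standard-base case treated above.

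The main obstacle is the bookkeeping in the two model translations, in particular verifying the equivalence $(\calK,w)\vDash\AG\varphi\Leftrightarrow\forall u\,(wR^*u\Rightarrow u\vDash\varphi)$. This hinges on seriality to guarantee that every finite $R$-path reaching a world $u$ extends to an infinite path through $u$, so that $\AG$ genuinely quantifies over all of $R^*$ rather than merely over worlds lying on infinite initial segments. Once this equivalence is established, the inductive proof of the correspondence between $\varphi$ and $\varphi^\flat$ is routine.
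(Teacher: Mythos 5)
Your main argument is correct and is essentially the paper's own proof: the paper likewise observes that a $\B(\AG)$-formula is satisfiable iff it has a serial, reflexive, transitive model (obtained by passing to $R^*$), that on such structures $\AG$ coincides with the modal $\Box$, and then invokes Ladner's $\PSPACE$ algorithm for $\mathsf{S4}$. Your explicit two-way translation, and the remark that seriality is what guarantees $\AG$ quantifies over all of $R^*$, merely fill in details the paper leaves implicit.

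One caveat: your closing aside on clone-independence is wrong as stated. A connective such as $\oplus$ has no representation over $\{\land,\lor,\neg\}$ in which each argument occurs only once, so naive substitution blows up exponentially under nesting rather than ``by a constant factor.'' This does not affect the proposition, which concerns the standard base; the paper obtains the general-base upper bound separately (Theorem~\ref{thm:base-to-ag}) via a renaming translation that introduces a fresh proposition $x_\alpha$ for each subformula $\alpha$ and uses $\AG$ to enforce the defining equivalences globally.
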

\begin{proof}
A $\B(\AG)$-formula is satisfiable if and only if it has a serial, reflexive, and transitive model. On such structures, however, $\AG$ is equivalent to the modal "Box" operator $\Box$.
Therefore the $\mathsf{S4}$-satisfiability algorithm given by Ladner
\cite{ladner77} provides the desired result, with little modifications
to respect seriality.
\end{proof}

Next, we will improve the lower bounds for this logic, in particular we show that it already holds for temporal depth two.
We refine the classical proof which reduces from $\TQBF$ to $\sfS4\sfD$-satisfiability by expressing the existence of proof trees in modal logic.
While the idea is roughly the same as in the $\AF$ case---force a
Kripke structure to carry up to $2^n$ different propositional
assignments---the implementation fundamentally differs due to the
different semantics of $\AF$ and $\AG$. When using the first operator,
we must use a single exponentially long path, and with the second we
have an exponentially branching tree with linear depth. We will later
see a linear upper bound for the optimal model extent as well.

\begin{theorem}\label{thm:qbf-to-agef}
$\SAT(\B_2(\AG))$ is $\PSPACE$-hard.
\end{theorem}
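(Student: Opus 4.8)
The plan is to reduce from $\TQBF$, using the proof-tree characterisation of qbf truth established above. Since a $\B(\AG)$-formula is satisfiable if and only if it has a serial, reflexive and transitive model, and on such frames $\AG$ is the modal $\Box$ and its dual $\EF$ is $\Diamond$ (exactly as exploited in the proof of Proposition~\ref{prop:agef-pspace-complete}), I would work entirely in $\mathsf{S4}$ and use that the global reachability modality can address \emph{every} depth of a tree in a single modal step. This is what keeps the temporal depth at two: instead of nesting one modality per quantifier (as in Ladner's original reduction, which yields depth $n$), I place a single outer $\AG$ over a conjunction of ``rules'', each containing at most one further $\EF$ or $\AG$, so that the inner part has temporal depth one and the whole formula has depth two.

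Concretely, given $\varphi = Q_1 x_1 \cdots Q_n x_n \psi$, I would introduce level markers $L_0, \ldots, L_n$ and, for each variable, two commitment propositions $t_i$ (``$x_i$ fixed to true'') and $f_i$ (``$x_i$ fixed to false''). The target formula is $\varphi^* \dfn L_0 \sland \AG \Lambda$, where $\Lambda$ is a conjunction asserting: (i) the level markers are mutually exclusive, and a world with $L_m$ has decided exactly $x_1, \ldots, x_m$, i.e.\ it satisfies $t_i \lor f_i$ for $i \le m$ and neither for $i > m$; (ii) commitments are consistent and monotone along reachability, via $\neg(t_i \sland f_i)$ together with $t_i \imp \AG t_i$ and $f_i \imp \AG f_i$; (iii) the branching rules $L_{m-1} \imp (\EF(L_m \sland t_m) \sland \EF(L_m \sland f_m))$ when $Q_m = \forall$, and $L_{m-1} \imp \EF L_m$ when $Q_m = \exists$; and (iv) the matrix rule $L_n \imp \hat\psi$, where $\hat\psi$ is $\psi$ with each $x_i$ replaced by $t_i$. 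Each conjunct of $\Lambda$ has temporal depth at most one, so $\varphi^* \in \B_2(\AG)$, and $\varphi^*$ is plainly logspace-constructible.

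For completeness (if $\varphi$ is true then $\varphi^*$ is satisfiable) I would take a proof tree $T$ of $\varphi$, turn its nodes into worlds, let $R$ be the reflexive, transitive closure of the tree edges with a self-loop at each leaf for seriality, put $L_m$ on every node at tree-depth $m$, and set $t_i$, $f_i$ according to the assignment labelling each node. Monotonicity of commitments then holds because a node's assignment extends those of its ancestors, the branching rules hold because $T$ realises the demanded universal and existential successors, and the matrix rule holds at the leaves by condition (3) of Definition~\ref{def:witness-seq}. For soundness I would argue as in Lemma~\ref{lem:proof-tree-in-af}: from a reflexive, transitive model of $\varphi^*$ I collect, for each $m$, the reachable worlds satisfying $L_m$, read off an assignment of $x_1, \ldots, x_m$ from their commitment propositions, and show that the resulting set of assignments together with the natural ``refinement'' edges contains a proof tree (cropping wrong vertices as in the footnote there). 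Here monotonicity and consistency of the $t_i, f_i$ guarantee that an $L_m$-witness reached from an $L_{m-1}$-world genuinely extends the parent assignment, while the branching rules supply the children required by Definition~\ref{def:witness-seq}.

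The main obstacle I anticipate is the soundness direction under $\mathsf{S4}$ semantics: because reachability is transitive and reflexive, a single $\EF$ may be witnessed by a far-away or even the current world, and levels could in principle be ``skipped''. I would neutralise this with the level markers and the exact-decision clause~(i), which force any witness of $\EF(L_m \sland \cdot)$ to be a distinct world committing to exactly one further variable, while the commitment monotonicity~(ii) pins the values of $x_1, \ldots, x_{m-1}$ across the step, so that no branch can silently flip an earlier variable. Verifying that these clauses jointly rule out all such degeneracies---so that the extracted assignments really do form a proof tree---is the technical heart of the argument, and I expect it to be the most delicate part to write out in full.
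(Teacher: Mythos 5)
Your proposal is correct and follows essentially the same route as the paper: a reduction from $\TQBF$ via the proof-tree characterisation, using a single outer $\AG$ over depth-one rules, with $\EF$-branching per quantifier, $\AG$-persistence of variable commitments, and the matrix checked at the deepest level. The paper merely fuses your level markers and commitments into single propositions $y_i$ ("level $i$ reached, $x_i$ true") and $z_i$ ("level $i$ reached, $x_i$ false"), persisting the original variables via $y_i \rightarrow \AG x_i$ and $z_i \rightarrow \AG \neg x_i$ instead of substituting into the matrix, and gets by without your explicit mutual-exclusivity clauses because its soundness argument builds the reachability chain $X_0, X_1, \ldots, X_n$ inductively from the root.
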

\begin{proof}
Let $\varphi = Q_1 x_1 \ldots Q_n x_n \psi$ be a qbf. We reduce $\varphi$ to the formula $\varphi^*$, defined as follows:
\begin{align*}
\varphi^* \dfn &\;y_0 \sland
\AG \Big( \big((y_n \lor z_n)\imp \psi\big) \sland \bigand_{i=1}^{n} \alpha_i \Big)\text{,}\\
\alpha_i \dfn &\; \Big((y_{i-1} \slor
z_{i-1})\rightarrow ( \EF y_i
\circ_i \EF z_i ) \Big)
\sland \Big( y_i \rightarrow \AG
x_i\Big) \sland
\Big( z_i \rightarrow \AG
\neg x_i \Big)\text{,}
\end{align*}
where $\circ_i \dfn \land$ if $Q_i = \forall$, and $\circ_i \dfn \lor$ if $Q_i = \exists$, and for all $0 \leq i \leq n$, the symbols $y_i, z_i$ are fresh propositions. Clearly the formula is logspace-constructible.
Intuitively, as soon as $y_i$ is true in some world $w$, $x_i$ shall be true in all worlds reachable from $w$. Analogously, if $z_i$ holds, then $x_i$ shall be false in all reachable worlds.\end{proof}

To prove the correctness of the reduction, we again use a lemma for each direction.

\begin{lemma}
If $\varphi^*$ is satisfiable, then $\varphi$ is true.
\end{lemma}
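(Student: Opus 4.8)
The plan is to extract from any model of $\varphi^*$ a witness that $\varphi$ is true, by tracking how the fresh propositions $y_i, z_i$ "announce" the truth value of $x_i$. Fix a model $\calM = (W, R, V, w_0)$ of $\varphi^*$, so that $w_0 \vDash y_0$ and $w_0 \vDash \AG \Phi$ with $\Phi \dfn \big((y_n \lor z_n) \imp \psi\big) \sland \bigand_{i=1}^n \alpha_i$. Since $\AG \Phi$ holds at $w_0$, the conjunction $\Phi$ — and in particular every $\alpha_i$ together with the implication $(y_n \lor z_n) \imp \psi$ — is satisfied in \emph{every} world reachable from $w_0$.

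First I would record the central persistence observation: if $u \vDash \AG \chi$ and $v$ is reachable from $u$, then $v \vDash \AG \chi$, because the worlds $R^*$-reachable from $v$ form a subset of those reachable from $u$. Hence once $\alpha_i$ forces $\AG x_i$ (through $y_i$) or $\AG \neg x_i$ (through $z_i$) at some world, the truth value of $x_i$ is "locked" there and stays locked at all descendants.

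The heart of the proof is the following claim, proved by downward induction on $m$ (base case $m = n$): \emph{if $w$ is reachable from $w_0$, satisfies $y_m \lor z_m$, and each $x_i$ with $i \leq m$ is locked at $w$, then the assignment $\theta$ defined by $\theta(x_i) \dfn 1 \Leftrightarrow w \vDash x_i$ satisfies the sub-qbf $Q_{m+1} x_{m+1} \cdots Q_n x_n \psi$.} For $m = n$: from $w \vDash (y_n \lor z_n) \imp \psi$ and $w \vDash y_n \lor z_n$ we get $w \vDash \psi$, and since $\psi \in \B_0$ only mentions $x_1, \ldots, x_n$, this is equivalent to $\theta \vDash \psi$. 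For the step, $w \vDash \alpha_{m+1}$ yields $w \vDash \EF y_{m+1} \circ_{m+1} \EF z_{m+1}$; following an $\EF$-witness produces a reachable world $w'$ satisfying $y_{m+1}$ (resp.\ $z_{m+1}$), at which $\alpha_{m+1}$ locks $x_{m+1}$ to true (resp.\ false), while persistence keeps $x_1, \ldots, x_m$ locked at their old values, so $w'$ realizes $\theta^{x_{m+1}}_b$ for the forced bit $b$. The induction hypothesis at $w'$ then gives $\theta^{x_{m+1}}_b \vDash Q_{m+2} x_{m+2} \cdots \psi$. If $Q_{m+1} = \forall$ then $\circ_{m+1} = \land$, so both witnesses exist and we obtain the conclusion for both $b \in \{0,1\}$; if $Q_{m+1} = \exists$ then $\circ_{m+1} = \lor$, so at least one witness exists, giving the existential. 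Applying the claim at $w_0$ with $m = 0$ (empty assignment, vacuous locking) yields that $\varphi$ is true.

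The step I expect to be the main obstacle is keeping the partial assignments consistent across the branching — making the locking bookkeeping precise. Concretely, I must verify that the already-quantified variables $x_1, \ldots, x_m$ take identical values at $w$ and at the freshly found $w'$, so that at a $\forall$-node the two subtrees genuinely correspond to the assignments $\theta^{x_{m+1}}_0$ and $\theta^{x_{m+1}}_1$ differing only in $x_{m+1}$. This is exactly what the persistence of $\AG$ along reachability supplies, so the difficulty is conceptual rather than computational; everything else reduces to unwinding the semantics of $\EF$ and $\AG$.
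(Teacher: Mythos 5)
Your proof is correct. The paper takes a somewhat different route: it first reduces truth of a closed qbf to the existence of a \emph{proof tree} (Definition~\ref{def:witness-seq} and the accompanying proposition), and then extracts such a tree from the model by defining layers $X_m$ of worlds satisfying $y_m \lor z_m$ that are reachable from the previous layer, taking as tree nodes the partial assignments with which these worlds agree. You bypass the proof-tree machinery entirely and argue by a direct downward induction on the quantifier prefix, reading the truth of the sub-qbf $Q_{m+1}x_{m+1}\cdots Q_nx_n\psi$ off any reachable world at which $x_1,\ldots,x_m$ have been locked. The essential mechanism is identical in both arguments---$\AG$-persistence along reachability freezes the already-quantified variables, and the $\EF y_i \circ_i \EF z_i$ clauses of $\alpha_i$ supply the existential/universal branching---so the difference is chiefly one of packaging: the paper's proof-tree formulation is shared with the $\AF$ reduction (Lemma~\ref{lem:proof-tree-in-af}) and so gets reused, while your induction is self-contained and avoids the separate characterization lemma. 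One small point worth making explicit when you write this up: $\AG x_i$ (resp.\ $\AG\neg x_i$) at $w$ fixes the value of $x_i$ not only at all descendants of $w$ but also at $w$ itself, since $\G$ includes position $0$ of every path; this is what guarantees that the assignment read off the witness world $w'$ really is $\theta^{x_{m+1}}_b$ rather than some other extension of $\theta$.
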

\begin{proof}
Let $(\calK,r) \vDash \varphi^*$, $\calK = (W,R,V)$. \Wloss $(\calK, r)$ is $R$-generable. We prove that $\calK$ simulates a proof tree for $\varphi$, similarly as in Lemma~\ref{lem:proof-tree-in-af}.
Let $X_0 \dfn \{r\}$ and, for $1 \leq m \leq n$, let
$X_m \dfn \Set{ w \in V(y_m) \cup V(z_m) | \exists w' \in X_{m-1} : w'R^*w }\text{.}
$
The meaning of the set $X_m$ is that the truth of $x_1,\ldots,x_m$ is already "fixed" in $w \in X_m$, in the sense that its assignment to $x_1,\ldots,x_{m-1}$ is recursively determined by $w$ being reachable from a world in $X_{m-1}$, and $x_m$ being selected from satisfying either $y_m$ or $z_m$.

We will ascertain that the following tree $T = (\Theta,E)$ is a proof tree of $\varphi$:
\begin{align*}
    \Theta &\dfn \Set{ \theta \colon \{x_1,\ldots,x_m\} \to \{0,1\} | 0 \leq m \leq n, \exists w \in X_m : w \vdash \theta}\\
    E &\dfn \Set{(\theta, \theta') \in \Theta^2 | \begin{array}{l}\theta \colon \{x_1,\ldots,x_{m-1}\} \to \{0,1\},
    \theta' = \theta^{x_m}_b \text{ for some } b \in \{0,1\}\text{,}\\
    \text{and }\exists w \in X_{m-1}, \exists w' \in X_m, w \vdash \theta, w'\vdash \theta', wR^*w'\end{array}}
\end{align*}

$\Theta$ contains the empty Boolean assignment, as $r \in X_0$. Whenever $\theta \colon \{x_1,\ldots,x_{m-1}\} \to \{0,1\}$ is in $\Theta$, then it agrees with some $w \in X_{m-1}$ by definition of $\Theta$. Since $w \in X_{m-1}$, all worlds reachable from $w$ must have the same truth values for $x_1,\ldots,x_{m-1}$ as $w$. Assuming $Q_m = \exists$, and by $\alpha_m$, there is a world $w' \in X_m$ that agrees either with $\theta^{x_m}_0$ or with $\theta^{x_m}_1$.
Conversely, $Q_m = \forall$, then two worlds $w',w'' \in X_m$ agreeing with $\theta^{x_m}_0$, $\theta^{x_m}_1$ exist. Ultimately, $\theta^{x_m}_0$, $\theta^{x_m}_1$, or both are in $\Theta$ and children of $\theta$ in $T$, depending on $Q_m$.

If a leaf assignment $\theta \colon \{x_1,\ldots,x_n\} \to \{0,1\}$ is in $\Theta$, then it agrees with some world $w \in X_n$. But since $(y_n\lor z_n)\imp \psi$ holds in all worlds of $\calK$, it also follows $w \vDash \psi$.
By these arguments, the conditions (1)--(3) of Definition~\ref{def:witness-seq} are true in $T$, such that $T$ ultimately contains a proof tree of $\varphi$.
\end{proof}

\begin{lemma}
If $\varphi$ is true, then $\varphi^*$ is satisfiable.
\end{lemma}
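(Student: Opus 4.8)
The plan is to turn a proof tree of $\varphi$ directly into a Kripke model of $\varphi^*$. Since $\varphi$ is true, the preceding proposition supplies a proof tree $T = (\Theta, E)$. I would let the worlds of the model $\calK = (W, R, V)$ be exactly the assignments in $\Theta$, take $R$ to be the edge relation $E$ oriented from parent to child, augmented by a self-loop at every leaf so that $\calK$ becomes serial, and declare the empty assignment $\theta_0$ to be the root. This way $R^*$-reachability from a world coincides with the descendant relation in $T$ together with the world itself, which is precisely what the $\EF$ and $\AG$ operators will see.

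Next I would fix the valuation. For a world $\theta$ with domain $\{x_1,\ldots,x_m\}$ I set $x_i$ true iff $\theta(x_i)=1$ for $i\le m$ (and, harmlessly, false for $i>m$). To encode the branching I label $\theta$ with $y_m$ if $\theta(x_m)=1$ and with $z_m$ if $\theta(x_m)=0$ for $m\ge 1$, label the root with $y_0$, and set all other propositions false. The structural fact I would record first is that along any $R$-edge an assignment is only extended, so every descendant of $\theta$ agrees with $\theta$ on $x_1,\ldots,x_m$; consequently each world of depth $m\ge 1$ satisfies exactly one of $y_m,z_m$, and $y_i\lor z_i$ (resp.\ $y_0$) holds precisely at the worlds of depth $i$ (resp.\ at the root).

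Then I would verify $(\calK,\theta_0)\vDash\varphi^*$. The conjunct $y_0$ is immediate, so it remains to show that the body of the outer $\AG$ holds at every world. For $(y_n\lor z_n)\imp\psi$ the antecedent is true only at leaves, and a leaf is a full assignment with $\theta\vDash\psi$ by condition (3) of the proof tree; this transfers to the model because $\psi\in\B_0$ and the $x_i$ are labeled according to $\theta$. For each $\alpha_i$, the conjuncts $y_i\imp\AG x_i$ and $z_i\imp\AG\neg x_i$ follow from the extension property: if a world satisfies $y_i$ it has depth $i$ with $x_i$ true, and all its descendants keep $x_i$ true, so $\AG x_i$ holds; the symmetric argument gives $z_i\imp\AG\neg x_i$, and at all other worlds the antecedents are false.

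The remaining conjunct $(y_{i-1}\lor z_{i-1})\imp(\EF y_i\circ_i\EF z_i)$ is the step I expect to be the main (though still routine) obstacle, since it is where the choice of $\circ_i$ must be matched against the quantifier $Q_i$. Its antecedent is satisfied exactly at the depth-$(i-1)$ worlds, whose children in $T$ are reached by a single $R$-edge and carry $y_i$ (the child setting $x_i=1$) or $z_i$ (the child setting $x_i=0$). If $Q_i=\forall$, then $\circ_i=\land$ and the proof tree supplies both children, giving $\EF y_i\land\EF z_i$; if $Q_i=\exists$, then $\circ_i=\lor$ and at least one child exists, giving $\EF y_i\lor\EF z_i$. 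The only genuine care needed is to confirm that $\EF$ and $\AG$ over this model really range over the $R^*$-descendants, which holds by the reachability remark above, and that the leaf self-loops preserve every $\AG$-conjunct, which they do since a looping leaf simply keeps its own, correct, labels. Collecting these verifications shows $(\calK,\theta_0)\vDash\varphi^*$, so $\varphi^*$ is satisfiable.
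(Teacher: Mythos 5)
Your proposal is correct and follows essentially the same route as the paper: take the proof tree itself as the Kripke frame, label $x_i$, $y_i$, $z_i$ exactly as you describe, and verify the conjuncts of $\varphi^*$ from the extension property of assignments along tree edges. The only difference is that you spell out the verification of the $\alpha_i$ conjuncts and the leaf self-loops for seriality, which the paper leaves implicit.
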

\begin{proof}
Suppose that $\varphi$ is true and that accordingly $T = (\Theta, E)$ is a proof tree of $\varphi$. We define a Kripke structure $\calK = (\Theta,E,V)$ such that $(\calK,\theta_0)\vDash \varphi^*$, where $\theta_0\in\Theta$ is the empty assignment.

For $i \in [n]$, let
\begin{align*}
    V(x_i) &\dfn \Set{ \theta \in \Theta | \theta(x_i) \text{ is defined and }\theta(x_i) = 1  }\\
    V(y_i) &\dfn \Set{ \theta \in \Theta | \theta \colon \{x_1,\ldots,x_i\} \to \{0,1\}, \theta(x_i) = 1 }\\
    V(z_i) &\dfn \Set{ \theta \in \Theta | \theta \colon \{x_1,\ldots,x_i\} \to \{0,1\}, \theta(x_i) = 0 }
\end{align*}
and otherwise $V(y_0) \dfn \{ \theta_0 \}, V(z_0)\dfn \emptyset$.
By this construction, and since any assignment $\theta \in \Theta$ defined on $\{x_1,\ldots,x_n\}$ satisfies $\psi$, it follows $(\calK,\theta_0)\vDash y_0 \land \AG \big((y_n \lor z_n) \imp \psi\big)$.
On the other hand, for $i \in [n]$, the truth of $\alpha_i$ is easy to verify by the definition of proof trees.
\end{proof}

\begin{corollary}\label{cor:ag-model-lower}
$\B_k(\AG)$ and $\B(\AG)$ have model size lower bound $2^{\bigOmega{n}}$ and extent lower bound $\bigOmega{n}$ for all $k \geq 2$.
\end{corollary}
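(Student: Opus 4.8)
The plan is to reuse the reduction of Theorem~\ref{thm:qbf-to-agef} on a single, carefully chosen family of hard instances and to extract both lower bounds from the soundness direction alone. Let $\psi_0$ be a fixed propositional tautology and put $\varphi_n \dfn \forall x_1 \cdots \forall x_n\, \psi_0$; these are closed, true qbfs whose \emph{every} proof tree is the full binary tree of depth $n$, since condition~(2) of Definition~\ref{def:witness-seq} forces both children at each universal level. Let $\varphi_n^*$ be the image of $\varphi_n$ under the reduction. Because each proposition counts as a single symbol, $\size{\varphi_n^*} = \bigO{n}$, and $\varphi_n^*$ is satisfiable by the completeness direction of the reduction; as it uses only $\AG$ and its dual $\EF$ and has temporal depth two, we have $\varphi_n^* \in \B_2(\AG) \subseteq \B_k(\AG) \subseteq \B(\AG)$ for all $k \geq 2$. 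Hence it suffices to show that every model of $\varphi_n^*$ has at least $2^n$ worlds and extent at least $n$.

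For the size bound I would invoke the construction from the soundness lemma (satisfiability of $\varphi_n^*$ implies truth of $\varphi_n$): in any model $(\calK,r) \vDash \varphi_n^*$ the induced graph $T = (\Theta,E)$ contains a proof tree of $\varphi_n$, which, $\varphi_n$ being all-universal, has $2^n$ distinct leaf assignments. Each such $\theta \colon \{x_1,\ldots,x_n\} \to \{0,1\}$ is witnessed by a world $w \in X_n$ with $w \vdash \theta$, and a world agrees with at most one complete assignment; so the $2^n$ leaves force $2^n$ pairwise distinct worlds, giving the model size lower bound $2^{\bigOmega{n}}$.

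For the extent bound I would fix a leaf witness $w_n \in X_n$ and unfold the definition $X_m = \{w \in V(y_m)\cup V(z_m) : \exists w' \in X_{m-1},\, w'R^*w\}$ to obtain a chain $r = w_0 \mathrel{R^*} w_1 \mathrel{R^*} \cdots \mathrel{R^*} w_n$ with $w_m \in X_m$; concatenating these segments (and extending by seriality) yields a single path through $w_0,\ldots,w_n$. It remains to see that these $n+1$ worlds are distinct. To this end I track, for a world $w$, the set $S(w) \dfn \{\, m : w \vDash \AG x_m \text{ or } w \vDash \AG \neg x_m \,\}$. Since an $\AG$-claim is inherited by every reachable world, $S$ is monotone along $R^*$; and by $\alpha_m$ each $w_m$ satisfies $y_m \lor z_m$ and hence $m \in S(w_m)$, so inductively $\{1,\ldots,m\} \subseteq S(w_m)$. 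Conversely, as $Q_{m+1} = \forall$, a world satisfying $y_m \lor z_m$ must satisfy $\EF y_{m+1}$ and $\EF z_{m+1}$ by $\alpha_{m+1}$, so it reaches both an $x_{m+1}$-world and a $\neg x_{m+1}$-world and thus satisfies neither $\AG x_{m+1}$ nor $\AG \neg x_{m+1}$; that is, $(m+1) \notin S(w_m)$. Comparing the two facts gives $S(w_m) \neq S(w_{m'})$ for $m < m'$, so the $w_m$ are pairwise distinct and the extent is at least $n = \bigOmega{\size{\varphi_n^*}}$.

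I expect the extent bound to be the main obstacle. For size, distinctness of the witnesses is immediate from a world determining its own assignment, but for extent an adversarial model might a priori reuse one world for several "levels" of the chain and so collapse the path. The monotone invariant $S$, combined with the universal-quantifier incompatibility, is exactly what forbids this, and this is where both the all-universal choice of $\varphi_n$ and the precise pairing of $\EF y_i \land \EF z_i$ against $\AG x_i$ and $\AG \neg x_i$ in $\alpha_i$ are essential.
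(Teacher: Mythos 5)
Your proposal is correct and takes essentially the same route the paper intends: the corollary is read off from the reduction of Theorem~\ref{thm:qbf-to-agef} applied to all-universal true qbfs, with the $2^n$ leaf assignments of the forced proof tree yielding $2^n$ distinct witnessing worlds and the chain $X_0, X_1, \ldots, X_n$ yielding a long path. Your monotone invariant $S(w)$ supplies the distinctness argument for the extent bound that the paper leaves implicit, and it is sound.
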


The next result is the matching upper bound for model extent.
For this we introduce the notion of \emph{quasi-models}.
The crucial difference to a model is that we do not need to talk about
\emph{truth} of a subformula, but rather only whether or not a subformula or
its negation is \emph{necessitated} in a specific world at all. The idea is that every necessary formula must be true, but not vice versa. This
approach is well-known in literature for establishing
upper bounds for model size, often together with filtration
techniques. Related notions are \emph{Hintikka structures}, \emph{pseudo-models} or \emph{tableaux}, see also Allen Emerson and Halpern \cite{emersonTemporal,halpern}.

Let $\negg \psi \dfn \xi$ if $\psi = \neg \xi$ for some $\xi$, and let $\negg \psi \dfn \neg \psi$ otherwise.

\begin{definition}[Closure]
Let $\varphi \in \B(C)$ for a base $C$.
The \emph{closure} $cl(\varphi)$ of $\varphi$ is the smallest set
for which holds:
\begin{itemize}
  \item $\varphi \in cl(\varphi)$.
  \item if $Q O\psi \in cl(\varphi)$ for unary $QO \in \TL$, then $\{\psi, \overline{Q}\,\overline{O} \negg \psi \} \subseteq cl(\varphi)$,
  \item if $Q[\psi O \xi] \in cl(\varphi)$ for binary $QO \in \TL$, then $\{\psi,\xi,\overline{Q}\,[\negg\psi \overline{O} \negg\xi]\} \subseteq cl(\varphi)$,
\item if $f(\psi_1, \ldots, \psi_n) \in cl(\varphi)$, $f \in C$,
then $\psi_1, \ldots, \psi_n \in cl(\varphi)$,
\item $\psi \in cl(\varphi)$ iff $\negg\psi \in cl(\varphi)$, that is, for every formula in $cl(\varphi)$ also a formula equivalent to the negation is in $cl(\varphi)$.
\end{itemize}

For a set $\Phi$ of formulas, define $cl(\Phi) \dfn \bigcup_{\varphi \in \Phi} cl(\varphi)$.
\end{definition}

The closure $cl$ is similar to the \emph{Ladner-Fischer closure}
defined for PDL \cite{fischer_propositional_1979}.
Note that not necessarily $\neg \in C$, but $cl(\varphi) \subseteq \B(C \cup \{\neg\}, T)$ if $\varphi \in \B(C,T)$.

\begin{definition}[Quasi-models]
Let $\varphi \in \B(C)$. A \emph{quasi-model} of $\varphi$ is
then a tuple $\calQ = (W, R, L)$, where $(W, R)$ is a serial Kripke frame, and
$L \colon cl(\varphi) \to \powerset{W}$ is the
\emph{extended labeling function} and obeys the following conditions:
\begin{enumerate}[label=(Q\arabic*)]
    \item if $f \in C$, $\xi = f(\psi_1,\ldots,\psi_{\arity{f}})$ and $w \in L(\xi)$ resp.\ $w \in L(\negg \xi)$, then $\exists \vec b \in \{0,1\}^n$ such that
    \begin{itemize}
        \item $f(\vec{b}) = 1$ resp.\ $f(\vec b) = 0$
        \item $\forall i \in [n]$, $b_i = 1$ implies $w \in L(\psi_i)$ and $b_i = 0$ implies $w \in L(\negg \psi_i)$,
    \end{itemize}

    \item $L(\psi) \cap L(\negg \psi) = \emptyset$ for all $\psi \in cl(\varphi)$,
  \item if $w \in L(\neg QO \psi)$ for unary $QO
  \in \TL$, then $w \in L(\overline{Q}\,\overline{O} \negg \psi)$,
  \item if $w \in L(\neg Q[\psi O \xi])$ for binary $QO
  \in \TL$, then $w \in L(\overline{Q}[\negg\psi\overline{O}\negg\xi])$,
  \item if $w \in L(\E\psi)$ \normalbraces{$w \in L(\A \psi)$}, then for some (all)
  paths $\pi \in \Pi(w)$:
  \begin{itemize}
    \item if $\psi = \X \beta$, then $\pi[1] \in L(\beta)$
    \item if $\psi = \F\beta$ \normalbraces{$\G\beta$}, then $\pi[i] \in L(\beta)$ for some (all) $i \geq 0$,
    \item if $\psi = \beta\U\xi$ \normalbraces{$\beta\RLS\xi$}, then
    for some (all) $i \geq 0$ it is $\pi[i] \in L(\xi)$ and (or)
    $\pi[j]\in L(\beta)$ for all (some) $j < i$,
  \end{itemize}
  \item $L(\varphi) \neq \emptyset$.
\end{enumerate}

The properties (Q1)--(Q4) are the
\emph{local quasi-label conditions}.
\end{definition}

As in usual Kripke structures, we sometimes call the set
$\Set{ \psi \in cl(\varphi) | w \in L(\psi)}$ the \emph{quasi-labeling} of $w$, or just \emph{labeling} of $w$ if the context is clear, and for any formula $\psi$ in the above set we say that $\psi$ is \emph{labeled in $w$}.

\smallskip

Models and quasi-models are equivalent in the following sense:

\begin{proposition}\label{prop:quasi-to-model}Let $\varphi \in \B$.
\begin{enumerate}
\item If $(W,R,V,w)$ is a model of $\varphi$, then $(W,R,L_\varphi)$ is a quasi-model of $\varphi$, where $L_\varphi(\psi) \dfn \Set{ u \in W | (W,R,V,u) \vDash \psi}$ for all $\psi \in cl(\varphi)$.
\item If $(W,R,L)$ is a quasi-model of $\varphi$, then for all $w \in L(\varphi)$, $(W,R,V_L,w)$ is a model of $\varphi$, where $V_L(p) \dfn L(p)$ for all $p \in \PS \cap \SF{\varphi}$ and $V_L(p) \dfn \emptyset$ otherwise.
\end{enumerate}
\end{proposition}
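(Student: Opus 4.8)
The plan is to establish the two implications separately by unwinding the definitions; the first direction is a routine verification, while the genuine work lies in the second. For part~(2) the natural statement to prove by induction is the \emph{one-directional} implication
\[
u \in L(\psi) \;\Rightarrow\; (W,R,V_L,u) \vDash \psi \qquad\text{for all } \psi \in cl(\varphi) \text{ and all } u \in W,
\]
rather than a biconditional: a quasi-labeling is only \emph{partial}, in that (Q2) forbids $w \in L(\psi) \cap L(\negg\psi)$ but permits $w$ to lie in neither. Since $cl(\varphi)$ is closed under $\negg$, proving this implication for every member of $cl(\varphi)$ automatically yields $u \in L(\negg\psi) \Rightarrow u \nvDash \psi$ as well, which is what the inductive step will actually consume. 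Applying the implication to $\varphi$ itself and to the root $w \in L(\varphi)$ guaranteed by (Q6) then gives $(W,R,V_L,w)\vDash\varphi$.

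For part~(1) I would set $L_\varphi(\psi) \dfn \Set{u \in W | (W,R,V,u)\vDash\psi}$ and check (Q1)--(Q6) one by one. Condition (Q1) is immediate from the Boolean clause of the semantics (reading off the bit vector $\vec b$ from the truth values of the $\psi_i$ at $w$, for both the $\xi$ and the $\negg\xi$ case); (Q2) holds because $\negg\psi$ is semantically equivalent to $\neg\psi$; (Q5) is a verbatim restatement of the path semantics of the temporal operators over $\Pi(w)$. The only slightly substantial points are (Q3) and (Q4), which amount to the standard CTL dualities $\neg QO\psi \equiv \overline{Q}\,\overline{O}\negg\psi$ and $\neg Q[\psi O\xi] \equiv \overline{Q}[\negg\psi\,\overline{O}\,\negg\xi]$; these I would verify once from the semantics (\eg $\neg\A\F\psi \equiv \E\G\negg\psi$). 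Finally (Q6) holds since $w \vDash \varphi$, and the frame is serial because models are serial, so $(W,R,L_\varphi)$ meets the frame requirement.

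For part~(2) I would carry out the structural induction. In the base case $\psi = p \in \PS$: either $p \in \SF{\varphi}$, where $V_L(p) = L(p)$ gives the claim directly, or $p \notin \SF{\varphi}$, which cannot occur for a proposition of $cl(\varphi)$. For a Boolean formula $\xi = f(\psi_1,\ldots,\psi_n)$ (and likewise for $\negg\xi$), (Q1) produces a vector $\vec b$ with $f(\vec b)=1$ (resp.\ $0$) matching the labels of the $\psi_i$; the induction hypothesis turns those labels into truth values, so $\vec b$ coincides with the actual truth vector of the $\psi_i$ at $u$ and the Boolean clause yields $u \vDash \xi$ (resp.\ $u\vDash\negg\xi$). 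For a positive temporal formula $QO\psi$, condition (Q5) states exactly the semantic path condition, whose building blocks $\pi[i] \in L(\cdot)$ are upgraded to truth by the induction hypothesis; seriality guarantees $\Pi(u)\neq\emptyset$, so the quantifications are meaningful. A negated temporal formula $\neg QO\psi$ (resp.\ $\neg Q[\psi O\xi]$) is handled by first invoking (Q3) (resp.\ (Q4)) to move $u$ into $L(\overline{Q}\,\overline{O}\negg\psi)$, a positive temporal formula already present in $cl(\varphi)$, and then applying the previous case.

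The main obstacle is making this last step well-founded: rewriting $\neg QO\psi$ to its dual $\overline{Q}\,\overline{O}\negg\psi$ does not decrease a naive symbol count, so the induction must be organised along a measure insensitive to leading negations---for instance the number of path-quantified temporal operators, breaking ties by the number of Boolean connectives, after notionally pushing negations down to the atoms. Under such a measure the dual-rewriting is lateral, but the subsequent descent into the temporal argument $\negg\psi$ (through (Q5) and the hypothesis) is a strict decrease, so the recursion terminates. A secondary point worth stating carefully is that $V_L$ assigns $\emptyset$ to propositions outside $\SF{\varphi}$; this is harmless because every atom occurring in a formula of $cl(\varphi)$ already lies in $\SF{\varphi}$, so the valuation is faithful on exactly the propositions the semantics of $cl(\varphi)$ inspects.
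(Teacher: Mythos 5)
Your proof is correct and follows the same route as the paper, whose entire proof is the single line ``Induction on the length of the formula''; you have simply supplied the details it omits. The two points you flag---that part~(2) must be proved as a one-directional implication because quasi-labelings are partial, and that the dual-rewriting of negated temporal formulas via (Q3)/(Q4) requires a measure under which the subsequent descent into $\negg\psi$ is still a strict decrease---are exactly the right ones to worry about, and your resolutions are sound.
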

\begin{proof}
Induction on the length of the formula.
\end{proof}

\begin{definition}Let $\calM = (W, R, V, w_0)$ be a rooted Kripke structure. The \emph{tree unraveling} $\calM^T$ of $\calM$ is defined as $\calM^T = (W^T, R^T, V^T, (w_0))$, where
$W^T$ is the set of all prefixes of paths $\pi \in \Pi(w_0)$, the relation
\[
R^T \dfn \Set{ \big((w_0,\ldots,w_n),(w_0,\ldots,w_n,w_{n+1})\big) | (w_0,\ldots,w_n) \in W^T, w_nRw_{n+1}}
\]
is the maximal proper path prefix relation, and $(w_0,\ldots,w_n) \in L^T(p)$ if and only if $w_n \in L(p)$.
\end{definition}

In what follows, when a Kripke frame is called a \emph{tree}, then the meaning is that it forms a rooted, directed tree where every edge points away from the root.
Clearly, the underlying Kripke frame of $\calM^T$ forms an infinite tree.

\begin{proposition}[\cite{emersonTemporal}]\label{prop:tree-model}
If $\varphi\in \B$ and $\calM$ is a model of $\varphi$, then $\calM^T$ is a model of $\varphi$.
\end{proposition}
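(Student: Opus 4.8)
The plan is to exploit the natural projection from the unraveling back to $\calM$ and show it is a bisimulation, so that satisfaction of every CTL formula is preserved. Write $\calM = (W,R,V,w_0)$ and $\calM^T = (W^T, R^T, V^T, (w_0))$, and define the map $g \colon W^T \to W$ by $g(w_0,\dots,w_n) \dfn w_n$ (sending a finite path to its last world). First I would record the three defining properties of $g$: it preserves labels, since by definition $(w_0,\dots,w_n) \in V^T(p) \iff w_n \in V(p)$; it satisfies the \emph{forward} condition, since $u R^T u'$ implies $g(u) R g(u')$ directly from the definition of $R^T$; and it satisfies the \emph{backward} condition, since whenever $g(u) R v$ for $u = (w_0,\dots,w_n)$, the extended prefix $u' \dfn (w_0,\dots,w_n,v)$ lies in $W^T$ with $u R^T u'$ and $g(u') = v$. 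I would also note that $\calM^T$ is serial because $\calM$ is, so $\calM^T$ is a legitimate rooted serial Kripke structure.

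The heart of the argument is a correspondence between paths. Applying $g$ pointwise sends any path $\pi \in \Pi^{\calM^T}(u)$ to a path $g \circ \pi \in \Pi^{\calM}(g(u))$ by the forward condition. Conversely, using the backward condition together with the tree structure of $W^T$, every path $\rho \in \Pi^{\calM}(g(u))$ lifts uniquely to a path $\hat\rho \in \Pi^{\calM^T}(u)$ with $g \circ \hat\rho = \rho$: having lifted a prefix of length $n$, its last world projects to $\rho[n]$, and the backward condition produces the unique $R^T$-successor whose projection is $\rho[n+1]$. Thus $\pi \mapsto g\circ\pi$ is a bijection between $\Pi^{\calM^T}(u)$ and $\Pi^{\calM}(g(u))$, and moreover it commutes with taking suffixes, i.e.\ $g\circ(\pi_{\geq k}) = (g\circ\pi)_{\geq k}$.

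With this in place I would prove, by simultaneous induction on the structure of $\psi \in \B$, the two equivalences $(\calM^T,u) \vDash \psi \iff (\calM, g(u)) \vDash \psi$ for all $u \in W^T$, and $(\calM^T,\pi) \vDash \psi \iff (\calM, g\circ\pi) \vDash \psi$ for all paths $\pi$ through $\calM^T$. The atomic case is exactly label preservation, and the Boolean connective $f$ follows from the induction hypothesis applied to its arguments. For a path quantifier $\A\psi$ (and dually $\E\psi$), the path bijection lets "for all paths from $u$" match "for all paths from $g(u)$". For the temporal operators $\X,\F,\G,\U$ (and the dual $\RLS$) evaluated on a path $\pi$, I would invoke the suffix-commutation $g\circ(\pi_{\geq k}) = (g\circ\pi)_{\geq k}$ together with the induction hypothesis. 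Finally, evaluating at the root $u = (w_0)$, where $g((w_0)) = w_0$, gives $\calM^T \vDash \varphi \iff \calM \vDash \varphi$; since $\calM \vDash \varphi$, the structure $\calM^T$ is a model of $\varphi$.

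The main obstacle is the path correspondence, specifically establishing the unique lifting of paths in $\Pi^{\calM}(g(u))$ to $\Pi^{\calM^T}(u)$ and verifying that it is inverse to pointwise projection; once the bijection (and its compatibility with suffixes) is secured, the structural induction is routine. Care is also needed to carry both the world- and path-satisfaction statements through the induction simultaneously, since the path quantifiers and temporal operators interleave the two notions.
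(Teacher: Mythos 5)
The paper does not prove this proposition itself but cites it from Allen Emerson's work, where the argument is exactly the one you give: the last-world projection from the unraveling is a functional bisimulation inducing a suffix-compatible bijection between paths from $u$ and paths from its image, after which the simultaneous induction on state and path formulas is routine. Your proof is correct, including the key points (unique path lifting, seriality of $\calM^T$, and carrying both satisfaction relations through the induction), so nothing is missing.
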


With (infinite tree) quasi-models in the toolbox, we are now able to prove the upper bound in model extent for the $\AG$ fragment. The idea is to "greedily" construct a new model, in the sense that $\EF$-subformulas are always fulfilled in immediate successor worlds.

\begin{theorem}\label{thm:model-size-ag}
For any base $C$, $\B(C,\AG)$ has model extent upper bound $\bigO{n}$.
\end{theorem}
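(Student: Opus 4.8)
The plan is to start from a well-behaved source model and then rebuild it greedily so that every $\EF$-obligation is discharged by an immediate successor, which keeps all paths short. By the proof of Proposition~\ref{prop:agef-pspace-complete}, a satisfiable $\varphi \in \B(C,\AG)$ has a serial, reflexive and transitive model $\calK$, rooted at some $r \vDash \varphi$ and \wloss $R$-generable; passing to the associated quasi-model via Proposition~\ref{prop:quasi-to-model} I may work purely with the extended labeling on $cl(\varphi)$, where $\AG$ behaves as the modal box and $\EF$ as the diamond. For each world $w$ I record its \emph{box-type} $\Box(w) \dfn \Set{ \AG\chi \in cl(\varphi) | w \vDash \AG\chi}$ and its \emph{diamond-type} $\Diamond(w) \dfn \Set{ \EF\chi \in cl(\varphi) | w \vDash \EF\chi }$. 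Reflexivity and transitivity give the decisive monotonicity: whenever $wR^*w'$ one has $\Box(w) \subseteq \Box(w')$ and $\Diamond(w') \subseteq \Diamond(w)$. Hence the \emph{potential} $P(w) \dfn (\size{cl(\varphi)} - \size{\Box(w)}) + \size{\Diamond(w)} \in \{0,\ldots,\bigO{n}\}$ never increases along $R^*$, and $P(w')=P(w)$ for $wR^*w'$ forces $\Box(w')=\Box(w)$ and $\Diamond(w')=\Diamond(w)$.

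I then build the output model $\calM'$ top-down, stratified by potential. Each world of $\calM'$ is a copy of a source world inheriting its label, and an edge $u \to u'$ is admissible whenever every $\AG\chi$ labeled in $u$ has $\chi$ labeled in $u'$, which by reflexivity holds automatically as soon as $\Box(u) \subseteq \Box(u')$. At a copy of a source world $s$ of potential $p$ I discharge the diamonds of $\Diamond(s)$ as follows. Call $\EF\chi$ \emph{internal at $s$} if $\chi$ is witnessed in $\calK$ only by worlds of potential exactly $p$; all remaining diamonds have a witness of strictly smaller potential. For the internal diamonds I choose one witness each, insert their copies, and connect them and $s$ into a single cycle; since all these worlds share the box-type $\Box(s)$, every cycle edge is admissible and each internal diamond is satisfied one step ahead. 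For every diamond with a lower-potential witness I attach that witness as an immediate successor and recurse, noting its potential is $<p$; worlds without pending diamonds receive a self-loop for seriality. Correctness is then checked against the quasi-model conditions: the local conditions hold because labels are inherited from $\calK$; the $\E$-conditions hold because every diamond is witnessed by an immediate successor by construction; and the $\A$-conditions hold because each edge maps to source-reachability (or preserves the box-type within a cycle), so $\AG$-formulas propagate downward. Proposition~\ref{prop:quasi-to-model} then turns $\calM'$ into a genuine model of $\varphi$.

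It remains to bound the extent, i.e.\ the number of distinct worlds on a longest path of $\calM'$. Along any path the potential strictly drops whenever a lower-potential successor is taken, so there are at most $\bigO{n}$ such \emph{descending} steps. The subtle point---and the crux of the whole argument---is that the cycles do not accumulate to a quadratic bound: a diamond $\EF\chi$ can be internal at only a single potential level on a given path. Indeed, if $\chi$ has no witness of potential below $p$, then $\EF\chi$ is false at every reachable world of potential $<p$, so by monotonicity of the diamond-type $\EF\chi$ switches from true to false exactly once, at the transition out of level $p$; above that level it is discharged by a descending step rather than by a cycle. Consequently each of the at most $\bigO{n}$ diamonds contributes a fresh cycle-witness at most once along the path, so the total number of distinct cycle worlds met is $\bigO{n}$ as well, and adding the $\bigO{n}$ descending steps yields extent $\bigO{n}$. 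I expect the main obstacle to be precisely this accounting: naively each of the $\bigO{n}$ potential levels could host a cycle of $\bigO{n}$ witnesses, and only the monotonicity of the diamond-type---forcing every diamond to be freshly witnessed at one level only---brings the bound down from $\bigO{n^2}$ to the claimed $\bigO{n}$.
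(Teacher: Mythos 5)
Your proof is correct, and it reaches the same $\bigO{n}$ bound by a genuinely different route than the paper. The paper unravels the model into an infinite tree quasi-model, rebuilds it greedily so that every unfulfilled $\EF\xi$ is witnessed by an \emph{immediate} successor chosen to maximize the set of labeled $\AG$-formulas, assigns each successor a unique justification, proves that a repeated justification along a path forces equal $\AG$-sets, and then furls every long path with a back edge; the bound obtained is the number of distinct $\EF$-subformulas plus one. You instead exploit the reflexive--transitive normal form available for the $\AG$ fragment, combine the monotone growth of the box-type and the monotone shrinking of the diamond-type into a single non-increasing potential, stratify the reconstruction by potential level, realize same-level witnesses on a cycle and strictly-lower-level witnesses by descending edges, and get linearity from the observation that an $\EF$-formula can be \emph{internal} at only one level on a given path, so the cycle witnesses summed over all levels are disjoint subsets of the $\EF$-subformulas. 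Both arguments ultimately rest on the monotonicity of the $\AG$-labels along paths --- the paper encodes it through the maximality condition on candidate witnesses, you through the potential --- but you additionally use the dual monotonicity of the $\EF$-labels and need neither the tree unraveling nor the maximality trick; the price is that your construction is tied to the reflexive--transitive semantics of $\AG$/$\EF$ and would not transfer to fragments lacking that normal form, whereas the paper's quasi-model and justification machinery is reused elsewhere (\eg\ in the $\AF,\AX$ case). Two presentational points you should tighten: the opening claim that every $\EF$-obligation is discharged by an \emph{immediate} successor is not literally what you build (cycle witnesses are reachable but not adjacent), and the cycle edges need not correspond to source reachability, so the verification that they violate no quasi-label condition should state explicitly that equality of box- and diamond-types within a potential level is what makes them harmless.
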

\begin{proof}
Let $\varphi \in \B(\AG)$ be satisfiable, and $(\calT,r) = (W, R, L, r)$ an infinite tree quasi-model obtained from the unraveling of a model of $\varphi$.
\Wloss{} $w \in L(\AG \xi)$ implies $u \in L(\AG\xi)$ for all for all $\AG \xi \in cl(\varphi)$, $w \in W$, and $R$-successors $u$ of $w$.

For $w \in W$, define
\[
\calF(w) \dfn \Set{ \xi \in cl(\varphi) | w \in L(\EF\xi) \setminus L(\xi)}\text{,}
\]
\ie, the set of unfulfilled $\EF$-formulas labeled in $w$.
Analogously, let
\[
\calG(w) \dfn \Set{ \xi \in cl(\varphi) | w \in L(\AG \xi)}\text{.}
\]
We introduce a \emph{candidate set} $\calC_\xi(w) \subseteq W$ for each $w\in W$ and $\xi \in cl(\varphi)$. Let $\calC_\xi(w) \dfn \Set{ u \in W | u \in L(\xi), wR^*u }$.
For $\xi \in \calF(w)$, the set $\calC_\xi(w)$ is non-empty; it contains reachable worlds $u$ that witness the truth of $\EF\xi$ in $w$.
We define a subset
\[
\calC^\mathrm{max}_\xi(w) \dfn \Set{ u \in \calC_\xi(w) | \forall u' \in \calC_\xi(w), u' \neq u, \size{\calG(u)} \geq \size{\calG(u')} }\text{,}
\]
which is the restriction to candidates $u$ that are \emph{maximal} with respect to the number of their labeled $\AG$-formulas.
The new edge relation $E$ based on $\calC^\mathrm{max}_\xi(w)$ is
\[
E \dfn \Set{ (w,u) \in W \times W | \xi \in \calF(w), u \in \calC^\mathrm{max}_\xi(w)}\text{.}
\]
To ensure the seriality of the new model, we furthermore require reflexive edges $E^\mathrm{ref} \dfn \Set{ (w,w) \in W \times W | \calF(w) = \emptyset }$.

We define the quasi-model $\calT' \dfn (W',E \cup E^\mathrm{ref}, L')$, where
$W' \dfn \Set{ w \in W | rE^*w}$ is the restriction of $W$ to worlds reachable from $r$.
Similarly, let $L'(\xi) \dfn L(\xi) \cap W'$ for all $\xi \in cl(\varphi)$.
It is straightforward to check that $\calT'$ is a quasi-model for $\varphi$.
$\calT'$ is not necessarily finite; the rest of the proof describes how to reduce $\calT'$ to a finite model with linear extent.

\smallskip

First, consider a mapping $J_w$ from proper successors of $w$ to $\calF(w)$ such that $J^{-1}_w(\xi) \in L'(\xi)$ for all $\xi \in \calF(w)$. We can think of $J_w$ as the \emph{justifications}, in the sense that every successor is responsible for a $\EF$-formula in $w$. \Wloss, $J_w$ is a bijection (clone successors until there are enough, and delete unused ones). As all worlds $u$ in $\calT'$ have at most one proper predecessor $w$, simply write $J(u)$ for $J_w(u)$.

Next, we show that justifications may only repeat on a path if the corresponding $\calG$-sets are equal.
Let $\pi = (u_0, u_1, \ldots)$ be a path through $\calT'$.
For every world $u_i$ with $i \geq 1$, there is a justification $\xi = J(u_i)$.
Assume $J(u_i) = J(u_j) = \xi$ for
some $1 \leq i<j$. We show that $\calG(u_i) =\calG(u_j)$. Clearly $\calG(u_i) \subseteq \calG(u_j)$ follows from the assumption we made at the beginning, since $u_iR^*u_j$. If however $\calG(u_i)\subsetneq \calG(u_j)$, then $u_i \notin \calC^\mathrm{max}_{\xi}(u_{i-1})$, contradiction.

We now transform $\calT'$ to a finite quasi-model $\calM$ as follows:
While there is a \emph{long} path $\pi$, \emph{furl} that path.
A path is \emph{long} if it visits more than $|\{\EF \xi \in \SF{\varphi}\}|$ distinct worlds besides $r$.
To \emph{furl} a path $\pi$, choose the
minimal $j$ such that $J(\pi[j]) = J(\pi[i])$ for some $i < j$.
Such an $j$ must exist if $\pi$ is long.
The world $\pi[j]$ can then be
replaced by a back edge from $\pi[j-1]$ to $\pi[i]$ without violating
any quasi-label condition: no $\AG$ is violated, as $\calG(\pi[i]) = \calG(\pi[j])$, and no $\EF$ is violated, as every world $\pi[j]$ needs only to satisfy its justification.
As $\calM$ then has no long paths, $\varphi$ has a model with extent $\bigO{\varphi}$.
\end{proof}

\subsection{The AX fragment}\label{sec:ax}

The $\AX$ fragment of temporal logic is, similar to $\AG$, well known from the context of modal logic. The following theorems are adaptations of some of its properties.

\begin{theorem}\label{theorem:model-size-ax-lower}
$\B(\AX)$ has a model size lower bound $2^{\Omega\left(\sqrt{n}\right)}$ and extent lower bound $\bigOmega{n}$.
$\B_k(\AX)$ has a model size lower bound $n^{\bigOmega{k}}$ and extent lower bound $k$.
\end{theorem}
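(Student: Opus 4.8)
The plan is to read $\AX$ as the modal box and $\EX$ as the modal diamond, and to supply two formula families: a \emph{chain} family $\psi_L$ for the two extent bounds, and a \emph{frozen-address tree} family for the two size bounds. Throughout, all auxiliary propositions are fresh, seriality is restored by attaching self-loops at the deepest worlds, and I abbreviate $\EX^{0}\chi\dfn\chi$, $\EX^{i}\chi\dfn\EX\EX^{i-1}\chi$, and likewise for $\AX^i$. For the extent bound of $\B(\AX)$, fix a proposition $p$ and set $N_0\dfn\neg p$, $N_j\dfn\neg p\land\AX N_{j-1}$, and $\psi_L\dfn \EX^{L}p\land N_{L-1}$. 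In this nested form $N_{L-1}\equiv\bigwedge_{j=0}^{L-1}\AX^{j}\neg p$ has length $\bigO{L}$, so $\size{\psi_L}=\bigO{L}$ and $\td(\psi_L)=L$. Any model contains a path $w_0,\dots,w_L$ with $w_L\vDash p$ coming from $\EX^{L}p$, while $N_{L-1}$ forbids any $p$-world within $L-1$ steps of $w_0$; hence the shortest distance from $w_i$ to a $p$-world is exactly $L-i$, so $w_0,\dots,w_L$ are pairwise non-bisimilar and in particular distinct, and every model has extent at least $L$. Taking $L=\bigTheta{n}$ gives the extent lower bound $\bigOmega{n}$ for $\B(\AX)$, and the formula $\psi_k\in\B_k(\AX)$ (padded by trivial conjuncts such as $r_i\lor\neg r_i$ to grow its length into an infinite family) witnesses extent lower bound $k$ for $\B_k(\AX)$.

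For the size bounds I force a tree whose nodes store a \emph{frozen} address. Using address bits together with level markers $\ell_0,\dots,\ell_k$ (exactly one true in each world), let $C$ be the conjunction of: a level-advance clause $\ell_j\imp\AX\ell_{j+1}$; a branching clause forcing every $\ell_j$-world with $j<k$ to have, for each value $c$ of the $(j{+}1)$-st address block, an $\EX$-successor realizing $c$; and a freezing clause stating that a bit already set is copied to every successor. I then impose $C$ on the whole depth-$k$ tree by the nested invariant $C\land\AX\big(C\land\AX(\cdots)\big)$ of depth $k$. The point of writing the invariant by nesting, rather than as $\bigwedge_{j}\AX^{j}C$, is that it keeps the length at $\bigO{k\cdot\size{C}}$; a naively recursive, branching formula would instead duplicate its body at each level and blow up exponentially.

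Correctness is an induction on depth: in any model, each $\ell_j$-world carries a distinct address prefix in its frozen bits, the branching clause realizes every one-step extension of that prefix, and distinct prefixes yield non-bisimilar, hence distinct, worlds; therefore the model has at least $b^{k}$ worlds, where $b$ is the branching degree. Binary branching gives $\size{C}=\bigO{k}$, total length $\bigO{k^{2}}$ and depth $k$, forcing $2^{k}$ worlds; choosing $k=\bigTheta{\sqrt{n}}$ yields the model size lower bound $2^{\bigOmega{\sqrt{n}}}$ for $\B(\AX)$. Fixing the depth at $k$ and using $b$-ary branching gives total length $\bigO{k^{2}b\log b}$, so $b=\bigTheta{n/\log n}$ and $b^{k}=n^{\bigOmega{k}}$ worlds, the model size lower bound for $\B_k(\AX)$; this tree also has a root-to-leaf path of $k+1$ distinct worlds, re-establishing extent $\geq k$.

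I expect the main obstacle to be the frozen-address tree: making the freezing and level bookkeeping robust enough that \emph{every} model, not just the intended one, must realize all $b^k$ addresses, in particular ensuring that a bit is frozen only from its own level onward so that the branching clause can still split on it, while simultaneously keeping the encoding within the length budgets $\bigO{k^{2}}$ and $\bigO{k^{2}b\log b}$ that are exactly what separate $2^{\bigOmega{\sqrt{n}}}$ from the naive $2^{\bigOmega{n}}$. The chain family $\psi_L$, by contrast, is routine once the nested form of $N_{L-1}$ is used to keep its length linear and the shortest-distance argument is made precise.
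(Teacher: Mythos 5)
Your proposal is correct and follows essentially the same route as the paper: the frozen-address tree is exactly the paper's construction $\varphi_{m,k}=\psi^m_1\land\AX(\psi^m_2\land\AX(\cdots))$, where each level forces $\EX$-successors for every value of its own address block and freezes the bits of earlier blocks, with the same length accounting ($\bigO{k^2}$ for binary branching, $\bigO{b\log b}$ per level for fixed depth) yielding $2^{\bigOmega{\sqrt{n}}}$ and $n^{\bigOmega{k}}$; your explicit level markers $\ell_j$ merely replace the paper's device of indexing the freezing clauses by nesting depth, and your level-conditioned freezing resolves the one obstacle you flag in the same way the paper does. The chain family for the extent bounds fills in the step the paper dismisses as straightforward, and your shortest-distance-to-a-$p$-world argument for distinctness is sound.
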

\begin{proof}
The temporal depth as lower bound for model extent is straightforward.
For the size lower bound, we enforce a large model with a standard approach (see also \cite{Marx2007139}). Let
\[
\psi^m_i \dfn \left[ \bigand_{j=0}^{m-1} \EX
\vec{c_i}(j) \right] \sland \bigand_{s=1}^{i-1}
\bigand_{t = 0}^{\ceil{\log m}} (p_{s,t}
\rightarrow \AX p_{s,t}) \sland (\neg p_{s,t} \rightarrow \AX \neg p_{s,t})\text{,}
\]
where $\vec{c_i}(j)$ is a conjunction of $\log m$ literals of propositions $p_{i,1}, \ldots, p_{i,\log m}$, such that $\vec{c_i}(j)$ represents the binary value $j$.
Then the satisfiable formula
\[
\varphi_{m,k} \dfn \psi^m_1 \land \AX(\psi^m_2 \land \AX(\ldots (\AX \psi^m_k) \ldots ))
\]
has length $\bigO{k^2 \cdot m \log m}$ and temporal
depth $k$, but no model with less than $m^k$ worlds.
For fixed $m \geq 2$, consequently $\varphi_{m,k}$ has length $\bigO{k^2}$ and only models of size $\geq 2^k$.
Conversely, for fixed $k$, $\varphi_{m,k}$ has length $\bigO{m \ln m} \subseteq \bigO{m^2}$, and only models of size $\geq m^{k} = (m^2)^{\frac{k}{2}}$. As a result, we obtain a family of formulas of size $n$ with models of size at least $n^{\bigOmega{k}}$.
\end{proof}

\begin{proposition}\label{prop:model-size-ax}
For any base $C$ and any $k$, $\B_k(\AX)$ has a model size upper bound $n^{\bigO{k}}$ and extent $k$. $\B(\AX)$ has a model extent upper bound $n$.
\end{proposition}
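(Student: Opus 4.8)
The plan is to produce, for any satisfiable $\varphi \in \B_k(\AX)$ of length $n$, a single model that is a tree of depth $k$ with branching $\bigO{n}$; such a tree has $\bigO{n^k} = n^{\bigO{k}}$ worlds and extent $k$, so it witnesses the size bound and the extent bound simultaneously. I would start from an arbitrary model of $\varphi$, pass to its tree unraveling via Proposition~\ref{prop:tree-model} (an infinite tree model of $\varphi$), and then apply two independent reductions: a \emph{truncation} that caps the depth at $k$, and a \emph{pruning} that caps the branching at $\bigO{n}$.

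For the depth, the key is a locality property: in any Kripke structure, the truth of a formula $\chi$ with $\td(\chi) = j$ at a world $w$ depends only on the worlds reachable from $w$ within $j$ steps. This follows by a routine induction on $\chi$, since each $\A\X$ or $\E\X$ advances exactly one step and a Boolean combination never raises temporal depth. Hence I truncate the unraveled tree at depth $k$, delete everything below, and attach a reflexive self-loop to each new (depth-$k$) leaf to restore seriality. Because $\td(\varphi) = k$, the truth of $\varphi$ at the root is untouched: its evaluation never needs to follow an edge out of a depth-$k$ world, and the labelings of the retained worlds are unchanged. Every path now runs from the root to a depth-$k$ leaf and then loops, so the extent is exactly $k$. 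This already gives the extent bound $k$ for $\B_k(\AX)$; and since every $\varphi \in \B(\AX)$ satisfies $\td(\varphi) \leq \size{\varphi} = n$, truncating at depth $\td(\varphi)$ yields the extent bound $n$ for the full fragment $\B(\AX)$.

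For the size, I reduce the (possibly infinite) branching of the truncated tree. At each world $w$ I keep, for every $\E\X\chi \in cl(\varphi)$ with $w \vDash \E\X\chi$, a single witnessing successor (and one arbitrary successor for seriality if no such formula occurs), discarding all other successors together with their subtrees, and recurse. As $\size{cl(\varphi)} = \bigO{n}$, every world then has $\bigO{n}$ successors, so the depth-$k$ tree has $\bigO{n^k} = n^{\bigO{k}}$ worlds. That this pruning preserves $\varphi$ I would show by induction on temporal depth, proving that every $\chi \in cl(\varphi)$ retains its truth value at every retained world: propositional formulas depend only on the unchanged labeling; an $\E\X\chi$ that was true stays true since its witness was kept, and one that was false stays false since passing to a sub-forest of successors creates no new ones; and the $\A\X$-case is settled through its dual, using that $cl(\varphi)$ contains $\overline{\A}\,\overline{\X}\negg\chi = \E\X\negg\chi$ for each $\A\X\chi$, so that $\A\X\chi \equiv \neg\E\X\negg\chi$ lets its truth value be read off from the already-settled $\E\X$-values.

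I expect the pruning-correctness induction to be the only delicate point. The monotone half --- that $\A\X$-truths and $\E\X$-falsities survive the restriction to fewer successors --- is immediate, but preserving the \emph{falsity} of an $\A\X$-formula is not automatic, and this is exactly what forces the use of the closure $cl(\varphi)$ rather than the bare set of subformulas $\SF{\varphi}$: its closure under negation-duals guarantees that a witness for the relevant $\E\X\negg\chi$ is retained, after which the Boolean recombination at each world goes through verbatim.
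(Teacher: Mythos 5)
Your proposal is correct and is essentially the paper's own argument, just written out in full: the paper's proof likewise bounds the extent by the temporal depth and the branching by one successor per $\E\X$-subformula (plus one for seriality), obtaining the size bound from the geometric sum $\sum_{i=0}^{k}\size{\varphi}^i \in \size{\varphi}^{\bigO{k}}$. Your elaboration of the truncation/locality step and of why $\A\X$-falsities survive pruning (via the dual $\E\X$-formulas in the closure) correctly fills in the details the paper leaves implicit.
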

\begin{proof}
Clearly the temporal depth is an upper bound for the extent. Every world in a model of $\varphi$ requires at most $\max\{1,\ell\,\}$ successors, where $\ell$ is the number of $\EX$-subformulas in $\varphi$.
The model size for $\varphi \in \B_k(\AX)$ follows, as $\sum_{i = 0}^k \size{\varphi}^i \in\size{\varphi}^{\bigO{k}}$.\phantom{$\Box$}
\end{proof}

The gap between the upper bound $2^{\bigO{n}}$ and lower bound $2^{\Omega\left(\sqrt{n}\right)}$ can be closed by choosing a different encoding for modal formulas. In more succinct encodings, for instance \emph{modal circuits} (see Hemaspaandra, Schnoor, and Schnoor~\cite{hss10}), a lower bound of $2^{\bigOmega{n}}$ can be achieved.

\begin{proposition}\label{prop:ax-k-npc}
For all $k \geq 0$, $\SAT(\B_k(\AX))$ is $\NP$-complete.
\end{proposition}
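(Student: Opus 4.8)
The plan is to prove membership in $\NP$ and $\NP$-hardness separately, since for each fixed $k$ both ingredients are already essentially at hand.

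For the upper bound I would invoke Proposition~\ref{prop:model-size-ax}: for fixed $k$, every satisfiable $\varphi \in \B_k(\AX)$ has a model of size $\size{\varphi}^{\bigO{k}}$, which is a fixed polynomial in $\size{\varphi}$. This yields a guess-and-check decision procedure. On input $\varphi$, nondeterministically guess a rooted serial Kripke structure $\calM$ of the guaranteed polynomial size, labelled only with the propositions occurring in $\varphi$, and then deterministically verify $\calM \vDash \varphi$. Since CTL model checking over a given finite structure runs in polynomial time, the verification step is polynomial, so the whole procedure runs in nondeterministic polynomial time. If $\varphi$ is satisfiable, the small-model bound guarantees that some correct guess exists; if $\varphi$ is unsatisfiable, no structure can verify it. Hence $\SAT(\B_k(\AX)) \in \NP$.

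For the lower bound I would observe that temporal depth zero collapses the logic to propositional logic: by the definition of $\td$, a formula in $\B_0(\AX)$ contains no temporal operator, so it is merely a Boolean combination over $\{\land,\lor,\neg\}$ of propositions. Thus the identity map is a trivial logspace reduction from classical propositional $\SAT$ to $\SAT(\B_0(\AX))$, establishing $\NP$-hardness of the latter. Since $\B_0(\AX) \subseteq \B_k(\AX)$ for every $k \geq 0$, the same reduction shows $\NP$-hardness of each $\B_k(\AX)$.

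Combining the two directions yields $\NP$-completeness for all $k \geq 0$. I do not anticipate a genuine obstacle: the essential ingredient, the polynomial bound on model size, is supplied by Proposition~\ref{prop:model-size-ax}, and propositional $\SAT$ is well known to be $\NP$-complete over the standard base. The only point deserving mild attention is that the guessed structure must be serial and, without loss of generality, $R$-generable; this is immediate, since the nondeterministic guess can simply be restricted to such structures.
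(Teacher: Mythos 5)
Your proposal is correct and follows essentially the same route as the paper: the upper bound via the polynomial model size bound of Proposition~\ref{prop:model-size-ax} combined with guess-and-check using polynomial-time CTL model checking, and the lower bound via the observation that $\B_0$ is just propositional logic, whose satisfiability problem is $\NP$-complete. Your added remark about restricting the guess to serial, $R$-generable structures is a harmless elaboration of what the paper leaves implicit.
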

\begin{proof}
The upper bound follows from the previous theorem: Guess a
satisfying model of polynomial size and verify it in polynomial time, since CTL model checking is in $\P$ \cite{clarke_automatic_1986}. The lower bound
holds as $\B_0$ is nothing else than propositional logic, for which the
satisfiability problem is already $\NP$-complete \cite{cook_complexity_1971}.
\end{proof}

A complete classification of the computational complexity of modal satisfiability (in the case of unbounded modal depth and arbitrary Boolean bases) was given by Hemaspaandra et al.\ \cite{hss10}.
Since serial modal logic $\KD$ with Boolean base $C$ is equivalent to $\B(C,\AX)$, clearly the next theorem follows:

\begin{theorem}[\cite{hss10}]\label{thm:ax-pspace-completeness}
Let $C$ be a base such that $\CloneSE \subseteq [C]$. Then $\SAT(\B(C, \AX))$ is $\PSPACE$-complete.
\end{theorem}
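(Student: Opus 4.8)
The plan is to reduce the statement to the modal-logic classification of Hemaspaandra, Schnoor, and Schnoor \cite{hss10} by observing that $\B(C,\AX)$ is, up to a trivial renaming of operators, exactly the serial modal logic $\KD$ over the Boolean base $C$. First I would make this correspondence precise. On a serial Kripke structure the semantics of $\AX\psi$ (``$\psi$ holds in every successor'') and of its dual $\EX\psi$ (``$\psi$ holds in some successor'') coincide verbatim with the modal box and diamond. Since $\B(C,\AX)$ permits only the operator $\AX$ and its dual $\EX$, the map sending $\AX$ to $\Box$ and $\EX$ to $\Diamond$, and acting as the identity on propositions and on the connectives of $C$, is a logspace-computable bijection between $\B(C,\AX)$-formulas and modal formulas over $C$ that preserves the model relation. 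Hence $\SAT(\B(C,\AX))$ and $\KD$-satisfiability over $C$ are logspace-equivalent. It is essential here that the CTL semantics quantifies only over \emph{serial} structures, which is precisely what makes the target logic $\KD$ (with the seriality axiom $\sfD$) rather than $\sfK$.

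Next I would dispatch membership, which is clone-independent and can be obtained directly from the model-theoretic bounds already established. By Proposition~\ref{prop:model-size-ax}, a satisfiable $\B(\AX)$-formula has a tree model of extent at most $n$ in which every world needs at most one successor per $\EX$-subformula, so the branching is polynomial in $n$. A depth-first search over such a tree that guesses the labeling of the current world, verifies local consistency, and then branches existentially for each $\EX$-requirement and universally over all successors for the $\AX$-requirements, runs in alternating polynomial time; since alternating polynomial time equals $\PSPACE$, this places $\SAT(\B(C,\AX))$ in $\PSPACE$ for every base $C$. (Equivalently, one may simply transport the standard Ladner-style $\KD$ decision procedure through the translation above.)

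Finally, for hardness I would appeal to the classification in \cite{hss10}, which determines the complexity of modal satisfiability for all Boolean clones and shows that $\KD$-satisfiability over $C$ is $\PSPACE$-hard exactly when $\CloneSE\subseteq[C]$, i.e.\ when $C$ can express the $1$-separating clone generated by $\nimp$. Transporting this lower bound across the logspace-equivalence of the first step yields $\PSPACE$-hardness of $\SAT(\B(C,\AX))$ under the hypothesis $\CloneSE\subseteq[C]$, which together with the $\PSPACE$ upper bound gives $\PSPACE$-completeness.

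I expect the only genuine subtlety to lie in the first step: verifying that the operator translation is faithful in both directions over serial structures, and checking that the seriality built into the CTL semantics lines up with $\KD$ and not with $\sfK$. Once the two logics are formally identified, both the $\PSPACE$ membership and the $\PSPACE$-hardness are inherited wholesale from the modal-logic result, so no further temporal-logic argument is needed.
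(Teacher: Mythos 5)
Your proposal is correct and takes essentially the same route as the paper: the paper simply observes that $\B(C,\AX)$ coincides with serial modal logic $\KD$ over the base $C$ and imports both bounds wholesale from the classification of Hemaspaandra, Schnoor, and Schnoor \cite{hss10}. Your additional detail on the translation and the alternating-time membership argument only fleshes out what the paper leaves implicit.
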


\subsection{The AF AX fragment}\label{sec:afax}

The next part establishes the matching upper bounds for the fragment
with both $\AX$ and $\AF$. It requires some technical work; we show
$\PSPACE$ membership by constructing a canonical \emph{balloon model}. It has a special form that allows to non-deterministically guess and
verify it on-the-fly, namely it is "pseudo-acyclic":
it almost resembles a tree, except that its branches are closed into
cycles. This poses a strong restriction to possible back-edges, and allows to guess
such a model in a depth-first search manner using polynomial space.

We require several auxiliary definitions:

\begin{definition}[Ultimately periodic path]
  A path $\pi$ of the form
  \[
  \pi = (w_1, \ldots, w_i, w_{i+1}, \ldots, w_{i+k}, w_{i+1}, \ldots)\text{,}
  \]
  where $i \geq 0,k\geq 1$,
  is called \emph{ultimately periodic}.
  It consists of a finite \emph{prefix} that visits every world at most once, followed by an infinite repetition of a finite, non-empty \emph{cycle}.
The \emph{length} of $\pi$ is $i + k$, \ie, the sum of the
  lengths of its prefix and its cycle.
\end{definition}

\begin{definition}[Balloon path]
  Let $F = (W,R)$ be a finite Kripke frame with exactly one predecessor-free world $r \in W$ (its \emph{root}), and all other worlds reachable from $r$. We call $F$ a \emph{balloon path} if there is exactly one $R$-path $\pi$ with origin $r$ (then $\pi$ must be ultimately periodic with non-empty prefix, as $F$ is assumed finite). The \emph{length} of $F$ is then simply the length of $\pi$.
\end{definition}

\begin{definition}[Balloon frames]
A Kripke frame $F = (W,R)$ is called a \emph{balloon frame of level $m$ and
length at most $n$}, provided that
\begin{itemize}
  \item for $m = 0$, $F$ is a balloon path of length at most $n$.
  \item for $m > 0$, $F$ is the union of Kripke frames $P, F_1,\ldots,F_k$, where $P$ is a balloon path of length at most $n$, and for all $i, j \in [k]$,
  \begin{itemize}
    \item $F_i$ is a balloon frame of length $n$ and level at most $m - 1$,
    \item for $i \neq j$, $F_i$ and $F_j$ are disjoint except their roots,
    \item $F_i$ and $P$ are disjoint except that the root of $F_i$ must be a world of $P$.
  \end{itemize}
\end{itemize}
\end{definition}

Intuitively, $F$ is constructed by taking a balloon path of length at most $n$, and appending to each world $u$ a finite number of balloon structures of level at most $m-1$ and length at most $n$. \emph{Appending} here means identifying each of their roots with $u$ such that they have no other worlds in common with each other.

Such a structure with bounded level $m$ has some useful properties. For instance, every path must visit at most one balloon frame of level $m, m-1, \ldots, k$ for some $k \geq 1$, and then stay forever in that one with level $k$.

\smallskip

If $(W,R)$ is a balloon structure and $(W,R,L)$ is a quasi-model of a formula
$\varphi$, then $\calM = (W,R,L)$ is a \emph{balloon quasi-model} of $\varphi$.

The first step towards finding a balloon quasi-model is to identify ultimately periodic paths as witnesses for $\E$-formulas. Here, a path with origin $w$ \emph{witnesses} a formula $\E\gamma$ labeled in $w$ when $\pi[1] \in L(\delta)$ (if $\gamma = \X \delta$) resp.\ when $\Set{\pi[i] | i \geq 0} \subseteq L(\delta)$ (if $\gamma = \G \delta$).

\begin{lemma}\label{lem:ulti-path}
Let $\calM = (W, R, L)$ be a finite quasi-model of $\varphi \in \B$. Assume $\gamma \in cl(\varphi)$ is a formula of the form $\EX \delta$ or $\EG \delta$. If $w \in L(\gamma)$, then there is a path $\pi \in \Pi(w)$ witnessing $\gamma$ such that $\pi$ is ultimately periodic and of length at most $\size{W}$.
\end{lemma}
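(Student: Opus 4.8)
The plan is to prove the lemma by a standard lasso (pigeonhole) construction: first extract \emph{some} witnessing path from the quasi-model condition (Q5), and then shorten it to an ultimately periodic path of length at most $\size{W}$ by collapsing the first repetition of a world, while checking that the shortening does not destroy the witness property.

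First I would invoke (Q5). Since $w \in L(\gamma)$ and $\gamma$ is an $\E$-formula, there is a path $\pi = (w_0, w_1, \ldots) \in \Pi(w)$ with $w_0 = w$ that witnesses $\gamma$: if $\gamma = \EX\delta$, then $w_1 \in L(\delta)$, and if $\gamma = \EG\delta$, then $w_i \in L(\delta)$ for all $i \geq 0$. Such an infinite path exists because the underlying frame of a quasi-model is serial by definition.

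Next, since $W$ is finite, I would locate the first repeated world along $\pi$: let $j$ be minimal such that $w_j = w_i$ for some $i < j$. As $w_0, \ldots, w_{j-1}$ are then pairwise distinct, we get $j \leq \size{W}$. I would define the ultimately periodic path $\pi'$ with prefix $(w_0, \ldots, w_{i-1})$ and non-empty cycle $(w_i, \ldots, w_{j-1})$, repeated forever; this is a genuine $R$-path, since the edge $w_{j-1} R w_j = w_i$ closes the cycle. Its length is $i + (j - i) = j \leq \size{W}$, and crucially every world it visits already occurs among $w_0, \ldots, w_{j-1}$ on the original $\pi$.

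Finally I would verify that $\pi'$ still witnesses $\gamma$, which is the only mildly delicate point and the one place where the two operator shapes must be handled separately. In the $\EG$ case it is immediate: every world of $\pi'$ lies on $\pi$ and hence in $L(\delta)$, so $\{\pi'[o] \mid o \geq 0\} \subseteq L(\delta)$. In the $\EX$ case the sole requirement is $\pi'[1] \in L(\delta)$; since all positions strictly before the first repetition are left untouched, $\pi'[1] = w_1 \in L(\delta)$ (and in the degenerate situation $j = 1$, i.e.\ $w_1 = w_0$, the cycle is the single world $w \in L(\delta)$ and $\pi'[1] = w_1$ still holds). Everything else is routine bookkeeping about the length bound and the fact that seriality supplies an infinite path to cut in the first place.
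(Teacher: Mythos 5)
Your proposal is correct and follows essentially the same route as the paper's proof: take a witnessing path supplied by the quasi-model condition, cut it at the first repeated world (which by pigeonhole occurs within $\size{W}$ steps), close the cycle there, and observe that the $\EX$ case only needs position $1$ (unchanged, even in the degenerate case $w_1 = w_0$) while the $\EG$ case follows because every world of the new path already lies on the old one. The only difference is that you spell out the appeal to (Q5) and seriality and the degenerate case explicitly, which the paper leaves implicit.
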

\begin{proof}
Let $\pi \in \paths(w)$ be the path through $\calM$ that witnesses the truth of $\gamma$. Let $j$ be minimal such that $\pi[j] = \pi[j']$ for some $j' < j$, \ie, $\pi[j]$ is the first world visited twice on $\pi$. Obviously $j$ must exist.

Then the path $\pi' \dfn (\pi[0],\ldots,\pi[j'],\ldots,\pi[j-1],\pi[j'],\ldots)$ is ultimately periodic and of length at most $\size{W}$. Furthermore, if $\gamma = \EX \delta$, then $\pi'[1] \in L(\delta)$, since always $\pi'[1] = \pi[1]$, and if $\gamma = \EG \delta$, then $\Set{\pi'[i] | i \geq 0} \subseteq \Set{\pi[i] | i \geq 0 }\subseteq L(\delta)$.
\end{proof}

Next, we restrict the possible selection of witness paths even further to obtain a balloon-like structure.
Formally, every finite quasi-model $(W,R,L)$ of $\varphi$ has a \emph{choice function} $f \colon W \times cl(\varphi) \to \bigcup_{w \in W}\Pi(w)$ with respect to satisfaction of labeled $\E$-formulas. For $\E\gamma \in cl(\varphi)$ and $w \in L(\E\gamma)$, $f(w,\E\gamma)$ is defined as a path $\pi\in\Pi(w)$ witnessing $\gamma$.
We call such a choice function $f$ \emph{normal} if:
\begin{itemize}
    \item $f$ is injective,
    \item for any path $\pi$ in the image of $f$, $\pi$ is ultimately periodic, and worlds having two or more $R$-predecessors may only occur as the root of $\pi$ or in its cycle,
    \item for any two paths $\pi,\pi'$ in the image of $f$, $\pi_{\geq 1}$ and $\pi'_{\geq 1}$ are disjoint (but possibly $\pi[0]$ is an element of $\pi'$ or vice versa).
\end{itemize}

Intuitively, a normal choice function in a balloon quasi-model means that witness paths with origin $w$ always branch into a new balloon of shallower level and root $w$; moreover, for every $\E$-formula a distinct balloon is attached.

\smallskip

For easier argumentation, in what follows we assume \wloss that subformulas of $\varphi$ containing temporal operators can occur only once in $\varphi$. Formally,
\[
(\psi_1\notin \SF{\psi_2} \land \psi_2\notin \SF{\psi_1}) \Rightarrow cl(\psi_1) \cap cl(\psi_2) \subseteq \B_0\text{.}
\]
In other words, if two formulas have no common subformulas, then they also have no common elements in their closure except propositional formulas.\footnote{By, for instance, introducing enough copies $O', O'', \ldots$ of any temporal operator $O$.}

For the rest of the subsection, we introduce a new quasi-label condition that can be assumed without loss of generality.
\begin{enumerate}[label=(Q\arabic*),start=7]
        \item If $w \in L(\AF \psi) \setminus L(\psi)$, then \begin{itemize}
    \item $u \in L(\psi)$ or $u \in L(\AF\psi)$ for all successors $u$ of $w$, and
    \item $w \notin L(\psi')$ for all $\psi' \in cl(\psi) \setminus \B_0$.
    \end{itemize}
\end{enumerate}

The first condition is known as the \emph{fixpoint characterization} of $\AF$ and is used in the upcoming technical proof.
The second condition is a sort of "negative downwards closure": if $\AF\psi$ is not fulfilled in $w$, then certainly none of its subformulas are required in $w$. This can be assumed due to the uniqueness of subformulas explained above.

These conditions are crucial later to construct a model in finitely many steps.
Note that the second condition is exactly the failing point for the operators $\AG$, $\AU$ and $\A\RLS$, as they \emph{do} always necessitate labeling their subformulas in $w$, with or without being fulfilled.

\begin{lemma}[Balloon lemma]\label{lem:balloon-lemma}
If $T \subseteq \{ \AF, \AX \}$, $C$ is a base, and $\varphi \in \B(C, T)$, then $\varphi$ is satisfiable if and only if it has a balloon quasi-model of level $\bigO{\size{\varphi}}$ and length  $2^\bigO{\size{\varphi}}$ that has a normal choice function.
\end{lemma}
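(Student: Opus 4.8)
The plan is to prove the two directions separately, with all the weight on the forward direction. The \emph{if} direction is immediate: a balloon quasi-model is in particular a quasi-model, so by Proposition~\ref{prop:quasi-to-model}(2) together with condition (Q6) any world in $L(\varphi)$ yields a genuine model of $\varphi$, whence $\varphi$ is satisfiable; the balloon shape plays no role here. For the \emph{only if} direction I start from a finite quasi-model and massage it into balloon shape. By Theorem~\ref{thm:upper-size-all} and Proposition~\ref{prop:quasi-to-model}(1), a satisfiable $\varphi$ has a finite quasi-model $\calM = (W,R,L)$ with $\size{W} = 2^{\bigO{\size{\varphi}}}$; I assume (Q7) and the uniqueness-of-subformulas normalisation discussed above, and fix a world $r \in L(\varphi)$ as prospective root. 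The construction is recursive in a \emph{level} parameter $m$ and produces, for a world $w$ of $\calM$ and a level $m$, a finite balloon frame rooted at a \emph{fresh copy} of $w$ inheriting the labelling of $w$. Using fresh disjoint copies for every attached piece guarantees injectivity and suffix-disjointness, so the resulting choice function will be normal essentially for free.

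For the spine, which handles the $\AF$ and $\AX$ obligations, I first build the balloon path $P$ as an ultimately periodic $R$-path of $\calM$ starting at $w$ along which every eventuality $\AF\psi$ that ever becomes labelled is actually fulfilled. Such a path exists by the standard eventuality-scheduling argument: by the fixpoint half of (Q7), an unfulfilled $\AF\psi$ at a world is inherited by at least one successor until some successor satisfies $\psi$, and since $\calM$ is a quasi-model each such eventuality is fulfilled on every continuation. Concatenating one fulfilling segment per pending $\AF$-formula (there are $\leq \size{\varphi}$ of them, each realisable within $\size{W}$ steps) and then closing into a cycle through which all still-labelled eventualities pass yields an ultimately periodic spine of length $2^{\bigO{\size{\varphi}}}$. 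Because every world of $P$ is an $\calM$-world and every successor I will ever attach is an $R$-successor in $\calM$, all $\AX$- and $\A$-conditions from (Q5) are respected by $P$ automatically.

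For the branches, which handle $\EX$ and $\EG$, I proceed as follows: for each spine world $v$ and each $\E\gamma \in L(v)$ with $\gamma \in \{\X\delta, \G\delta\}$, Lemma~\ref{lem:ulti-path} supplies an ultimately periodic witnessing path of length $\leq \size{W}$; I recursively build the balloon frame of level $m-1$ along that path and attach it at $v$ by identifying its root with $v$. This is exactly a witness selection for the choice function, and the disjoint fresh copies make the image paths pairwise suffix-disjoint with worlds of two or more predecessors occurring only as roots or inside cycles, which is precisely what normality demands.

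The main obstacle is the level bound $m = \bigO{\size{\varphi}}$, equivalently that any path in the final structure descends into a sub-balloon only $\bigO{\size{\varphi}}$ times before being trapped in a single cycle. I would bound the length of a descent chain by the number of distinct temporal $\E$-subformulas of $\varphi$, which is $\bigO{\size{\varphi}}$. The key enabling point is that descending for $\E\gamma$ \emph{permanently} discharges it: for $\E\G\delta$ the attached spine satisfies $\G\delta$, so $\E\G\delta$ is fulfilled at every world of that sub-balloon and never triggers a further descent below it, while for $\E\X\delta$ the negative-downward-closure half of (Q7) guarantees that, once the witnessing successor is installed, no subformula of $\delta$ is spuriously required deeper. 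Combined with the uniqueness normalisation, which makes each temporal $\E$-formula occur exactly once, this shows that no $\E$-formula is ever witnessed twice along a single descent chain, capping the level at $\bigO{\size{\varphi}}$. Finally I would verify (Q1)--(Q6) for the assembled structure: the local conditions (Q1)--(Q4) hold because every world keeps its $\calM$-labelling; (Q5) holds because the spine discharges all $\AF$ and $\AX$ obligations and every attached branch witnesses its $\E$-formula; and (Q6) holds at the root $r$. I expect the delicate bookkeeping to lie in simultaneously maintaining $\AF$-fulfilment along the spine and $\E$-fulfilment in the branches without the two interfering, which is exactly what (Q7) is designed to control.
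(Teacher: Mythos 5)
Your overall architecture --- unravel a finite exponential-size quasi-model into recursively attached fresh copies of ultimately periodic witness paths, with freshness giving normality --- matches the paper's, and the "if" direction, the length bound, and the normality argument are fine. The genuine gap is the level bound. Your invariant, "no $\E$-formula is ever witnessed twice along a single descent chain", is neither established by the reasons you give nor true for the construction as you describe it. Since you attach \emph{unmodified} copies of $\calM$-paths, every appended world carries the full quasi-labeling of the $\calM$-world it copies; a witness path for $\EX q$ can perfectly well end in a copy of a world that again has $\EX q$ labeled (and an $\AX\alpha$ with $\E\gamma \in cl(\alpha)$ forces $\E\gamma$ back into every successor), so the same $\E$-formula does trigger repeated descents, and nothing caps the recursion at $\bigO{\size{\varphi}}$ --- one can reach level $0$ with $\E$-formulas still labeled and no witnesses, violating (Q5). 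Your appeal to (Q7) does not help here: its negative-downward-closure half speaks only about subformulas of \emph{unfulfilled $\AF$-formulas}, not about $\EX$- or $\EG$-witnesses, and "$\EG\delta$ is fulfilled on its own spine" says nothing about $\EG\delta$ reappearing inside a sub-balloon attached to that spine for a \emph{different} $\E$-formula.

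The missing ingredient is the paper's label-pruning step: after appending the witness path $\pi'$ for $\E\gamma$ at $w$ (with $\gamma = \X\delta$ or $\G\delta$), the labels on $\pi'_{\geq 1}$ are restricted so that $cl(\pi'[j]) \subseteq cl(\delta) \cup \bigcup\Set{cl(\alpha) | w \in L'(\AX\alpha)} \cup \bigcup\Set{cl(\AF\alpha) | w \in L'(\AF\alpha)\setminus L'(\alpha)}$. With this restriction one proves the stronger (and correct) invariant that the set $cl(u) = \bigcup\Set{cl(\psi) | u \in L'(\psi)}$ \emph{strictly shrinks} at each level descent --- via a maximal-element argument over the closure ordering, in which (Q7) and the subformula-uniqueness normalization rule out the three ways $\E\gamma$ could survive into $cl(u)$. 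That yields level at most $\size{cl(\varphi)} = \bigO{\size{\varphi}}$ and simultaneously makes all quasi-label conditions vacuous at level $0$. Note that even with pruning the same $\E$-formula \emph{can} be witnessed at several consecutive levels (e.g.\ when it sits inside nested $\AX$-obligations), so "each $\E$-formula at most once per chain" is not the right invariant; it is the cardinality of the relevant closure that decreases. Without some such pruning-plus-strict-decrease argument your proof does not go through.
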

\begin{proof}
Let $\varphi$ be satisfiable. From any balloon quasi-model we can obtain a model of $\varphi$ by Proposition~\ref{prop:quasi-to-model}.
Conversely, by Theorem~\ref{thm:upper-size-all} and Proposition~\ref{prop:quasi-to-model}, $\varphi$ has a quasi-model $\calM = (W, R, L)$ of size $2^\bigO{\size{\varphi}}$.

We construct a balloon quasi-model $\calT = (W', R', L')$ in stages as follows.
Select a world $r \in L(\varphi)$ as the root of $\calT$, and \wloss assume that at least one $\E$-formula is labeled in $r$. Let $r \in W'$. We will subsequently add more worlds to $W'$, connected by $R'$-edges, and define their quasi-label given by $L'$ accordingly, such that $\calT$ eventually is a balloon quasi-model of $\varphi$.

In the construction, define $cl(w) \dfn \bigcup\Set{ cl(\psi) | \psi \in cl(\varphi), w \in L'(\psi) }$, the union of the closures of all formulas labeled in $w \in W'$.
To keep track of the balloons of level $m-1$ emanating from worlds on a balloon of level $m$, we further introduce a \emph{level function} $\ell \colon W' \to \N$. Set $\ell(r) \dfn \size{cl(\varphi)}$. Moreover, we will define the required normal choice function $f$ during the construction.

\smallskip

Now, for all formulas $\E\gamma \in cl(\varphi)$ and all worlds $w \in L'(\E\gamma)$ with $\ell(w) > 0$, do the following. If $f(w,\E\gamma)$ is not yet defined, then select a path $\pi$ from $\calM$ with origin $w$ according to Lemma~\ref{lem:ulti-path} to satisfy $\E\gamma$. $\pi$ is ultimately periodic with length $2^\bigO{\size{\varphi}}$. Append a copy $\pi'$ of this path to $w$. If the appended path happens to have an empty prefix, \ie, $\pi = (v_1, \ldots, v_m, v_1, \ldots, v_m, \ldots)$, then use as prefix a copy $(v_1',\ldots,v_m')$ of the cycle, to ensure that the appended worlds form a balloon path. Set $\ell(u) \dfn \ell(w) - 1$ for each such appended world $u$, and define the choice function as $f(w,\E\gamma)\dfn \pi'$.
Afterwards, assuming $\gamma = \X\delta$ or $\gamma = \G\delta$, leave only formulas $\psi$ labeled in $\pi'_{\geq 1}$ such that, for all $j \geq 1$,
\begin{align*}
cl(\pi'[j]) \subseteq cl(\delta) &\cup \bigcup\Set{ cl(\alpha) | w\in L'(\AX\alpha)}\\
&\cup \bigcup\Set{ cl(\AF\alpha) | w \in L'(\AF\alpha) \setminus L'(\alpha) }\text{.}
\end{align*}

It is straightforward to check that this does not violate any quasi-label condition.
This construction terminates and leaves a balloon quasi-model $\calT$ of level $\size{cl(\varphi)}$ and length at most $2^\bigO{\size{\varphi}}$, and having a normal choice function $f$. To prove that $\calT$ is indeed a quasi-model of $\varphi$, we have to show that all quasi-label conditions regarding $\E$-formulas are fulfilled in each world $w$.\footnote{The local conditions and conditions regarding $\A$-formulas have already been fulfilled in $\calM$.} This is clear for worlds of level $\ell(w) > 0$ by the construction of appended paths. So it remains to prove that all $\E$-formulas labeled in worlds $w$ with $\ell(w) = 0$ are satisfied; in fact we show that $\size{cl(w)} \leq \ell(w)$ for all $w$, so for $\ell(w) = 0$ all quasi-label conditions are vacuously satisfied in $w$.

\smallskip

The proof of $\size{cl(w)} \leq \ell(w)$ is by induction on the distance of $w$ from the root $r$. By definition, $\size{cl(r)} \leq \size{cl(\varphi)} = \ell(r)$. Every world $u \neq r$ is of the form $u = \pi[j]$, $j\geq 1$, for a witness path $\pi = f(w,\E\gamma)$. Also, $\ell(w) = \ell(u) + 1$. By construction, $cl(u) \subseteq cl(w)$. We show $cl(u) \neq cl(w)$, so consequently $\size{cl(u)} \leq (\size{cl(w)} - 1) \leq (\ell(w) - 1) = \ell(u)$.

\smallskip

For the sake of contradiction, suppose $cl(u) = cl(w)$.
Let $<$ be a partial ordering on formulas such that $\psi_1 < \psi_2$ iff $cl(\psi_1) \subsetneq cl(\psi_2)$.
Since $cl(u) = cl(w)$,  $\E\gamma \in cl(u)$ for $\gamma = \X\delta$ or $\gamma = \G\delta$. Let $\gamma' \in cl(u)$ be $<$-maximal such that $\E\gamma \in cl(\gamma')$. By the construction of $\pi$, either
\begin{enumerate}
    \item $\gamma' \in cl(\delta)$,
    \item $\gamma' \in cl(\alpha)$ for some $\AX\alpha$ with $w \in L'(\AX\alpha)$,
    \item $\gamma' \in cl(\AF\alpha)$ for some $\AF\alpha$ with $w \in L'(\AF\alpha)\setminus L'(\alpha)$, or
\end{enumerate}
(1) is impossible as $\delta$ is already a proper subformula of $\E\gamma$. In case (2), $\AX \alpha \in cl(w)$ and consequently $\AX\alpha \in cl(u)$. But $\gamma'$ is a proper subformula of $\AX\alpha$, contradicting the $<$-maximality of $\gamma'$ in $cl(u)$. In the final case (3), $w \notin L'(\alpha)$. By the quasi-label condition (Q7), $w \notin L'(\E\gamma)$ despite $f(w,\E\gamma)$ being defined, contradiction.
\end{proof}

\newcommand{\finit}{\ensuremath\calF_\text{root}}
\newcommand{\xinit}{\ensuremath\calX_\text{root}}

\begin{algorithm}\label{alg:alg1}
\caption{NPSPACE algorithm for $\SAT(\B(C, \{ \AF, \AX \}))$}
\DontPrintSemicolon
\SetKwInOut{Input}{Input}\SetKwInOut{Output}{Output}

\Input{$\varphi \in \B(C, \{ \AF, \AX \})$}
\Output{Is $\varphi$ satisfiable?}

\vspace{10pt}

\tcc{$\calG$: $\EG$ formula to satisfy}
\tcc{$\finit$: unfulfilled $\F$-formulas (eventualities) at the root of the balloon}
\tcc{$\xinit$: unfulfilled $\X$ formulas at the root of the balloon}
\tcc{$\ell$: remaining depth counter}
\Procedure{guesspath$(\calG, \finit, \xinit, d)$}{
\lIf{$\ell = 0$}{\Return{\textbf{false}}}
guess $t \in \{ 1, \ldots, 2^{m\cdot \size{\varphi}} \}$ \tcc*{world
where the cycle is closed}
$\calX \dfn \xinit$; $\calF \dfn \finit$; $\calF^* \dfn \emptyset$; $L^* \dfn \emptyset$\\
\For{$i \dfn 0$ to $2^{m\cdot \size{\varphi}}$}
{
guess a set $L \subseteq cl(\varphi)$\\
\lIf{$L$ violates a local quasi-label
condition}{\Return{\textbf{false}}}
\lIf{$\calX \not\subseteq L$}{\Return{\textbf{false}}}
\lIf{$\calG \not\subseteq L$}{\Return{\textbf{false}}}
$\calX \dfn \Set{ \psi | \AX \psi \in L}$\\
\ForEach{$\AF \psi \in L$}{add $\AF \psi$ to $\calF$}
\ForEach{$\psi \in L$}{remove $\AF \psi$ from $\calF$ and $\calF^*$}
\If{$i = t$}{
$L^* \dfn L$ \tcc*{must be equal when closing the cycle}
$\calF^* \dfn \calF$ \tcc*{must be fulfilled before closing the cycle}
}
\ForEach{$\EG \gamma \in L$}
{\lIf{\textbf{not} guesspath$(\{\gamma\}, \calF, \calX,
  \ell-1)$}{\Return{\textbf{false}}}}
  \ForEach{$\EX \xi \in L$}
{\lIf{\textbf{not} guesspath$(\emptyset, \calF, \calX \cup \{ \xi
\}, \ell-1)$}{\Return{\textbf{false}}}}
\lIf{$i\geq t$ and $\calF^* = \emptyset$ and $L =
L^*$}{\Return{\textbf{true}}} }
\Return{\textbf{false}} \tcc*{could not fulfill all eventualities} }
\Return{guesspath$(\emptyset, \emptyset, \{ \varphi \},
m \cdot \size{\varphi})$}
\end{algorithm}

Such a "balloon model" can be constructed in a top-down depth-first search manner to check the satisfiability of $\B(C,\{\AF, \AX\})$ formulas in non-deterministic polynomial space.
A single balloon path is determined by the index of the world where the "back edge" points to, \ie, where the cycle is closed, and then by consecutively guessing the labeled formulas in each world on this path. If an $\E$-formula occurs, then the algorithm recursively guesses witness branches.

This method works in polynomial space, since visited worlds of a branch, unless incident to the back edge, can be "forgotten" immediately.
The correctness of this approach relies on the existence of a normal choice function for witness paths: By injectivity and the disjointness of different witness paths, the algorithm can branch into new sub-balloons independently for each $\E$-formula. Also, the witness paths are not allowed to visit worlds with more than one predecessor, unless it is their root, or the target of their back edge. This allows to track all quasi-label conditions that can affect the worlds on the path, since at any time the algorithm knows the quasi-labels of possible predecessors.

\begin{theorem}\label{thm:afax-in-pspace}
If $T \subseteq \{ \AF, \AX \}$ and $C$ is a base, then $\SAT(\B(C, T)) \in \PSPACE$.
\end{theorem}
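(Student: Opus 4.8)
The plan is to establish $\SAT(\B(C,T)) \in \NPSPACE$ and then appeal to Savitch's theorem, which gives $\NPSPACE = \PSPACE$. The witness for $\NPSPACE$ membership is Algorithm~\ref{alg:alg1}, so the proof splits into two tasks: showing that the procedure \textsf{guesspath} runs in polynomial space, and that it accepts its input $\varphi$ if and only if $\varphi$ is satisfiable. Both rest on the Balloon Lemma (Lemma~\ref{lem:balloon-lemma}): a formula $\varphi \in \B(C,\{\AF,\AX\})$ is satisfiable exactly when it has a balloon quasi-model of level $\bigO{\size{\varphi}}$ and length $2^{\bigO{\size{\varphi}}}$ carrying a normal choice function. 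The algorithm is nothing but a depth-first, on-the-fly guess of such a structure, which is why these two bounds reappear as the recursion-depth parameter $m\cdot\size{\varphi}$ and the loop bound $2^{m\cdot\size{\varphi}}$.

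For the space bound I would argue that a single activation of \textsf{guesspath} stores only a constant number of subsets of $cl(\varphi)$ (the sets $\calG$, $\finit$, $\xinit$, $L$, $L^*$, $\calF$, $\calF^*$, $\calX$), each of size $\bigO{\size{\varphi}}$, together with a constant number of counters ($t$, $i$, $\ell$) bounded by $2^{m\cdot\size{\varphi}}$ and hence representable in $\bigO{\size{\varphi}}$ bits. Because every recursive call decrements $\ell$ and the initial value is $m\cdot\size{\varphi}$, the recursion depth is $\bigO{\size{\varphi}}$. The branching inside the two \textbf{foreach} loops over $\EG$- and $\EX$-subformulas is handled sequentially, so the space of a completed recursive call is reclaimed before the next begins; thus the total space is (depth) $\times$ (frame size) $= \bigO{\size{\varphi}^2}$, which is polynomial. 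The structural fact that legitimizes discarding worlds is the normality of the choice function: off their roots and cycles, witness paths contain only worlds with a unique $R$-predecessor, so every quasi-label condition bearing on a world is determined by that single predecessor (for $\AX$, via $\calX$) together with the ambient path demands ($\calG$ for the surrounding $\EG$, and $\calF/\calF^*$ for pending $\AF$-eventualities).

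For correctness I would treat the two directions separately, both via the Balloon Lemma. For \emph{soundness} (acceptance implies satisfiability) I reconstruct a balloon quasi-model from an accepting run: the main loop of each activation supplies one balloon path, its prefix being the worlds $0,\dots,t$ and its cycle the worlds from $t$ to the closing world with $L = L^*$, while the recursive calls attach the sub-balloons witnessing the $\EG$- and $\EX$-formulas. The in-loop checks (local conditions, $\calX\subseteq L$, $\calG\subseteq L$) guarantee the quasi-label conditions, and Proposition~\ref{prop:quasi-to-model}(2) turns the resulting quasi-model into a genuine model of $\varphi$. For \emph{completeness} (satisfiability implies acceptance) I fix, by the Balloon Lemma, a balloon quasi-model with a normal choice function of the stated dimensions; the algorithm then has a run guessing exactly this structure, reading each world's label as the guessed $L$ and choosing $t$ at the world where each balloon's cycle closes. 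Normality (injectivity and pairwise disjointness of the witness suffixes) ensures that the recursion explores the sub-balloons independently, matching the \textbf{foreach} recursion precisely.

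The step I expect to be the main obstacle is verifying the eventuality bookkeeping: that the interplay of $\calF$, $\calF^*$ and the closing test ``$i\ge t$ and $\calF^*=\emptyset$ and $L=L^*$'' captures exactly the requirement that every $\AF\psi$ labeled along an ultimately periodic path is eventually discharged. One must show that an $\AF$-eventuality open when the cycle is entered is fulfilled somewhere within the cycle (necessity), and conversely that discharging all such eventualities before re-encountering the label $L^*$ suffices for the infinite unfolding to satisfy every $\AF$-formula (sufficiency). Here quasi-label condition (Q7) does the decisive work: its fixpoint clause lets the demand $\AF\psi$ be propagated to successors through $\calF$, while its ``negative downward closure'' clause prevents a world that fails $\AF\psi$ from spuriously requiring the subformulas of $\psi$, keeping the label sets finite and locally checkable. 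This is precisely the point, noted after (Q7), where $\AG$, $\AU$ and $\A\RLS$ would break the argument, and hence why the theorem is confined to $T\subseteq\{\AF,\AX\}$.
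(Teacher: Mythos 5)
Your proposal is correct and follows essentially the same route as the paper: it invokes $\NPSPACE = \PSPACE$, uses Algorithm~\ref{alg:alg1} as the witness, derives correctness from the Balloon Lemma (Lemma~\ref{lem:balloon-lemma}) in both directions, and obtains the quadratic space bound as linear recursion depth times linear per-activation storage. Your additional discussion of the $\calF/\calF^*$ eventuality bookkeeping and the role of (Q7) is elaboration the paper leaves implicit, not a departure from its argument.
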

\begin{proof}
As $\NPSPACE = \PSPACE$, we consider Algorithm~\ref{alg:alg1} which runs in non-deterministic polynomial space. The previous lemma shows the correctness; the algorithm guesses a balloon quasi-model with a normal choice function by traversing its balloon paths on-the-fly and recursively descending into deeper balloon levels as necessary.

There is an $m \in \N$ such that $\varphi$ is
satisfiable if and only if it has balloon quasi-model with balloon length
$2^{m\cdot \size{\varphi}}$ and level $m\cdot \size{\varphi}$.
A guessed back edge is represented by a pointer $t$ of linear length.
The required space to remember the constantly many sets of labeled formulas is again linear.

Finally the recursion depth is only linear as well, as the depth
of recursion corresponds to the level of the balloon path, so ultimately the
overall space requirement is quadratic.
\end{proof}

\subsection{Hard fragments}\label{sec:hard}

The common proof of the $\EXP$-hardness of the satisfiability problem of CTL is an adaptation of a similar result for PDL by Fischer and Ladner \cite{fischer_propositional_1979}. They use a generic reduction from $\APSPACE$, as $\APSPACE = \EXP$ \cite{alternation}.

$\APSPACE$ (alternating polynomial space) is the class of sets decided
by \emph{alternating polynomial space-bounded single-tape Turing
machines (pspace-ATMs)}. In the following, we show that such machines can be simulated with a wide range of CTL operators, namely $\AU$, $\A\RLS$, and also $\AG$ if combined with $\AX$ or $\AF$.

\medskip

An \emph{alternating Turing machine} is a tuple $M = (Q_\exists,
Q_\forall, \Sigma, \Gamma, \delta, q_0,
\Box, q_\text{acc}, q_\text{rej})$, where $Q_\exists$, $Q_\forall$
are disjoint sets of \emph{existentially} resp.\ \emph{universally branching}
states, $Q \dfn Q_\exists \cup Q_\forall$ is the set of all states,
$q_\text{acc}, q_\text{rej} \in Q$ are the \emph{accepting} resp.\
\emph{rejecting} state, $q_0 \in Q$ is the initial state,
$\Sigma$ and $\Gamma \supsetneq \Sigma$ are the input
and tape alphabet, $\Box \in \Gamma \setminus \Sigma$
is the \emph{blank symbol} and $\delta \colon Q \times \Gamma \to
\powerset{ Q \times \Gamma \times X}$ is the \emph{transition function},
where $X = \{ -1, 0, 1\}$ and $\delta(q,a)$ is a finite set for all $q
\in Q, a \in \Gamma$.

A \emph{configuration} is a tuple $(q, i, t)$, where $q \in Q$ is the
current state, $i \in \N$ is the current \emph{head position}
and $t \in \Gamma^*$ the current \emph{tape content}, \ie, $t$ is a finite word $(c_1,\ldots,c_k)$ consisting of symbols of $\Gamma$, and $i \in [k]$.
Write $\delta(q,i,t)$ for the set of all configurations resulting from
applying a transition of $\delta(q,t_i)$.
Provided that $q \neq q_{\text{rej}}$, a configuration \emph{accepts} if $q = q_{\text{acc}}$; or if $\delta(q,i,t)$ contains
at least one configuration that accepts and $q \in Q_\exists$; or if
it contains only accepting configurations and $q\in Q_\forall$.
$M$ accepts an input $x \in \Sigma^*$ if the initial
configuration $(q_0, 1, x)$ accepts. $M$ runs
in \emph{polynomial space} if there is a polynomial $g$ such that on each input $x$ the head position $i$ is always in $[g(\size{x})]$ (we can assume that $M$ does not leave the
input to the left of position 1).

\begin{theorem}\label{thm:expc}
$\SAT(\B_2(T))$ is $\EXP$-hard if $\AU
\in T$, $\A\RLS \in T$, $\{\AG, \AX \} \subseteq T$ or $\{ \AG, \AF \} \subseteq T$.
\end{theorem}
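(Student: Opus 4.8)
The plan is to give a generic reduction from the word problem of polynomial-space alternating Turing machines (pspace-ATMs), which is $\EXP$-complete because $\APSPACE = \EXP$~\cite{alternation}. Fix such a machine $M$ and an input $x$ with space bound $N = g(\size{x})$, and construct in logspace a formula $\varphi_{M,x}$ of temporal depth two and of length polynomial in $N$ that is satisfiable if and only if $M$ accepts $x$. Since each of the four listed fragments is a sublogic of full CTL, membership in $\EXP$ is inherited from Theorem~\ref{thm:all-c-in-exp}, so the reduction in fact establishes $\EXP$-completeness.

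The backbone is a faithful encoding of configurations and of the one-step relation. A configuration $(q,i,t)$ occupies only polynomial space, hence can be described by polynomially many fresh propositions: groups encoding the state $q$, the head position $i \in [N]$, and the $N$ cell contents $t_1,\ldots,t_N \in \Gamma$, with a depth-$0$ formula $\mathrm{conf}$ asserting well-formedness. The realization of the successor relation depends on which operators the fragment provides. Only the $\{\AG,\AX\}$ fragment contains a next operator (it offers $\{\AG,\EF,\AX,\EX\}$); there a configuration sits in a single world and the successor relation is a literal one-step comparison — cells away from the head are copied, $(\text{cell } j \text{ holds } a)\imp\AX(\text{cell } j \text{ holds } a)$, and the head region is rewritten per $\delta$ — so that placing the local constraints under the outer $\AG$ gives temporal depth exactly two. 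The other three fragments offer no next operator (the available duals are $\{\AU,\AF,\E\RLS,\EG\}$ for $\AU$, $\{\A\RLS,\EU,\AG,\EF\}$ for $\A\RLS$, and $\{\AG,\EF,\AF,\EG\}$ for $\{\AG,\AF\}$), so there I would lay configurations out along path segments in the spirit of the construction in Theorem~\ref{thm:qbf-to-afeg} and transport each cell value to the matching cell of the successor configuration by auxiliary propositions whose persistence is enforced by $\U$ resp.\ $\RLS$, again keeping the nesting at two.

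Alternation is realized by the path quantifiers, but the precise gadget is dictated by the available duals: at a universal configuration the outer universal operator forces every successor branch to carry a legal successor configuration, while the presence of each transition of $\delta(q,t_i)$ is forced by whatever existential reachability the fragment supplies, and at an existential configuration one legal successor suffices, which seriality already provides. Acceptance is where the real difficulty lies, because ATM-acceptance is a \emph{least} fixpoint and the formula must exclude spurious acceptance by non-terminating or cyclic runs. When a universal eventually operator is present this is immediate: for $\AU$ I write $\A[\mathrm{valid}\,\U\,q_{\mathrm{acc}}]$, whose left argument doubles as the running safety condition and whose until-semantics forces every branch to reach an accepting state; the $\{\AG,\AF\}$ case uses $\AG(\mathrm{valid})\sland\AF\,q_{\mathrm{acc}}$. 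For the fragments without a universal eventually, namely $\{\AG,\AX\}$ and $\A\RLS$, I install a binary down-counter on $\bigO{N}$ fresh propositions, decremented by depth-$1$ clauses at every step, and demand that an accepting state be reached before the counter underflows; since $M$ has at most $2^{\bigO{N}}$ configurations, every genuine accepting computation fits within the bound, while the strictly decreasing counter converts the greatest-fixpoint character of $\A\RLS$ into a bounded eventuality.

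Correctness splits into two implications. If $M$ accepts $x$, its finite accepting computation tree — with self-loops inserted at accepting leaves to restore seriality — is turned into a model of $\varphi_{M,x}$; conversely, from any model one extracts a well-founded accepting computation by following successors, the finiteness of each branch being guaranteed by the eventuality operator or the counter, and one then checks that $\varphi_{M,x}$ has length polynomial in $N$ and temporal depth two in every case. I expect the main obstacle to be exactly this acceptance-forcing step: expressing a well-founded, least-fixpoint acceptance condition with only universal temporal operators and within temporal depth two, particularly for $\{\AG,\AX\}$ and $\A\RLS$, where the counter must be shown to be depth-$1$ maintainable and to compose with the transition constraints without exceeding depth two. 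A closely related obstacle is realizing the one-step relation and the universal/existential branching in the three next-free fragments, where the scarce supply of dual operators — $\AU$, for instance, offers neither a next nor an existential until — forces the path-layout-with-memory encoding rather than a direct successor comparison.
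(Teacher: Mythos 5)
Your overall architecture coincides with the paper's: a generic reduction from a pspace-ATM (using $\APSPACE=\EXP$), configurations encoded by polynomially many propositions with a depth-$0$ well-formedness formula, local transition constraints placed under an outer $\AG$ (resp.\ $\A[\cdot\U\cdot]$ or $\A[\cdot\RLS\cdot]$), a direct $\AX$/$\EX$ one-step comparison for $\{\AG,\AX\}$, and path-segment gadgets with persistence constraints for the next-free fragments. Where you genuinely diverge is on the acceptance condition, which you correctly identify as the crux. The paper does \emph{not} force acceptance positively at all: it normalizes $M$ (as in Chandra--Kozen--Stockmeyer) so that \emph{every} computation path eventually reaches $q_{\text{acc}}$ or $q_{\text{rej}}$, and then $\varphi_\delta$ merely has the shape $s_{q_{\text{acc}}}\lor(\neg s_{q_{\text{rej}}}\land\ldots)$; well-foundedness is inherited from the machine, and the paper explicitly notes that this ``allows a correct reduction even without eventuality operators.'' You instead push the well-foundedness into the formula, via $\A[\mathrm{valid}\,\U\,q_{\mathrm{acc}}]$, $\AF q_{\mathrm{acc}}$, or a binary down-counter for $\{\AG,\AX\}$ and $\A\RLS$. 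Both routes are sound in principle --- your counter is essentially the machine-level clock made explicit --- but the paper's normalization buys a uniform treatment of all four cases and, in particular, avoids the two places where your version gets delicate: (i) for $\{\AG,\AF\}$ the successor gadget necessarily spawns infinite $\EG$-witness paths that keep the tape fixed forever, and a root-level $\AF q_{\mathrm{acc}}$ quantifies over those tails as well, so you must additionally route every witness path into an accepting sink; (ii) for $\A\RLS$ there is no next operator, so ``decrement at every step'' needs the even/odd parity machinery of the counter in Theorem~\ref{thm:model-size-au-ar}, adapted from a single enforced path to a branching computation tree.

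One point in your sketch would be unsound as literally stated: at existential configurations you claim that ``one legal successor suffices, which seriality already provides.'' Seriality provides \emph{a} successor, not a legal one; without an explicit disjunction $\bigvee_{(q',a',X)\in\delta(q,a)}\varphi^{(q',i,i+X,a')}_{\text{next}}$ (as in the paper's $\varphi_\delta$), a non-accepting existential configuration could be given a single junk successor --- e.g.\ a world labeled with an accepting configuration that is not a $\delta$-successor --- satisfying all constraints while $M$ rejects. You must either existentially force some legal successor via $\EX$/$\EG$/$\EU$, or universally constrain all successors to carry some legal successor configuration and let seriality supply one; either repair is routine, but the step cannot be omitted.
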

\begin{proof}
Let $A \in \EXP$. As $\EXP = \APSPACE$ \cite{alternation}, $A$ is decided by a pspace-bounded ATM $M = (Q_\exists,
Q_\forall, \Sigma, \Gamma, \delta, q_0,
\Box, q_\text{acc}, q_\text{rej})$. \Wloss $\delta(q,a)$ is always non-empty, and on all inputs every computation path eventually assumes the state $q_\text{acc}$ or
$q_\text{rej}$. That such an $M$ can be chosen is proved similar to \citeref{alternation}{Thm. 2.6}.

We reduce $A$ to $\SAT(\B_2(T))$ via $M$.

\medskip

\textbf{Case 1:  $\AG,\AX$}

\smallskip

The following CTL formula $\varphi \in \B_2(\{\AG, \AX\})$ is
satisfiable if and only if $M$ accepts $x$.
$\varphi$ is constructible in space that is logarithmic in $\size{x}$. Let
$I \dfn [g(\size{x})]$.
\begin{align*}
\varphi \dfn \; &\varphi_\text{init} \sland \AG \varphi_\text{conf}
\sland \AG \varphi_\delta \\
\phiinit \dfn \; &s_{q_0} \sland p_1 \sland \bigand_{\mathclap{1 \leq i \leq
\size{x}}} t_{i,x_i} \sland \bigand_{\mathclap{\substack{i \in I \\ i >
\size{x}}}} t_{i,\Box}\\
\phiconf \dfn \; &\bigor_{q\in Q} \left(s_q \sland
\bigand_{\mathclap{\quad \; q' \in Q\setminus \{q\}}} \neg s_{q'}\right)
\sland
\bigor_{i \in I} \left(p_i \sland \bigand_{\mathclap{\quad \; j \in
I\setminus\{i\}}} \neg p_{j} \right) \; \sland
 \bigand_{i \in I} \bigor_{a \in \Gamma} \left(t_{i,a}
\sland \bigand_{\mathclap{\quad \; a' \in \Gamma \setminus \{ a\}}} \neg t_{i,a'} \right)\\
\varphi_\delta \dfn \; & s_{q_\text{acc}} \slor \Bigg(
\neg s_{q_\text{rej}} \sland
\bigand_{\mathclap{\substack{q \in Q_\exists\\ a \in \Gamma\\ i \in I}}}
\bigg((s_q \sland p_i \sland t_{i,a})
  \rightarrow \bigor_{\mathclap{\substack{(q',a',X)\\ \in
\delta(q,a)}}}
\phinexti{q'}{i}{i+X}{a'}\bigg) \\
& \phantom{s_{q_\text{acc}} \slor \neg s_{q_\text{rej}}} \land \;\bigand_{\mathclap{\substack{q \in Q_\forall\\ a\in \Gamma\\ i \in I}}}
\bigg((s_q \sland p_i \sland t_{i,a}) \rightarrow
\bigand_{\mathclap{\substack{(q',a',X)\\ \in \delta(q,a)}}}
\phinexti{q'}{i}{i+X}{a'}\bigg)\Bigg)\\
\phinexti{q'}{i}{i'}{a'} \dfn \; &\EX (s_{q'} \sland p_{i'}
\sland t_{i,a'}) \; \sland \; \bigand_{\mathclap{\substack{j \in I
\\
j \neq i \\ a \in \Gamma}}}
\big((t_{j,a} \rightarrow \AX t_{j,a}) \sland (\neg t_{j,a}
\rightarrow \AX \neg t_{j,a})\big)
\end{align*}

$\phiinit$ fixes the root of models of $\varphi$ to
simulate the initial configuration of $M$ on $x$.
$\AG\phiconf$ forces every reachable world to assume exactly one
configuration of $M$. $\AG\varphi_\delta$ requires the existence of
successor configurations resulting from $\delta$-transitions (and is
falsified if $q$ is the rejecting state), and finally
$\phinexti{q'}{i}{i'}{a'}$ fixes all tape symbols at positions where the
head currently does not write.
Now it holds that $\varphi$ is satisfiable if and only if the initial
configuration of $M$ is accepting.
At this point it is crucial that all computation paths of $M$
eventually accept or reject. This allows a correct reduction even without "eventuality" operators.

\medskip

\textbf{Case 2:  $\AG,\AF$}

\smallskip

Without $\AX$, it is harder to express that the worlds quantified inside $\phinexti{q'}{i}{i'}{a'}$ coincide, \ie, that the world representing the successor configuration is exactly the world where all non-overwritten symbols stay the same.
Obviously it will not work to just replace $\AX,\EX$ with $\AF,\EF$.
As a solution, we do not quantify successors, but whole infinite paths. Each such path then assumes a
single reachable configuration and must eventually continue the
computation. Change several formulas as follows.

\begin{align*}
\phinexti{q'}{i}{i'}{a'} \dfn \; &
\EG\Big[\bigand_{\mathclap{\substack{j \in I\\ j \neq i
\\ a \in \Gamma}}} \underline{\phikeep{j}{a}}\, \sland \Big(\underline{(q',i',a')}
\rightarrow (s_{q'} \sland p_{i'} \sland t_{i,a'})\Big)\Big] \land
\AF \underline{(q',i',a')}\\
\phikeep{j}{a} \dfn \; &
\big(\AF(b \sland t_{j,a}) \rightarrow (t_{j,a} \sland
\AF(\neg b \sland t_{j,a}))\big) \; \sland\\
&\big(\AF(\neg b \sland t_{j,a}) \rightarrow (t_{j,a} \sland \AF(b \sland
t_{j,a}))\big)\\
\varphi \dfn \; &\varphi_\text{init} \sland \AG \varphi_\text{conf}
\sland \AG \varphi_\delta \sland \AG
\bigand_{\mathclap{\substack{j \in I\\ a \in \Gamma}}}
\;\left(\,\underline{\phikeep{j}{a}} \leftrightarrow \phikeep{j}{a}\right)
\end{align*}

Here, the underlined expressions are atomic propositions. The
formula $\phikeep{j}{a}$ does not directly occur to retain a
low temporal depth.

We proceed by showing that $\pi \vDash t_{j,a} \rightarrow \G t_{j,a}$ for any path $\pi$  that fulfills $\phikeep{j}{a}$.
The following proof works by induction on the length $\ell$ of a prefix of $\pi$. The case $\ell = 1$ is clear.
Let $\ell > 1$. $\pi[\ell]$ satisfies either $b$ or $\neg b$, assume \wloss{} $\pi[\ell] \vDash b \land t_{j,a}$.
Then $\pi[\ell] \vDash b \land \AF(\neg b \sland t_{j,a})$. By definition of $\AF$, the
world $\pi[\ell + 1]$ must fulfill $\AF[\neg b \sland t_{j,a}]$ and consequently $t_{j,a}$ due to $\phikeep{j}{a}$.

The modified $\phinexti{q'}{i}{i'}{a'}$ eventually enforces a
reachable world $w$ to assume a successor configuration. All tape
symbols at position $j \neq i$ remain unchanged. Then the computation
continues from $w$ on fresh paths starting at $w$ (where then the
tape symbol at the new positions $i'$ can change and all others are fixed).

\medskip

\textbf{Case 3:  $\AU$}

\smallskip

We further modify the approach in the previous case.
To replace $\AG$, we use the fact that the computation tree has only to
be verified to be legal until a point where $q_\text{acc}$ or $q_\text{rej}$ is reached.
We introduce a new proposition $h$
(\emph{halted}) and replace every $\AG\psi$ by $\A[\psi \U h]$. Replace $\AF\underline{(q',i',a')}$ by
$\A[\neg h \U \neg h \sland \underline{(q',i',a')}]$, every other
$\AF\psi$ by $\A[\top \U \psi]$, and $\EG\psi$ by $\neg \A[\top \U \neg \psi]$.
This ensures that $\neg h$ holds as
long as the computation is continued, but also allows that the paths
not usable for further computation (as they fixed all tape symbols but one)
can label $h$ after $\underline{(q',i',a')}$.

\medskip

\textbf{Case 4:  $\A\RLS$}

\smallskip

As $\AG \psi \equiv \A[\bot\RLS\psi]$, we extend the $\AG,\AX$
case and only modify $\phinexti{q'}{i}{i'}{a'}$:
\begin{align*}
\phinexti{q'}{i}{i'}{a'} \dfn \; & \bigand_{\substack{q \in Q\\ a \in \Gamma}}
(s_q \sland p_i \sland t_{i,a}) \rightarrow
 \E[(s_q \sland p_i \sland t_{i,a}) \U (s_{q'} \sland p_{i'} \sland
 t_{i,a'})]\\
& \qquad\sland \; \bigand_{\mathclap{\substack{j \in I\\ i \neq
j \\ c \in \Gamma}}} \left(t_{j,c}
\rightarrow
\A[\neg(s_q \sland p_i \sland t_{i,a}) \RLS
t_{j,c}]\right)
\end{align*}
The formula $\phinexti{q'}{i}{i'}{a'}$ requires a reachable world where
eventually $s_{q'} \sland p_{i'} \sland t_{i,a'}$ holds.
The $\A\RLS$ subformulas state for all $j \neq i$ that
$\neg(s_q \sland p_i \sland t_{i,a})$ \emph{releases} $t_{j,c}$, \ie, the
earliest world where $t_{j,c}$ no longer has to hold is exactly the
world \emph{after} the one where the $\EU$ is fulfilled (\wloss one of $q$, $i$ or
$t_i$ changes in the transition). This again fixes the tape symbols that are not changed in the transition.
\end{proof}

For a CTL formula to simulate the computation of a polynomially space bounded machine, it is necessary that it can enforce exponentially long paths. This lower bound will be shown for the four fragments from the previous theorem.
The cases where $T$ contains $\AU$ or $\AF$ follow from Corollary~\ref{cor:model-size-af}, as $\AF \psi \equiv \A[\top \U \psi]$. It remains to consider $\A\RLS$ and $\{\AG,\AX\}$.

The fragment $\B(\{\AG, \AX\})$ is almost similar to the modal logic $\mathsf{KD}$ enriched with the \emph{universal modality} $\boxast$. The main difference is that $\boxast \varphi$ usually means that $\varphi$ holds in \emph{all} worlds of a model, but $\AG$ only refers to reachable worlds.
Nevertheless, the modal logic $\KD + \boxast$ can enforce a model of depth $2^n$ with a formula of size
$\bigO{n^2}$ via the construction of a binary counter \cite{FMT_2007}, using $\boxast$ only in the root. This approach is again translated to also work with $\AU,\A\RLS$ and $\{\AG,\AF\}$.

\begin{theorem}\label{thm:model-size-au-ar}
If $\AU \in T$, $\A\RLS \in T$, $\{\AG, \AX\} \subseteq T$ or $\{\AG,\AF\} \subseteq T$, then
$\B_2(T)$ has extent lower bound $2^{\bigOmega{n}}$.
\end{theorem}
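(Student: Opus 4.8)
The plan is to split along the operator cases, disposing of two of the four families immediately. The cases where $\AU\in T$ or $\AF\in T$ (the latter covering $\{\AG,\AF\}\subseteq T$) follow at once from Corollary~\ref{cor:model-size-af}, which supplies a family of satisfiable $\B_2(\AF)$ formulas of length $\bigO{n}$ whose every model has extent at least $2^n$; since $\AF\psi\equiv\A[\top\U\psi]$, each such formula also lies in $\B_2(T)$ whenever $\AU\in T$, and it lies there verbatim when $\AF\in T$. Hence only the two fragments that can express neither $\AF$ nor $\EG$ remain, namely $\{\AG,\AX\}$ and $\A\RLS$. It is worth noting that $\AG$ alone will not do: by Theorem~\ref{thm:model-size-ag} and Corollary~\ref{cor:ag-model-lower}, $\B(\AG)$ forces only linear extent (while it does blow up the \emph{size}), so in both remaining cases the extra operator must be used to pin down a single exponentially long path.

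For $\{\AG,\AX\}$ I would enforce an $n$-bit binary counter whose value increases by one along every edge. Using bit propositions $p_1,\ldots,p_n$ together with auxiliary carry propositions $c_1,\ldots,c_n$ (the carry into bit $j$), I set $c_1\dfn\top$ and assert within every world $c_{j+1}\leftrightarrow(c_j\sland p_j)$; the step rule is, for each $j$, the depth-one constraint that $c_j$ flips the bit in every successor while $\neg c_j$ keeps it, e.g.\ $c_j\sland p_j\imp\AX\neg p_j$ together with its three symmetric implications. The whole formula is
\[
\varphi_n\dfn\Big(\bigand_{j=1}^n\neg p_j\Big)\sland\AG\Big(\bigand_{j=1}^n\big(\text{carry}_j\sland\text{step}_j\big)\Big)\text{,}
\]
which has temporal depth two and, crucially, length $\bigO{n}$: the carry propositions avoid the quadratic cost of spelling out $p_1\sland\cdots\sland p_{j-1}$ for every bit, so that the exponential extent will be $2^{\bigOmega{n}}$ in the length of $\varphi_n$ rather than merely $2^{\bigOmega{\sqrt{n}}}$. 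In any model the root carries value $0$, and since $\AG$ propagates $\text{carry}_j\sland\text{step}_j$ to every reachable world, each successor of a value-$v$ world is forced to carry value $v+1$; thus every path from the root visits the $2^n$ pairwise distinct labelings $0,1,\ldots,2^n-1$ before it can repeat, giving extent at least $2^n-1=2^{\bigOmega{n}}$.

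For $\A\RLS$ I would translate the same counter, reusing the operator replacements from the proof of Theorem~\ref{thm:expc}. This fragment does contain $\AG$ (as $\A[\bot\RLS\psi]$) and its dual $\EF$ (as $\E[\top\U\psi]$), but \emph{not} $\AX$; the role of $\AX$ in fixing the immediate successor is taken over exactly as in Case~4 of Theorem~\ref{thm:expc}. Concretely I introduce a marker proposition whose change delimits consecutive counter values, express "bit $p_j$ is kept until the next value" by a universal-release formula $\A[\cdots\RLS p_j]$ (resp.\ its negation for the flipped bit), and force progression to the all-ones value by an $\E[\cdots\U\cdots]$ formula. To keep the temporal depth at two despite these release/until gadgets sitting under the outer $\AG$, I apply the underlining technique of Case~2: each gadget $\rho$ is abbreviated by a fresh proposition $\underline\rho$, its defining equivalence $\AG(\underline\rho\leftrightarrow\rho)$ is added as a separate depth-two conjunct, and only $\underline\rho$ occurs inside the main $\AG$. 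The construction again has length $\bigO{n}$ and depth two.

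The main obstacle is the $\A\RLS$ case. Without a genuine next-operator, the release/until gadgets only assert that the non-flipped bits remain fixed \emph{until the next marked world} and that the incremented value is \emph{eventually} reached, so I must verify carefully that these constraints still force the counter to increase by exactly one at each marked step, and that the individual $\E\U$-witnesses chain into a single path realising all $2^n$ values rather than splitting across incompatible branches. Once this is checked, every model has extent at least $2^n=2^{\bigOmega{n}}$, which together with the $\{\AG,\AX\}$ counter and the two cases handled by Corollary~\ref{cor:model-size-af} establishes the claim for all four operator families.
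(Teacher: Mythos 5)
Your decomposition is the paper's own: the $\AU$ and $\{\AG,\AF\}$ cases are discharged by Corollary~\ref{cor:model-size-af} via $\AF\psi\equiv\A[\top\U\psi]$ exactly as in the text, and your $\{\AG,\AX\}$ counter is, up to bookkeeping (the paper uses $\text{carry}_{\leq i}$, $\text{reset}_{\leq i}$, $\text{store}_{\geq i}$ where you use a per-bit carry $c_j$ with four implications per bit), the same linear-length binary-counter construction the paper adapts from Gr\"adel et al.; those three cases are correct.

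The gap is in the $\A\RLS$ case, and it is exactly the point you flag but do not resolve. A \emph{fresh} marker proposition whose change is supposed to delimit consecutive counter values cannot simply be postulated: without a next-operator nothing forces such a marker to change exactly once per increment, so the release formulas $\A[\cdots\RLS p_j]$ would be anchored to an event the formula does not control, and the individual $\E\U$-witnesses need not lie on a common path. The paper's resolution is to use no fresh marker at all but the least significant bit $p_0$ itself as the delimiter, split by parity: the increment forces $p_0$ to flip at every step, so \emph{until $p_0$ changes} is a well-defined event. Concretely, $\beta$ becomes a case distinction of the form $\big(p_0 \imp (p_i \imp \A[\neg p_0 \RLS p_i])\big)\land\cdots$ preserving the higher bits until the first change of $p_0$, while the $\AX$-gadgets in $\gamma$ are replaced by $\E\big[p_0 \U (p_i \land \text{reset}_{\leq i-1})\big]$ resp.\ $\E[\neg p_0 \U p_0]$; since these $\E\U$-formulas hold $p_0$ (resp.\ $\neg p_0$) constant up to their fulfilment point, the retained bits survive exactly until the incremented value is realized, and the witnesses concatenate into a single path visiting all $2^n$ values. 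Note also that the paper's $\beta$ and $\gamma$ already have temporal depth one under a single outer $\AG$, so the underlining detour you import from Case~2 of Theorem~\ref{thm:expc} is not needed here.
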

\begin{proof}
We simulate the approach of Grädel et al.\ \cite{FMT_2007}, using $\AG$ and $\AX$, and further optimize it with a few extra propositions to obtain a formula that does the same but has only linear length. The formula is defined as follows.

\begin{align*}
\alpha \dfn\;& \left(p_0 \leftrightarrow \text{carry}_{\leq 0}\right)
\sland \bigand_{i = 1}^{n} \big(p_i \sland \text{carry}_{\leq i-1}
\leftrightarrow \text{carry}_{\leq i} \big) \sland\\
& \big(\text{reset}_{\leq 0} \rightarrow \neg p_0\big) \sland
\bigand_{i = 1}^{n} \left(\text{reset}_{\leq i} \rightarrow \neg
p_i \sland \text{reset}_{\leq i - 1}\right) \sland\\
& \phantom{\big(\text{reset}_{\leq 0} \rightarrow \neg p_0\big) \sland \; }
\bigand_{i = 1}^{n} \left(\text{store}_{\geq i-1} \rightarrow
\text{store}_{i-1} \sland \text{store}_{\geq i}\right)\text{,}\\
\beta \dfn \;& \bigand_{i = 1}^{n}\left(\text{store}_{i}
\rightarrow (p_i \rightarrow \AX p_i) \sland (\neg p_i \rightarrow
\AX \neg p_i)\right)\\
\gamma \dfn \;& \bigand_{i = 1}^{n}\left((\text{carry}_{\leq i - 1}
\sland \neg p_i) \rightarrow \AX\big( p_i \land \text{reset}_{\leq
i - 1}\big) \sland \text{store}_{\geq i+1} \right)\\
& \qquad \sland (\neg p_0 \rightarrow \AX p_0 \sland \text{store}_{\geq 1})
\\
\varphi \dfn \;&\AG(\alpha \sland \beta \sland \gamma) \sland
\bigand_{i = 0}^{n} \neg p_i
\end{align*}

The idea is the same as in \cite{FMT_2007}: The propositions $p_i$
form a binary counter of length $n$ that assumes the values $0 \ldots
2^n-1$ in this order. The value 0 is assumed in the root of the model.
If the propositions in a world $w$ form the counter value $k$, they
are forced to form $k+1$ in every successor world of $w$. This is expressed in the subformula $\gamma$: Search for the least significant bit with
value 0 that has only 1s to the right.
Force it to flip in the next
world, but also flip all the bits to the right to 0. The higher
significant bits may not change between $w$ and its successor, which is ensured by
$\beta$ and $\gamma$.

The use of the formula $\alpha$ improves the formula length from
$\bigO{n^2}$ to $\bigO{n}$. The new propositions work as follows: carry$_{\leq i}$ is true if and only if all bits at position $\leq i$ were set to one and the incrementation causes a carry bit at position greater than $i$. It depends only on $p_i$ and carry$_{\leq i-1}$, which avoids repeated inner conjunctions like $\bigwedge_{j=0}^{j} p_j$ to determine whether there is a carry at position $i$.
Similarly, to set all positions $\leq i$ back to zero, reset$_{\leq i}$ is used to avoid $\bigwedge_{j=0}^{i}\neg p_j$; and to keep all positions $\geq i$ unchanged, store$_{\geq i}$ avoids the formula $\bigwedge_{j=0}^i p_j \imp \AX p_j$.

When using $\A\RLS$, we can define $\AG$ and $\EU$ but not $\AX$,
so more work is required. In particular, we have to distinguish two
cases: Whether the counter value changes from even to odd, \ie, the
only changing bit is $p_0$ and it changes from zero to one, or it
changes from odd to even, \ie, $p_0$ flips from one to zero.

In $\gamma$, replace $\AX\big( p_i \sland \text{reset}_{\leq i -
1}\big)$ by $\E\big[p_0 \U (p_i \sland \text{reset}_{\leq i -
1})\big]$ (the odd-to-even case) and $\AX p_0$ by $\E[\neg p_0 \U p_0]$ (the
even-to-odd case).
This formula flips the correct bit $p_i$ from zero to one as well as
lesser significant bits from one to zero in some reachable world, which
is however not necessarily a direct successor.
To retain the values of more significant bits until this world is
actually reached, change $\beta$ to:
\begin{align*}
\beta \dfn \; \bigand_{i = 1}^{n} \bigg(\text{store}_{i}
\rightarrow
&\phantom{\; \sland \, }\Big(\phantom{\neg} p_0 \rightarrow (\phantom{\neg }p_i \rightarrow
\A[\neg p_0 \RLS \phantom{\neg }p_i])\\
& \phantom{(\neg p_0 \rightarrow}
\; \sland \, (\neg p_i \rightarrow \A[\neg
p_0 \RLS \neg p_i])\Big)\\
&\sland \, \Big(\neg p_0 \rightarrow (\phantom{\neg }p_i \rightarrow \A[\phantom{\neg } p_0
\RLS \phantom{\neg }p_i])\\
& \phantom{(\neg p_0 \rightarrow}
\; \sland \, (\neg p_i \rightarrow \A[\phantom{\neg }
p_0 \RLS \neg p_i])\Big)\bigg)
\end{align*}

The above formula preserves the state of the corresponding $p_i$ until the first change of $p_0$.
However, the $\EU$-subformulas of $\gamma$ are chosen to maintain the state of $p_0$ until the actual point of fulfillment.
Accordingly, all bits of higher significance are preserved until this world, and altogether
there is a simple path that assumes all the counter values $0\ldots
2^n-1$ at least once.
\end{proof}
 
\clearpage
\section{Flat CTL}\label{sec:flat-ctl}

The previous section has established lower bounds, in complexity and model size, for temporal depth of at least two.
This section, on the other hand, investigates the corresponding
fragments of \emph{flat} CTL, \ie, with temporal depth at most one.
In contrast to the fragments with operator nesting permitted, all flat cases have the polynomial model property.

We start with using only the operators $\AX$ and $\AG$.

\begin{theorem}\label{thm:minimal-model-size-axag}
Let $C$ be a base. If $\emptyset \subsetneq T \subseteq \{ \AX, \AG \}$, then $\B_1(C,T)$ has optimal model size $\bigO{n}$ and extent $\leq \size{T}$.
\end{theorem}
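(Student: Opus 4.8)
The plan is to prove the two upper bounds (size $\bigO{n}$ and extent $\le\size{T}$) by a single clone-independent surgery on an arbitrary model, and to supply a separate linear lower-bound family. First I would unfold the syntax: since $\td(\varphi)\le 1$, a formula $\varphi\in\B_1(C,T)$ is a Boolean combination, built with connectives of $C$, of propositions and of temporal atoms of the form $O\chi$, where $O$ ranges over the operators in $T$ together with their duals (so $O\in\{\AX,\EX\}$ if $\AX\in T$, and $O\in\{\AG,\EF\}$ if $\AG\in T$) and each $\chi$ is propositional. Fix any $R$-generable model $\calM=(W,R,V,r)$ of $\varphi$ and record, for every top-level temporal atom of $\varphi$, its truth value at $r$; write $A_{\X}\dfn\Set{\chi | \AX\chi\text{ true at }r}$ and $A_{\G}\dfn\Set{\chi | \AG\chi\text{ true at }r}$. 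The single invariant that drives the whole argument is that every world of the new model is a copy (with respect to $\SF{\varphi}\cap\PS$) of some world reachable from $r$ in $\calM$. This automatically preserves all universal and negative demands at the root, namely every true $\AG\chi$ and the absence of a reachable witness for every false $\EF\chi$ (and, among copies of successors, every true $\AX\chi$ and every false $\EX\chi$), since each of these holds throughout the reachable part of $\calM$.

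Then I would build the model in at most $\size{T}$ levels. The root $r'$ copies $V$ restricted to $\SF{\varphi}\cap\PS$. At level $1$ I insert, for every true $\EX\chi$, a copy of an $\calM$-successor of $r$ satisfying $\chi$; for every false $\AX\chi$, a copy of an $\calM$-successor falsifying $\chi$; and one copy $w_0$ of an arbitrary successor (for seriality and to reach deeper worlds). Each level-$1$ world satisfies $\bigand A_{\X}\sland\bigand A_{\G}$ because it copies a successor. The witnesses for the global operators---a copy of a reachable $\chi$-world for each true $\EF\chi$, and a copy of a reachable $\chi$-falsifying world for each false $\AG\chi$---are hung off $w_0$ if $\AX\in T$ and directly off $r'$ otherwise, so they sit at level $\size{T}$. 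Seriality is repaired by self-loops, each of which preserves $A_{\G}$. A routine check of every top-level atom (existentials via their explicit witnesses, universals and negative existentials via the copy-reachable invariant) gives $r'\vDash\varphi$. Since no path leaves level $\size{T}$ except through self-loops, the model has extent at most $\size{T}$ and $\bigO{n}$ worlds. The main obstacle is exactly the interaction of the two operators: an $\EF$- or $\AG$-witness need not satisfy $A_{\X}$, so when $\AX\in T$ it cannot be a root successor and must be demoted to distance two---which is precisely why the extent rises to $\size{T}=2$ in the combined case---and one has to verify that this demotion disturbs neither the next-step conditions (which speak only about distance one) nor the global conditions (preserved by copying reachable worlds).

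Finally, for the matching size lower bound I would exhibit, for a base expressing $\{\land,\neg\}$ and hence $\nimp$ (transferring to the remaining $1$-separating bases by the clone machinery of Section~\ref{sec:clones}), the family $\varphi_k\dfn\bigand_{i=1}^{k}\EX p_i\sland\AX\,\mathrm{amo}_k$ when $\AX\in T$ (and its $\EF$/$\AG$ analogue when $\AG\in T$), where $p_1,\dots,p_k$ are fresh propositions and $\mathrm{amo}_k$ is a linear-size propositional formula over $C$ asserting that at most one $p_i$ holds. Each conjunct $\EX p_i$ forces a world labelled $p_i$, while $\mathrm{amo}_k$ forces these $k$ worlds to be pairwise distinct, so every model has at least $k$ worlds whereas $\size{\varphi_k}=\bigO{k}$. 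This yields the $\bigOmega{n}$ lower bound and, combined with the construction above, establishes that $\bigO{n}$ is optimal.
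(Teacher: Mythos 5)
Your construction is correct and matches the paper's argument in essence: both resolve the Boolean skeleton of $\varphi$ into a set of temporal literals holding at the root and then build a model consisting of the root, one layer of $\EX$/falsified-$\AX$ witnesses, and (only when both operators are present) a second layer of $\EF$/falsified-$\AG$ witnesses demoted to distance two, which is exactly the paper's reason for the extent rising to $\size{T}$. The only differences are bookkeeping --- the paper prunes an arbitrary model of an implying conjunction $\varphi'$ rather than rebuilding from copies as you do --- and your closing lower-bound family is not part of this theorem's proof but is essentially the content and construction of the paper's Theorem~\ref{thm:lower-model-size-axag}.
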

\begin{proof}
Let $\varphi \in \B_1(C,T)$ be satisfiable.
$\varphi$ is logically implied by a satisfiable formula of the form
\[
\varphi' = \bigwedge_{i=1}^m \E\psi_i \land \bigwedge_{i=1}^k \A \xi_i\text{,}
\]
where $\size{\varphi'} \in \bigO{\size{\varphi}}$. (Since $\varphi$ is a Boolean combination of CTL formulas, think of $\varphi'$ as a "satisfying assignment".)

It is clear that in the cases $T = \{ \AX \}$ and $T = \{ \AG \}$, all $\E$-subformulas $\varphi$ can be fulfilled in distinct successors of the root. The extent is then $1$.
If however $T = \{ \AG, \AX \}$, then an $\AX$-subformula can prevent an $\EF$-formula from being fulfilled in an immediate successor.
Nevertheless, the minimal extent is then at most $2$.
Clearly, in any model of $\varphi'$ with extent $\leq 2$, all but $m$ worlds of distance 1 and all but $m$ worlds of distance $2$ can be deleted to reach the size upper bound.
\end{proof}

\begin{theorem}\label{thm:lower-model-size-axag}
Let $\emptyset \subsetneq T \subseteq \{ \AX, \AG \}$. Then $\B_1(T)$ has optimal model size $\bigOmega{n}$ and extent $\geq \size{T}$.
\end{theorem}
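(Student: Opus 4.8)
The plan is to show that both upper bounds of Theorem~\ref{thm:minimal-model-size-axag} are tight, by exhibiting for each admissible $T$ a satisfiable family whose models are forced to contain $\bigOmega{n}$ worlds and to have extent at least $\size{T}$. For the size bound I would force linearly many pairwise distinct witness worlds with a \emph{linear}-size formula. The crucial gadget is the sequential encoding of the "at most one" constraint over propositions $p_1,\dots,p_k$, using auxiliary propositions $s_1,\dots,s_k$:
\[
\mu_k \dfn \bigand_{i=1}^{k}(p_i \imp s_i)\;\sland\;\bigand_{i=2}^{k}(s_{i-1}\imp s_i)\;\sland\;\bigand_{i=2}^{k}\neg(p_i\sland s_{i-1}).
\]
A short argument shows that any assignment satisfying $\mu_k$ makes at most one $p_i$ true: if $p_a,p_b$ held with $a<b$, then $s_a$ and, by monotonicity, $s_{b-1}$ would hold, contradicting $\neg(p_b\sland s_{b-1})$. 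I then set
\[
\varphi_k \dfn \bigand_{i=1}^{k}\EX p_i \;\sland\; \AX\mu_k \quad(\AX\in T),\qquad \varphi_k \dfn \bigand_{i=1}^{k}\EF p_i \;\sland\; \AG\mu_k \quad(T=\{\AG\}).
\]
Each $\varphi_k$ has temporal depth one, size $\bigO{k}$, and is satisfiable via the star-shaped model with root $r$ and leaves $w_1,\dots,w_k$ (self-looped for seriality), where $w_i$ is labelled by $p_i$ and by each $s_j$ with $j\ge i$. Conversely, in any model the $k$ witnesses for the $\EX p_i$ (resp.\ $\EF p_i$) each satisfy $\mu_k$, hence carry exactly one $p_i$, and are therefore pairwise distinct; this forces at least $k=\bigOmega{\size{\varphi_k}}$ worlds.

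For the extent bound I would argue according to $\size{T}$. If $T$ is a singleton, then $\neg p\sland\EX p$ (when $\AX\in T$) resp.\ $\neg p\sland\EF p$ (when $T=\{\AG\}$) forces a $p$-witness distinct from the $\neg p$-root, so every model has a path through two distinct worlds, i.e.\ extent $\ge 1=\size{T}$. The interesting case is $T=\{\AX,\AG\}$, for which I would use
\[
\xi \dfn a\sland\neg c\sland \AX(\neg a\sland\neg c)\sland \EF c.
\]
Here the root carries $a\sland\neg c$ and, by the $\AX$-conjunct, every immediate successor carries $\neg a\sland\neg c$; hence any witness for $\EF c$ lies at distance at least two. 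Taking a shortest witnessing path $v_0\to v_1\to\cdots\to v_d$ with $d\ge 2$, the worlds $v_0$ (the unique one with $a$), $v_1$ (with $\neg a\sland\neg c$) and $v_d$ (with $c$) are pairwise distinct, so this path has extent at least $2=\size{T}$.

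To meet the formal requirement of an infinite family also for the constant extent bound, I would take $\varphi_k\sland\xi$ on disjoint proposition sets; this single family simultaneously forces $\bigOmega{n}$ worlds and extent $\size{T}$ when $T=\{\AX,\AG\}$, while $\varphi_k$ alone handles the singleton cases. The main obstacle lies entirely in the size bound: pinning down $k$ distinct worlds with a formula of length only $\bigO{k}$. The naive pairwise mutual-exclusion $\bigand_{i<j}\neg(p_i\sland p_j)$ has quadratic size and would yield only a $\bigOmega{\sqrt{n}}$ bound, so the linear sequential encoding $\mu_k$ with auxiliary propositions is essential. A secondary subtlety is the extent-two construction, which genuinely relies on $\AX$ to forbid fulfilling $\EF c$ among the immediate successors---precisely the interaction between $\AX$ and $\EF$ already flagged in the proof of Theorem~\ref{thm:minimal-model-size-axag}.
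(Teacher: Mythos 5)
Your proposal is correct and follows essentially the same route as the paper: a linear-size ``at most one of $p_1,\dots,p_k$'' constraint enforced via auxiliary chain propositions (the paper uses two carry families $\hat{p}_{<i},\hat{p}_{>i}$ where you use a single monotone chain $s_1\imp s_2\imp\cdots$, an immaterial difference), placed under $\AX$ resp.\ $\AG$ and combined with $\EX p_i$ resp.\ $\EF p_i$ witnesses to force $k$ distinct worlds, plus small explicit formulas for extent $1$ and $2$. The paper's extent-$2$ witness $(p\land q)\land\AX(p\land\neg q)\land\EF(\neg p\land q)$ is the same gadget as your $\xi$.
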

\begin{proof}
Consider the formula family $(\varphi_m)_{m \in \N}$ defined by
\begin{align*}
\varphi_m \dfn \; &\AX \sigma(p_1,\ldots,p_m) \land \bigand_{i = 1}^{m} \EX p_i\\
\sigma(p_1,\ldots,p_m) \dfn &\bigand_{i = 1}^m (p_i \imp (\hat{p}_{<i}\land \hat{p}_{>i})) \;\land\; \bigwedge_{i=2}^{m} (\hat{p}_{<i} \imp (\hat{p}_{<i-1}\land \neg p_{i-1})) \;\land\\
&\phantom{\bigand_{i = 1}^m (p_i \imp (\hat{p}_{<i}\land p_{>i})) \;\land\;}\;\bigwedge_{i=1}^{m-1}( \hat{p}_{>i} \imp (\hat{p}_{>i+1}\land \neg p_{i+1}))\text{.}
\end{align*}
The idea is that every world satisfying $\sigma(p_1,\ldots,p_m)$ can have at most one of $p_1,\ldots,p_m$ true.
For this, we implement "carry propositions" $\hat{p}_{<i}$ and $\hat{p}_{>i}$ as in Theorem~\ref{thm:model-size-au-ar}.
Then $\varphi_m$ is satisfiable and has length $\bigO{m}$, but any model of $\varphi_m$ has at least $m$ worlds.
For $\AG$/$\EF$ instead of $\AX$/$\EX$ the formula works analogously.
The minimal extent is $1$ for the formulas $p \land \EX \neg p$ and $p \land \EF \neg p$, and $2$ for $(p \land q) \land \AX(p \land \neg q) \land \EF (\neg p \land q)$.
\end{proof}

In the case where all CTL operators are available, both the size and the extent bounds increase by a factor of $n$:

\begin{theorem}\label{thm:flat-ctl-upper-bounds}
Let $C$ be a base and $T \subseteq \TL$. Then $\B_1(C,T)$ has optimal model size  $\bigO{n^2}$ and extent $\bigO{n}$.
\end{theorem}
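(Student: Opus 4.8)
The plan is to reuse the conjunctive-normal-form idea from the proof of Theorem~\ref{thm:minimal-model-size-axag} and then rebuild an arbitrary model as a single root carrying one short lasso-shaped branch per existential requirement. First I would note that a satisfiable $\varphi \in \B_1(C,T)$ is logically implied by a satisfiable formula
\[
\varphi' = \rho \land \bigand_{i=1}^m \E\psi_i \land \bigand_{j=1}^k \A\xi_j,
\]
where $\rho$ is propositional, each $\psi_i$ and $\xi_j$ has the form $O\chi$ or $[\chi_1 O \chi_2]$ with \emph{propositional} arguments, and $\size{\varphi'} \in \bigO{\size{\varphi}}$; here $\rho$ together with the chosen atoms plays the role of a satisfying assignment to the maximal temporal subformulas, and a negated $\E$-formula is again a flat $\A$-formula. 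As $\varphi$ has at most $n$ subformulas, $m,k \in \bigO{n}$. Fix a model $\calM$ of $\varphi'$ rooted at $w_0$.

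The crucial observation for flat CTL is that every $\A$-formula is evaluated only at $w_0$, so the universal requirements split into global \emph{safety} parts ($\AG\chi$, and the prefixes forced by $\A[\chi_1\RLS\chi_2]$) and \emph{eventualities} ($\AF\chi$, $\A[\chi_1\U\chi_2]$, which must be met on every path from $w_0$), while $\AX$ constrains only the immediate successors of $w_0$. I would let $\calM'$ consist of $w_0$ (with its $\calM$-label) together with, for each $\E\psi_i$, one branch that simultaneously witnesses $\E\psi_i$ and fulfils all universal requirements (plus one extra such branch if $m=0$, for seriality). To build the branch for $\E\psi_i$, take a path $\pi^i \in \Pi^\calM(w_0)$ witnessing $\E\psi_i$; being a path from the root of a model, it automatically satisfies every $\A\xi_j$. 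Mark $\bigO{n}$ \emph{milestone} positions on $\pi^i$: position $0$; the first position fulfilling each eventuality; the witness position of $\E\psi_i$ (\eg position $1$ for an $\EX$, a $\chi$-position for an $\EF$, or the entire path for an $\EG\chi$); and one position inside the ultimately periodic part. Let $\calM'$ contain fresh copies of exactly the milestone worlds, linked by direct $R$-edges in increasing order of their $\pi^i$-index and closed by a self-loop at the last, periodic milestone. Each branch then has $\bigO{n}$ worlds and cycle length $1$, so $\calM'$ has $1 + \bigO{n}\cdot\bigO{n} = \bigO{n^2}$ worlds and extent $\bigO{n}$.

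It remains to check that each branch still satisfies all requirements, and here the hard part will be the $\U$ and $\RLS$ operators, since replacing a subpath by a direct edge must not destroy the until/release pattern. Safety is immediate: every milestone world is reachable in $\calM$, hence satisfies each $\AG\chi$, and the self-loop world repeats itself; the branch heads $\pi^i[1]$ are successors of $w_0$ in $\calM$, so they satisfy the $\AX$-arguments, and the $\EX$ witness is the head of its own branch. Each eventuality is met because its milestone lies strictly before the periodic milestone. For $\A[\chi_1\U\chi_2]$ I would additionally mark the first $\chi_2$-position: every milestone preceding it has smaller $\pi^i$-index and therefore still satisfies $\chi_1$, so the pattern survives on the subsequence. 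For $\A[\chi_1\RLS\chi_2]$ I would mark the first position where $\chi_1$ holds, if any: all earlier milestones have smaller index and thus satisfy $\chi_2$, giving the release; and if $\chi_1$ never holds on $\pi^i$, then $\chi_2$ holds throughout $\pi^i$, including its cycle, so every milestone and the self-loop satisfy $\chi_2$. The same index choices handle the existential $\E[\chi_1\U\chi_2]$ and $\E[\chi_1\RLS\chi_2]$ witnesses, and for $\EG\chi$ every milestone of $\pi^i$ already carries $\chi$. Finally, since each branch is deterministic past the root and branches meet only at $w_0$, the paths of $\calM'$ are exactly the branches, so no spurious path can falsify a universal formula; hence $\calM' \vDash \varphi'$ and therefore $\calM' \vDash \varphi$.

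This establishes the upper bounds $\bigO{n^2}$ and $\bigO{n}$. For optimality, the matching $\bigOmega{n^2}$ size and $\bigOmega{n}$ extent lower bounds can be obtained by combining $\bigOmega{n}$ mutually exclusive eventualities (forcing each of the $\bigOmega{n}$ distinct $\EX$-witness branches to have length $\bigOmega{n}$), encoded succinctly with carry-propositions in the spirit of the families of Theorem~\ref{thm:lower-model-size-axag}.
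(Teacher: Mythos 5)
Your proposal is correct and follows essentially the same route as the paper: reduce to a satisfiable conjunction $\bigwedge_{i}\E\psi_i \land \bigwedge_{j}\A\xi_j$ of linear size, realize it as a root with $m$ otherwise disjoint branches ending in (self-)loops, mark on each branch the $\bigO{k}$ first-fulfillment positions of the eventualities together with the witness position of $\E\psi_i$, and contract the branch to the marked worlds, which preserves the $\U$/$\RLS$ patterns exactly as you argue. The only slip is that your milestone list includes position $1$ only when $\psi_i$ is an $\EX$-formula, while your verification of the $\AX$-conjuncts assumes every branch head is $\pi^i[1]$; always marking $\pi^i[1]$ (the paper marks both $\pi_i[0]$ and $\pi_i[1]$) closes this and the rest goes through.
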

\begin{proof}
Let $\varphi \in \B_1(C,T)$ be satisfiable. \Wloss $T \subseteq \{\AX,\AU,\A\RLS\}$.
As in the proof of Theorem~\ref{thm:minimal-model-size-axag}, $\varphi$ is logically implied by a satisfiable formula of the form
\[
\varphi' = \bigwedge_{i=1}^m \E \psi_i \land \bigwedge_{i=1}^k \A \xi_i\text{,}
\]
where $\size{\varphi'}\in \bigO{\size{\varphi}}$.
$\varphi'$ has a model $\calK$ that consists of a root $w_0$ and $m$ otherwise disjoint branches $\pi_1,\ldots,\pi_m$ such that $\pi_i \vDash \psi_i$.
\Wloss these branches end in self-loops.
In the following we show that every branch can be shrunken down to at most $\bigO{k}$ worlds. This then proves the theorem.

We mark worlds on $\pi_i$ as follows. First, mark $\pi_i[0]$ and $\pi_i[1]$.
For every $\xi_j = \vartheta \U \vartheta'$, mark the first worlds where $\vartheta'$ holds. For $\xi_j = \vartheta'\RLS \vartheta$, proceed similarly, provided that such a world exists.
Likewise, mark the world that fulfills $\psi_i$, if such a world exists.
Then clearly $\pi_i$ can be replaced by a subpath consisting of all $\leq (k + 2)$ marked worlds, arranged in the same order as before, without violating $\E\psi_i$ or any $\A\xi_j$.
\end{proof}

For the corresponding lower bound, we identify several CTL operators that have the capability to enforce a model consisting of $n$ disjoint paths of length $n$.

\begin{theorem}\label{thm:model-size-flat-ur-lower}
Let $T$ contain $\AU$, $\A\RLS$ or $\{\AG,\AF\}$. Then $\B_1(T)$ has optimal model size  $\bigOmega{n^2}$ and extent  $\bigOmega{n}$.
\end{theorem}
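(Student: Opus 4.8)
The plan is to prove the two lower bounds by exhibiting, for each admissible operator set $T$, an infinite family $(\varphi_m)_{m\in\N}$ of satisfiable $\B_1(T)$-formulas with $\size{\varphi_m}\in\bigTheta{m}$ whose every model is forced into the shape produced by the upper-bound proof (Theorem~\ref{thm:flat-ctl-upper-bounds}): a root together with $m$ pairwise world-disjoint branches, each of which must visit at least $m$ distinct worlds. This yields models of size $\bigOmega{m^2}$ and extent $\bigOmega{m}$, and since $\size{\varphi_m}\in\bigTheta{m}$, these match the $\bigO{n^2}$ and $\bigO{n}$ upper bounds of Theorem~\ref{thm:flat-ctl-upper-bounds}, so the bounds are optimal. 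Throughout I reuse the linear-size \emph{at-most-one} selector $\sigma(r_1,\ldots,r_m)$ from the proof of Theorem~\ref{thm:lower-model-size-axag} (the carry-proposition gadget forcing that any world satisfying it carries at most one of $r_1,\ldots,r_m$), applied once to branch markers $p_1,\ldots,p_m$ to keep branches disjoint, and once to position markers $q_1,\ldots,q_m$ to keep the positions along a branch pairwise distinct.

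The cleanest case is $\{\AG,\AF\}$, which is the only fragment possessing both $\AG$ and $\EG$. Here I would take
\[
\varphi_m \dfn \bigwedge_{i=1}^{m}\EG p_i \;\land\; \bigwedge_{j=1}^{m}\AF q_j \;\land\; \AG\,\sigma(q_1,\ldots,q_m)\;\land\;\AG\Big(\big(\textstyle\bigvee_{j=1}^m q_j\big)\to\sigma(p_1,\ldots,p_m)\Big)\text{,}
\]
which has length $\bigO{m}$ and is satisfiable by the intended grid model (root labelled by all $p_i$ and no $q_j$; branch $i$ a chain whose $j$-th world is labelled exactly $p_i\land q_j$). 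Conversely, in any model the $i$-th conjunct yields a witness path on which $p_i$ holds everywhere; by $\bigwedge_j\AF q_j$ together with $\AG\,\sigma(q_\cdot)$ this path meets $m$ distinct worlds $w^i_1,\ldots,w^i_m$ with $w^i_j\vDash q_j$; each such $q_j$-world satisfies $\sigma(p_\cdot)$, hence carries exactly the marker $p_i$, so the $w^i_j$ are distinct across both $i$ and $j$. This gives $\geq m^2$ worlds and extent $\geq m-1$. Note the \emph{rooting} subtlety (all $\EG p_i$ force every $p_i$ at the root) is harmless precisely because the root carries no $q_j$, so the $\sigma(p_\cdot)$-constraint there is vacuous.

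For $\A\RLS$ the fragment offers $\A\RLS$ and $\EU$; I recover the global box as $\AG\chi\equiv\A[\bot\RLS\chi]$ and the eventualities through the existential witnesses $\E[p_i\U q_m]$ (replacing $\EG p_i$, as $\EG$ is not available). The position ordering is now a safety property, enforced by the release chain $\bigwedge_{j=1}^{m-1}\A[q_j\RLS\neg q_{j+1}]$, which forbids $q_{j+1}$ before the first $q_j$; combined with $\AG\,\sigma(q_\cdot)$ this makes the first occurrences of $q_1,\ldots,q_m$ strictly increasing, and the until-witness keeps $p_i$ true up to the (necessarily last) $q_m$-world, so the same counting goes through. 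For $\AU$ I would keep $\EG p_i$ for the branches and $\bigwedge_j\AF q_j$ for the eventualities, but replace the ordering/distinctness box by the strict-ordering until chain $\bigwedge_{j=1}^{m-1}\A[\neg q_{j+1}\U(q_j\land\neg q_{j+1})]$, which forces on every path a strictly increasing sequence of distinct worlds $a_1<\cdots<a_{m-1}$ with $a_j\vDash q_j\land\neg q_{j+1}$, thereby giving the length without any global $\sigma(q_\cdot)$.

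The main obstacle is the pure $\AU$ fragment, which supplies only $\AU$ and its dual $\E\RLS$ (hence $\AF$ and $\EG$) but neither $\AG$ nor $\EU$/$\A\RLS$. While the strict-ordering trick above removes the need for a global $\sigma(q_\cdot)$, the disjointness of the branches still rests on broadcasting the mutual-exclusion invariant $\sigma(p_\cdot)$ to all the position-worlds, and there is no box operator to do this. The difficulty is sharpened by the rooting behaviour of $\EG$: any attempt to impose $\sigma(p_\cdot)$ along a prefix via $\A[\sigma(p_\cdot)\U q_m]$ would also impose it at the root, contradicting the simultaneous $\EG p_i$. I expect to resolve this by folding the invariant into the only available universal channel in a way that keeps the total length linear — either by threading $\sigma(p_\cdot)$ through the until targets so that it is required exactly at the $q_j$-worlds (while shielding the root by its lack of $q_j$-labels), or by distinguishing branches through the $q_j$-structure itself rather than through persistent markers. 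Guaranteeing genuine disjointness of the $m$ branches in this fragment, without losing the quadratic world count or exceeding linear formula length, is the technical heart of the proof; the $\{\AG,\AF\}$ and $\A\RLS$ cases are comparatively routine once the $\sigma$-gadget and the grid model are in place.
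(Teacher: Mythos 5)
Your overall strategy --- a linear-size family forcing $m$ pairwise disjoint branches of length $m$ via the at-most-one gadget $\sigma$ from Theorem~\ref{thm:lower-model-size-axag}, matched against the upper bounds of Theorem~\ref{thm:flat-ctl-upper-bounds} --- is exactly the paper's, and your $\{\AG,\AF\}$ and $\A\RLS$ constructions are sound variants of the paper's formulas. (The paper shields the root with an extra proposition $r$, writing $\EG(r\lor p_i)$ and $\AF(q_j\land\neg r)$ under a global $\AG(\sigma_p\land\sigma_q)$, where you instead relativize $\sigma(p_\cdot)$ to the $q$-worlds and let the root carry all the $p_i$; both work.)

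The genuine gap is the $\AU$ case, which you explicitly leave open and which is one of the three disjuncts of the hypothesis, so the theorem is not proved without it. The paper closes it with
\[
\varphi_m \dfn \A[\sigma_p \U (\sigma_p \land q_{m})] \;\land\; \bigwedge_{i=1}^{m} \E[q_m \RLS (p_i\lor r)] \;\land\;\bigwedge_{i=1}^{m-1} \A[\neg q_{i+1} \U (\neg r \land q_i)]\text{,}
\]
which realizes the first of your two suggested escapes but with a twist you are missing: the clash you correctly identify between ``$\sigma(p_\cdot)$ broadcast along every path'' and ``all $\EG p_i$ at the root'' is resolved not by shielding the root from the invariant but by abandoning $\EG p_i$ as the branch witness. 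The dual of $\AU$ is $\E\RLS$, and $\E[q_m\RLS(p_i\lor r)]$ forces $p_i\lor r$ along a witness path up to and including its first $q_m$-point; the root satisfies $r$ rather than all the $p_i$, so it is consistent with $\sigma_p$, while $\A[\sigma_p\U(\sigma_p\land q_m)]$ uses the eventuality itself as a bounded box, broadcasting the mutual-exclusion invariant along every path exactly as far as it is needed, namely up to a $q_m$-point, beyond which no counting takes place. The last conjunct then orders the $(\neg r)$-worlds carrying $q_1,\ldots,q_{m-1}$ before $q_m$. In short, the missing idea is to move the root-shielding from the invariant side to the witness side via the disjunct $r$, and to let $\A\U$ play the role of the absent $\AG$ on the prefix where the grid lives.
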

\begin{proof}
Let $\AG,\AF \in T$.
Let the formula $\sigma_p$ state that at most one of $p_1,\ldots,p_m$ is true, and let the formula $\sigma_q$ state that at most one of $q_1,\ldots,q_m$ is true (independently of $p_1,\ldots,p_m$).
Such formulas can be constructed as in the proof of Theorem~\ref{thm:lower-model-size-axag}.
Then let
\[
\varphi_m \dfn \AG (\sigma_p \land \sigma_q) \land \bigwedge_{i=1}^m \EG (r \lor p_i) \land \bigwedge_{j=1}^{m} \AF (q_i \land \neg r)\text{.}
\]
$\varphi_m$ has length $\bigO{m}$ and is satisfiable.
But any model of $\varphi_m$ must satisfy $q_1,\ldots,q_m$ in $m$ distinct worlds on every path.
Moreover, paths $\pi_1,\ldots,\pi_m$ must exist with $r\lor p_i$ holding globally on $\pi_i$.
These paths are disjoint in the fulfillment points of $\AF (q_1\land\neg r), \ldots, \AF (q_m \land \neg r)$.
As a result, any model has size at least $m^2$ and extent $m$.

For the case $\AU \in T$, change the above formula to
\[
\varphi_m \dfn \A[\sigma_p \U (\sigma_p \land q_{m})] \land \bigwedge_{i=1}^{m} \E[q_m \RLS (p_i\lor r)] \land\bigand_{i=1}^{m-1} \A[\neg q_{i+1} \U (\neg r \land q_i)] \text{.}
\]
Due to the first conjunction, a world with $q_m$ is reached on any path, with $\sigma_p$ being true until that point.
However, by the last conjunction, on every path the propositions $q_1,\ldots,q_{m-1}$ must appear before $q_m$ exactly in this order.
Due to the middle conjunction, there are at least $m$ such paths, and again any model has at least $m^2$ worlds and extent $m$.

Finally, for $\A\RLS \in T$, the formula
\[
\varphi_m \dfn \A[\bot \RLS \sigma_p] \land \bigwedge_{i=1}^m \E[(p_i \lor r)\U q_m] \land \bigwedge_{i=1}^m \A[(q_i \land \neg r) \RLS \neg q_{i+1}]
\]
works analogously.
Due to the middle part, the last conjunction of $\A\RLS$s cannot be fulfilled by simply having $\neg q_1,\ldots,\neg q_m$ true indefinitely.
Instead, $q_1,\ldots,q_m$ have to be fulfilled one after another on every path, and a similarly structured model as in the other cases is enforced.
\end{proof}

If the CTL operators are restricted to $\{\AF\}$ or $\{\AF,\AX\}$, then the above construction does not work due to the "mixed quantifier" nature of $\AF$ and $\EG$.
Instead, a formula that enforces $n$ worlds in a model is already of length $n \log n$.

To express such a model size in terms of the length of the corresponding formula, we require a function $w$ such that $w^{-1}(n) = n \log n$.
A function satisfying this equation, at least asymptotically, is $w \,\colon \mathbb{R}_+ \to \mathbb{R}_+$ with $w(x) \dfn \frac{x}{W(x)}$, where $W(x)$ is the \emph{Lambert W function} \cite{corless_lambert_1996},
the inverse function of $W^{-1}(x) \dfn x e^x$.

\begin{proposition}
For all $x \in \mathbb{R}_+$, $w(x \ln x) = x$, that is, $w^{-1}(x) = x \ln x$.
\end{proposition}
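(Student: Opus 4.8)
The plan is to unfold the two definitions involved and reduce the claim to a one-line identity. Recall that $w(y) = y / W(y)$ and that $W$ is defined as the inverse of $z \mapsto z e^z$, so that it satisfies $W(y)\, e^{W(y)} = y$ for every $y$ in its domain. Consequently, proving $w(x \ln x) = x$ is equivalent to showing that $W(x \ln x) = \ln x$, since then $w(x \ln x) = (x \ln x)/(\ln x) = x$. First I would therefore isolate this single equation as the goal.

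The key step is the substitution $z \dfn \ln x$. Applying the map that $W$ inverts gives
\[
z\, e^z = (\ln x)\, e^{\ln x} = (\ln x)\cdot x = x \ln x,
\]
because $e^{\ln x} = x$. Reading this through the defining property of the Lambert $W$ function yields exactly $W(x \ln x) = \ln x$, and hence
\[
w(x \ln x) = \frac{x \ln x}{W(x \ln x)} = \frac{x \ln x}{\ln x} = x,
\]
which is the assertion. The statement $w^{-1}(x) = x \ln x$ then follows immediately by applying $w$ to both sides of $w^{-1}(x) = x \ln x$.

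The part requiring care---and the main obstacle---is the choice of branch of $W$, since $z \mapsto z e^z$ is not globally injective. Its derivative $e^z(1 + z)$ is nonnegative on $[-1, \infty)$, so the map is strictly increasing there and the principal branch $W_0$ is its genuine inverse on that interval. For the arguments relevant here, $x \geq 1$ gives $z = \ln x \geq 0 \geq -1$, so the substitution lands squarely in the range of the principal branch and the identity $W(x \ln x) = \ln x$ is valid. The only genuinely degenerate value is $x = 1$, where $x \ln x = 0$ and the quotient $0/W(0)$ is indeterminate; this is resolved by taking the continuous extension $w(0) \dfn \lim_{y \to 0} y/W(y) = 1$ (using $W(y) \sim y$ as $y \to 0$), which agrees with the desired value $x = 1$. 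Since only the asymptotic behaviour of $w$ is needed for the subsequent model-size bounds, this boundary case poses no real difficulty.
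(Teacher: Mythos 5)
Your computation is correct and is exactly the argument the paper implicitly relies on: the proposition is stated without proof, being treated as immediate from $W(x\ln x)=\ln x$, which follows from $(\ln x)e^{\ln x}=x\ln x$. Your additional care about the branch of $W$ and the boundary value $x=1$ goes beyond what the paper records (and quietly explains why the claim should really be read for $x\ge 1$, since for $0<x<1$ the argument $x\ln x$ is negative and leaves the stated domain of $w$); this is harmless here because, as the paper itself notes, only the asymptotic identity $w^{-1}(n)\in\Theta(n\log n)$ is used in the model-size bounds.
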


\begin{theorem}\label{thm:min-size-af}
Let $\AF \in T$. Then $\B_1(T)$ has optimal model size $\bigOmega{w(n)^2}$ and extent $\bigOmega{n}$.
\end{theorem}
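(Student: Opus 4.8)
The plan is to prove the two lower bounds separately, each via its own satisfiable family of flat $\AF$-formulas, exploiting the key feature that distinguishes $\AF$ from the operators of Theorem~\ref{thm:model-size-flat-ur-lower}: without an ordering operator such as $\U$, the only way to force distinct worlds on a path is to individuate them by explicit propositional labels. The matching extent upper bound $\bigO{n}$ is already supplied by Theorem~\ref{thm:flat-ctl-upper-bounds}, so it suffices to construct the families.

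For the extent bound I would use a \emph{thermometer} family of \emph{linear} length. Setting $p_{m+1} \dfn \bot$, define
\[
\varphi_m \dfn \EG\Big(\bigwedge_{j=1}^{m-1}(p_{j+1}\imp p_j)\Big)\;\sland\;\bigwedge_{j=1}^{m}\AF(p_j \sland \neg p_{j+1})\text{.}
\]
This is satisfiable by a single line climbing through the thermometer levels $0,1,\ldots,m$. For the lower bound, fix any model and a path $\pi$ witnessing the $\EG$-conjunct, so that the invariant $p_{j+1}\imp p_j$ holds in every world of $\pi$. Each $\AF(p_j \sland \neg p_{j+1})$ must be satisfied somewhere on $\pi$, and together with the invariant its witness has $p_1,\ldots,p_j$ true and $p_{j+1},\ldots,p_m$ false, a labeling unique to $j$. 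Hence $\pi$ visits $m$ pairwise distinct worlds and the extent is at least $m-1$; as $\size{\varphi_m}=\bigTheta{m}$, this yields extent lower bound $\bigOmega{n}$.

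For the size bound I would use \emph{addresses} rather than an at-most-one constraint, which is precisely what produces the $n\log n$ versus $n$ discrepancy. Over fresh propositions, encode a path index $i\in[m]$ by a conjunction $d_i$ of $\ceil{\log m}$ literals and a level $v\in[m]$ by a \emph{disjoint} conjunction $c_v$ of $\ceil{\log m}$ further literals, each of length $\bigO{\log m}$; let $r$ mark the shared root region. Define
\[
\varphi_m \dfn \bigwedge_{i=1}^{m}\EG(d_i \slor r)\;\sland\;\bigwedge_{v=1}^{m}\AF(c_v \sland \neg r)\text{.}
\]
A model is a root (labeled $r$) branching into $m$ lines, the $i$-th carrying index $i$ and running through all $m$ levels, so $\size{\varphi_m}=\bigO{m\log m}$. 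For the lower bound, each $\EG(d_i \slor r)$ provides a path $\pi_i$ on which $d_i \slor r$ holds globally, and on $\pi_i$ every $\AF(c_v \sland \neg r)$ forces a world $w_{i,v}$ with $c_v \sland \neg r$; since $\neg r$ holds there while $d_i \slor r$ holds throughout $\pi_i$, that world also satisfies $d_i$. Thus $w_{i,v}$ carries both index $i$ and level $v$, and because addresses are mutually exclusive, distinct pairs $(i,v)$ give distinct worlds, forcing at least $m^2$ worlds in \emph{any} model. With $n=\size{\varphi_m}=\bigTheta{m\log m}=\bigTheta{w^{-1}(m)}$, the preceding proposition gives $m=\bigTheta{w(n)}$ and hence the size lower bound $\bigOmega{w(n)^2}$.

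The routine parts are verifying satisfiability, logspace-constructibility, and that both formulas have temporal depth one. I expect the main obstacle to be the size argument, specifically seeing \emph{why} the $\bigO{m}$-length at-most-one formula $\sigma$ of Theorem~\ref{thm:lower-model-size-axag} cannot simply be reused: lacking $\AG$, it would have to be inlined into each of the $m$ conjuncts, inflating the length to $\bigO{m^2}$ and thereby weakening the bound to $\bigOmega{n}$. The address encoding is the device that makes exclusivity automatic at the unavoidable cost of a $\log m$ factor, and the delicate point is to argue that the $m^2$ witnessing worlds are genuinely forced distinct in every model rather than merely in the intended one.
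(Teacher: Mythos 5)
Your proposal is correct and follows essentially the same route as the paper: the size family $\bigwedge_i \EG(d_i\lor r)\land\bigwedge_v\AF(c_v\land\neg r)$ with $\bigO{\log m}$-bit binary addresses is exactly the paper's construction (including the counting argument yielding $m^2$ worlds from length $\bigTheta{m\log m}$ and the conversion via $w$), and your closing observation about why the at-most-one formula cannot be inlined is precisely the reason the paper switches to addresses here. Your extent family uses a thermometer encoding inside a single $\EG$ instead of the paper's at-most-one formula $\sigma_p$, but both are linear-length devices forcing $m$ pairwise distinct worlds on one witnessing path, so the argument is the same in substance.
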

\begin{proof}
Consider the formula family $(\varphi_m)_{m \in \N}$ defined by
\[
\varphi_m \dfn \bigand_{i = 0}^{m-1} \AF (\vec{c}(i) \land \neg r) \land \bigand_{i = 0}^{m-1} \EG (r \slor \vec{d}(i))\text{,}
\]
where $\vec{c}(i)$ and $\vec{d}(i)$ are conjunctions of
$\ceil{\log m}$ literals representing the value $i$ as a binary vector, similarly as in Theorem~\ref{theorem:model-size-ax-lower}.

$\varphi_m$ is satisfiable, but any model of it contains $m^2$ worlds as in Theorem~\ref{thm:model-size-flat-ur-lower}.
For a constant $k$, we can set $n \dfn k \cdot m \ln m$ and obtain an infinite family of formulas of size $\leq n$ and models with size at least $w(\frac{n}{k})^2$. Since $w(\frac{n}{k})\geq \frac{1}{k} w(n)$ for large enough $k$, it follows $w(\frac{n}{k})^2 \in \bigOmega{w(n)^2}$.

For the minimal extent $\bigOmega{n}$, the formula $\varphi_m \dfn \EG\sigma_p \land \bigwedge_{i=1}^m \AF p_i$ works similarly as in the proof of Theorem~\ref{thm:model-size-flat-ur-lower}.
\end{proof}

\subsection{Existential Flat CTL}\label{sec:ex-flat-ctl}

In the absence of universal path quantifiers, even smaller models can be found.
Whenever the formulas $\EX\psi_1, \ldots, \EX\psi_n$ are satisfiable, they can be fulfilled in the same model.
Lower bounds for the model size can then only stem from, say, $\psi_i \land \psi_j$ being not satisfiable in a \emph{single} successor.

Formally, $\varphi \in \B_1(C)$ is called \emph{existential} if it is a monotone Boolean combination of propositional formulas and $\E$-preceded CTL formulas.
In this setting, model size lower bounds emerge that depend solely on the number of contradicting subformulas.
Propositional formulas $\psi, \psi'$ are \emph{contradicting} if $\psi$ and $\psi'$ are both satisfiable, but $\psi\land\psi'$ is not.

\smallskip

Our goal is to determine the maximal number of contradicting subformulas that a formula with a given length can exhibit. We reduce this problem to a graph-theoretical problem called \emph{biclique covering}. Recall that a \emph{biclique} $A \times B$ is a complete bipartite graph.\footnote{That is, a graph with vertices $A \cup B$ such that $A$ and $B$ are disjoint, and with the edge $(u,v)$ existing if and only if $u\in A, v\in B$ or $u \in B, v \in A$.}

\begin{definition}
Let $G = (V,E)$ be a graph. A \emph{biclique covering of $G$} is a sequence $(A_i \times B_i)_{i\in [n]}$ of biclique subgraphs of $G$ such that $\bigcup_{i\in [n]}(A_i \times B_i)\cup (B_i \times A_i) = E$. Its \emph{weight} is $\bigwedge_{i\in[n]}|A_i|+|B_i|$.

The \emph{minimal biclique covering weight} of $G$ is the minimal weight of a biclique covering of $G$.
\end{definition}

\begin{proposition}[\citeref{jukna2011extremal}{~p.\!~46}]\label{prop:minimal-cover-weight}
The $n$-vertex clique graph $K_n$ has a minimal biclique covering weight of at least $n \log n$.
\end{proposition}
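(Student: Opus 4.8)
The plan is to recognize this statement as Hansel's lemma and to prove it by a short probabilistic argument followed by a convexity estimate. First I would restate the weight in vertex terms. For a vertex $v$ of $K_n$, let $d(v)$ be the number of indices $i$ with $v \in A_i \cup B_i$, i.e.\ the number of bicliques of the cover containing $v$. Since each biclique $A_i \times B_i$ has $A_i$ and $B_i$ disjoint, counting incidences two ways gives $\sum_{i}(|A_i|+|B_i|) = \sum_{v} d(v)$, so it suffices to prove $\sum_v d(v) \geq n \log_2 n$.

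The heart of the argument is a random experiment. For each biclique $A_i \times B_i$ independently, flip a fair coin and delete from consideration all vertices of $A_i$ on heads, or all vertices of $B_i$ on tails. Let $X$ be the random set of \emph{surviving} vertices, those deleted by no biclique. The crucial claim is that $|X| \leq 1$ always: if two distinct vertices $u,v$ both survived, then since $K_n$ is complete the edge $\{u,v\}$ is covered by some biclique $i$, so after possibly swapping sides $u \in A_i$ and $v \in B_i$; but the coin for biclique $i$ deletes all of $A_i$ or all of $B_i$, hence deletes $u$ or $v$, a contradiction. On the other hand, a fixed vertex $v$ survives exactly when, for each of the $d(v)$ bicliques containing it, the coin chose the opposite side, an event of probability $2^{-d(v)}$ by independence. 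Taking expectations and using $|X|\leq 1$ therefore yields
\[
\sum_{v} 2^{-d(v)} = \mathbb{E}[\,|X|\,] \leq 1 .
\]

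Finally I would convert this Kraft-type inequality into the weight bound by convexity. Setting $\bar d \dfn \tfrac1n \sum_v d(v)$ and applying Jensen's inequality to the convex function $x \mapsto 2^{-x}$ gives $2^{-\bar d} \leq \tfrac1n \sum_v 2^{-d(v)} \leq \tfrac1n$, hence $\bar d \geq \log_2 n$ and $\sum_v d(v) \geq n \log_2 n$, which is the asserted bound $n\log n$.

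I expect the only genuinely non-routine step to be the design of the random deletion experiment together with the observation $|X|\leq 1$; this is precisely where the completeness of $K_n$ (equivalently, that every pair of vertices must be \emph{separated} with one endpoint in $A_i$ and the other in $B_i$ for some $i$) enters in an essential way, and distinguishes covering from mere distinctness of the membership patterns. Once $\sum_v 2^{-d(v)} \leq 1$ is in hand, the passage to $n\log n$ via Jensen is entirely mechanical.
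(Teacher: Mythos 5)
Your proof is correct. The paper itself gives no proof of this proposition---it is quoted as a known result (Hansel's lemma, also attributed to Katona and Szemer\'edi) from Jukna's book---and your argument is precisely the standard probabilistic proof found there: the double count $\sum_i(|A_i|+|B_i|)=\sum_v d(v)$, the random one-sided deletion giving $\sum_v 2^{-d(v)}=\mathbb{E}[|X|]\leq 1$ via completeness of $K_n$, and Jensen applied to $x\mapsto 2^{-x}$ to extract $\sum_v d(v)\geq n\log_2 n$. All three steps check out, so nothing is missing.
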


\newcommand{\psize}[1]{\langle{}#1\rangle{}}

We apply the above result in the following lemmas.
If $\psi$ is a formula, $\psize{\psi}$ denotes the total number of occurrences of propositions in $\psi$. For example, $\psize{p \lor \neg p} = 2$.
Clearly $\psize{\psi} \leq \size{\psi}$.

\begin{lemma}\label{lem:contradict}
Let $C$ be a base, and let $\psi,\ldots,\psi_n \in \B_0(C)$ be pairwise contradicting.
Then $\sum_{i \in [n]}\psize{\psi_i} \geq n \log n$.
\end{lemma}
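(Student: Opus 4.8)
The plan is to reduce the claim to the biclique covering lower bound for the complete graph $K_n$ (Proposition~\ref{prop:minimal-cover-weight}). I would identify the vertices of $K_n$ with the indices $1,\ldots,n$ of the formulas and construct a biclique covering of $K_n$ out of the individual proposition occurrences, so that its weight is at most $\sum_{i\in[n]}\psize{\psi_i}$. Since the minimal biclique covering weight of $K_n$ is at least $n\log n$, any particular covering has weight at least $n\log n$, and this immediately yields $\sum_{i\in[n]}\psize{\psi_i}\geq n\log n$.

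Concretely, for each $i$ I would fix a satisfying assignment $\theta_i$ of $\psi_i$, defined on the set $S_i$ of propositions occurring in $\psi_i$ (such a $\theta_i$ exists because each $\psi_i$ is satisfiable). For every proposition $p$ I then set
\[
A_p \dfn \Set{ i | p \in S_i,\ \theta_i(p) = 1}\text{,} \qquad B_p \dfn \Set{ i | p \in S_i,\ \theta_i(p) = 0}\text{,}
\]
which are disjoint, so each $A_p\times B_p$ is a biclique of $K_n$; after discarding the $p$ for which one of the two sides is empty (these carry no edges and so are harmless), the family $(A_p\times B_p)_p$ is a candidate covering. Its weight is at most
\[
\sum_p \big(\size{A_p}+\size{B_p}\big) = \sum_{i\in[n]}\size{S_i} \leq \sum_{i\in[n]}\psize{\psi_i}\text{,}
\]
where the equality holds because $\size{A_p}+\size{B_p}$ counts the formulas containing $p$, and the last inequality holds because $\psize{\psi_i}$ counts proposition occurrences with multiplicity, which is at least the number $\size{S_i}$ of distinct propositions in $\psi_i$.

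The crux—and the step I expect to be the main obstacle—is verifying that this family really covers every edge of $K_n$, i.e.\ that for any two contradicting formulas $\psi_i,\psi_j$ there is a \emph{shared} proposition $p\in S_i\cap S_j$ on which $\theta_i$ and $\theta_j$ \emph{disagree}; then exactly one of $i,j$ lies in $A_p$ and the other in $B_p$, so the edge $\{i,j\}$ is covered. I would prove this by contradiction: if $\theta_i$ and $\theta_j$ agreed on all of $S_i\cap S_j$, then the combined assignment following $\theta_i$ on $S_i$ and $\theta_j$ on $S_j\setminus S_i$ would be well defined, would restrict to $\theta_i$ on $S_i$ and to $\theta_j$ on $S_j$, and hence would satisfy both $\psi_i$ and $\psi_j$—contradicting the unsatisfiability of $\psi_i\land\psi_j$. (The same merging argument shows a shared proposition exists in the first place, since formulas over disjoint variable sets can never contradict.) With the covering established, combining the weight bound above with Proposition~\ref{prop:minimal-cover-weight} closes the argument.
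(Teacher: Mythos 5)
Your proof is correct and takes essentially the same route as the paper: there, each $\psi_i$ is first replaced by a satisfiable conjunction of literals $\psi_i^*$ implying it (which is exactly your satisfying assignment $\theta_i$ in disguise), the same bicliques $A_p\times B_p$ are formed per proposition, the weight is bounded by the occurrence count, and Proposition~\ref{prop:minimal-cover-weight} is applied identically. Your explicit merging argument for why every edge of $K_n$ is covered is a detail the paper leaves implicit, but it is the same underlying fact that two contradicting conjunctions of literals must contain a complementary pair.
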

\begin{proof}
As $\psi_i$ is satisfiable for all $i \in [n]$, it is implied by a satisfiable conjunction $\psi^*_i$ of literals (\ie, over the base $\{\land,\neg\}$), such that $\psize{\psi^*_i}\leq \psize{\psi_i}$.
It follows that $\psi^*_1,\psi^*_2, \ldots, \psi^*_n$ are again pairwise contradicting.
For this reason, proving the lower bound for conjunctions of literals is sufficient.

We use the $n$-vertex clique graph $K_n$ with vertices $\{\psi_i \mid i \in [n]\}$.
The goal is to cover all edges, where a covered edge $(\psi_i, \psi_j)$
in $K_n$ means that $\psi_i$ and $\psi_j$ are contradicting.
To cover an edge, we require the literal $p$
occurring in the conjunction $\psi_i$ and $\neg p$ occurring in $\psi_j$ (or vice versa).

Consider the subset $A_p \subseteq \{\psi_1,\ldots,\psi_n\}$ such that $\psi_i$ implies $p$ for all $\psi_i \in A_p$, and similarly $B_p \subseteq \{\psi_1,\ldots,\psi_n\}$ such that $\psi_i$ implies $\neg p$ for all $\psi_i \in B_p$.
$A_p$ and $B_p$ are disjoint.
$A_p \cup B_p$ does not necessarily contain all vertices of $K_n$;
nevertheless, every $\psi_i\in A_p$ contradicts each $\psi_j\in B_p$.
Consequently, the edges covered due to the proposition $p$ form a biclique $A_p \times B_p$, and $\size{A_p} \cdot \size{B_p}$ edges in $K_n$ are covered.
The weight $\size{A_p} + \size{B_p}$ of the biclique is simultaneously a lower bound for the number of occurrences of the proposition $p$ in $\psi_1,\ldots,\psi_n$, as it must occur in all formulas in $A_p \cup B_p$ to make them contradicting.
Ultimately, if $p_1,\ldots,p_m$ are the propositions occurring in $\psi_1,\ldots,\psi_n$, then $A_{p_1} \times B_{p_1}, \ldots, A_{p_m} \times B_{p_m}$ must form a biclique covering of $K_n$.

However, by Proposition~\ref{prop:minimal-cover-weight}, the minimal biclique covering weight of $K_n$ is $n \log n$.
This is then also the minimal number of occurrences of variables counted over all $\psi_1,\ldots, \psi_n$ (with at least one variable per biclique), which proves the lemma.
\end{proof}

The next lemma is the corresponding upper bound.
It states that formulas that are \emph{not} pairwise contradicting can be "merged".

\begin{lemma}\label{lem:merge-upper-bound}
Let $\Phi = \{ \psi_i \mid i \in [n]\} \subseteq \B_0(C)$ be a set of satisfiable formulas.
Then there is a partition $\Phi_1 \dot\cup \cdots \dot\cup \,\Phi_m = \Phi$ such that
\begin{enumerate}
\item $m \log m \leq \sum\limits_{i\in [n]} \psize{\psi_i}$,\vspace{3pt}
\item $\bigwedge\limits_{\psi \in \Phi_j} \psi$ is satisfiable for every $j \in [m]$.
\end{enumerate}
\end{lemma}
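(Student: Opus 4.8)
The plan is to mirror the lower-bound argument of Lemma~\ref{lem:contradict}, but to apply Proposition~\ref{prop:minimal-cover-weight} to a complete graph whose vertices are the \emph{parts} of the partition I construct, rather than the formulas themselves. First I would reduce to conjunctions of literals exactly as in Lemma~\ref{lem:contradict}: since each $\psi_i$ is satisfiable, it is implied by a satisfiable conjunction of literals $\psi_i^*$ with $\psize{\psi_i^*} \le \psize{\psi_i}$, and I may assume each $\psi_i^*$ constrains every variable at most once. Any index partition that makes each group of the $\psi_i^*$ jointly satisfiable also works for the $\psi_i$, because a common model of the stronger $\psi_i^*$ satisfies the weaker $\psi_i$. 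Hence it suffices to prove the statement for conjunctions of literals and with $\sum_i \psize{\psi_i^*}$ in place of $\sum_i \psize{\psi_i}$.

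Next I would build the partition greedily: start from the singleton partition and repeatedly merge two parts whenever the union of their literal sets is still consistent (equivalently, jointly satisfiable), stopping when no further merge is possible. This terminates after at most $n-1$ merges and yields parts $\Phi_1, \ldots, \Phi_m$ that are each jointly satisfiable, giving condition~(2). Crucially, the parts are then \emph{pairwise contradicting}: for every $j \ne j'$ the union of their literal sets is inconsistent, and since each part alone is consistent, the clash must be a cross pair, i.e.\ some variable $p$ is forced true by a formula in $\Phi_j$ and false by a formula in $\Phi_{j'}$ (or vice versa).

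To bound $m$, I would apply Proposition~\ref{prop:minimal-cover-weight} to the clique $K_m$ on the parts $\{1,\ldots,m\}$. For each variable $p$, let $\hat A_p$ (resp.\ $\hat B_p$) be the set of parts containing a formula that forces $p$ true (resp.\ false); these are disjoint because each part is jointly satisfiable. The previous paragraph shows that every edge of $K_m$ lies in some biclique $\hat A_p \times \hat B_p$, so $(\hat A_p \times \hat B_p)_p$ is a biclique covering of $K_m$. Its weight is $\sum_p (\size{\hat A_p} + \size{\hat B_p})$, and since each part contributing to $\hat A_p \cup \hat B_p$ contains at least one \emph{distinct} formula forcing $p$, we get $\size{\hat A_p} + \size{\hat B_p} \le \size{\Set{ i | \psi_i^* \text{ forces } p}}$. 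Summing over $p$ bounds the weight by $\sum_i \psize{\psi_i^*}$. By Proposition~\ref{prop:minimal-cover-weight} every biclique covering of $K_m$ has weight at least $m \log m$, so $m \log m \le \sum_i \psize{\psi_i^*} \le \sum_i \psize{\psi_i}$, which is condition~(1).

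The step I expect to require the most care is recognising that Proposition~\ref{prop:minimal-cover-weight} should be invoked on the complete graph over the \emph{constructed parts} and not over the original formulas: the greedy merging is exactly what forces the parts to be pairwise contradicting, so that their mutual conflicts assemble into a biclique cover of $K_m$ whose weight is controlled by the total number of literal occurrences. The remaining bookkeeping (termination of the merging, disjointness of $\hat A_p$ and $\hat B_p$, and the per-variable counting inequality) is routine.
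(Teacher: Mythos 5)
Your proof is correct and follows essentially the same route as the paper: the paper takes a partition with the \emph{minimal} number of jointly satisfiable parts (equivalent to your greedy merging), observes that the part-conjunctions $\psi_j^* = \bigwedge_{\psi\in\Phi_j}\psi$ are pairwise contradicting, and then applies Lemma~\ref{lem:contradict} to them, using $\sum_j \psize{\psi_j^*} = \sum_i \psize{\psi_i}$. The only difference is that you inline the biclique-covering argument over $K_m$ instead of invoking Lemma~\ref{lem:contradict} as a black box on the conjunctions, which changes nothing mathematically.
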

\begin{proof}
Let $m$ be minimal such that $\Phi_1,\ldots,\Phi_m$ is a partition that satisfies (2).
Such an $m$ must exist.
Let $\psi^*_j \dfn \bigwedge_{\psi \in \Phi_j}\psi$ for $j \in [m]$.
The formulas $\psi^*_1,\ldots,\psi^*_m$ are pairwise contradicting, otherwise we could coarsen the partition and $m$ would not be minimal.
By the previous lemma, then $\sum_{i \in [m]}\psize{\psi^*_i} \geq m \log m$.
But as $\Phi_1,\ldots,\Phi_m$ is a partition of $\Phi$, it holds $\sum_{i\in [n]} \psize{\psi_i} = \sum_{i \in [m]} \psize{\psi^*_i}$, and (1) follows.
\end{proof}

\begin{theorem}
Let $C$ be a base. Then existential $\B_1(C)$ has optimal model size $\bigO{w(n)}$ and extent $\leq 1$.
\end{theorem}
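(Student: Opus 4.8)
The plan is to establish the two halves of "optimal" separately: the model size upper bound $\bigO{w(n)}$, which holds for every base $C$, and a matching lower bound witnessed by an explicit family of $\EX$-formulas. The extent bound $\leq 1$ will fall out of the upper-bound construction. Throughout, $n \dfn \size{\varphi}$.

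For the upper bound, let $\varphi$ be a satisfiable existential formula. As in the proofs of Theorems~\ref{thm:minimal-model-size-axag} and~\ref{thm:flat-ctl-upper-bounds}, I would first pass to a satisfiable "monomial" $\varphi' = \bigwedge_{i=1}^{\ell}\E\psi_i \sland \rho$ with $\rho \in \B_0(C)$ and $\size{\varphi'}\in\bigO{n}$ that logically implies $\varphi$; here the monotonicity of the top-level combination is exactly what guarantees that a plain conjunction of the positively-chosen atoms already entails $\varphi$. Fix a model of $\varphi'$ with root $r$ and let $r'$ be a fresh root carrying the same labelling as $r$ on $\PS\cap\SF{\varphi}$. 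Since $\varphi'$ has no $\A$-operator, witnesses for the $\E\psi_i$ may be attached independently and freely: extra worlds and edges can only turn more $\E$-subformulas true, while $\rho$ depends only on the labelling of $r'$.

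Next I would split each $\E\psi_i$ into a \emph{root constraint} and a \emph{forward requirement}. Realizations of $\EG\chi$, of $\E[\cdot\RLS\chi]$ "forever", and of $\EF\chi$ or $\E[\chi_1\U\chi_2]$ discharged immediately impose only a propositional constraint on $r'$; all these constraints hold in the chosen model at the single world $r$, hence are jointly satisfiable and are met automatically by copying the labelling of $r$ and adding one reflexive edge at $r'$ (which also secures seriality and simultaneously witnesses every $\G$-type path). The remaining realizations — $\EX\psi$, an $\EF\chi$ placed at a successor, and the "stepping" cases of $\U$ and $\RLS$ — each need one fresh successor satisfying a single propositional \emph{forward requirement} $\sigma$ (one of $\psi$, $\chi$, $\chi_2$, or $\chi_1\sland\chi_2$), the accompanying root constraint (e.g.\ $\chi_1$ for a stepping $\U$, $\chi_2$ for a stepping $\RLS$) being again guaranteed by the model of $\varphi'$. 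Collecting the forward requirements into a set $\Phi=\{\sigma_1,\ldots,\sigma_k\}\subseteq\B_0(C)$ of satisfiable formulas, I apply Lemma~\ref{lem:merge-upper-bound}: $\Phi$ partitions into $m$ jointly satisfiable groups with $m\log m\le\sum_i\psize{\sigma_i}\le\psize{\varphi}\le n$. Since $w^{-1}(x)=x\ln x$ is increasing this gives $m\le w(n)$. Using one successor per group (carrying the conjunction of its group, with a self-loop) yields a model on $1+m=\bigO{w(n)}$ worlds whose only non-trivial paths are $r'\to w_j\to w_j\to\cdots$, so the extent is $\leq 1$.

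For the lower bound I would reuse the contradicting-subformula idea. Let $\varphi_m\dfn\bigand_{i=1}^m\EX\vec{c}(i)$, where $\vec{c}(i)$ is the conjunction of $\ceil{\log m}$ literals encoding $i$ in binary, as in Theorem~\ref{thm:min-size-af}. This formula is existential, satisfiable, and of length $\bigO{m\log m}$; as $\vec{c}(i)$ and $\vec{c}(j)$ contradict for $i\neq j$, the $m$ successors witnessing the conjuncts are pairwise distinct, forcing at least $m$ worlds in any model. Writing $n\dfn\size{\varphi_m}\in\bigO{m\log m}$ and using $w^{-1}(x)=x\ln x$ together with $w(n/k)\ge\tfrac1k w(n)$ (exactly as in Theorem~\ref{thm:min-size-af}) gives model size $\ge m\in\bigOmega{w(n)}$; the companion bound $\sum_i\psize{\vec{c}(i)}\ge m\log m$ of Lemma~\ref{lem:contradict} certifies that this length-versus-size tradeoff is tight. (For weak bases this family is taken over a base expressing the needed literals and transported via the clone results of Section~\ref{sec:clones}.) The main obstacle is not an estimate but the correctness bookkeeping of the upper bound: one must verify that root constraints coming from distinct $\E$-atoms never conflict — which holds precisely because they are all read off the single world $r$ — and that the forward requirements are genuinely independent propositional formulas to which Lemma~\ref{lem:merge-upper-bound} applies, the most delicate point being the per-operator case split for $\U$ and $\RLS$ between discharge at the root and at a stepping successor of extent $1$.
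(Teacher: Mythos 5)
Your proof is correct and follows essentially the same route as the paper: use the monotonicity of the top-level combination to pass to a conjunction of the $\E$-atoms true in a fixed model, replace each temporal operator by a propositional root constraint plus (possibly) a single $\EX$-style forward requirement, and then merge the non-contradicting forward requirements into $\bigO{w(n)}$ reflexive successors via Lemma~\ref{lem:merge-upper-bound}. The lower-bound half you add is not needed for this statement (the theorem only asserts the $\bigO{w(n)}$ upper bound and extent $\leq 1$); the paper proves the matching $\bigTheta{w(n)}$ bound, for the standard base only, in the theorem that follows.
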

\begin{proof}
Let $\varphi \in \B_1(C)$ be existential and satisfied by a model $\calM = (\calK,w)$.
It holds $\varphi = f(\psi_0,\psi_1,\ldots,\psi_m)$ for a monotone Boolean combination $f$, $\psi_0 \in \B_0(C)$, and $\E$-preceded arguments $\psi_1, \ldots, \psi_m$.

Let $I \dfn \{i \in \{0,\ldots,m\} \mid \calM \vDash \psi_i \}$.
Our goal is to transform $\calM$ to a model of $\varphi$ of size $\bigO{w(n)}$.
By the monotonicity of $f$, any transformation of $\calM$ that preserves the truth of $\bigwedge_{i \in I} \psi_i$ will suffice.
Similarly as in Theorem~\ref{thm:flat-ctl-upper-bounds}, assume that every $\psi_i$ be fulfilled on a distinct branch in $\calM$.

In the next step, we aim to simplify all temporal operators to $\EX$. On that account, we define for every $\psi_i$ a "reduct"  $\tau_\calM(\psi_i)$ such that $\tau_\calM(\psi_i)$ entails $\psi_i$ but is still true in $\calM$.
For instance, \wloss every $\psi_i = \EF \xi$ is fulfilled in $w$ or a successor of $w$.
Consequently, we define $\tau_\calM(\psi_i)$ as $\xi$ or $\EX \xi$.
Similarly, for $\psi_i = \E[\xi \U \xi']$ let $\tau_\calM(\psi_i) = \xi'$ or $\tau_\calM(\psi_i) = \xi \land \EX \xi'$.
The cases $\psi_i = \EG\xi$ and $\psi_i = \E[\xi' \RLS \xi]$ already imply $\calM \vDash \xi$. As $\calM$ can be assumed reflexive, in both cases let $\tau_\calM(\psi_i) \dfn \xi$.

Then by definition, the conjunction $\bigwedge_{i \in I} \tau_\calM(\psi_i)$ entails $\varphi$ on reflexive models and is satisfiable.
It can be written as\[
\varphi' \dfn \bigwedge_{i=1}^k \xi_i \land \bigwedge_{i=k+1}^{\ell} \EX \xi_i\text{,}
\]
where $\size{\varphi'} \in \bigO{\size{\varphi}}$ and $\xi_i \in \B_0(C)$ for $i \in [\ell]$.
By Lemma~\ref{lem:merge-upper-bound}, we can satisfy $\xi_1,\ldots,\xi_k$ in $w(\size{\varphi'}) \in \bigO{w(\size{\varphi})}$ worlds.
\end{proof}

\begin{theorem}
Existential $\B_1$ has optimal model size $\bigTheta{w(n)}$ and extent $\geq 1$.
\end{theorem}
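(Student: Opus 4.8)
The $\bigO{w(n)}$ size upper bound and the extent bound $\leq 1$ are provided by the previous theorem, instantiated with the standard base $C = \{\land,\lor,\neg\}$. Hence only the matching lower bounds---$\bigOmega{w(n)}$ on the size and $\geq 1$ on the extent---remain to be shown.

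The plan for the size bound is to force $m$ pairwise distinct successors of the root using $\EX$-formulas whose propositional parts are pairwise contradicting, encoded as compactly as Lemma~\ref{lem:contradict} permits. Over $\ceil{\log m}$ fresh propositions, let $\vec{c}(i)$ be the conjunction of literals representing $i \in \{0,\ldots,m-1\}$ in binary, exactly as in Theorem~\ref{theorem:model-size-ax-lower}, and put
\[
\varphi_m \dfn \bigand_{i=0}^{m-1} \EX \vec{c}(i)\text{.}
\]
Each $\varphi_m$ is a conjunction---thus a monotone Boolean combination---of $\E$-preceded formulas of temporal depth one, so it is existential and lies in $\B_1$. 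It is satisfiable by a root equipped with $m$ successors, the $i$-th labelled so as to satisfy $\vec{c}(i)$ and carrying a reflexive edge for seriality, and its length is $\bigO{m \log m}$.

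Next I would argue the model-size lower bound. For $i \neq j$ the binary codes differ in at least one bit, so $\vec{c}(i) \land \vec{c}(j)$ is unsatisfiable, and no single successor of the root can witness two distinct conjuncts $\EX \vec{c}(i)$ and $\EX \vec{c}(j)$. Therefore every model of $\varphi_m$ contains $m$ pairwise distinct successors of its root and has size at least $m$. That this encoding cannot be beaten asymptotically is precisely Lemma~\ref{lem:contradict}: any $m$ pairwise contradicting propositional formulas require total size $\geq m \log m$, which is why the threshold is $w(n)$ and nothing smaller.

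To read off the stated asymptotics I would reuse the $w$-scaling from Theorem~\ref{thm:min-size-af}. Choosing a constant $k$ with $\size{\varphi_m} \leq k\,m\ln m$ for all large $m$ and setting $n \dfn k\,m\ln m$, the identity $w^{-1}(x) = x\ln x$ gives $w(n/k) = m$, and since $w(n/k) \geq \tfrac{1}{k} w(n)$, every model of $\varphi_m$ has size $\geq m = w(n/k) \in \bigOmega{w(n)}$, which is the desired size lower bound. For the extent, the existential formula $p \land \EX \neg p$ is satisfiable and forces the root to satisfy $p$ while some successor satisfies $\neg p$; that successor differs from the root, so the extent is $\geq 1$. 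The only real subtlety is verifying that the family is genuinely existential and simultaneously compact (length $\bigO{m\log m}$) while still pinning down $m$ distinct worlds; the conversion through $w$ is then routine.
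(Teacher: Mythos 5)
Your proposal is correct and follows essentially the same route as the paper: the paper's witness family is $\varphi_m \dfn \bigwedge_{i=1}^{m+1}\EF\vec{c}(i)$ with the same binary-code conjunctions, the same length $\bigO{m\log m}$, the same count of $m$ pairwise contradicting witnesses forcing size $\geq m$ and extent $\geq 1$, and the same conversion through $w$; your use of $\EX$ in place of $\EF$ is an immaterial variant. The extra appeal to Lemma~\ref{lem:contradict} and the separate extent formula are harmless additions the paper does not need.
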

\begin{proof}
Consider $(\varphi_m)_{m \in \N}$ defined by $\varphi_m \dfn \bigand_{i = 1}^{m+1} \EF \vec{c}(i)$, with $\vec{c}(i)$ representing $i$ as binary vector as in Theorem~\ref{thm:min-size-af}.
$\varphi_m$ has length $\bigO{m \log m}$, but only models of size $\geq m$ and extent $\geq 1$.
\end{proof}

\section{Restricted Boolean clones}\label{sec:clones}

Post's lattice of Boolean clones enormously helps to study the
different nature of Boolean functions.
Regarding the propositional satisfiability problem, Lewis showed that the clones
containing $\CloneSE$ are NP-hard, while the
problem is tractable when restricted to arbitrary other clones \cite{lewis79}.

The Boolean clone $\CloneSE$ is the clone of \emph{1-separating}
functions. A function $f(b_1, \ldots, b_n)$ is 1-separating if it has
one argument $b_i$ that is always one if $f$ is one; or equivalently,
if it can be expressed using only the negated implication $\nimp$.
In this section we show that the same dichotomy as above holds for CTL, in the sense that all lower bounds already emerge for the $\CloneSE$ clone.
For the upper bounds of tractable fragments of CTL, see Meier et al.\ \cite{TLPaper}.

In the next lemma, we require the term \emph{short representation}. For a Boolean function $f(\varphi_1,\ldots,\varphi_n)$ to have a \emph{short representation in the base $C$}, it has to be equivalent to a formula $g(\varphi_1,\ldots,\varphi_n)$ using only functions from $C$, with moreover every argument $\varphi_1,\ldots,\varphi_n$ occurring at most once in $g$. For example, $\land$ has a short representation in $\{\neg,\lor\}$ via $\land(\varphi_1,\varphi_2) \equiv \neg(\lor(\neg \varphi_1, \neg \varphi_2))$, whereas $\oplus$ (exclusive or) has none in $\{\land,\lor,\neg\}$.

\begin{lemma}\label{lem:base-translation}
    Let $C$ be a base such that $[C] = \CloneBF$, and let $T \subseteq \TL$. Then every $\varphi \in \B(T)$ has a logspace-constructible, logically equivalent formula $\varphi' \in \B(C,T)$ with $\size{\varphi'} \in \bigO{\size{\varphi}}$.
\end{lemma}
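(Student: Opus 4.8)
The plan is to build $\varphi'$ from $\varphi$ by replacing every Boolean connective $\land$, $\lor$, $\neg$ with a logically equivalent \emph{short representation} over $C$, walking the parse tree of $\varphi$ from the leaves upward and copying all path quantifiers and temporal operators of $T$ verbatim. The whole argument then reduces to a single fact: each of $\neg$, $\land$, $\lor$ has a short representation in $C$, i.e.\ a $C$-formula computing it in which every argument occurs at most once. Granting this, let $c$ be the maximal size of these three fixed gadgets. Since no argument is ever duplicated, a connective node of $\varphi$ is replaced by at most $c$ symbols wrapped around its already translated children, and a trivial induction gives $\size{\varphi'} \le c\cdot\size{\varphi} \in \bigO{\size{\varphi}}$. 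Logical equivalence is immediate, as every local replacement is an equivalence and every temporal subformula survives unchanged; and the map is a purely local relabelling of the parse tree, hence computable by a logspace transducer that scans the input, emits the constant-size gadget for each connective, and keeps only a pointer into the input together with constantly much bracket bookkeeping.

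It remains to produce the short representations, which is the heart of the matter and the only place where $[C] = \CloneBF$ is used. First fix a fresh proposition $p \notin \SF{\varphi}$ and constant-size $C$-formulas $\chi_\bot$, $\chi_\top$ over $p$ computing the two constant functions; these exist because both constant functions lie in $\CloneBF = [C]$. Since $p$ occurs only inside these constant-valued gadgets, substituting them neither changes any truth value nor raises the occurrence count of any genuine argument, and $\varphi'$ stays logically equivalent to $\varphi$ (its value is independent of $V(p)$). All short representations below may therefore use $\chi_\bot$, $\chi_\top$ freely, each contributing only $\bigO{1}$ symbols.

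To obtain $\neg$: since $[C] = \CloneBF$ is not contained in the monotone clone $\logicOpFont{M}$, some $f \in C$ is non-monotone, so there are inputs differing in a single coordinate $i$ along which $f$ strictly decreases; fixing the remaining $\arity{f}-1$ coordinates to the corresponding constants (via $\chi_\bot$, $\chi_\top$) turns $f$ into the unary map $x \mapsto \neg x$, a short representation of $\neg$. To obtain $\land$: since $[C]$ is not contained in the affine clone $\logicOpFont{L}$, some $g \in C$ is non-affine, and a standard restriction argument yields two coordinates such that fixing all others to suitable constants produces a binary, non-affine function $h(x,y)$ depending on both arguments. Every such $h$ is one of the eight binary functions that depend on both arguments and are non-affine (the conjunction and disjunction, their negations, the two implications, and the two non-implications $x \nimp y$, $y \nimp x$), and each of these equals $\land$ applied to possibly negated inputs under a possibly negated output. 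Composing $h$ with the negation gadget just constructed, on the relevant inputs and/or on the output, therefore yields $x \land y$ with $x$ and $y$ each occurring exactly once; finally $\lor$ follows from this $\land$ and $\neg$ by De Morgan, again using each argument once.

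The main obstacle is exactly the linear size bound. A naive substitution expressing a connective by an arbitrary $C$-formula could repeat its arguments and make the formula blow up exponentially along the nesting of connectives, so everything hinges on the \emph{short} (occurrence-at-most-once) representations, and in particular on the fact that the binary non-affine restriction is obtained by fixing the other coordinates to constants rather than by identifying variables. Establishing the three short representations is thus the technical core, and it rests on the characterisation of $\CloneBF$ as the clone escaping all of Post's maximal clones—here only escaping the monotone clone $\logicOpFont{M}$ and the affine clone $\logicOpFont{L}$ is needed. The remaining points, namely discharging the constants through the fresh proposition $p$ and checking that the parse-tree relabelling runs in logarithmic space, are routine.
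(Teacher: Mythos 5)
Your proposal is correct, and the translation mechanism itself---replace each of $\neg,\land,\lor$ by a fixed \emph{short} representation over $C$ in a purely local, parse-tree-driven way, so that the occurrence-at-most-once property yields $\size{\varphi'}\le c\cdot\size{\varphi}$ and the relabelling is a logspace transduction---is essentially the paper's proof. The genuine difference lies in where the short representations come from: the paper simply cites Lewis for the fact that $\neg,\land,\lor$ have short representations in every base of $\CloneBF$, whereas you prove it from scratch via Post's lattice (a non-monotone $f\in C$ restricted along a sensitive coordinate gives $\neg$; a non-affine $g\in C$ restricted by constants to a binary non-affine function gives $\land$ up to negations of inputs and output; De~Morgan gives $\lor$; and the needed constants are discharged through a fresh proposition $p$ whose value cannot influence $\varphi'$, preserving logical equivalence). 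That self-contained argument is sound---in particular, obtaining the binary non-affine restriction by \emph{fixing} the remaining coordinates to constants, rather than identifying variables, is exactly what keeps each argument occurring once, and all eight binary non-affine functions depending on both arguments are indeed $\land$ up to negations. What your route buys is independence from the external reference and an explicit view of which maximal clones ($\logicOpFont{M}$ and $\logicOpFont{L}$) are being escaped; what it costs is length, and the ``standard restriction argument'' for the non-affine case would still need to be spelled out (minimal-degree monomial of the GF(2) polynomial) to be fully rigorous. One small caution on the logspace claim: matching a separator or closing parenthesis to its governing function symbol requires a logarithmic-size counter, not merely ``constant bracket bookkeeping''; the paper is explicit about this counting step.
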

\begin{proof}
In any base $C$ with $[C] = \CloneBF$, the functions $\neg,\land,\lor$ have short representations \cite{lewis79}.
Let $f_\neg(x), f_\land(x,y)$ and $f_\lor(x,y)$ be formulas over $C$ that are short representations of $\neg(x)$, $\land(x,y)$ and $\lor(x,y)$. (Due to commutativity, we can assume that the order in which the arguments appear in $f_\land$ and $f_\lor$ is the same as in $\land$ and $\lor$.)

For $g \in \{\neg,\land,\lor\}$, we define the strings $f_g^p$ (the \emph{prefix} of $f_g$, \ie, the symbols of its body until before its first argument) and $f_g^s$ (the \emph{suffix} of $f_g$, the symbols of its body after its last argument). For $g \in \{\land,\lor\}$, furthermore we define its \emph{middle part} $f_g^m$, \ie, the symbols in $f_g$ between its arguments. For example, $f_\land(y,z)$ can be written down as $f_\land^p \circ y \circ f_\land^m \circ z \circ f_\land^s$, where $\circ$ is the concatenation operation.

Now define $\varphi'$ as a symbol-wise translation of $\varphi$: Any proposition or temporal operator remains unchanged. Any "$g($", for $g \in \{\neg,\land,\lor\}$, is mapped to $f_g^p$. The argument separator "$,$" is mapped to $f_g^m$, where $g$ is the function symbol whose arguments are separated. Finally, any "$)$" is mapped to $f_g^s$, where $g$ is the function symbol whose argument list is closed by "$)$". Since it is possible in logspace to find the corresponding function symbol of a "$,$" or "$)$" (\eg, by going backwards and counting opening and closing parentheses), the whole procedure is implementable in logspace.
\end{proof}

We introduce an equivalence relation between formulas, \emph{frame-equivalence}, that is weaker than logical equivalence but stronger than the equi-satisfiability relation. In particular, this notion also relates the size and extent of satisfying structures.

Two satisfiable formulas $\varphi,\psi$ are called \emph{frame-equivalent} if for every model $(W,R,V,w)$ of $\varphi$ there is a model $(W,R,V',w)$ of $\psi$ (\ie, only the valuations of the propositions are different) and vice versa. Any two equivalent formulas are also frame-equivalent, but in general not the other way around.

\medskip

This notion is used in the next lemma, which shows that certain formulas using the constant function $\top$ have frame-equivalent formulas also without $\top$. This idea is originally due to Lewis, who establish NP-hardness for the $\CloneSE$-fragment of propositional logic, which cannot express $\top$.

Let $\varphi$ be a CTL formula.
$\varphi$ is \emph{non-Boolean} if it is not a proper Boolean combination, \ie, it is a proposition or starts with a CTL operator.
A subformula $\psi \in \SF{\varphi}$ is a \emph{temporal argument} if $\psi$ is directly under the scope of a temporal operator in $\varphi$.
$\varphi$ is now said to be \emph{pseudo-monotone} if $\varphi$, and all temporal arguments $\psi \in \SF{\varphi}$, are Boolean combinations $f(\xi_1,\ldots,\xi_n)$ of non-Boolean formulas in such way that $f$ is monotone in every argument of nonzero temporal depth.
For example, $\AG(\EX \neg p \land \neg q)$ is pseudo-monotone,
but $\AG(\neg \AX p \land \neg q)$ is not (because $\neg \AX p \land \neg q$ is not monotone in the argument $\neg \AX p$).
Similarly, $\neg \EF(\AX p \lor q)$ is not pseudo-monotone, despite all stated formulas being equivalent.

\begin{lemma}\label{lem:remove-top}
Let $C$ be a base such that $\land \in [C]$. Let $k \in \N$ and $T \subseteq \TL$. If $\varphi \in \B_k(C, T)$ is pseudo-monotone, then $\varphi$ has a logspace-constructible, frame-equivalent formula $\psi \in \B_k(C \setminus \{\top\}, T)$ such that $\size{\psi} \in \bigO{\size{\varphi}}$.
\end{lemma}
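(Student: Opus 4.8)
The plan is to eliminate the constant $\top$ by replacing it with a single fresh proposition $t$ that frame-equivalence lets us pin to true, and to use pseudo-monotonicity to guarantee that $t$ ends up only in semantically positive positions. First I would fix a proposition $t$ not occurring in $\varphi$ and build $\psi$ by a structural recursion over $\varphi$. Wherever $\top$ is \emph{absorbed} by a surrounding connective (as in $\top \land \chi \equiv \chi$, or more generally inside a maximal Boolean combination where $\top$ is merely a constant argument of the realizing function), I would fold the constant away and re-realize the resulting Boolean function over $C \setminus \{\top\}$; I would substitute the surrogate $t$ only where $\top$ genuinely survives as a constant-true subformula. The forward direction of frame-equivalence is then immediate: given a model $(W,R,V,w) \vDash \varphi$, set $V'(t) \dfn W$ and leave the other propositions unchanged, so that $t$ behaves exactly like $\top$ and $(W,R,V',w) \vDash \psi$.

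The content is the backward direction, which requires raising $t$ to true everywhere in a model of $\psi$ while preserving satisfaction; this works precisely when every occurrence of $t$ is positive. Here pseudo-monotonicity is essential. Its force is that every top-level combination and every temporal argument is monotone in all of its nonzero-temporal-depth arguments, so a non-monotone connective (such as $\leftrightarrow$) can only ever join subformulas of temporal depth $0$. I would use this to argue that a maximal temporal-depth-$0$ subformula is never combined non-monotonically with a temporal atom without being folded together with it, and hence that a constant-true residue can appear only as a whole temporal argument (under a temporal operator, which is monotone in its argument) or in a monotone slot of an enclosing combination. Consequently each surviving $t$ sits in a monotone position, so setting $t$ true everywhere preserves the truth of $\psi$; and with $t \equiv \top$ this recovers a model of $\varphi$ on the same frame, changing only the valuation.

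It then remains to check the bookkeeping: the recursion is local, so $\size{\psi} \in \bigO{\size{\varphi}}$, the temporal depth is unchanged (so indeed $\psi \in \B_k(C\setminus\{\top\},T)$), and the construction is logspace-computable, using short representations to re-realize the folded functions over $C \setminus \{\top\}$ with only linear blow-up. The hypothesis $\land \in [C]$ is exactly what supplies the expressive power needed to realize these folded, monotone-in-the-temporal-atoms functions without appealing to $\top$.

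I expect the main obstacle to be the realizability step together with the polarity claim. One must verify that partially evaluating $\top$ inside an arbitrary $C$-connective leaves a function still expressible over $C \setminus \{\top\}$ in linear size, the genuinely non-expressible case being exactly a collapse to constant true, which has to be shown to occur only in a monotone position so that the single re-valuation $V(t)\dfn W$ is legitimate for the whole, possibly deeply nested, formula at once. A secondary technical point is to dispose of any $\bot$ produced by folding patterns like $\neg\top$, arguing that such a contradiction either is expressible over $C\setminus\{\top\}$ or, sitting in a monotone position of a satisfiable formula, can be accommodated without breaking frame-equivalence.
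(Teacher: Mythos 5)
There is a genuine gap, and it sits exactly where you place your confidence: the claim that pseudo-monotonicity forces every surviving occurrence of $t$ into a monotone position. Pseudo-monotonicity only requires the Boolean combinations to be monotone in their arguments of \emph{nonzero} temporal depth; it says nothing about arguments of depth $0$, and $\top$ is precisely such an argument. So a constant-true residue can perfectly well survive in a non-monotone slot. Concretely, take $C \supseteq \{\oplus, \land, \top\}$ and $\varphi = \AG(p \oplus \top)$: this is pseudo-monotone (the only non-Boolean argument $p$ has depth $0$, so the monotonicity condition is vacuous), the fold $p \oplus \top \equiv \neg p$ need not be expressible over $C\setminus\{\top\}$ (e.g.\ $\neg \notin [\{\oplus,\land\}]$), so your construction must keep $t$ and produce $\AG(p\oplus t)$ — but a model of that formula with $t$ false and $p$ true at some reachable world is destroyed by setting $V'(t) = W$. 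Your own observation that non-monotone connectives can only join depth-$0$ subformulas actually points at the problem rather than away from it. The secondary issues you flag (realizability of partially evaluated connectives over $C\setminus\{\top\}$, which can fail for non-constant residues, and the treatment of $\bot$) are likewise real and are not resolved by the proposal.

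The paper avoids the polarity question entirely by a different device, which is where the hypothesis $\land \in [C]$ does its work: it sets $\psi \dfn f(\varphi',t)$ with $f$ realizing $\land$, where $\varphi'$ replaces every $\top$ by $t$ \emph{and additionally} wraps every temporal argument $\xi$ as $f(\xi,t)$ (and both arguments of binary operators likewise). This pins $t$ to true, already under the original valuation $V$, at every world where one of these Boolean combinations is asserted to hold; hence raising $t$ to true globally does not change the value of any $t$ replacing a $\top$ at the worlds that matter, regardless of its polarity. Pseudo-monotonicity is then needed only for the temporal arguments, whose truth can genuinely spread to more worlds under $V'$ and is propagated by an induction on temporal depth. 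Note also that the bounded depth $k$ is essential for the size bound: since $f$ need not be a short representation, each wrapping can multiply the argument's size by a constant $c$, giving blow-up $c^{k+1}$, which is $\bigO{1}$ only because $k$ is fixed — your assertion that the recursion is "local" and hence linear glosses over this. To repair your argument you would need to either adopt the paper's wrapping trick or find some other mechanism that forces $t$ to be true at every evaluation point of a Boolean combination before the revaluation is performed.
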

\begin{proof}
Let $t$ be a proposition that does not occur in $\varphi$. As $\land \in [C]$, the formula $x \land y$ can be written expressed as $f(x,y)$, with $f$ using only functions in $C$.

The formula $\psi$ is now defined as $f(\varphi', t)$, where $\varphi'$ is obtained from $\varphi$ by replacing every occurrence of $\top$ with $t$ and every subformula $QO(\xi)$, for $QO \in T$, with $QO(f(\xi, t))$, and $Q[\xi O \xi']$ with $Q[f(\xi,t)O f(\xi',t)]$.
As the temporal depth is at most $k$, the formula size increases at most by the constant factor $c^{k+1}$, where $c$ depends on the implementation of $f$ (which is not necessarily a short representation). The construction is possible in logspace with a straightforward recursive algorithm that uses only a constant recursion depth.

\smallskip

Every model of $\varphi$ can be converted to a model of $\psi$ by setting $t$ true in every world, as $t$ is then equivalent to $\top$. Conversely, if $\psi$ has a model $\calM$ where $t$ holds in every world, then $\calM$ is a model of $\varphi$.
Consequently, to prove the frame-equivalence of $\varphi$ and $\psi$, we demonstrate that every model $\calM$ of $\psi$ can be enriched to have $t$ labeled in every world.
Formally, given a model $(W,R,V,w)$, we define the valuation $V'$ as $V'(t) \dfn W$, and $V'(p) \dfn V(p)$ for $p \neq t$.
We show then by induction that all subformulas of the form $f(\xi,t) \in \SF{\psi}$ are preserved in all worlds $w \in W$, that is, $(W,R,V,w) \vDash f(\xi,t)$ implies $(W,R,V',w) \vDash f(\xi,t)$.

The induction is on the temporal depth of $\xi$. Let $(W,R,V,w) \vDash f(\xi,t)$. If $\xi \in \calB_0$, then the statement is clearly true. If $\td(\xi) = n > 0$, then $\xi$ is a Boolean combination of non-Boolean formulas $\alpha_1, \ldots, \alpha_k$ such that $\td(\alpha_i) < n$ for all $i$. Every $\alpha_i$ is either a proposition, of the form $QO\beta_i$ or $Q[\beta_i O \gamma_i]$. If $\alpha_i \in \PS$, then $\alpha_i \neq t$, so obviously $V(\alpha_i) = V'(\alpha_i)$. If $\alpha_i$ is of the form $QO\beta_i$ or $Q[\beta_i O \gamma_i]$, then $\beta_i$ and $\gamma_i$ are of the form $f(\xi',t)$. By induction hypothesis, for all $u \in W$, $(W,R,V,u)\vDash \beta_i$ implies $(W,R,V',u) \vDash \beta_i$, and similarly for $\gamma_i$.
By the semantics of the CTL operators, accordingly $(W,R,V,u) \vDash \alpha_i$ implies $(W,R,V',u) \vDash \alpha_i$ for all $u \in W$. Since $\xi$ is monotone in all arguments $\alpha_i \notin \PS$, $(W,R,V',w) \vDash \xi$ and consequently $(W,R,V',w) \vDash f(\xi,t)$ holds.
Since $\psi$ itself is of the form $f(\xi,t)$, the lemma follows.\phantom{$\Box$}
\end{proof}

Lewis's approach in propositional logic, substituting $\top$ with $t$, forces the truth of $t$ by replacing only $\varphi$ itself with $f(\varphi, t)$. For CTL, one could additionally surround the argument $\xi$ of all temporal operators in $\varphi$ with $f(\cdot, t)$. But then the pseudo-monotonicity is still necessary, as the example $\EX \top \land \neg \AX \top$ shows. It is unsatisfiable, but $(\EX(t \land t) \land \neg \AX (t \land t)) \land t$ is satisfiable.

\begin{theorem}\label{thm:s1-frame-reduction}
    Let $C$ be a base such that $\CloneSE \subseteq [C]$. Let $k \in \N$ and $T \subseteq \TL$. Then every $\varphi \in \B_k(T)$ has a logspace-constructible, frame-equivalent formula $\varphi' \in \B_k(C, T)$ such that $\size{\varphi'} \in \bigO{\size{\varphi}}$.
\end{theorem}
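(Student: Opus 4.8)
The plan is to build $\varphi'$ from $\varphi$ by three successive transformations, using Lemma~\ref{lem:base-translation} and Lemma~\ref{lem:remove-top} as black boxes and interposing a normal-form step that makes their hypotheses applicable. First I would put $\varphi \in \B_k(T)$ into negation normal form $\varphi_0$ by pushing negations inward: De Morgan for the Boolean connectives together with the temporal dualities $\neg\AX\psi\equiv\EX\neg\psi$, $\neg\AF\psi\equiv\EG\neg\psi$, $\neg\AG\psi\equiv\EF\neg\psi$, $\neg\A[\psi\U\xi]\equiv\E[\neg\psi\RLS\neg\xi]$, and their $\E$-counterparts. Since $\B(\cdot,T)$ is by definition closed under the duals of its operators and each rewriting preserves temporal depth, the resulting $\varphi_0\in\B_k(T)$ is logically equivalent to $\varphi$, of linear size, and logspace-constructible. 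Crucially, negations in $\varphi_0$ occur only in front of propositions; hence, writing $\varphi_0$ and each of its temporal arguments as a Boolean combination $f(\xi_1,\ldots,\xi_n)$ of non-Boolean subformulas, every atom $\xi_i$ of nonzero temporal depth enters $f$ only through $\land$ and $\lor$, so $f$ is monotone in it. Thus $\varphi_0$ is pseudo-monotone.

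Next I would lift the base. Because $\CloneSE\subseteq[C]$ we have $\nimp\in[C]$, and since $\top\nimp x\equiv\neg x$ and $\land\in[\{\nimp\}]=\CloneSE$, the enlarged set satisfies $[C\cup\{\top\}]=\CloneBF$. Applying Lemma~\ref{lem:base-translation} with target base $C\cup\{\top\}$ to $\varphi_0$ yields a logically equivalent $\varphi_1\in\B_k(C\cup\{\top\},T)$ of linear size, constructible in logspace. Pseudo-monotonicity is inherited by $\varphi_1$: the translation is a symbol-wise rewriting that leaves every temporal operator (hence the collection of non-Boolean atoms) in place and preserves logical equivalence, so the Boolean function combining the temporal atoms is the same truth function as in $\varphi_0$ and is therefore still monotone in those atoms.

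Finally I would eliminate $\top$. Here $\land\in[C]$ (again from $\CloneSE\subseteq[C]$), so Lemma~\ref{lem:remove-top}, applied to the pseudo-monotone $\varphi_1\in\B_k(C\cup\{\top\},T)$, produces a frame-equivalent $\psi\in\B_k((C\cup\{\top\})\setminus\{\top\},T)$ of linear size in logspace. We may assume $\top\notin C$, since if $\top\in C$ then $[C]=\CloneBF$ and Lemma~\ref{lem:base-translation} already gives a logically (hence frame-) equivalent formula in $\B_k(C,T)$ directly; under $\top\notin C$ the base above is exactly $C$. Setting $\varphi'\dfn\psi$ and chaining frame-equivalence through the three stages (logical equivalence implies frame-equivalence, and frame-equivalence is transitive) proves the theorem, with all three steps logspace-composable and each incurring only linear blow-up.

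The main obstacle is keeping Lemma~\ref{lem:remove-top} applicable, i.e.\ ensuring pseudo-monotonicity survives until the last stage; this is precisely why the negation-normal-form step must come first, and why one must argue that the base change of Lemma~\ref{lem:base-translation} does not destroy monotonicity in the temporal arguments. The clean way to settle the latter is to treat pseudo-monotonicity as a property of the \emph{induced truth function} rather than of the syntax, so that it is automatically preserved under logically equivalent rewritings. A secondary point to check is the clone identity $[C\cup\{\top\}]=\CloneBF$ together with the fact that the $\land$-gadget used inside Lemma~\ref{lem:remove-top} can be realized over $C$ without recourse to $\top$, so that the elimination genuinely lands in $\B_k(C,T)$.
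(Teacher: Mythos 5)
Your proposal is correct and follows essentially the same route as the paper's proof: negation normal form first, then Lemma~\ref{lem:base-translation} over the enlarged base $C\cup\{\top\}$ (using $[C\cup\{\top\}]=\CloneBF$), then Lemma~\ref{lem:remove-top} to eliminate $\top$, with frame-equivalence chained through the stages. Your extra care about pseudo-monotonicity surviving the base translation, and the side case $\top\in C$, only make explicit what the paper leaves implicit.
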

\begin{proof}
First, convert $\varphi$, which is over $\{\land,\lor,\neg\}$, to \emph{negation normal form}, \ie, negations $\neg$ appear only in front of propositional variables.
Next, adjoin the constant function $\top$ to the base $C$. From $\CloneSE \subseteq [C]$ it follows $[C \cup \{ \top \}] = \CloneBF$ \cite{Post41}. Consequently, $\varphi$ can be translated to an equivalent formula $\psi \in \B(C\cup\{\top\},T)$ by Lemma~\ref{lem:base-translation}. Since $\varphi$ is in negation normal form, the resulting formula $\psi$ is pseudo-monotone. Conjunction is expressible in $\CloneSE$, \ie, $\land \in [C]$ \cite{Post41}. Therefore we obtain a frame-equivalent formula $\varphi' \in \B_k(C,T)$ by Lemma~\ref{lem:remove-top}.
\end{proof}

It follows from the above result that all lower bounds, with respect to computational complexity or optimal model measures, already hold for any base $C$ that can express $\CloneSE$.

\begin{corollary}\label{cor:hardness-clones}
Let $\CloneSE \subseteq [C]$, $k\in \N$ and $T \subseteq \TL$.
Then $\SAT(\B_k(T)) \leq \SAT(\B_k(C, T))$.
\end{corollary}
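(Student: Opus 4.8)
The plan is to take the construction of Theorem~\ref{thm:s1-frame-reduction} as the reduction map itself. For each $\varphi \in \B_k(T)$ that theorem supplies a logspace-constructible $\varphi' \in \B_k(C, T)$ with $\size{\varphi'} \in \bigO{\size{\varphi}}$ which is frame-equivalent to $\varphi$; I set $r(\varphi) \dfn \varphi'$ and claim that $r$ witnesses $\SAT(\B_k(T)) \leq \SAT(\B_k(C,T))$.

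First I would observe that $r$ respects the logspace bound: this is exactly the logspace-constructibility guaranteed by Theorem~\ref{thm:s1-frame-reduction}, which in turn composes the logspace base-translation of Lemma~\ref{lem:base-translation} (applied to $C \cup \{\top\}$, which generates $\CloneBF$ since $\CloneSE \subseteq [C]$) with the logspace $\top$-elimination of Lemma~\ref{lem:remove-top}. As $r(\varphi) \in \B_k(C,T)$ by construction, both the target fragment and the resource bound are met.

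Second I would establish the correctness equivalence $\varphi \in \SAT(\B_k(T)) \Leftrightarrow r(\varphi) \in \SAT(\B_k(C,T))$, \ie, that $\varphi$ and $\varphi'$ are equi-satisfiable. Since frame-equivalence is strictly stronger than equi-satisfiability, a model of either formula yields a model of the other on the very same frame, merely re-valuating the propositions; hence one is satisfiable precisely when the other is.

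The only point requiring care is that frame-equivalence is defined only between two \emph{satisfiable} formulas, so the definition does not by itself cover inputs $\varphi$ that happen to be unsatisfiable. This is not a genuine obstacle: it suffices to note that the construction preserves satisfiability in both directions for arbitrary inputs. The base-translation step produces a \emph{logically} equivalent formula and so preserves satisfiability unconditionally, while the $\top$-removal step of Lemma~\ref{lem:remove-top} shows equi-satisfiability explicitly --- every model of $\varphi'$ is enriched to a model of the intermediate formula by labelling the fresh proposition $t$ in all worlds, and conversely every model of $\varphi$ gives a model of $\varphi'$ the same way. Consequently $\varphi$ is unsatisfiable exactly when $\varphi'$ is, completing the verification that $r$ is a logspace reduction and thus that $\SAT(\B_k(T)) \leq \SAT(\B_k(C,T))$.
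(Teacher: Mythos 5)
Your proposal is correct and matches the paper's intent exactly: the corollary is stated without a separate proof precisely because it is meant to follow by using the map of Theorem~\ref{thm:s1-frame-reduction} as the logspace reduction, with frame-equivalence yielding equi-satisfiability. Your extra observation that the construction also sends unsatisfiable formulas to unsatisfiable ones (since the base translation is a logical equivalence and the $\top$-elimination of Lemma~\ref{lem:remove-top} converts models in both directions) is a welcome bit of care that the paper leaves implicit.
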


\begin{corollary}
\label{thm:model-size-S1}
Let $\CloneSE \subseteq [C]$, $k\in \N$ and $T \subseteq \TL$.
Let $(\varphi_n)_{n \in \N}$ be an infinite family of satisfiable
$\B_k(T)$ formulas such that $\varphi_n$ has minimal model size
$s(n)$ and minimal model extent $e(n)$.
Then there is an infinite family $(\varphi'_n)_{n \in \N}$
of satisfiable $\B_k(C, T)$ formulas with minimal model size
$s(n)$ resp.\ extent $e(n)$, and $\size{\varphi'_n} \in \bigO{\size{\varphi_n}}$.
\end{corollary}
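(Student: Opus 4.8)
The plan is to read this off almost immediately from Theorem~\ref{thm:s1-frame-reduction}, whose entire purpose is to replace a formula over the standard base by a frame-equivalent one over $C$. The single observation that does all the work is that both measures in question---model size and model extent---are properties of the underlying Kripke \emph{frame} $(W,R)$ (together with the root) alone, and are completely independent of the valuation $V$. Indeed, by Definition~\ref{def:size-diam} the size is $\size{W}$ and the extent is determined by the number of distinct vertices on $R$-paths; neither quantity mentions $V$. Hence two rooted structures sharing the same frame have the same size and the same extent, and the same holds for $R$-generability, which is also a purely frame-theoretic property.

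First I would apply Theorem~\ref{thm:s1-frame-reduction} to each $\varphi_n \in \B_k(T)$, obtaining a logspace-constructible, frame-equivalent formula $\varphi'_n \in \B_k(C,T)$ with $\size{\varphi'_n}\in\bigO{\size{\varphi_n}}$. Indexing these exactly as the originals yields the required infinite family $(\varphi'_n)_{n\in\N}$, still satisfiable and with the stated size bound.

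Next I would argue that $\varphi_n$ and $\varphi'_n$ have identical minimal model size and extent. Since size and extent are taken over $R$-generable models (which we may assume without loss of generality by the remark preceding Section~\ref{sec:ctl-fragments}), frame-equivalence provides, for every such model $(W,R,V,w)$ of $\varphi_n$, a model $(W,R,V',w)$ of $\varphi'_n$ on the \emph{same} frame, and conversely; as noted, $R$-generability transfers because it depends only on the frame. Thus the collection of frames underlying $R$-generable models of $\varphi_n$ coincides with that of $\varphi'_n$. Minimising $\size{W}$ over this common collection gives the same value $s(n)$ for both formulas, and minimising the extent gives the same $e(n)$ for both, which is the claim.

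I do not expect a real obstacle here; the substance is entirely contained in Theorem~\ref{thm:s1-frame-reduction}, and the only thing requiring care is the direction of the two inequalities. From a minimal model of $\varphi_n$ one obtains a model of $\varphi'_n$ of size $s(n)$, so the minimal size of $\varphi'_n$ is at most $s(n)$; conversely any model of $\varphi'_n$ yields a model of $\varphi_n$ on the same frame, which by minimality has size at least $s(n)$, so the minimal size of $\varphi'_n$ is at least $s(n)$. The identical two-sided argument applies verbatim to the extent, and the corollary follows.
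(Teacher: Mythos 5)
Your proposal is correct and is exactly the argument the paper intends: the corollary is stated without proof precisely because it follows immediately from Theorem~\ref{thm:s1-frame-reduction}, together with the observation that size and extent depend only on the underlying frame, which frame-equivalence preserves in both directions. Your two-sided minimality argument is the right (and only) point needing care, and you handle it correctly.
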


After the lower bounds, the next theorem now generalizes the upper bounds with respect to the standard base $\{\land,\lor\,\neg\}$ to arbitrary bases of Boolean functions, under the condition that the $\AG$ operator is available. The approach is due to Hemaspaandra et~al.\ for a similar result in modal logic \cite{hss10}.

\begin{theorem}\label{thm:base-to-ag}
Let $C$ be a base and $T \subseteq \TL$.
Then every formula $\varphi \in \B(C,T)$ has a logspace-constructible, frame-equivalent formula $\psi \in \B_2(T \cup \{ \AG \})$.
\end{theorem}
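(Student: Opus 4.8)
The plan is to apply the standard definitional ("renaming", Tseitin-style) transformation, using the fact that $\AG$ can broadcast a constraint to every reachable world; this mirrors the modal construction of Hemaspaandra et al.\ \cite{hss10}. For every non-atomic subformula $\chi \in \SF{\varphi}$ I would introduce a fresh proposition $x_\chi$, and for an atom $p \in \PS$ set $x_p \dfn p$. Since $C$ is a \emph{finite} base, each connective $f \in C$ is equivalent to a fixed formula $\hat f$ over $\{\land,\lor,\neg\}$ of constant size (its DNF, say), and only finitely many such $\hat f$ occur; hence a symbol-wise replacement costs only a constant blow-up and is logspace-computable. The fresh variable $x_\chi$ can be indexed by the position of the head symbol of $\chi$ in $\varphi$, which is likewise logspace-computable (parenthesis counting as in Lemma~\ref{lem:base-translation}).

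Next I would write, for each non-atomic $\chi$, a \emph{definition} $\mathrm{Def}_\chi$ of temporal depth at most one: $\mathrm{Def}_\chi \dfn (x_\chi \leftrightarrow \hat f(x_{\chi_1},\dots,x_{\chi_n}))$ if $\chi = f(\chi_1,\dots,\chi_n)$ with $f\in C$; $\mathrm{Def}_\chi \dfn (x_\chi \leftrightarrow QO\,x_{\chi'})$ if $\chi = QO\chi'$ is a unary temporal formula; and $\mathrm{Def}_\chi \dfn (x_\chi \leftrightarrow Q[x_{\chi_1} O\, x_{\chi_2}])$ if $\chi = Q[\chi_1 O \chi_2]$ is binary. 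The output formula is then
\[
\psi \dfn x_\varphi \,\land\, \AG \bigwedge_{\chi} \mathrm{Def}_\chi\text{,}
\]
where $\chi$ ranges over the non-atomic subformulas. Each $\mathrm{Def}_\chi$ has temporal depth $\leq 1$ (its $x$'s are propositions and each operator in $T$ contributes one level), so the $\AG$ lifts the conjunction to depth $2$ and $\psi \in \B_2(T \cup \{\AG\})$; the only operators used are those of $\varphi$ (hence in $T$ or their duals) together with $\AG$. Since $\size{\SF{\varphi}} \in \bigO{\size{\varphi}}$ and each $\mathrm{Def}_\chi$ is of constant size, $\size{\psi} \in \bigO{\size{\varphi}}$.

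The correctness is a frame-equivalence argument, run in both directions. Given a model $(W,R,V,w)$ of $\varphi$, I would extend the valuation to $V'$ by putting $x_\chi$ true in exactly the worlds where $\chi$ holds under $V$ (and keeping the atoms of $\varphi$ unchanged). A one-line check shows that under $V'$ each $x_{\chi'}$ is then everywhere equivalent to $\chi'$, so the semantics of the single temporal operator in each $\mathrm{Def}_\chi$ matches, every $\mathrm{Def}_\chi$ holds in \emph{every} world, and $x_\varphi$ holds at $w$; thus $(W,R,V',w)\vDash\psi$ on the same frame. Conversely, from a model $(W,R,V',w)$ of $\psi$, reading $V'$ on the atoms of $\varphi$ as $V$, I would prove by induction on $\chi$ that for all worlds $u$ reachable from $w$, $(W,R,V,u)\vDash\chi \iff u \in V'(x_\chi)$. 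The inductive step uses that $\mathrm{Def}_\chi$ holds at $u$ (guaranteed by the $\AG$, whose $\G$ includes $u$ and all its descendants), and that the path-/successor-quantifying semantics of $QO$ only inspect worlds that are themselves reachable, so the induction hypothesis is available exactly where needed. With $\chi = \varphi$ and $u = w$ this yields $(W,R,V,w)\vDash\varphi$.

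The construction itself is routine; the points that need care are (i) that the finiteness of $C$ is what keeps the $\hat f$-expansions (and hence the blow-up) constant, (ii) that we only obtain \emph{frame}-equivalence, not logical equivalence, because the $x_\chi$ are genuinely new propositions whose valuation we are free to fix, and (iii) the mild semantic subtlety that $\AG$ enforces the definitions only on reachable worlds --- which is harmless precisely because all temporal operators quantify within the reachable part, so the induction in the backward direction never refers to an unreachable world. The constant functions $\top,\bot$, should they appear in $C$, are expressed over $\{\land,\lor,\neg\}$ with the help of any available atom and otherwise cause no difficulty.
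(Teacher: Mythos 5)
Your proposal is correct and follows essentially the same route as the paper's proof: introduce a fresh proposition $x_\alpha$ per subformula, broadcast the defining biconditionals (with each $f\in C$ expanded into its constant-size DNF over the $x$'s) via a single outer $\AG$, and verify frame-equivalence by an induction showing $x_\alpha$ agrees with $\alpha$ on all reachable worlds. The only differences are cosmetic (you identify $x_p$ with $p$ for atoms where the paper adds a biconditional, and you spell out the reachability subtlety that the paper handles by assuming $R$-generability).
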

\begin{proof}
We transform every $\varphi\in \B(C,T)$ to a formula $\psi \in \B_2(T \cup \{ \AG \})$ such that $\varphi$ and $\psi$ are frame-equivalent. For this we introduce a new atomic proposition $x_\alpha$ for every subformula $\alpha \in \SF{\varphi}$. The idea is that in any model the proposition $x_\alpha$ should be labeled exactly in the worlds where $\alpha$ is true as well.

The formula $\psi$ is defined as $x_\varphi \land \AG \xi$, where
\begin{align*}
\xi \dfn &\bigwedge_{\substack{\alpha \in \SF{\varphi}\\\alpha = f(\beta_1,\ldots,\beta_n)}} \left[x_{\alpha} \leftrightarrow \left(\bigvee_{\substack{\vec b \in \{0,1\}^n\\f(\vec b) = 1}} \bigwedge_{\substack{i \in [n]\\b_i = 1}} x_{\beta_i} \land \bigwedge_{\substack{i \in [n]\\b_i = 0}} \neg x_{\beta_i} \right)\right] \\[1em]
&\land \;\bigwedge_{\mathclap{\alpha \in \SF{\varphi} \cap \PS}} \;\left(x_\alpha \leftrightarrow \alpha\right) \land \bigwedge_{\mathclap{\substack{\alpha \in \SF{\varphi}\\\alpha = QO\beta}}} \left(x_\alpha \leftrightarrow QO x_\beta\right) \land \bigwedge_{\mathclap{\substack{\alpha \in \SF{\varphi}\\\alpha = Q[\beta O\gamma]}}} \left(x_\alpha \leftrightarrow Q[x_\beta O x_\gamma]\right)\text{.}
\end{align*}

Here, $\alpha = f(\beta_1,\ldots,\beta_n)$ means that $\alpha$ is a subformula that starts with a Boolean function $f \in C$ with $\arity{f} = n$. The cases where $\alpha$ is a proposition, or starts with a CTL operator, are handled similarly.

Let $\calK = (W, R, V)$ be a Kripke structure where $\xi$ globally holds.
We prove $(\calK,w) \vDash \alpha \Leftrightarrow (\calK,w)\vDash x_\alpha$ by induction on $\size{\alpha}$ for all $\alpha \in \SF{\varphi}$ and $w \in W$.
If $\alpha \in \PS$, then this is clear. If $\alpha$ starts with a temporal operator, say, $\alpha = QO \beta$, then due to $\xi$ it holds that $x_\alpha$ is true if and only if $QO x_\beta$ is true, which is by induction hypothesis equivalent to $QO \beta$ and hence to $\alpha$. The case of binary temporal operators is similar.
In the case of Boolean functions, the first conjunction in $\xi$ together with the induction hypothesis enforces the correct behaviour; this is easily verified from the definition of semantics of CTL in Section~\ref{sec:prelim}.

For the correctness of the reduction, consider a model $(\calK, w)$ of $\varphi$. For each subformula $\alpha \in \SF{\varphi}$, label $x_\alpha$ in all worlds $w'$ where $(\calK,w') \vDash \alpha$.
Call the resulting model $(\calK^*,w)$.
Then $(\calK^*, w)\vDash x_\varphi$, and again by the CTL semantics, $\xi$ is true in all worlds of $\calK^*$. As a result, $(\calK^*,w)\vDash \psi$.

Conversely, let $(\calK^*,w) \vDash \psi$. We can assume $(\calK^*,w)$ $R$-generable, so $\xi$ globally holds in $\calK^*$.
As a consequence, $(\calK^*,w)\vDash \varphi$ is shown similarly as the other direction.

It remains to show that $\xi$ (and hence $\psi$) is constructible in logarithmic space. Given a formula, it is possible to match parentheses, and consequently to iterate over all subformulas, in logarithmic space using a counter.
Note that each Boolean function $f \in C$ with arity $n$ may have up to $2^n$ satisfying assignments, but for every given base $C$ the maximal arity is constant, hence the large disjunctions have only constantly many disjuncts.
\end{proof}

This result allows to use the polynomial time model checking algorithm of CTL (see Clarke et al.~\cite{clarke_automatic_1986}) on any fragment with the polynomial model property, even under arbitrary bases $C$. Simply translate the formula to a frame-equivalent $\B(\{\land,\lor,\neg\})$ formula first. As the translation has only polynomial blow-up, this preserves the property to have a polynomial model.

\begin{corollary}\label{cor:small-models-c-np}
If $C$ is a base and $\Phi \subseteq \B(C)$ has the polynomial model property, then $\SAT(\Phi) \in \NP$.
\end{corollary}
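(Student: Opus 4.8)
The plan is to describe a nondeterministic polynomial-time procedure for $\SAT(\Phi)$ and verify its correctness by combining the translation of Theorem~\ref{thm:base-to-ag} with the notion of frame-equivalence. The overall shape is the obvious ``guess a small model and check it'', but with one twist: in order to invoke the polynomial-time CTL model-checking algorithm of Clarke et al.~\cite{clarke_automatic_1986}, which is stated for the standard connectives, I would first move from the arbitrary base $C$ to the base $\{\land,\lor,\neg\}$.

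First, given an input $\varphi \in \Phi \subseteq \B(C)$, I would apply Theorem~\ref{thm:base-to-ag} to compute a frame-equivalent formula $\psi \in \B_2(\TL \cup \{\AG\})$ over the standard base. Since this translation is logspace-constructible, it runs in polynomial time and incurs only polynomial blow-up, so $\size{\psi}$ is polynomially bounded in $\size{\varphi}$. As $\psi$ uses only the standard Boolean connectives, the model-checking algorithm of \cite{clarke_automatic_1986} applies to it directly, sidestepping any need to reprove model checking for the base $C$.

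Next, I would transfer the polynomial model property from $\varphi$ to $\psi$ via frame-equivalence. By definition, every model $(W,R,V,w)$ of $\varphi$ induces a model $(W,R,V',w)$ of $\psi$ on the same underlying frame, and vice versa; in particular $\varphi$ and $\psi$ are equi-satisfiable, and any model of $\varphi$ of size $s$ yields a model of $\psi$ of the same size $s$. Hence, if $\varphi$ is satisfiable, the polynomial model property of $\Phi$ gives a model of $\varphi$ of size $\bigO{p(\size{\varphi})}$ for some fixed polynomial $p$, and the corresponding model of $\psi$ has the same polynomial size. The algorithm then nondeterministically guesses a rooted serial Kripke structure $\calM$ of that size together with its valuation, and uses the polynomial-time model checker to test $\calM \vDash \psi$, accepting exactly when the test succeeds.

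Correctness follows in both directions from equi-satisfiability: any accepting run exhibits a model of $\psi$, hence a model of $\varphi$, so $\varphi \in \SAT(\Phi)$; conversely a satisfiable $\varphi$ produces a polynomial model of $\psi$ that some run guesses and verifies. Since the heavy lifting---a frame-preserving translation to the standard base---is already done by Theorem~\ref{thm:base-to-ag}, the only real obstacle here is bookkeeping of the polynomial bounds: one must check that the guessed model size stays polynomial in the original input $\size{\varphi}$ rather than merely in $\size{\psi}$, which is guaranteed precisely because the translation has polynomial blow-up.
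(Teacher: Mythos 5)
Your proposal is correct and follows essentially the same route as the paper: translate $\varphi$ to a frame-equivalent formula over $\{\land,\lor,\neg\}$ via Theorem~\ref{thm:base-to-ag}, observe that frame-equivalence preserves model frames (hence model size) and the translation has only polynomial blow-up, then guess a polynomial-size model and verify it with the polynomial-time CTL model checker of Clarke et al. The paper states this argument only informally in the paragraph preceding the corollary; your write-up just spells out the same steps in more detail.
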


By Proposition~\ref{prop:model-size-ax}, the $\AX$ fragment of bounded temporal depth has the polynomial model property:

\begin{corollary}\label{cor:ax-in-np}
For all bases $C$ and $k \in \N$, $\SAT(\B_k(C,\AX)) \in \NP$.
\end{corollary}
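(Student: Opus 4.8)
The plan is to observe that this statement is an immediate consequence of two results established earlier: the polynomial model property for bounded-depth $\AX$-fragments, and the generic $\NP$-membership result for any fragment with small models.

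First I would invoke Proposition~\ref{prop:model-size-ax}, which states that for any base $C$ and any $k$, the fragment $\B_k(C,\AX)$ has model size upper bound $n^{\bigO{k}}$. The crucial point is that $k \in \N$ is a \emph{fixed} constant here — the corollary quantifies over each $k$ separately rather than treating $k$ as part of the input — so the exponent $\bigO{k}$ is constant and $n^{\bigO{k}}$ is a genuine polynomial in $n = \size{\varphi}$. Hence $\B_k(C,\AX)$ has the polynomial model property.

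Next, since $\B_k(C,\AX) \subseteq \B(C)$, I would apply Corollary~\ref{cor:small-models-c-np}, which guarantees that any subset of $\B(C)$ with the polynomial model property has its satisfiability problem in $\NP$. This directly yields $\SAT(\B_k(C,\AX)) \in \NP$, completing the argument. Note that Corollary~\ref{cor:small-models-c-np} already absorbs the work of translating away the arbitrary base $C$ (via the frame-equivalence of Theorem~\ref{thm:base-to-ag}) and of verifying a nondeterministically guessed model in polynomial time using CTL model checking, so no additional machinery is needed.

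The only point requiring care — and the closest thing to an obstacle — is confirming that the dependence on $k$ does not spoil polynomiality: the factor $\bigO{k}$ in the exponent must be treated as a constant. This is legitimate precisely because the statement is asserted for each individual $k \in \N$; for $k$ varying with the input the bound $n^{\bigO{k}}$ would no longer be polynomial, and the reduction to Corollary~\ref{cor:small-models-c-np} would break. Beyond this observation, the proof is a direct chaining of the two cited results.
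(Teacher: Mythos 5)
Your proposal is correct and matches the paper's argument exactly: the paper derives this corollary by observing that Proposition~\ref{prop:model-size-ax} gives the polynomial model property for $\B_k(C,\AX)$ with $k$ fixed, and then applying Corollary~\ref{cor:small-models-c-np}. Your additional remark that the exponent $\bigO{k}$ must be treated as a constant is a valid and worthwhile clarification of the same reasoning.
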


The same holds for flat CTL due to Theorem~\ref{thm:flat-ctl-upper-bounds}:

\begin{corollary}\label{cor:flat-in-np}
For all bases $C$, $\SAT(\B_1(C,\TL)) \in \NP$.
\end{corollary}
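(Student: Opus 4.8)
The plan is to obtain this result directly by combining the polynomial model property of flat CTL with the generic NP-membership criterion established earlier, so essentially no new argument is needed. Concretely, I would instantiate Theorem~\ref{thm:flat-ctl-upper-bounds} with $T = \TL$, which is an admissible choice since that theorem is stated for arbitrary $T \subseteq \TL$. This yields that $\B_1(C,\TL)$ has model size upper bound $\bigO{n^2}$; in particular every satisfiable $\varphi \in \B_1(C,\TL)$ possesses a model of polynomial size, so $\B_1(C,\TL)$ has the polynomial model property.

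With the polynomial model property in hand, I would simply invoke Corollary~\ref{cor:small-models-c-np}, whose hypothesis is exactly that $\Phi \subseteq \B(C)$ has the polynomial model property, to conclude $\SAT(\B_1(C,\TL)) \in \NP$. Unfolding the content of that corollary, the resulting nondeterministic algorithm guesses a candidate model of size $\bigO{n^2}$ and verifies it in polynomial time; the verification for an arbitrary base $C$ is handled by first passing to a frame-equivalent standard-base formula via Theorem~\ref{thm:base-to-ag} (incurring only polynomial blow-up) and then running the polynomial-time CTL model-checking procedure.

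Since both ingredients are already proven, there is essentially no remaining obstacle. The only two points worth verifying are that $T = \TL$ is permitted in Theorem~\ref{thm:flat-ctl-upper-bounds}, and that a $\bigO{n^2}$ bound indeed counts as ``polynomial'' for the purposes of Corollary~\ref{cor:small-models-c-np}; both are immediate, so the statement follows with no further work.
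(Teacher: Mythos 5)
Your proposal is correct and matches the paper's own derivation exactly: the paper obtains Corollary~\ref{cor:flat-in-np} by combining the $\bigO{n^2}$ model size bound of Theorem~\ref{thm:flat-ctl-upper-bounds} (the polynomial model property) with the generic $\NP$-membership criterion of Corollary~\ref{cor:small-models-c-np}, whose verification step likewise relies on Theorem~\ref{thm:base-to-ag} and polynomial-time CTL model checking. Nothing further is needed.
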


\section{Summary and conclusion}

The results of the previous sections are summarized in the following theorems.
They are also illustrated in Figure~\ref{fig:summary-compl} and \ref{fig:summary-model}.
The first table reproduces all completeness results in a compact way. All $\NP$ lower bounds stem from the propositional satisfiability problem $\SAT(\B_0)$. The $\NP$ upper bounds are all due to a polynomial model property and due to the fact that CTL model checking is in $\P$ \cite{clarke_automatic_1986}. The $\PSPACE$ lower bounds are all due to reduction from the canonical $\PSPACE$-complete problem $\TQBF$, and the upper bounds of $\AG$ and $\AX$ stem from the modal logics $\sfS4\sfD$ and $\sfK\sfD$. The $\{\AX,\AF\}$ fragment, not corresponding to any modal logic, poses an exception; a "pseudo-acyclic" canonical model was constructed for it in Lemma~\ref{lem:balloon-lemma}. Finally, the lower bounds for the $\EXP$-complete cases are shown by a generic reduction from $\APSPACE$, namely for the temporal operators $\AU$, $\A\RLS$, $\{\AG,\AX\}$ and $\{\AG,\AF\}$.

\begin{theorem}\label{thm:hardness-all-S1}
Let $C$ be a base such that $\CloneSE \subseteq [C]$. Let $\emptyset \subsetneq T \subseteq \TL$. Then $\SAT(\B(C, T))$ is
\begin{itemize}
    \item $\PSPACE$-complete if $T = \{ \AX \}$,
  \item logspace-equivalent to $\SAT(\B_2(C,T))$ otherwise, and consequently
\begin{itemize}
  \item $\PSPACE$-complete if $\{\AF \} \subseteq T \subseteq \{\AX, \AF\}$ or $T = \{\AG\}$,
  \item $\EXP$-complete otherwise.
\end{itemize}
\end{itemize}
Furthermore all membership results hold for arbitrary bases $C$.
\end{theorem}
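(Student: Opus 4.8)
The statement is a synthesis of the fragment-wise results proved earlier, so the plan is to assemble the matching upper and lower bounds for each admissible operator set and then read off the classification. I would first note that, up to duals, every nonempty $T$ is generated by a subset of $\{\AX, \AF, \AG, \AU, \A\RLS\}$, and partition these sets by which hardness result applies. A short case check shows that the only sets avoiding every trigger of Theorem~\ref{thm:expc} (namely $\AU$, $\A\RLS$, $\{\AG,\AX\}$ and $\{\AG,\AF\}$) are exactly $\{\AX\}$, $\{\AF\}$, $\{\AX,\AF\}$ and $\{\AG\}$; indeed, if $T$ contains none of these triggers then $\AU,\A\RLS\notin T$ forces $T\subseteq\{\AX,\AF,\AG\}$, and forbidding both $\{\AG,\AX\}$ and $\{\AG,\AF\}$ then leaves only those four sets. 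Every remaining $T$ contains a trigger and is a candidate for $\EXP$-completeness.

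For the upper bounds, which must hold for every base $C$, I would collect the clone-independent membership results: $\SAT(\B(C,T)) \in \EXP$ in general by Theorem~\ref{thm:all-c-in-exp} together with the clone-independent small model property of Theorem~\ref{thm:upper-size-all}; $\SAT(\B(C,T)) \in \PSPACE$ for $T \subseteq \{\AF,\AX\}$ by Theorem~\ref{thm:afax-in-pspace}; $\SAT(\B(C,\AG)) \in \PSPACE$ by Proposition~\ref{prop:agef-pspace-complete}, lifted to arbitrary $C$ through the frame-equivalent translation of Theorem~\ref{thm:base-to-ag} which for $T=\{\AG\}$ stays inside $\B_2(\AG)$; and $\SAT(\B(C,\AX)) \in \PSPACE$ from the classification of modal satisfiability underlying Theorem~\ref{thm:ax-pspace-completeness}. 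None of these invoke $\CloneSE \subseteq [C]$, which is what licenses the final sentence of the theorem.

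For the lower bounds I would work at temporal depth two over the standard base and transport them to $C$ via Corollary~\ref{cor:hardness-clones}, which yields $\SAT(\B_2(T)) \leqlogm \SAT(\B_2(C,T))$ whenever $\CloneSE \subseteq [C]$. The $\PSPACE$-hardness of $\B_2(\AF)$ and of $\B_2(\AG)$ is Theorems~\ref{thm:qbf-to-afeg} and~\ref{thm:qbf-to-agef}; the $\EXP$-hardness of $\B_2(T)$ for each trigger set is Theorem~\ref{thm:expc}; and for $T=\{\AX\}$ the $\PSPACE$-hardness is already contained in Theorem~\ref{thm:ax-pspace-completeness}. In the $\{\AF,\AX\}$ case I would use the inclusion $\B_2(\AF)\subseteq\B_2(\{\AF,\AX\})$ to place the hard instances inside the desired fragment before applying Corollary~\ref{cor:hardness-clones}.

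Finally I would combine the two bounds. For $T = \{\AX\}$ the matching $\PSPACE$ bounds give $\PSPACE$-completeness directly; here one crucially cannot pass to depth two, since $\B_2(\AX)$ is only $\NP$-complete by Proposition~\ref{prop:ax-k-npc}, which is exactly why this case is singled out. For every other $T$ let $\mathcal{K}$ be the relevant class ($\PSPACE$ in the three listed cases, $\EXP$ otherwise). The inclusion $\B_2(C,T) \subseteq \B(C,T)$ gives $\SAT(\B_2(C,T)) \leqlogm \SAT(\B(C,T))$ for free, and since $\SAT(\B(C,T)) \in \mathcal{K}$ while $\SAT(\B_2(C,T))$ is already $\mathcal{K}$-hard, the definition of hardness yields the converse reduction; hence $\SAT(\B(C,T)) \equivlogm \SAT(\B_2(C,T))$ and both are $\mathcal{K}$-complete. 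The main obstacle here is bookkeeping rather than a single deep lemma: one must check that the exhaustive case split leaves no operator set unclassified, that each depth-two hardness instance genuinely embeds into $\B_2(C,T)$ for the given $T$, and that the collapse to depth two is legitimate for all $T\neq\{\AX\}$ precisely because their lower bounds already hold at depth two.
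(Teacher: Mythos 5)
Your proposal is correct and takes essentially the same route as the paper: it assembles the clone-independent membership results (Theorem~\ref{thm:afax-in-pspace}, Proposition~\ref{prop:agef-pspace-complete} via Theorem~\ref{thm:base-to-ag}, Theorem~\ref{thm:all-c-in-exp}) with the depth-two lower bounds over the standard base (Theorems~\ref{thm:qbf-to-afeg}, \ref{thm:qbf-to-agef}, \ref{thm:expc}) transferred by Corollary~\ref{cor:hardness-clones}, and derives the logspace-equivalence from membership plus depth-two hardness exactly as intended. The one point to tighten is the general $\EXP$ upper bound for arbitrary bases: the small model property of Theorem~\ref{thm:upper-size-all} by itself only yields a nondeterministic exponential-time procedure, so you should instead (as you already do for the $\AG$ case) pass through the frame-equivalent, logspace-constructible translation of Theorem~\ref{thm:base-to-ag} and then apply Theorem~\ref{thm:all-c-in-exp} over the standard base.
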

\begin{proof}
The $\PSPACE$ upper bound for $T \subseteq \{\AF,\AX\}$ was shown in Theorem~\ref{thm:afax-in-pspace} for all bases. For $T \subseteq \{\AG\}$ this follows from Propositions~\ref{prop:agef-pspace-complete} and Theorem~\ref{thm:base-to-ag}. The general $\EXP$ upper bound is due to Theorem~\ref{thm:all-c-in-exp} and \ref{thm:base-to-ag}.

The $\AX$ lower bound is due to Theorem~\ref{thm:ax-pspace-completeness}. The hardness for the cases with temporal depth two is due to Theorems~\ref{thm:qbf-to-afeg} and \ref{thm:qbf-to-agef} combined with Corollary~\ref{cor:hardness-clones}. The $\EXP$ lower bounds follow from Theorem~\ref{thm:expc} and Corollary~\ref{cor:hardness-clones}.
\end{proof}

\begin{theorem}\label{thm:ax-c-comp}
Let $C$ a base such that $\CloneSE \subseteq [C]$. Let $k \in \N$. Then the problem $\SAT(\B_k(C,\AX))$ is $\NP$-complete. Furthermore it is in $\NP$ for every base $C$.
\end{theorem}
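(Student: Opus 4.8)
The plan is to derive this as an immediate consequence of the earlier analysis of the $\AX$-fragment together with the clone-transfer machinery of Section~\ref{sec:clones}. I would treat the upper and lower bound separately, as the membership holds for all bases while the hardness requires the assumption $\CloneSE \subseteq [C]$.

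First I would dispatch $\NP$-membership for an arbitrary base $C$. This is nothing new: it is exactly the content of Corollary~\ref{cor:ax-in-np}, which asserts $\SAT(\B_k(C,\AX)) \in \NP$ for every base $C$ and every $k \in \N$, with no assumption on $[C]$. Recall that this ultimately rests on the polynomial model-size bound $n^{\bigO{k}}$ from Proposition~\ref{prop:model-size-ax} (one guesses a model of polynomial size and verifies it, using that CTL model checking is in $\P$). Hence the ``furthermore'' clause of the theorem needs no further work.

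For the matching lower bound under $\CloneSE \subseteq [C]$, I would combine two previously established facts. By Proposition~\ref{prop:ax-k-npc}, the problem $\SAT(\B_k(\AX))$ over the standard base $\{\land,\lor,\neg\}$ is $\NP$-complete, hence in particular $\NP$-hard. By Corollary~\ref{cor:hardness-clones}, instantiated with the operator set $T = \{\AX\}$, there is a logspace reduction $\SAT(\B_k(\AX)) \leq \SAT(\B_k(C,\AX))$ whenever $\CloneSE \subseteq [C]$. Composing the $\NP$-hardness of the standard-base problem with this reduction then transfers $\NP$-hardness to $\SAT(\B_k(C,\AX))$, and together with the membership from the first step this yields $\NP$-completeness.

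There is essentially no genuine obstacle here, since all the real content has already been carried out upstream. The only points needing routine care are that the composition of two logspace-computable reductions is again logspace-computable, and that the reduction behind Corollary~\ref{cor:hardness-clones} respects the temporal-depth bound $k$ — the latter being guaranteed because Corollary~\ref{cor:hardness-clones} is stated with a fixed $k$, inheriting this from the frame-equivalence construction of Theorem~\ref{thm:s1-frame-reduction}, which preserves temporal depth. Thus the proof is a short assembly of Corollary~\ref{cor:ax-in-np}, Proposition~\ref{prop:ax-k-npc}, and Corollary~\ref{cor:hardness-clones}.
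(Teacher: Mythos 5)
Your proposal matches the paper's own proof exactly: the paper likewise obtains the upper bound for arbitrary bases from Corollary~\ref{cor:ax-in-np}, and the lower bound by combining Proposition~\ref{prop:ax-k-npc} with the depth-preserving reduction of Corollary~\ref{cor:hardness-clones}. The additional remarks on composing logspace reductions and on preservation of temporal depth are correct and only make explicit what the paper leaves implicit.
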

\begin{proof}
For the standard base $\{\land,\lor,\neg\}$ this is shown in Proposition~\ref{prop:ax-k-npc}. The lower bound therefore follows from Corollary~\ref{cor:hardness-clones}, and the upper bound follows from Corollary~\ref{cor:ax-in-np}.
\end{proof}

\begin{theorem}[Flat CTL]\label{thm:flat-npc}
Let $C$ a base such that $\CloneSE \subseteq [C]$, and $T \subseteq \TL$.
$\SAT(\B_1(C, T))$ is $\NP$-complete. Furthermore it is in $\NP$ for every base $C$.
\end{theorem}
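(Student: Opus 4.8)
The plan is to split the claim into its $\NP$ membership and $\NP$-hardness halves and assemble them from results already established for flat CTL and for clone transfer, in direct analogy to the proof of Theorem~\ref{thm:ax-c-comp}.

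First I would settle the membership half, which simultaneously gives the \enquote{furthermore} clause. Because $T \subseteq \TL$, every formula of $\B_1(C,T)$ is also a formula of $\B_1(C,\TL)$, so as languages $\SAT(\B_1(C,T)) = \SAT(\B_1(C,\TL)) \cap \B_1(C,T)$. The set $\B_1(C,T)$ is recognizable in logarithmic space, and by Corollary~\ref{cor:flat-in-np}---which rests on the optimal model size $\bigO{n^2}$ from Theorem~\ref{thm:flat-ctl-upper-bounds} together with polynomial-time CTL model checking---we have $\SAT(\B_1(C,\TL)) \in \NP$ for \emph{every} base $C$. Intersecting with a logspace-recognizable set keeps membership in $\NP$, so $\SAT(\B_1(C,T)) \in \NP$ for every base, which is both the \enquote{furthermore} statement and the easy direction of completeness.

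For the hardness half I would assume $\CloneSE \subseteq [C]$ and argue in two steps. Over the standard base, the temporal-depth-zero fragment $\B_0$ is exactly propositional logic, and $\B_0 \subseteq \B_1(T)$ for every $T \subseteq \TL$, since a propositional formula uses no temporal operator and has depth $0 \le 1$. As $\SAT(\B_0)$ is $\NP$-hard by Cook's theorem \cite{cook_complexity_1971}, so is $\SAT(\B_1(T))$. I would then transport this lower bound to the base $C$ through Corollary~\ref{cor:hardness-clones} with $k=1$, which provides $\SAT(\B_1(T)) \leq \SAT(\B_1(C,T))$ whenever $\CloneSE \subseteq [C]$; composing the two reductions yields the $\NP$-hardness of $\SAT(\B_1(C,T))$.

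Combining the two halves gives $\NP$-completeness for every base $C$ with $\CloneSE \subseteq [C]$. I do not expect a real obstacle, since all the substance lives in earlier results---the branch-shrinking polynomial model property (Theorem~\ref{thm:flat-ctl-upper-bounds}), the frame-equivalent clone reduction behind Corollary~\ref{cor:hardness-clones} (via Theorem~\ref{thm:s1-frame-reduction}), and Cook's theorem. The one point deserving a remark is the degenerate case $T = \emptyset$: then $\B_1(C,\emptyset)$ collapses to propositional logic over $C$, but nothing breaks, as membership still comes from Corollary~\ref{cor:flat-in-np} and hardness still comes from Corollary~\ref{cor:hardness-clones} applied to the already-$\NP$-hard $\SAT(\B_1(\emptyset))$.
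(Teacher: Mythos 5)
Your proposal is correct and follows exactly the paper's own two-line argument: membership via Corollary~\ref{cor:flat-in-np} (resting on the polynomial model property of Theorem~\ref{thm:flat-ctl-upper-bounds}) and hardness via Cook's theorem on $\B_0 \subseteq \B_1(C,T)$ transferred through Corollary~\ref{cor:hardness-clones}. The extra details you supply (the intersection with the logspace-recognizable fragment for $T \subsetneq \TL$, and the remark on $T=\emptyset$) are harmless elaborations of the same proof.
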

\begin{proof}
Applying Corollary~\ref{cor:hardness-clones}, the $\NP$-hardness already holds for $\SAT(\B_0)$ due to Cook \cite{cook_complexity_1971}. For the upper bound, see Corollary~\ref{cor:flat-in-np}.
\end{proof}

\medskip

Next we present the classification of optimal model measures. It is incomplete for the $\AX$ case with bounded temporal depth, as well as the fragments $\{\AF\}$ and $\{\AF,\AX\}$ of flat CTL.
All other upper and lower bounds are tight.

\begin{theorem}\label{thm:sizes-ctl}
Let $T$ be a non-empty set of temporal operators. Let $C$ be a base such that $\CloneSE \subseteq [C]$.
Let $k \geq 2$.

\begin{enumerate}
    \item If $T = \{\AX\}$, then the optimal model size is  $2^{\bigO{n}} \cap 2^{\bigOmega{\sqrt{n}}}$ for $\B(C,T)$ and $n^{\bigTheta{k}}$ for $\B_k(C,T)$.
    \item For other $T \subseteq \TL$ it is $2^{\bigTheta{n}}$ for $\B(C,T)$ and $\B_k(C,T)$.
    \item $\B(C,T)$ has optimal model extent $\bigTheta{n}$ if $T \in \{\{\AX\},\{\AG\}\}$ and $2^{\bigTheta{n}}$ otherwise. $\B_k(C,T)$ has optimal model extent $k$ if $T=\{\AX\}$, $\bigTheta{n}$ if $T=\{\AG\}$, and again $2^{\bigTheta{n}}$ otherwise.
\end{enumerate}

In the cases of flat CTL holds:

\begin{enumerate}[start=4]
    \item If $T \subseteq \{\AX,\AG\}$, then $\B_1(C,T)$ has optimal model size $\bigTheta{n}$ and extent $\size{T}$.
    \item If $T$ contains $\AU$, $\A\RLS$ or $\{\AG,\AF\}$, then $\B_1(C,T)$ has optimal model size $\bigTheta{n^2}$ and extent $\bigTheta{n}$.
    \item If $T$ contains $\AF$, then $\B_1(C,T)$ has optimal model size at least $\bigOmega{w(n)^2}$ and extent $\bigTheta{n}$.
\end{enumerate}

Furthermore all upper bounds hold for arbitrary bases $C$.
\end{theorem}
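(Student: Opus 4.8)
The plan is to assemble the claim from the per-fragment bounds of Sections~\ref{sec:ctl-fragments} and \ref{sec:flat-ctl}, matching each upper bound with a lower bound of the same order and then lifting everything from the standard base $\{\land,\lor,\neg\}$ to an arbitrary $C$ with $\CloneSE\subseteq[C]$. For the upper bounds no lifting is needed: Theorem~\ref{thm:upper-size-all} (size $2^{\bigO{n}}$), Proposition~\ref{prop:model-size-ax} ($\B_k(\AX)$ size $n^{\bigO{k}}$ and extent $k$, $\B(\AX)$ extent $n$), Theorem~\ref{thm:model-size-ag} ($\AG$ extent $\bigO{n}$), Theorem~\ref{thm:minimal-model-size-axag}, and Theorem~\ref{thm:flat-ctl-upper-bounds} are all already phrased for arbitrary bases $C$, and every claimed extent upper bound equal to $2^{\bigO{n}}$ follows from the size upper bound together with the remark that size dominates extent. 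For the lower bounds I would invoke Corollary~\ref{thm:model-size-S1}: every lower-bound family over $\{\land,\lor,\neg\}$ yields a family of $\B_k(C,T)$ formulas with the same minimal model size and extent and only a linear blow-up in formula length, so the asymptotic growth in $n$ is preserved. This is the step where one checks that $2^{\bigTheta{n}}$, $n^{\bigTheta{k}}$, $2^{\bigOmega{\sqrt{n}}}$ and $\bigOmega{w(n)^2}$ are each stable under $n\mapsto\bigO{n}$.

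Next I would dispatch the unbounded-depth cases (items~1--3). For $T=\{\AX\}$ the size bounds come from Proposition~\ref{prop:model-size-ax} (upper) and Theorem~\ref{theorem:model-size-ax-lower} (lower), and likewise the extent $\bigTheta{n}$ resp.\ $k$; the gap $2^{\bigO{n}}\cap 2^{\bigOmega{\sqrt{n}}}$ is precisely the one left open in Theorem~\ref{theorem:model-size-ax-lower}. For every other nonempty $T$ I argue that $T$ must contain at least one of $\AF$, $\AG$, $\AU$, $\A\RLS$, or equal $\{\AG,\AX\}$, and in each situation a $2^{\bigOmega{n}}$ size (and, in the ``otherwise'' extent cases, a $2^{\bigOmega{n}}$ extent) lower bound is available: Corollary~\ref{cor:model-size-af} for $\AF$ and, via $\AF\psi\equiv\A[\top\U\psi]$, for $\AU$; Corollary~\ref{cor:ag-model-lower} for $T=\{\AG\}$; and Theorem~\ref{thm:model-size-au-ar} for $\A\RLS$ and for the combined $\{\AG,\AX\}$/$\{\AG,\AF\}$ fragments, where an extent lower bound doubles as a size lower bound since extent never exceeds size. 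Matching these against the $2^{\bigO{n}}$ upper bound yields $2^{\bigTheta{n}}$; the extent dichotomy in item~3 singles out exactly $\{\AX\}$ and $\{\AG\}$, whose linear extent is pinned by Proposition~\ref{prop:model-size-ax} and Theorem~\ref{thm:model-size-ag} against the matching linear lower bounds. Since Corollaries~\ref{cor:model-size-af}, \ref{cor:ag-model-lower} and Theorem~\ref{thm:model-size-au-ar} already hold for $\B_2$, the bounded-depth versions ($k\geq 2$) follow from $\B_2\subseteq\B_k$.

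For the flat cases (items~4--6) I would pair the flat upper bounds with the flat lower bounds directly: Theorems~\ref{thm:minimal-model-size-axag} and \ref{thm:lower-model-size-axag} give $\bigTheta{n}$ size and extent $\size{T}$ when $T\subseteq\{\AX,\AG\}$; Theorems~\ref{thm:flat-ctl-upper-bounds} and \ref{thm:model-size-flat-ur-lower} give $\bigTheta{n^2}$ size and $\bigTheta{n}$ extent when $T$ contains $\AU$, $\A\RLS$ or $\{\AG,\AF\}$; and for $\AF\in T$ the extent $\bigTheta{n}$ comes from Theorem~\ref{thm:flat-ctl-upper-bounds} (upper) and Theorem~\ref{thm:min-size-af} (lower), while the size is bounded below only by $\bigOmega{w(n)^2}$ (Theorem~\ref{thm:min-size-af}), matching the ``at least'' in the statement because the mixed-quantifier nature of $\AF$ leaves a genuine logarithmic-factor gap against the $\bigO{n^2}$ flat upper bound. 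Each lower-bound family is again transported to $C$ by Corollary~\ref{thm:model-size-S1}.

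The work here is essentially bookkeeping rather than a new argument, so the main obstacle is the exhaustive case analysis: one must verify that the finitely many possibilities for $T$ over $\{\AX,\AF,\AG,\AU,\A\RLS\}$ are each assigned to exactly one earlier theorem, and in particular that the ``otherwise'' clauses in items~2 and~3 leave no residual $T$ (the only subtle one being $T=\{\AG,\AX\}$, captured by Theorem~\ref{thm:model-size-au-ar}). A secondary point requiring care is confirming that the linear length blow-up in Corollary~\ref{thm:model-size-S1} does not degrade any asymptotic class, which is immediate for the exponential and polynomial bounds and was already accounted for in the proof of Theorem~\ref{thm:min-size-af} for the $w(n)$ bound.
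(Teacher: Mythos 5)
Your proposal is correct and follows essentially the same route as the paper: both assemble the theorem by matching the per-fragment upper bounds (Theorem~\ref{thm:upper-size-all}, Proposition~\ref{prop:model-size-ax}, Theorem~\ref{thm:model-size-ag}, Theorems~\ref{thm:minimal-model-size-axag} and \ref{thm:flat-ctl-upper-bounds}) against the corresponding lower bounds (Corollaries~\ref{cor:model-size-af} and \ref{cor:ag-model-lower}, Theorems~\ref{theorem:model-size-ax-lower}, \ref{thm:model-size-au-ar}, and \ref{thm:lower-model-size-axag}--\ref{thm:min-size-af}), and then transfer the lower-bound families to arbitrary bases with $\CloneSE\subseteq[C]$ via the frame-equivalence machinery of Theorem~\ref{thm:s1-frame-reduction} (your citation of Corollary~\ref{thm:model-size-S1} is just the specialized form of that step). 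Your added bookkeeping --- the exhaustive case split on $T$ and the check that the linear length blow-up preserves each asymptotic class --- is implicit in the paper's terser proof but introduces nothing structurally different.
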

\begin{proof}
For flat CTL, all upper and lower bounds stem from Section~\ref{sec:flat-ctl}, namely from Theorem~\ref{thm:minimal-model-size-axag}--\ref{thm:min-size-af}.
All exponential upper bounds follow from Theorem~\ref{thm:upper-size-all}.

The remaining lower bounds follow from Corollary~\ref{cor:model-size-af} and Theorem~\ref{thm:model-size-au-ar} for $\AF$ and $\AU$, Corollary~\ref{cor:ag-model-lower} for $\AG$, Theorem~\ref{thm:model-size-au-ar} for $\A\RLS$ and $\{\AG,\AX\}$, and from Theorem~\ref{theorem:model-size-ax-lower} for $\AX$.
Finally, all lower bounds over $\{\land,\lor,\neg\}$ are transferred to the base $C$ via Theorem~\ref{thm:s1-frame-reduction}.
\end{proof}

\begin{figure}[!ht]\centering\small
\begin{tabular}{llll}\toprule
$T$&
$\B_1$&
$\B_{k \geq 2}$&
$\B$\\
\midrule
$\AX$ & \cellcolor{np}$\NP$-c. & \cellcolor{np}$\NP$-c. &
\cellcolor{pspace}$\PSPACE$-c. \\
$\AG$ & \cellcolor{np}$\NP$-c. & \cellcolor{pspace}$\PSPACE$-c.&
\cellcolor{pspace}$\PSPACE$-c.\\
$\AF\, [, \AX]$ &  \cellcolor{np}$\NP$-c. & \cellcolor{pspace}$\PSPACE$-c.&
\cellcolor{pspace}$\PSPACE$-c. \\
$\AG, \AX, *$ &  \cellcolor{np}$\NP$-c. & \cellcolor{exp}$\EXP$-c.&
\cellcolor{exp}$\EXP$-c.\\
$\AG, \AF, *$ &  \cellcolor{np}$\NP$-c. & \cellcolor{exp}$\EXP$-c.&
\cellcolor{exp}$\EXP$-c.\\
$\AU, *$ & \cellcolor{np}$\NP$-c. & \cellcolor{exp}$\EXP$-c.&
\cellcolor{exp}$\EXP$-c.\\
$\A\RLS, *$  & \cellcolor{np}$\NP$-c. & \cellcolor{exp}$\EXP$-c.&
\cellcolor{exp}$\EXP$-c.\\
\bottomrule
\end{tabular}
\caption{Complexity of $\SAT(\B(T))$ \wrt $\leqlogm$\label{fig:summary-compl}}
\end{figure}
\begin{figure}[!ht]\centering\small
\begin{tabular}{lccccccc}\toprule
$T$&
$\sigma(\B_1)$&$\epsilon(\B_1)$ &
$\sigma(\B_{k \geq 2})$&$\epsilon(\B_{k \geq 2})$ &
$\sigma(\B)$ & $\epsilon(\B)$\\
\midrule
\rule{0pt}{5mm}$\AX$ & \cellcolor{poly}$n$ & \cellcolor{sublin}$1$ &
\cellcolor{poly}$n^\bigTheta{k}$&\cellcolor{sublin}$k$ &
\cellcolor{exp}$\bigOmega{2^{\sqrt{n}}}$, $\bigO{2^n}$ &
\cellcolor{poly}$n$
\\
$\AG$ & \cellcolor{poly}$n$&\cellcolor{sublin}$1$ &
\cellcolor{exp}$2^n$&\cellcolor{poly}$n$&
\cellcolor{exp}$2^n$&\cellcolor{poly}$n$\\
$\AG, \AX$  & \cellcolor{poly}$n$&\cellcolor{sublin}$2$ &
\cellcolor{exp}$2^n$&\cellcolor{exp}$2^n$ &
\cellcolor{exp}$2^n$&\cellcolor{exp}$2^n$\\
\rule{0pt}{5mm}$\AF\, [, \AX]$ & \cellcolor{poly} $\bigOmega{w(n)^2}$, $\bigO{n^2}$ & \cellcolor{poly}$n$
& \cellcolor{exp}$2^n$&\cellcolor{exp}$2^n$ &
\cellcolor{exp}$2^n$&\cellcolor{exp}$2^n$
\\
$\AG, \AF, *$  & \cellcolor{poly}$n^2$&\cellcolor{poly}$n$ &
\cellcolor{exp}$2^n$&\cellcolor{exp}$2^n$&
\cellcolor{exp}$2^n$&\cellcolor{exp}$2^n$\\
$\AU, *$  & \cellcolor{poly}$n^2$&\cellcolor{poly}$n$ &
\cellcolor{exp}$2^n$&\cellcolor{exp}$2^n$ &
\cellcolor{exp}$2^n$&\cellcolor{exp}$2^n$\\
$\A\RLS, *$ & \cellcolor{poly}$n^2$&\cellcolor{poly}$n$ &
\cellcolor{exp}$2^n$&\cellcolor{exp}$2^n$&
\cellcolor{exp}$2^n$&\cellcolor{exp}$2^n$\\
\bottomrule
\end{tabular}

\vspace{3pt}

\tikzcircle[black,fill=sublin]{4pt} constant
\tikzcircle[black,fill=poly]{4pt} polynomial
\tikzcircle[black,fill=exp]{4pt} exponential

\caption{Optimal model size $\sigma$ and extent $\epsilon$ of $\B(T)$, where $n = \bigTheta{\size{\varphi}}$\label{fig:summary-model}}
\end{figure}

\subsection*{Conclusion.}

The results show an interesting property of the computation tree logic
CTL: besides for the pure $\X$ fragment, the computational complexity abruptly
jumps between temporal depth one and two. The flat fragments are all in NP. But already for a nesting depth of two, the complexity of full CTL emerges, which lies between
PSPACE- and EXP-completeness.
This is reasonable if $\AG$ is available, as we then simply can
``pull out'' too deeply nested subformulas until a temporal depth of
only two (see Theorem~\ref{thm:base-to-ag}), but for the other fragments this is still an interesting
result.
From the viewpoint of practical application,
this paper is clearly a negative result, as many important properties
of transition systems are modeled as $\calB_2$- or $\calB_3$-formulas.

When comparing the results to a preceding study for the linear
temporal logic LTL \cite{demri_complexity_2002}, many
similarities arise. All fragments of flat LTL are
NP-complete. LTL also falls down to NP when restricted to one of $\X, \F$ or $\G$;
exponentially long paths cannot be enforced in these cases \cite{SC85}. Here,
the possibility of branching gives an advantage to CTL regarding such
long paths.
On the other hand, the fragments of LTL with PSPACE-complete satisfiability,
namely $\U$ and $\{\F,\G,\X\}$, correspond to the EXP-complete CTL
cases $\AU$, $\{\AG,\AX\}$ and $\{\AG,\AF\}$.

Ultimately, the results for CTL and LTL match very nicely in the sense
that (i) for both logics the bounded $\X$-case is NP-complete and (ii) the
lower bounds for all other operators already hold for temporal depth
of two.

In future research it would be interesting to possibly expand this
principle to similar logics and show similar tight lower bounds.
Candidates would be CTL$^+$, which allows arbitrary Boolean
combinations of temporal operators in the scope of path quantifiers,
then the full branching time logic CTL$^*$ \cite{Emerson86}, and also the
fairness extension of CTL with the operators $\stackrel{\infty}{\F} \dfn \G\F$ and $\stackrel{\infty}{\G} \dfn \F\G$ inside the path
quantifiers.

\section*{Acknowledgments}

The author is thankful to Anselm Haak, Fabian Müller, Arne Meier and Heribert Vollmer from the Institute for Theoretical Computer Science in Hannover for their useful critical comments and for pointing out helpful references, and as well to the anonymous referees for their many valuable corrections and hints.
 
\printbibliography

\end{document}